\newtheorem{fact}{Fact}
\newtheorem{lm}{Lemma}
\newtheorem{tm}{Theorem}
\theoremstyle{definition}
\newtheorem{df}{Definition}
\theoremstyle{corollary}
\newtheorem{cor}{Corollary}
\theoremstyle{remark}
\newtheorem{remark}{Remark}[section]
\theoremstyle{remark}
\newtheorem{example}[remark]{Example}
\newcommand{\D}{{\mathbb{D}}}
\newcommand{\G}{{\mathbb{G}}}
\DeclareMathOperator{\Diff}{Diff}
\newcommand{\M}{\mathcal{M}}
\newcommand{\R}{\mathbb{R}}
\newcommand{\Z}{\mathbb{Z}}
\newcommand{\graph}{\gamma}
\DeclareMathOperator{\Span}{span}
\title{Symmetry restriction and its application to gravity}
\author[1]{Wojciech Kami{\'n}ski\thanks{wojciech.kaminski@fuw.edu.pl}}
\author[2,3]{Klaus Liegener\thanks{klaus.liegener@desy.de}}
\affil[1]{Institute of Theoretical Physics, Faculty of Physics,
  University of Warsaw,
  Pasteura 5, 02-093 Warszawa, Poland}
\affil[2]{Louisiana State University, Department of Physics and Astronomy,
202 Nicholson Hall, Baton Rouge, LA 70803, USA}
\affil[3]{II. Institute for Theoretical Physics, University of Hamburg, Luruper Chaussee 149, 22761 Hamburg, Germany}
\date{\today}                     
\begin{document}

\maketitle

\abstract{
In the Hamiltonian formulation, it is not a priori clear whether a symmetric configuration will keep its symmetry during evolution. In this paper, we give precise requirements of when this is the case and propose a {\it symmetry restriction} to the phase space of the symmetric variables. This can often ease computation, especially when transcending from the infinite dimensional phase space of a field theory to a possibly finite dimensional subspace. We will demonstrate this in the case of gravity. A prominent example is the restriction of full Hamiltonian general relativity to the cosmological configurations of Robertson-Walker type. We will demonstrate our procedure in this setting and extend it to examples which appear useful in certain approaches to quantum gravity.}
\section{Introduction}\label{sec:notation}
Symmetry groups play a pivotal role in modern physics.  Particular solutions for involved physical theories are found by not looking for the most general solutions, but those which obey certain symmetries, i.e. are invariant under the real space action of some symmetry group $\Phi$. These solution are often those of the most importance and prominent examples are crystal symmetries in solid state physics \cite{Hahn:2002}, spherical symmetric black holes solutions in General Relativity (GR) \cite{Schwarzschild:1916} or its cosmological solutions for homogeneous universes such as the Bianchi classificaction \cite{Bianchi:1898}.
\\
Under symmetry assumptions, the complicated equations of motion simplify often even to ordinary differential equations that can be solve explicitly. Although, it can be done case by case it is useful to know whether there is a rule governing this process.
We are considering in this article the Hamiltonian formulation of dynamics on a phase space of the theory. After imposing symmetry under some group $\Phi$ of canonical transformations one obtains a subspace of said phase space. In many situation this symmetry restricted sector is also a phase space equipped in a natural way with  the {\it restricted} Poisson bracket of the symmetry-invariant variables. We can also define a restricted Hamiltonian, by simply considering the original Hamiltonian on the restricted phase space. The question of interested is now formulated in an obvious way: do the restricted dynamics agree with the full dynamics on the restricted phase space? Under natural conditions, the answer is affirmative as we show in section \ref{section:2_Symmetry_Red}. In case one established this a priori, one is in an advantageous position: instead of having to deal with the possibly infinitely many degrees of freedom (for a field theory) one may reduce to a finite, small number which are better handleable in terms of analytic and numerical investigations. Maybe the best example for this comes in form of gravity and the Robertson-Walker metric describing a spatially isotropic metric-field of 10 degrees of freedom per point that is completely determined by a single degree of freedom: the scale factor \cite{Fri:22,Lem:33,Rob:35,Wal:37}.
After having established the general requirements for such a {\it symmetry restriction}, we will also explicitly check its applicability for GR and its cosmological solutions.\\

In case of a field theory there may appear additional difficulty in form of gauge degrees of freedom. For applications in GR this difficulty is especially prominent as the gauge group consists of all diffeomorphisms of the spacetime. In the Hamiltonian framework the group $G$ of gauge transformations come in pairs with constraints. In order to describe only real physical degrees of freedom one should perform {\it symplectic reduction} with respect to the action of gauge groups as introduced by Mardsen, Weinstein and Meyer \cite{Mey:73,MW:74,MMOPR:07,Sni:13}. Despite the similarity in name to {\it symmetry restriction}, both processes are very different conceptually: the gauge group $G$ identifies via their action points in phase space that are physically equivalent and symplectic reduction mods out $G$ in order to obtain a reduced phase space of gauge-invariant variables. A further question can be formulated: Can we postpone the symplectic reduction to the symmetry restricted phase space where it is easier to handle? Under certain assumption the answer is again affirmative and we describe this process in section \ref{section:2_relation}.\\

The motivation for such general questions are usually very particular applications.
An underlying motivation for this article comes in the form of interest in the semi-classical sector of quantum gravity (QG). A theory of QG has not been constructed as of today to full satisfaction, albeit numerous proposal exists. One possible example comes in the form of Loop Quantum Gravity (LQG) \cite{Thi:07}, an application of Dirac's quantisation procedure of gauge theories \cite{Dirac:30,Dirac:48} to GR's Hamiltonian formulation by Arnowitt, Deser and Misner (ADM) \cite{ADM:62}. The latter one can be reformulated as a $O(3)$-gauge theory in terms of the Ashtekar-Barbero (AB) variables \cite{Ash:86,Bar:94}. Our main result concerns a discretised version of this theory defined on some graph. Its importance steam from the fact that it is expected to be semiclassical limit of a discretised version of Quantum Gravity similar to Algebraic LQG \cite{AQG1}. However, on the graph usually one considers a slightly different group of gauge transformation parametrised by $SU(2)$. Not all $SO(3)$ gauge transformations can be lifted to $SU(2)$ if the manifold has a nontrivial topology. In our paper we point out that these topologically nontrivial gauge transformations can still be implemented on the graph leading to some additional conditions on discretised constraints. In recent years much work went into the question, whether LQG features an ``isotropic'' sector which can be related to a canonical quantisation solely of the scale factor \cite{Engle:07,BF:07,Paw:14}. In answering how symmetry restriction applies to gravity (or furthermore a discretised version  thereof), this article provides the first step of understanding the semi-classical evolution of a symmetry-invariant universe, by reaffirming some of the effective models of Loop Quantum Cosmology: we find that on the reduced phase space of scale factor and its momentum, the flow of the constraints from \cite{DMY:08,DL:17a,DL:17b} indeed agrees with the flow on the phase space of full discretised GR equipped with Thiemann's discretisation of the constraints \cite{Thi98(QSD),Thi98(QSDII),AQG1}. In the conclusion, we outline how the missing piece, i.e. transitioning from the quantum to the effective, classical description, may be made precise.\\

The organisation of this article is as follows:\\
In section \ref{section:2} we study symmetry restriction and symplectic reduction in general for symplectic manifolds $\M$, i.e. a phase space endowed with a symplectic 2-form. For the symplectic reduction, we repeat the standard construction from the literature, to clarify differences and introduce a common notation. Concerning symmetry restriction, we show for a given symmetry group $\Phi$ that the flow of $\Phi$-invariant functions does not leave the $\Phi$-invariant subspace of phase space and agrees with the flow of the reduced function with respect to the reduced symplectic structure, given the latter one is well-defined. This main result of the paper allows, e.g., to study dynamics in a simplified setting without loosing information. We also extend the framework to constrained systems, for which GR presents an example by combining symmetry restriction and symplectic reduction.\\
In section \ref{section:3} we introduce two Hamiltonian formulations of gravity on a spatial manifold $\sigma$: the ADM formulation and one in terms of AB variables. In both cases the complete group of spatial diffeomorphisms, $\Diff(\sigma)$, becomes part of the gauge group. In ADM, we clarify the structure of the gauge transformations on shell  formally (thereby extending known results by Bergman-Komar): every time-preserving diffeomorphism is implemented by some spatial diffeomorphism and a transformation generated by the constraints. We recall how the group $\Diff(\sigma)$ can be translated into an action of symplectomorphisms on the phase space and how a subgroup $\Psi\subset \Diff(\sigma)$ can play the double role of gauge- and symmetry-group, therefore enabling the process of symmetry restriction for GR as well. The AB variables allow a formulation of gravity as a Yang-Mills theory, where to gauge group of diffeomorphisms, one adds moreover the gauge group $O(3)$. We pay special attention to the distinction of gauge groups $O(3)$ and $SO(3)$ (as both description are possible and differ by a slightly different choice of phase space). Also due to the additional gauge group, several versions of a representation of $\Diff(\sigma)$ in terms of symplectomorphisms on the AB phase space become possible and we introduce a preferred choice of representation, once a certain symmetry group $\Psi$ has been chosen for symmetry restriction.\\
The second part of this article applies symmetry restriction to a number of examples: in section \ref{section:4} we focus on cosmological scenarios for classical, continuous GR, such as the isotropic, flat model by Friedmann-Lema\^itre-Robertson-Walker (FLRW). Here, the symmetry group of all translations and rotations can be used for symmetry restriction. We make a distinction between compact and non-compact spatial manifolds as it highlights the following  the following fact: while constraints driving the dynamics of GR are invariant under the FLRW-symmetry group, in the non-compact case reduction of the symplectic manifold to the invariant phase space is ill-defined, thus one cannot compute the dynamics solely on the reduced level. This caveat however, is absent in the non-compact case and therefore symmetry restriction finds full application for, e.g. Bianchi I spacetimes and FLRW with torus-like spatial sections and the positive-closed isotropic universe.\\
In section \ref{section:5}, we turn towards an application that becomes interesting from point of view of quantum gravity: we study classical discretisations of GR in AB variables on compact spatial manifolds, i.e. ``gravity on a graph''. This is a finite dimensional system, which approximates the original field theory.\footnote{As AB variables describe a classical Yang-Mills theory, it makes sense to build the discretisation as closely as possible to the framework of Lattice Gauge theories and thus employs holonomies and fluxes.} The restriction to a graph forces us introduce an approximation of most constraints, as only the action of the $SO(3)$ part of the $O(3)$-gauge group can be lifted exactly in form of $SU(2)$-valued fields. However, when doing so we will find that the lift of the remaining global $O(3)$-transformation surprisingly carries information of the topology of the original continuous manifold when lifted to the graph. In order to apply symmetry restriction, we will see that it makes sense to chose a subgroup of diffeomorphisms that preserve the chosen graph, and it is noteworthy that it becomes a non-trivial problem to find a duple -- graph and symmetry group -- to serve as the approximation of some continuous symmetry-invariant system. An example where it can be achieved presents a discretisation Bianchi I spacetimes, which we carry out explicitly: after deriving the reduction of the symplectic form we show that the approximated constraints are found to be invariant and thus symmetry restriction can be fully carried out, yielding a reduced Hamiltonian on a 2-dimensional phase space, whose trajectory agrees exactly with the trajectory of Bianchi I on the full discrete GR system. Finally, the same steps can be repeated for FLRW and yield the same constraint from \cite{DMY:08} that was already postulated in \cite{DL:17a} (and computed directly in \cite{HL:19}) to describe the effective dynamics of the graph and replaced the initial singularity by an asymmetric bounce, which we thus deduce to be caused solely by the discreteness effects.\\
In the conclusions \ref{section:6} we outline how this classical result -- that also FLRW on a graph can be described by only using the scale factor as degree of freedom albeit with a modified scalar constraint -- may be extended to the quantum regime: we give a justified conjecture that the expectation values of observables in coherent states sharply peaked on FLRW on a graph will follow above reduced trajectory.\\

Throughout this manuscript, we streamline the following notation :\\
\begingroup
\setlength{\tabcolsep}{10pt} 
\renewcommand{\arraystretch}{1.5} 
\begin{tabular}{ll}
$\,C^\infty (\sigma)$ & smooth functions from space $\sigma \mapsto \mathbb{R}$\\
$\,C^\infty (\sigma,V)$ & smooth functions from space $\sigma$ to some space $V$\\
$\,C^\infty_0 (\sigma)$ & compactly supported smooth functions $\sigma\mapsto\mathbb{R}$\\
$\,C^\infty(\sigma)_\Psi$ & smooth functions on $\sigma$, invariant with respect to diffeomorphisms in $\Psi$\\
$\prescript{m}{n}{C}^\infty (\sigma)$ & $m$-vector and $n$-covector fields on $\sigma$ (where we omit $m,n$ if being zero)\\
$\prescript{m}{n}{\widetilde{C}}^\infty (\sigma)$ & densitised $(m,n)$-tensor fields on $\sigma$\\
$\prescript{(m)}{[n]}{C}^\infty(\sigma)$ & symmetric $m$-vector and antisymmetric $n$-covector fields on $\sigma$\\
$\prescript{m}{n}{T}_p(\sigma)$ & $(m,n)$-tensors at point $p\in\sigma$\\
$\prescript{m}{}{T}_p (\sigma)_\Psi$ & $m$-vectors at $p\in\sigma$, invariant with respect to diffeomorphisms in $\Psi$
\end{tabular}
\endgroup

\newpage
\section{Reduction and Restriction Theorems for Symplectic Manifolds}
\label{section:2}
We give a geometric perspective on the symmetry restriction processes on a phase space (symplectic manifold).  We will consider a different scenario than known symplectic reduction. Our procedure will restrict allowed configurations to those that satisfy some prescribed symmetry given by $\Phi$ a subgroup of symplectic transformations.\footnote{Let us notice the specific choice of $\Phi$ thus in typical application in GR our symmetric configuration will be in a specific gauge.}\\
In subsection \ref{section:2_Setup} we recap basic notions of differential geometry. We introduce a group $\Phi$ that acts on a phase space in the form of symplectomorphims, i.e. diffeomorphims that respect the symplectic structure. A symplectic structure allows to obtain from any function $f$ a 1-parameter family of diffeomorphims on $\M$ (called its evolution) via the Hamiltonian vector fields. We demonstrate: if $f$ is invariant under the transformations induced by $\Phi$ then its evolution commutes with $\Phi$-transformations.\\
Then, we will restrict our attention in subsection \ref{section:2_Symmetry_Red} to $\overline{\M}$, the subset of $\M$ which is invariant under the action of $\Phi$ -- later on we will associate this set with phase space points which employ a certain symmetry corresponding to $\Phi$. If $\overline{\M}$ is a symplectic manifold, then every $\Phi$-invariant function $f$ induces an evolution that stays in $\overline{\M}$, indicating that the symmetry of a system cannot be lost under evolution induced by $f$. This is of interest in applications where $f$ is the Hamiltonian of a system, thus generating {\it time} evolution. Typically the evolution of a Hamiltonian is computed via Poisson brackets on $\M$ and we show that it is equivalent to compute the evolution of $f|_{\overline{\M}}$ (the Hamiltonian restricted to the invariant submanifold) with respect to the symplectic structure on $\overline{\M}$. This formulates our theorem for {\it symmetry restriction}.\\
In order to deal with gauge degrees of freedom we recall {\it symplectic reduction} of Mardsen-Weinstein-Meyer in subsection \ref{section:2_Symplectic_Red}. It presents a mathematical formulation to implement constraints of Hamiltonian systems. An algebra of constraints $\hat{J}$ gives rise to a (subgroup of a)  gauge group $G$ that can be represented as symplectomorphisms on the phase space $\M$. One is interested in the constraint surface (where vanishing of all constraints has been imposed) and in equivalence classes of $G$ thereon (points in an orbit of $G$ are understood to be physically indistinguishable). We recall the {\it symplectic reduction theorem} by which $\M/\!\!/ G$, the space of the latter equivalence classes, can be endowed with a symplectic structure. Moreover, the {\it symplectic reduction of dynamics theorem} states that any gauge-invariant Hamiltonian can be projected onto $\M/\!\!/ G$ and that the projection of its flow agrees with the flow of the projected Hamiltonian with respect to the reduced Poisson brackets.\\
As we are interested to apply {\it symmetry restriction} in settings where constraints are present we use subsection \ref{section:2_relation} to elucidate on the relation between both restriction and reduction procedures. The symplectic reduction is only applicable if the group $G$ acts freely, i.e. no points on the phase space are invariant under a subset $\Phi\subset G$. However, these points are exactly describing physical configurations with symmetry. To solve the theory at these points, symmetry restriction turns out to be the complement to symplectic reduction: we outline how symmetry restriction with $\Phi\subset G$ can be used to give rise to a new constraint algebra and a corresponding symmetry group $G^{red}$ on $\overline{\M}$ for which symplectic reduction becomes applicable again.\\

\subsection{Symplectic manifolds and group action} 
\label{sec:symplectic-def}

We  consider both a finite dimensional setup, which is main topic of this work, as well as certain infinite dimensional generalisations. This covers our interests in cosmology. The following sections will mostly be self-consistent, in presenting proofs for all relevant statements. However, for further references in the literature, we refer to standard text books \cite{KN:96,Nak:03}.\\

Let $(\mathcal M, \omega)$ be a finite dimensional symplectic manifold, that is $\omega\in {}_{[2]}C^\infty(\M)$ is a non-degenerate  antisymmetric 2-form on $\M$.
We remind that, given a symplectic structure $\omega$, there exists for any function $f\in C^\infty(\M)$ a {\it Hamiltonian vector field} $\chi_f\in {}^1C^\infty(\M)$ defined via the equation
\begin{align}\label{hamiltonian_vf}
\omega (\chi_f,\;\cdot\;)= df(\;\cdot\;)
\end{align}
We call $f$ the {\it Hamiltonian} of $\chi_f$. The Poisson bracket is defined by
\begin{equation}
\{f,g\}=\chi_f(g)
\end{equation}
Thus, $\{f,g\}=-\omega(\chi_f,\chi_g)$.

Let now $(\mathcal M, \omega)$ be a Hilbert manifold, possibly infinite dimensional  with a weakly symplectic two form $\omega\in {}_{[2]}C^\infty(\M)$ (see \cite{lang:85} for the introduction to Banach manifolds). The weakly symplectic structure means that for every $m\in \M$ and any vectors $X^1,X^2\in T_m\M$
\begin{equation}
\omega_m(X^1,\;\cdot\;)=\omega_m(X^2,\;\cdot\;)\Longrightarrow X^1=X^2
\end{equation}
However, as it is not strongly symplectic manifold, the Hamiltonian vector field might not exists. In fact this often happens in applications in field theory.

We  say that $m\in \M$ is an $f$-regular point if there exists a vector $X_m\in T_m\M$ such that
\begin{equation}\label{F-regular}
\omega_m(X_m,\;\cdot\;)=df|_m
\end{equation}
In the finite dimensional setup all points are regular. We define the domain of the Hamiltonian vector field as a set $U\subset \M$ of points for which such $X_m$ exists.\footnote{If it exists then it is unique due to weak non-degeneracy.} We define $\chi_f$ on $U$ by the formula (\ref{F-regular}), i.e. ${\chi_f}|_m:=X_m$.

Throughout the paper we will always assume that all functions and fields are smooth.

\subsubsection{Group actions on symplectic manifold}
\label{section:2_Setup}
Not always there exists a generator for some vector field $X$ preserving symplectic form, because $\omega (X,\;\cdot\;)$ might be not exact (it is only closed). However, in many instances, one can easily compute such a generator.

The presymplectic potential $\xi$ is a $1$-form satisfying
\begin{equation}
\omega=d\xi
\end{equation}
Such a form is not unique even if exists. The following is a very well known but important fact:

\begin{fact}\label{fact-presymplectic}
Suppose that the vector field $X$ preserve a presymplectic potential $\xi$
\begin{equation}
\mathcal{L}_X\xi=0
\end{equation}
then it preserves the symplectic form and $X=\chi_f$ where 
\begin{equation}
f=-\xi(X)
\end{equation}
\end{fact}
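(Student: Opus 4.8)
The plan is to use Cartan's magic formula to convert the Lie derivative condition into an exactness statement. First I would recall that for any vector field $X$ and any differential form $\xi$, Cartan's formula gives $\mathcal{L}_X\xi = \iota_X d\xi + d(\iota_X\xi)$, where $\iota_X$ denotes interior multiplication (contraction). Applying this to the presymplectic potential $\xi$ and using $\omega = d\xi$, the hypothesis $\mathcal{L}_X\xi = 0$ becomes $\iota_X\omega + d(\iota_X\xi) = 0$, i.e. $\iota_X\omega = -d(\iota_X\xi) = d(-\xi(X))$.

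Next I would observe that this last equation is precisely the defining equation \eqref{hamiltonian_vf} for a Hamiltonian vector field with Hamiltonian $f = -\xi(X)$: indeed $\omega(X,\;\cdot\;) = \iota_X\omega = df$, so $X = \chi_f$ by (weak) non-degeneracy of $\omega$. This simultaneously shows that $X$ is globally Hamiltonian (not merely locally, and not merely symplectic). To get the remaining claim that $X$ preserves $\omega$, I would simply note $\mathcal{L}_X\omega = \mathcal{L}_X d\xi = d\mathcal{L}_X\xi = d0 = 0$, using that $d$ commutes with $\mathcal{L}_X$; alternatively one can invoke Cartan again together with $d\omega = dd\xi = 0$. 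In the infinite-dimensional (weakly symplectic) setting one should note the argument still produces the identity $\omega(X,\;\cdot\;) = df$ at every point, so $X$ is an $f$-regular everywhere and equals $\chi_f$ on all of $\M$; weak non-degeneracy suffices for uniqueness.

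I do not anticipate a serious obstacle here — this is a standard fact — but if I had to name the one point requiring care it is the interplay of sign conventions: the paper defines $\chi_f$ by $\omega(\chi_f,\;\cdot\;) = df$, and one must track the sign through Cartan's formula to land on $f = -\xi(X)$ rather than $f = +\xi(X)$. A secondary point worth a sentence is the infinite-dimensional caveat: Cartan's formula and its corollaries hold formally on Banach/Hilbert manifolds, and the conclusion is that the Hamiltonian vector field exists (its domain $U$ is all of $\M$) precisely because we are handed the vector field $X$ to begin with; no additional regularity hypothesis is needed.
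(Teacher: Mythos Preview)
Your proof is correct and follows essentially the same route as the paper: both apply Cartan's formula to $\xi$ to obtain $\omega(X,\cdot)=-d(\xi(X))$ and invoke (weak) non-degeneracy, and both note $\mathcal{L}_X\omega=d\mathcal{L}_X\xi=0$. The only difference is the order of the two steps, and your additional remarks on sign conventions and the infinite-dimensional setting are accurate and useful.
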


\begin{proof}
To show this, we employ the concept of Noether current in the language of differential geometry. First we notice that
\begin{equation}
\mathcal{L}_X\omega=\mathcal{L}_Xd\xi=d\left(\mathcal{L}_X\xi\right)=0
\end{equation}
thus the symplectic form is preserved.
Further, due to Cartan formula
\begin{align}\label{Cartan_formula-1}
\mathcal{L}_X  =  \iota_X \circ d + d\circ \iota_X
\end{align}
where $\iota_X$ means contraction. Applying it to $\xi$ we find
\begin{equation}
0=\mathcal{L}_X\xi=\omega(X,\,\cdot\,)+d(\xi(X))
\end{equation}
with the interior product $(\iota_X d\xi)(\,\cdot\,)=(\iota_X\omega)(\,\cdot\,):= \omega(X,\,\cdot\,)\;$. Thus 
\begin{equation}
X=\chi_f
\end{equation}
where the function $f=-\xi(X)$ (we used here nondegeneracy of the symplectic form). Finally
\begin{equation}
X[\;\cdot\;]=\chi_f(\;\cdot\;)=\{f,\;\cdot\;\}
\end{equation}
and the proof is completed.
\end{proof}

Let $\Phi$ be a group such that it allows a representation in form of the symplectomorphisms of $\M$. By this we mean the following: Each $\phi\in\Phi$ gets associated with a diffeomorphism on $\M$ whose action (in an abuse of notation) we denote by 
\begin{align}
\phi:\M\to\M,\hspace{30pt} m \mapsto \phi(m)
\end{align}
This diffeomrophism can be extended to acting on functions and tensor fields on $\M$ as well and we choose the following convention:

\begin{df}\label{definition:1}
Let $\phi:\M\to\M$ be a diffeomorphism, then we define the push-forward action $\phi_*$ on functions $f\in C^\infty(\M)$, vector fields $X\in \prescript{1}{}{C}^\infty(\M)$ and 1-forms $\sigma\in \prescript{}{1}{C}^\infty(\M)$ as 
\begin{align}
\phi_* f := f \circ \phi^{-1}, \hspace{20pt} (\phi_* X )(f)|_m:= X(\phi^{-1}_* f)|_{\phi^{-1}(m)} \hspace{20pt} (\phi_* \sigma)(X)|_m := \sigma (\phi^{-1}_* X)|_{\phi^{-1}(m)}
\end{align}
where $m\in\M$. These generalise to any tensor $T: (^1C^\infty(\M))^a\times (_1C^\infty(\M))^b \to C^\infty(\M)$:
\begin{align}
(\phi_* T)[X_1,...,X_a;\sigma_1,...,\sigma_b]|_m := T[\phi_*^{-1} X_1 ,...;\phi_*^{-1} \sigma_1,...] |_{\phi^{-1}(m)}
\end{align}
\end{df}

\begin{cor}
This definition is motivated by the fact that it preserves the group product on $\Phi$ with respect to the action of $\phi_*$, in the sense that for $\phi,\phi'\in\Phi$
\begin{align}
\phi\textcolor{white}{'}_* \phi'_* = (\phi\; \phi')_*
\end{align}
Moreover, contraction is preserved: Let $a'<a$ and $b'<b$ then
\begin{align}\label{contraction}
&\phi_*(T[X_1,...X_{a'}, \;\cdot\;,...,\;\cdot\; ;\sigma_1,...,\sigma_{b'},\;\cdot\;,...\;\cdot\;]) =\\ &\hspace{50pt}(\phi_* T)[\phi_* X_1, ...,\phi_*X_{a'},\;\cdot\;,...,\;\cdot\;;\phi_*\sigma_1,...\phi_*\sigma_{b'},\;\cdot\;,...,\;\cdot\;]\nonumber
\end{align}
\end{cor}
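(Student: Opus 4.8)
The plan is to verify the two claimed identities directly from Definition \ref{definition:1}, treating each as a pointwise statement. For the functoriality identity $\phi_*\phi'_* = (\phi\phi')_*$, I would first establish it on functions, where it is immediate: $\phi_*(\phi'_* f) = (\phi'_* f)\circ\phi^{-1} = f\circ\phi'^{-1}\circ\phi^{-1} = f\circ(\phi\phi')^{-1} = (\phi\phi')_* f$, using only that the representation is a group homomorphism so $(\phi\phi')^{-1} = \phi'^{-1}\phi^{-1}$. Then I would lift this to vector fields and $1$-forms: for a vector field $X$, one computes $(\phi_*(\phi'_* X))(f)|_m = (\phi'_* X)(\phi^{-1}_* f)|_{\phi^{-1}(m)} = X(\phi'^{-1}_*\phi^{-1}_* f)|_{\phi'^{-1}\phi^{-1}(m)}$, and then applies the already-proven function identity in the form $\phi'^{-1}_*\phi^{-1}_* = (\phi\phi')^{-1}_*$ together with $\phi'^{-1}\phi^{-1}(m) = (\phi\phi')^{-1}(m)$ to recognise the right-hand side as $((\phi\phi')_* X)(f)|_m$. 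The $1$-form case is entirely parallel. Finally, for a general tensor $T$, the defining formula expresses $\phi_* T$ as a precomposition of the arguments with $\phi^{-1}_*$ and an evaluation shift by $\phi^{-1}$, so the tensor identity follows by the same bookkeeping once it is known for vectors and covectors; one just iterates the argument-wise substitution.

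For the contraction identity \eqref{contraction}, the strategy is to unwind both sides at an arbitrary point $m$ using Definition \ref{definition:1} and check they agree. The left-hand side is $\phi_*$ applied to the tensor $S := T[X_1,\dots,X_{a'},\,\cdot\,,\dots;\sigma_1,\dots,\sigma_{b'},\,\cdot\,,\dots]$ of valence $(a-a', b-b')$, so by the tensor push-forward formula it reads $S[\phi_*^{-1} Y_1,\dots;\phi_*^{-1}\tau_1,\dots]|_{\phi^{-1}(m)}$ for test arguments $Y_i,\tau_j$; substituting the definition of $S$ gives $T[X_1,\dots,X_{a'},\phi_*^{-1}Y_1,\dots;\sigma_1,\dots,\sigma_{b'},\phi_*^{-1}\tau_1,\dots]|_{\phi^{-1}(m)}$. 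The right-hand side, by the tensor push-forward formula applied to $\phi_* T$ with the first $a'$ (resp. $b'$) vector (resp. covector) slots filled by $\phi_* X_i$ (resp. $\phi_*\sigma_j$) and the rest by the test arguments $Y_i,\tau_j$, equals $T[\phi_*^{-1}(\phi_* X_1),\dots;\dots;\dots]|_{\phi^{-1}(m)}$ with the remaining slots holding $\phi_*^{-1}Y_i$ and $\phi_*^{-1}\tau_j$. The two expressions coincide precisely because $\phi_*^{-1}\phi_* = (\phi^{-1}\phi)_* = \mathrm{id}_*$ on vector fields and $1$-forms, which is the $\phi' = \phi^{-1}$ instance of the functoriality identity just proven; hence $\phi_*^{-1}(\phi_* X_i) = X_i$ and similarly for the $\sigma_j$.

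The proof is essentially a chain of definition-chasing, so there is no deep obstacle; the only thing requiring care is the evaluation-point bookkeeping. The subtlety is that $\phi_*$ shifts the base point from $\phi^{-1}(m)$ to $m$, and when composing two push-forwards one must track that $\phi^{-1}(\phi'^{-1}(m)) = (\phi'\phi)^{-1}(m)$ — note the order — and reconcile this with the ordering convention $\phi_*\phi'_* = (\phi\phi')_*$; this is consistent exactly because $\phi_*$ is defined using $\phi^{-1}$, which reverses the order a second time. I would state this explicitly to avoid any sign-of-composition confusion, and otherwise present the two identities as short displayed computations at a generic point $m$, remarking that nondegeneracy is not needed here (unlike in Fact \ref{fact-presymplectic}) since these are purely tensorial statements independent of $\omega$.
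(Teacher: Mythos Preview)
Your proposal is correct. The paper does not actually supply a proof of this corollary; it is stated immediately after Definition~\ref{definition:1} as a direct consequence and left to the reader, so your explicit definition-chasing verification is exactly the intended (and only natural) argument.
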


In fact, we can define also the map just on the space of tensors in the point.
We denote this linear map by
\begin{equation}
\phi_{*m}\colon \prescript{n}{k}{T}_{\phi^{-1}(m)}(\M)\rightarrow \prescript{n}{k}{T}_{m}(\M) 
\end{equation}
where $\prescript{n}{k}{T}_{m}(\M)$ are $n$ vectors and $k$ covectors at the point $m$.

The group $\Phi$ acts via {\it symplectormorphisms} if each $\phi\in \Phi\subset {\rm Symp}(\M)$ is a canonical transformation, i.e. the symplectic structure $\omega$ is invariant under its action:
\begin{align}\label{symplect}
\phi_* \omega = \omega
\end{align}
In this manuscript we consider only groups $\Phi$ for which (\ref{symplect}) holds. Let us remind that for a symplectomorphisms $\phi$ the action of pull-back on a vector field $\chi$ is given by $\phi^* \chi = (\phi^{-1})_* \chi$, so that $\phi_* \circ \phi^* = id$. Structures that are invariant under the action of $\Phi$ are paramount for our objective, thus we spell out explicitly:
\begin{df}
We say that a function $f$ or a vector field $X$ are $\Phi${\it -invariant} if for all $\phi\in\Phi$ respectively holds:
\begin{align}
\phi_* f = f,\hspace{40pt} \phi_* X = X
\end{align}
If $X$ is defined only on a subset $U\subset\M$ (the domain) then it is $\Phi${\it -invariant} if $\phi(U)=U$ for every $\phi\in \Phi$ and for $\phi\in \Phi$ and $m\in U$:
\begin{align}
\phi_{*\phi(m)} X|_m = X|_{\phi(m)}
\end{align}
\end{df}

\begin{lm}\label{lemma_invF_to_invVF}
Let $f$ be a $\Phi$-invariant function and $\Phi$ preserves the symplectic form, then the Hamiltonian vector field $\chi_f$ is also $\Phi$-invariant.\\
If $m\in\M$ is $f$-regular, then $\phi(m)$ is also $f$-regular and $\phi_{*\phi(m)}\chi_f|_{m}=\chi_f|_{\phi(m)}$.
\end{lm}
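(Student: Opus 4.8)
The plan is to push everything back onto the defining relations (\ref{hamiltonian_vf}) and (\ref{F-regular}) for the Hamiltonian vector field, using two structural facts already in hand: that the push-forward $\phi_*$ commutes with the exterior derivative (naturality of $d$), and that it is compatible with contraction, as recorded in the corollary following Definition \ref{definition:1}, together with the invariance $\phi_*\omega=\omega$ from (\ref{symplect}). For the first assertion I would apply $\phi_*$ to both sides of $\omega(\chi_f,\;\cdot\;)=df$. The right-hand side becomes $\phi_*(df)=d(\phi_* f)=df$, the last step because $f$ is $\Phi$-invariant. The left-hand side becomes $\phi_*\big(\omega(\chi_f,\;\cdot\;)\big)=(\phi_*\omega)(\phi_*\chi_f,\;\cdot\;)=\omega(\phi_*\chi_f,\;\cdot\;)$ by (\ref{contraction}) and (\ref{symplect}). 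Hence $\omega(\phi_*\chi_f,\;\cdot\;)=\omega(\chi_f,\;\cdot\;)$, and weak non-degeneracy of $\omega$ forces $\phi_*\chi_f=\chi_f$, i.e. $\Phi$-invariance of $\chi_f$.

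For the pointwise statement I would work directly with the linear isomorphisms $\phi_{*m}\colon T_{\phi^{-1}(m)}\M\to T_m\M$. Fix an $f$-regular point $m$ with witness $X_m\in T_m\M$, so $\omega_m(X_m,\;\cdot\;)=df|_m$, and set $Y:=\phi_{*\phi(m)}X_m\in T_{\phi(m)}\M$. Every $Z\in T_{\phi(m)}\M$ can be written $Z=\phi_{*\phi(m)}W$ for a unique $W\in T_m\M$. The pointwise form of (\ref{symplect}) gives $\omega_{\phi(m)}(Y,Z)=\omega_{\phi(m)}(\phi_{*\phi(m)}X_m,\phi_{*\phi(m)}W)=\omega_m(X_m,W)$, while the pointwise form of $\phi_*(df)=df$ gives $df|_{\phi(m)}(Z)=df|_m(W)$. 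Since $\omega_m(X_m,W)=df|_m(W)$ for all $W$, it follows that $\omega_{\phi(m)}(Y,\;\cdot\;)=df|_{\phi(m)}$; thus $\phi(m)$ is $f$-regular, and by uniqueness of the Hamiltonian vector field $\chi_f|_{\phi(m)}=Y=\phi_{*\phi(m)}\chi_f|_m$. Running the same regularity transfer with $\phi^{-1}$ in place of $\phi$ shows the domain $U$ of $\chi_f$ satisfies $\phi(U)=U$, so $\chi_f$ is $\Phi$-invariant as a partially defined vector field in the sense of the definition preceding the lemma.

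There is no deep difficulty here; the one point that needs care is that in the weakly symplectic, possibly infinite-dimensional setting one cannot invert $\omega$ as a bundle map $T\M\to T^{*}\M$, so the cancellation of $\omega$ must always be performed against a fixed one-form tested on all tangent vectors, and the argument must never presuppose the existence of $\chi_f$ outside its domain $U$. A secondary, purely bookkeeping subtlety is obtaining $\phi(U)=U$ rather than merely $\phi(U)\subset U$, which is why the regularity transfer is run for $\phi^{-1}$ as well.
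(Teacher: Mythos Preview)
Your proof is correct and follows essentially the same route as the paper: push forward the defining relation $\omega(\chi_f,\cdot)=df$ using (\ref{contraction}), (\ref{symplect}), naturality of $d$, and $\Phi$-invariance of $f$, then invoke weak non-degeneracy. The paper does this in a single pointwise chain of equalities at $\phi(m)$, whereas you separate the global vector-field statement from the pointwise regularity transfer; your added observation that one must also run the argument with $\phi^{-1}$ to obtain $\phi(U)=U$ (rather than just $\phi(U)\subset U$) is a detail the paper's proof leaves implicit.
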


\begin{proof}
From $\Phi$-invariance of $\omega$ and $f$ follows 
\begin{align}
&\omega|_{\phi(m)}(\phi_{*\phi(m)}\chi_f|_{m},\;\cdot\;)\overset{(\ref{symplect})}{=}(\phi_{*\phi(m)}\omega|_m)(\phi_{*\phi(m)}\chi_f|_{m},\;\cdot\;)=\nonumber\\
&\overset{(\ref{contraction})}{=}\phi_{*\phi(m)}(\omega|_m(\chi_f|_m,\;\cdot\;))=\phi_{*\phi(m)}(df|_m)=d(\phi_{*}f)|_{\phi(m)}=df|_{\phi(m)}
\end{align}
and since the symplectic form is by definition weakly non-degenerate
\begin{equation}
\chi_f|_{\phi(m)}:=\phi_{*\phi(m)}\chi_f|_m
\end{equation}
i.e. the vector field is invariant under $\Phi$.
\end{proof}

\begin{lm}\label{lemma_invVF_to_comFLow}
Let $X$ be a $\Phi$-invariant vector field. Its unique 1-parameter family $\varphi_t$ of diffeomorphisms is defined via the following equation for any $f\in C^\infty(\M)$
\begin{align}
\left.\dfrac{df \circ \varphi_t}{dt}\right|_{t = 0} = X(f).
\end{align}
Then, the flow commutes with the transformations in $\Phi$, i.e.: 
\begin{align}\label{commuting_flows}
\varphi_t\circ \phi = \phi \circ \varphi_t
\end{align}
In other words, the evolution due to $X$ commutes with the $\Phi$-transformations.	
\end{lm}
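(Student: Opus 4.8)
The plan is to exploit the naturality of flows under push-forward: I would show that the conjugated family $\psi_t := \phi\circ\varphi_t\circ\phi^{-1}$ is precisely the 1-parameter family of diffeomorphisms generated by $\phi_* X$, and then use $\phi_* X = X$ together with the uniqueness of the generating flow asserted in the statement. The whole argument is independent of the symplectic structure; its content is just the bookkeeping of Definition~\ref{definition:1}.

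First I would record the infinitesimal form of $\Phi$-invariance. By Definition~\ref{definition:1} one has $(\phi_* X)(f)|_m = X(f\circ\phi)|_{\phi^{-1}(m)}$, so $\phi_* X = X$ is equivalent to $X(f\circ\phi)|_p = X(f)|_{\phi(p)}$ for all $f\in C^\infty(\M)$ and all $p$, i.e. $X(f\circ\phi) = \big(X(f)\big)\circ\phi$. Next I would check that $\psi_t$ is indeed a 1-parameter family: $\psi_0=\mathrm{id}$, and $\psi_{t+s}=\phi\circ\varphi_{t+s}\circ\phi^{-1}=\phi\circ\varphi_t\circ\phi^{-1}\circ\phi\circ\varphi_s\circ\phi^{-1}=\psi_t\circ\psi_s$, using the corresponding properties of $\varphi_t$. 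Then I would compute its generator: for $f\in C^\infty(\M)$,
\begin{align}
\left.\frac{d\,(f\circ\psi_t)}{dt}\right|_{t=0}
&= \left.\frac{d}{dt}\right|_{t=0}\Big(\big((f\circ\phi)\circ\varphi_t\big)\circ\phi^{-1}\Big) \nonumber\\
&= \big(X(f\circ\phi)\big)\circ\phi^{-1} = \big(X(f)\circ\phi\big)\circ\phi^{-1} = X(f),
\end{align}
where the penultimate equality is the infinitesimal invariance derived above. Hence $\psi_t$ and $\varphi_t$ have the same generator and agree at $t=0$, so by the uniqueness assumed in the lemma $\psi_t = \varphi_t$ for all $t$; unwinding $\phi\circ\varphi_t\circ\phi^{-1}=\varphi_t$ gives~(\ref{commuting_flows}).

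The only point requiring genuine care — and the one I expect to be the sole obstacle — is the infinite-dimensional setting, where $X$ may be defined only on a domain $U\subsetneq\M$ and generate merely a \emph{local} flow. There one should read ``$\varphi_t$'' as defined on a suitable $\phi$-invariant open set, using that $\Phi$-invariance of $X$ forces $\phi(U)=U$ so that every composition above makes sense; the identity $\psi_t=\varphi_t$ then holds wherever both sides are defined, and no new idea beyond this observation is needed.
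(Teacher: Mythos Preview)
Your proof is correct and follows essentially the same approach as the paper: you conjugate the flow by $\phi$ (the paper uses $\phi^{-1}\circ\varphi_t\circ\phi$, you use $\phi\circ\varphi_t\circ\phi^{-1}$, an immaterial difference), compute that the conjugated family has generator $\phi_*X=X$, and invoke uniqueness. Your write-up is in fact more detailed than the paper's, which compresses the generator computation into the single line $\left.\dfrac{df \circ \tilde{\varphi}_t}{dt}\right|_{t = 0} = \phi^*(X)(f)=X(f)$ and omits both the explicit check of the one-parameter property and your closing remark on domains in the infinite-dimensional case.
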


\begin{proof}
Let us notice that
\begin{equation}
\tilde{\varphi}_t=\phi^{-1}\circ \varphi_t\circ \phi
\end{equation}
satisfies
\begin{equation}
\left.\dfrac{df \circ \tilde{\varphi}_t}{dt}\right|_{t = 0} = \phi^*(X)(f)=X(f),
\end{equation}
because $X$ is $\Phi$-invariant. From uniqueness follows
\begin{equation}
\tilde{\varphi}_t=\varphi_t\Longrightarrow \varphi_t\circ \phi = \phi \circ \varphi_t
\end{equation}
that finishes the proof.
\end{proof}

\subsection{Invariant submanifolds and symmetry restriction}
\label{section:2_Symmetry_Red}

Our interest is in the set of points $m\in\M$ that are invariant under the action of $\Phi$, which we denote by:
\begin{align}
\label{Mbar-def}
\overline{\mathcal M} := \{m \in \mathcal M : \phi(m) = m \ \text{for all} \ \phi \in \Phi\}
\end{align} 
that we assume to be a submanifold.\footnote{
It will be true in application to general relativity on a finite lattice as well as (in suitable sense) for FRW with compact spatial slices. In the case of classical FRW with noncompact sections the situation is more complicated.}
Every function on the phase space $\M$ can be restricted to a function on $\overline{\M}$ in the following way: Let $f\in C^\infty(\mathcal{M})$ then we define its restriction to $\overline{\M}\subset \M$ as
\begin{align}
f|_{\overline{\M}} (m) := f (m),\hspace{20pt} m\in\overline{\M}
\end{align}
Let us remark that there is no preferred way to restrict vector fields $X\in \prescript{1}{}{C}^\infty(\M)$ to $\overline{\M}$, in general unless they are tangent to $\overline{\M}$. By this we mean that the vector field is actually $X \in \prescript{1}{}{C}^\infty(\overline{\M})$. However, the restriction of any $n$-form $\Omega$ to $\overline{\M}$ is possible by defining:
\begin{align}
\Omega|_{\overline{\M}} \;\;\; : \;\;\; (\prescript{1}{}{C}^\infty(\overline{\M}))^n \hspace{10pt}&\to \hspace{10pt} C^\infty(\overline{\M})\\
\bar{X}_1,...,\bar{X}_n \hspace{10pt}& \mapsto\hspace{10pt} \Omega|_{\overline{\M}}(\bar{X}_1,...,\bar{X}_n):=\Omega(\bar{X}_1,...,\bar{X}_n)|_{\overline{\M}}
\end{align}

\begin{df}
We say that $\Phi$ is {\it clean} if the following conditions are satisfied
\begin{enumerate}
\item The set $\overline{\M}$ is submanifold of ${\mathcal M}$.
\item It is a symlectic manifold, i.e $\overline{\omega} :=\omega|_{\overline{\mathcal M}}$ is (weakly) non-degenerate.
\end{enumerate}
\end{df}

\begin{remark}
One can also consider this condition locally, without assuming that the set $\overline{\M}$ is non-singular everywhere.
\end{remark}

\begin{remark}
Not all actions are clean even if $\overline{\M}$ is a submanifold. The counter example is an action of $\R$ with Hamiltonian $\frac{1}{2}x^2$ on manifold $(x,p)$ with standard symplectic form. One can check that $\overline{\M}=\{(x,p)\colon x=0\}$ is not a symplectic submanifold.
\end{remark}

However the following fact holds 

\begin{fact}\label{Fact1}
Suppose that $\Phi$ is compact and $\overline{\M}$ is a manifold, then $\Phi$ is clean.
\end{fact}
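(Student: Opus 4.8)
The plan is to verify the one non-trivial clause in the definition of \emph{clean}, namely that $\overline{\omega}:=\omega|_{\overline{\M}}$ is weakly non-degenerate; the submanifold property is the standing hypothesis and will in any case reappear in the argument below. The whole proof rests on averaging tensors over $\Phi$ against its normalised Haar measure $\mu$, which exists precisely because $\Phi$ is compact. Compactness is genuinely essential here: for the non-compact $\Phi=\R$ of the Remark above no such averaging is available, and indeed there $\overline{\M}$ fails to be symplectic.

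First I would pin down the tangent space of $\overline{\M}$ at a point $m\in\overline{\M}$. Set
\begin{equation}
V:=\{v\in T_m\M \ :\ \phi_{*m}v=v \ \text{for all}\ \phi\in\Phi\},
\end{equation}
the fixed subspace of the isotropy representation $\phi\mapsto\phi_{*m}$ (well defined since $\phi(m)=m$). Differentiating $\phi(c(t))=c(t)$ for a curve $c$ in $\overline{\M}$ with $c(0)=m$ gives $T_m\overline{\M}\subseteq V$ immediately. For the reverse inclusion I would average an arbitrary Riemannian metric on $\M$ over $\Phi$ to obtain a $\Phi$-invariant metric $g$; since $m$ is a fixed point, its exponential map $\exp^g_m$ satisfies $\phi\circ\exp^g_m=\exp^g_m\circ\,\phi_{*m}$, so $\exp^g_m$ carries a neighbourhood of $0$ in $V$ onto a neighbourhood of $m$ in the fixed-point set. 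This is the classical fact that fixed-point sets of compact group actions are submanifolds with tangent space the fixed subspace; hence $T_m\overline{\M}=V$.

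Next I introduce the averaging operator $P\colon T_m\M\to T_m\M$, $\;Pv:=\int_\Phi\phi_{*m}v\,d\mu(\phi)$, a Bochner integral converging by continuity of the action. Left-invariance of $\mu$ together with $\psi_{*m}\circ\phi_{*m}=(\psi\phi)_{*m}$ gives $Pv\in V$, and $P|_V=\mathrm{id}$ since $\mu$ is normalised, so $P$ is a projection onto $V=T_m\overline{\M}$. Finally, take $X\in T_m\overline{\M}$ with $\overline{\omega}_m(X,Y)=0$ for all $Y\in T_m\overline{\M}$. For arbitrary $Z\in T_m\M$, using that each $\phi\in\Phi$ fixes $m$ and satisfies $\omega_m(\phi_{*m}A,\phi_{*m}B)=\omega_m(A,B)$ by (\ref{symplect}), together with $\phi_{*m}X=X$,
\begin{align}
0 &= \overline{\omega}_m(X,PZ) = \int_\Phi \omega_m(X,\phi_{*m}Z)\,d\mu(\phi) \nonumber\\
&= \int_\Phi \omega_m(\phi_{*m}X,\phi_{*m}Z)\,d\mu(\phi) = \int_\Phi \omega_m(X,Z)\,d\mu(\phi) = \omega_m(X,Z).
\end{align}
Since $Z$ is arbitrary and $\omega$ is weakly non-degenerate on $\M$, this forces $X=0$; hence $\overline{\omega}$ is weakly non-degenerate and $\Phi$ is clean.

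The main obstacle is the identity $T_m\overline{\M}=V$. In finite dimensions it is entirely standard (Bochner linearisation), and there $\overline{\M}$ is in fact automatically a submanifold; in the Hilbert-manifold generality one must check that averaging preserves smoothness of the metric, that the associated spray and exponential map exist and are $\Phi$-equivariant, and that the Bochner integrals converge — which is exactly where one uses that $\Phi$ is a compact Lie group acting smoothly.
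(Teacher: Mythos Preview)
Your proof is correct and follows the same overall strategy as the paper: Haar-averaging is the engine, and the non-degeneracy argument is essentially identical (the paper picks a single $v'$ with $\omega(v,v')=1$ and averages it, you package the same computation as $\omega(X,Z)=\overline{\omega}(X,PZ)$). The one genuine difference is in how you establish $V\subseteq T_m\overline{\M}$. You invoke the classical Bochner linearisation: average a Riemannian metric, then use equivariance of $\exp^g_m$ at the fixed point. The paper instead extends the invariant vector $v$ to an arbitrary vector field $X$, averages $X$ over $\Phi$ to obtain a $\Phi$-invariant vector field $Y$ with $Y|_m=v$, and then appeals to the earlier Lemma~\ref{lemma_invVF_to_comFLow} (the flow of an invariant vector field commutes with $\Phi$) to conclude that the integral curve through $m$ stays in $\overline{\M}$. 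The paper's route avoids any auxiliary Riemannian structure and the attendant questions about existence and smoothness of the exponential map in the Hilbert-manifold setting you flag at the end; your route is the textbook one and gives the submanifold property for free in finite dimensions. Either is fine here.
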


\begin{proof}
We consider $v\in T_m\M$ for $m\in \overline{\M}$ and such that
\begin{equation}
\phi_{*m}(v)=v
\end{equation}
We can extend it to the vector field $X$ on $\M$ such that $X|_m=v$. Such an extension is not unique, however the following holds independently. We define (by group averaging)
\begin{equation}
Y:= \int_\Phi {\rm d}\mu_H(\phi)\; (\phi_*X)
\end{equation}
where $\mu_H$ is the unique normalised left- and right-invariant Haar measure on $\Phi$. The vector field $Y$ is invariant under $\Phi$ and $Y|_m=v$. By lemma \ref{lemma_invVF_to_comFLow} the flow $\varphi_t$ generated by this vector is also invariant. Let us now notice that
\begin{equation}
\forall_{\phi\in \Phi}\ \phi\varphi_t(m)=\varphi_t\phi(m)=\varphi_t(m)
\end{equation}
thus $\varphi_t(m)\in \overline{\M}$ and the vector $Y$ is tangent to $\overline{\M}$ in $m$. We come to the conclusion that $v$ is tangent to $\overline{\M}$.

Let us now assume that $\overline{\omega}$ is degenerate, thus there exists a nontrivial vector $v\in T_m\overline{\M}$ such that
\begin{equation}
\overline{\omega}(v,\cdot\,)=0
\end{equation}
As $\omega$ is weakly non-degenerate, there exists $v'\in T_m\M$ such that (after embedding $v$ to $T_m\M$)
\begin{equation}
\omega(v,v')=1
\end{equation}
We now notice as $\phi(m)=m$
\begin{equation}
1=(\phi_*\omega)(\phi_*(v),\phi_*(v'))=\omega(v,\phi_*(v'))
\end{equation}
thus also
\begin{equation}
\omega(v,\int_\Phi {\rm d}\mu_H(\phi)\; \phi_*(v'))=1
\end{equation}
However, $\int_\Phi {\rm d}\mu_H(\phi)\; \phi_*(v')$ is an invariant vector, so it belongs to $T\overline{\M}$, that contradicts assumption about degeneracy.
\end{proof}

The goal of {\it symmetry restriction} is to simplify computations on the $\Phi$-invariant submanifold $\M$, by reducing all components before the actual computation. Thus, it is important to note:

\begin{fact}
For $f_1,f_2$ arbitrary functions on $\M$
\begin{equation}
(f_1+fg_2)|_{\overline{\mathcal M}}=f_1|_{\overline{\mathcal M}}+f_2|_{\overline{\mathcal M}}, \ \ \ \ \ (f_1\,f_2)|_{\overline{\mathcal M}}=f_1|_{\overline{\mathcal M}}\;f_2|_{\overline{\mathcal M}}
\end{equation}
\end{fact}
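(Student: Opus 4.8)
The statement to prove is essentially a triviality about restriction of functions, so the ``proof'' is really just an unwinding of definitions; the only subtlety worth flagging is that one must be careful about what kind of object $f_1|_{\overline{\M}}$ is and that the algebraic operations on the right-hand side are the pointwise operations in $C^\infty(\overline{\M})$.

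\bigskip

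The plan is to argue pointwise. Fix $m\in\overline{\M}$. By the very definition of restriction, $(f_1+f_2)|_{\overline{\M}}(m)=(f_1+f_2)(m)$, and since the sum of functions on $\M$ is defined pointwise this equals $f_1(m)+f_2(m)$; applying the definition of restriction in the other direction to each summand, this is $f_1|_{\overline{\M}}(m)+f_2|_{\overline{\M}}(m)$, which is the value at $m$ of the function $f_1|_{\overline{\M}}+f_2|_{\overline{\M}}\in C^\infty(\overline{\M})$. Since $m$ was arbitrary, the two functions on $\overline{\M}$ agree. The multiplicative identity is proven by the identical three-line computation with $+$ replaced by the pointwise product. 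One should also note that $f_i|_{\overline{\M}}$ is indeed smooth on $\overline{\M}$: this is immediate because $\overline{\M}$ is (assumed to be) a submanifold of $\M$ and the restriction of a smooth function to a submanifold is smooth; hence the right-hand sides genuinely live in $C^\infty(\overline{\M})$ and the equalities are equalities of smooth functions, not merely of set-theoretic maps.

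\bigskip

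There is no real obstacle here; if anything, the only ``hard part'' is cosmetic, namely making explicit that we are invoking (i) the pointwise definition of the algebra structure on $C^\infty(\M)$, (ii) the definition $f|_{\overline{\M}}(m):=f(m)$ given just above, and (iii) smoothness of restrictions to a submanifold. I would keep the proof to two or three sentences and not belabor it, since its role in the paper is simply to license the strategy of ``restrict each ingredient first, then compute'', which is what the subsequent \emph{symmetry restriction} theorem exploits. (I note in passing that the displayed statement contains a typographical slip, ``$f_1+fg_2$'', which should read ``$f_1+f_2$''; the proof treats it as the latter.)
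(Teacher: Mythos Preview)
Your proposal is correct. The paper itself does not supply a proof for this fact at all; it is stated without proof as an obvious observation, so your pointwise unwinding of the definitions is entirely in the spirit of the paper's treatment (and your remark about the typo ``$f_1+fg_2$'' is also accurate).
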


Further, we want to investigate the flow of Hamiltonians under such restriction. In particular, it is of interest to find the relation of a Poisson bracket $\{\cdot,\cdot\}_{\overline{\M}}$ with the Poisson bracket $\{\cdot,\cdot\}$ on $\mathcal M$:\\
Assuming the action of $\Phi$ is clean, then by fact \ref{Fact1} the manifold $\overline{\M}$ is equiped with symplectic form $
\overline{\omega}=\omega|_{\overline{\mathcal M}}$ and $\overline{\mathcal M}$ is a legitimate ``reduced phase space''. In particular, $\overline{\omega}$ defines a Poisson bracket $\{\cdot,\cdot\}_{\overline{\mathcal M}}$ on functions on $\overline{\mathcal M}$  by:
\begin{align}\label{def_pb}
\{f_1,f_2\}_{\overline{\M}}:=\overline{\omega} (\chi^{\overline{\omega}}_{f_1}, \chi^{\overline{\omega}}_{f_2}) = \chi^{\overline{\omega}}_{f_1} (f_2)
\end{align}
with $\chi^{\overline{\omega}}_f$ the {\it Hamiltonian vector fields} of $f$ defined by (\ref{hamiltonian_vf}) with respect to the symplectic structure $\overline{\omega}$.

\begin{lm} \label{lemma:1a-v1}
Let $\M$ be finite dimenional and $f$ be a $\Phi$-invariant function on $\M$ (i.e., $\phi_*f = f$ for all $\phi \in \Phi$) and $\Phi$ is clean then the Hamiltonian vector field $\chi_f$  is tangent to $\overline{\M}$.
\end{lm}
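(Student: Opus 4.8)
The plan is to use the characterization of the tangent space of $\overline{\M}$ established implicitly in the proof of Fact \ref{Fact1}: at a point $m\in\overline{\M}$, a vector $v\in T_m\M$ lies in $T_m\overline{\M}$ if and only if it is $\Phi$-invariant, i.e. $\phi_{*m}(v)=v$ for all $\phi\in\Phi$. One direction is immediate (tangent vectors to the fixed-point set are fixed by the linearized action); the other direction is exactly the averaging argument already carried out there. So the whole task reduces to checking that $\chi_f|_m$ is $\Phi$-invariant for $m\in\overline{\M}$.

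First I would invoke Lemma \ref{lemma_invF_to_invVF}: since $f$ is $\Phi$-invariant and $\Phi$ preserves $\omega$, the Hamiltonian vector field $\chi_f$ is $\Phi$-invariant, which in the finite-dimensional case means $\phi_{*\phi(m)}\chi_f|_m = \chi_f|_{\phi(m)}$ for every $m\in\M$ and $\phi\in\Phi$. Second, I specialize to $m\in\overline{\M}$, so that $\phi(m)=m$; the invariance statement then becomes $\phi_{*m}\chi_f|_m = \chi_f|_m$, i.e. $\chi_f|_m$ is a fixed vector of the linearized action of $\Phi$ at $m$. Third, I apply the group-averaging construction of Fact \ref{Fact1} — extend $\chi_f|_m$ to a vector field, average over $\Phi$ with the Haar measure to obtain a $\Phi$-invariant extension with the same value at $m$, and conclude via Lemma \ref{lemma_invVF_to_comFLow} that the flow of this invariant field preserves $\overline{\M}$, hence its generator at $m$, namely $\chi_f|_m$, is tangent to $\overline{\M}$.

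The only genuine subtlety — and hence the main obstacle — is the use of compactness of $\Phi$ hidden inside Fact \ref{Fact1}: the Haar-averaging step needs $\Phi$ compact. The statement of the lemma only assumes $\Phi$ \emph{clean}, so strictly one must either note that in the intended finite-dimensional applications $\Phi$ is compact, or argue directly that cleanness already guarantees that the fixed vectors of the linearized action coincide with $T_m\overline{\M}$. I would handle this by remarking that $T_m\overline{\M}$ is contained in the fixed subspace $(T_m\M)^\Phi$ always, and that the reverse inclusion — which is what is actually needed here — holds under the cleanness hypothesis because $\overline{\M}$ being a submanifold whose dimension equals $\dim (T_m\M)^\Phi$ forces equality; alternatively one simply cites the argument of Fact \ref{Fact1}. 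Everything else is a routine chain of the already-proved lemmas, so no lengthy computation is required.
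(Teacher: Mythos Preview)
Your argument is correct but takes an unnecessary detour, and the ``main obstacle'' you flag is entirely self-inflicted. After Lemma~\ref{lemma_invF_to_invVF} you already have that $\chi_f$ is a \emph{globally} $\Phi$-invariant vector field on $\M$, not merely a fixed vector at one point. There is no need to extract $\chi_f|_m$, re-extend it, and Haar-average: you can apply Lemma~\ref{lemma_invVF_to_comFLow} directly to $\chi_f$ itself. Its flow $\varphi_t$ then commutes with every $\phi\in\Phi$, so for $m\in\overline{\M}$ one has $\phi(\varphi_t(m))=\varphi_t(\phi(m))=\varphi_t(m)$, i.e.\ $\varphi_t(m)\in\overline{\M}$, and tangency follows. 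This is precisely the paper's proof; it uses only that $\overline{\M}$ is a submanifold (not even the symplectic part of cleanness) and never invokes compactness of $\Phi$ or any averaging.

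Your route through Fact~\ref{Fact1} forces you to worry about compactness, and your proposed fallback --- that cleanness alone guarantees $T_m\overline{\M}=(T_m\M)^\Phi$ because the dimensions match --- is not justified: cleanness does not assert that $\dim\overline{\M}=\dim(T_m\M)^\Phi$, so this step is a gap in the non-compact case. The direct argument sidesteps the issue completely.
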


\begin{proof}
Due to lemma \ref{lemma_invF_to_invVF} $\chi_f$ is $\Phi$-invariant and due to lemma \ref{lemma_invVF_to_comFLow} its flow $\varphi_t$ commutes with $\phi\in\Phi$, i.e.: $\varphi_t\circ\phi=\phi\circ \varphi_t$. Let us now consider a point $m\in\overline{\M}$ then
\begin{equation}
 \phi (\varphi_t(m))= \varphi_t (\phi(m))=\varphi_t (m)
\end{equation}
thus also $\varphi_t(m)\in\overline{\M}$ and the vector $\chi_f$ generating the flow is tangent to $\overline{\M}$.
\end{proof}

The following is a version of previous lemma in the infinite dimensional setup:

\begin{lm} \label{lemma:1a}
Let $\Phi$ be compact and $f$ be a $\Phi$-invariant function on $\M$ (i.e., $\phi_*f = f$ for all $\phi \in \Phi$) and $\Phi$ is clean then the Hamiltonian vector field $\chi_f$ is tangent to $\overline{\M}$ in every $f$-regular point.
\end{lm}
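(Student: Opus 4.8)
The plan is to reduce the statement to the pointwise tangency claim already contained in the proof of Fact \ref{Fact1}. Fix an $f$-regular point $m\in\overline{\M}$. The first observation is that $\chi_f|_m$ is a $\Phi$-fixed vector: by Lemma \ref{lemma_invF_to_invVF}, $\phi(m)$ is again $f$-regular and $\phi_{*\phi(m)}\chi_f|_m=\chi_f|_{\phi(m)}$ for every $\phi\in\Phi$; since $m\in\overline{\M}$ means $\phi(m)=m$, this reads $\phi_{*m}\chi_f|_m=\chi_f|_m$ for all $\phi\in\Phi$.

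Next I would invoke exactly the construction from the first half of the proof of Fact \ref{Fact1}, applied to the single $\Phi$-fixed vector $v:=\chi_f|_m\in T_m\M$. Extend $v$ to an arbitrary smooth vector field $X$ on $\M$ with $X|_m=v$, group-average it against the normalised Haar measure of the compact group $\Phi$ to obtain a genuinely $\Phi$-invariant vector field $Y=\int_\Phi {\rm d}\mu_H(\phi)\,(\phi_*X)$ with $Y|_m=v$, and then use Lemma \ref{lemma_invVF_to_comFLow} to conclude that the flow $\varphi_t$ of $Y$ commutes with all $\phi\in\Phi$. Because $m\in\overline{\M}$, this gives $\phi(\varphi_t(m))=\varphi_t(\phi(m))=\varphi_t(m)$, so the whole curve $t\mapsto\varphi_t(m)$ stays in $\overline{\M}$, and hence its tangent $Y|_m=v=\chi_f|_m$ lies in $T_m\overline{\M}$. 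As $m$ was an arbitrary $f$-regular point of $\overline{\M}$, this shows $\chi_f$ is tangent to $\overline{\M}$ wherever it is defined on $\overline{\M}$.

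The one point that deserves care — and the reason this is stated separately from Lemma \ref{lemma:1a-v1} — is that in the infinite-dimensional, weakly symplectic setting $\chi_f$ need not possess a flow, not even locally, so one cannot simply rerun the argument of Lemma \ref{lemma:1a-v1} with $\chi_f$ itself. The remedy is precisely that the auxiliary invariant field $Y$ above is \emph{not} $\chi_f$ but an arbitrary smooth extension of the single vector $\chi_f|_m$, chosen so that its flow exists (which is already what Fact \ref{Fact1} tacitly uses when $\Phi$ is compact); the Hamiltonian nature of $\chi_f$ enters only through Lemma \ref{lemma_invF_to_invVF}, to guarantee that $\chi_f|_m$ is $\Phi$-fixed in the first place. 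Thus the main obstacle is conceptual rather than computational: one must derive the tangency purely pointwise, via an auxiliary flow, instead of from a flow of $\chi_f$ that may fail to exist.
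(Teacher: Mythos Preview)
Your proposal is correct and follows essentially the same route as the paper: invoke Lemma~\ref{lemma_invF_to_invVF} to see that $v=\chi_f|_m$ is $\Phi$-fixed at any $f$-regular $m\in\overline{\M}$, extend $v$ to an arbitrary smooth field $X$, Haar-average to a $\Phi$-invariant field $Y$ with $Y|_m=v$, and then use Lemma~\ref{lemma_invVF_to_comFLow} on the flow of $Y$ to conclude tangency. Your added commentary on why one must work with an auxiliary field $Y$ rather than $\chi_f$ itself (the possible non-existence of a flow for $\chi_f$ in the weakly symplectic setting) is a helpful clarification that the paper leaves implicit.
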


\begin{proof}
Due to lemma \ref{lemma_invF_to_invVF} $\chi_f$ is $\Phi$-invariant. Let us consider a $f$-regular point $m\in \overline{\M}$. The vector $v=\chi_f|_m$ satisfies
\begin{equation}
\phi_{*m}v=v
\end{equation}
We can extend $v$ to a smooth vector field $X$ on $\M$. We define
\begin{equation}
Y:= \int_\Phi {\rm d}\mu_H(\phi)\; (\phi_*X)
\end{equation}
Let us notice that $Y_m=v$. Due to lemma \ref{lemma_invVF_to_comFLow} its flow $\varphi_t$ commutes with $\phi\in\Phi$, i.e.: $\varphi_t\circ\phi=\phi\circ \varphi_t$. Let us now consider a point $m\in\overline{\M}$ then
\begin{equation}
 \phi (\varphi_t(m))= \varphi_t (\phi(m))=\varphi_t (m)
\end{equation}
thus also $\varphi_t(m)\in\overline{\M}$ and the vector $v$ is tangent to $\overline{\M}$.
\end{proof}

\begin{lm} \label{lemma:3}
Let $F$ be a function on $\mathcal M$ such that its Hamiltonian vector field $\chi_F^\omega$ is tangent to $\overline{\mathcal M}$ at a $F$-regular point $m\in\overline{\M}$, and let $g$ be an arbitrary function on $\M$. Moreover, let $\Phi$ be clean\footnote{This ensures that $\bar{\omega}$ is non-degenerate, otherwise the theorem is in general not true. E.g. a constraint can produce a flow on its constraint surface, while its restriction is identically equal to zero. This is because the constraint surface has in general a degenerate symplectic structure.}. Then $m$ is also $F|_{\overline{\M}}$-regular and
\begin{equation}
\{F,g\}|_m=\{ F|_{\overline{\M}}, g|_{\overline{\M}} \}_{\overline{\M}}|_m
\end{equation}
In particular, if $\chi_F^\omega$ is tangent to $\overline{\mathcal M}$ everywhere (e.g. when $F$ is $\Phi$-invariant) then
\begin{equation}
(\{F,g\})|_{\overline{\mathcal M}}=\{ F|_{\overline{\mathcal M}}, g|_{\overline{\mathcal M}} \}_{\overline{\mathcal M}}
\end{equation}
\end{lm}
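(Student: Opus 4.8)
The plan is to reduce everything to a pointwise statement at the $F$-regular point $m\in\overline{\M}$ and use the defining equation \eqref{hamiltonian_vf} for the Hamiltonian vector field together with the cleanness hypothesis. The key observation is that the Hamiltonian vector field $\chi_F^{\overline{\omega}}$ with respect to the restricted symplectic form is nothing but $\chi_F^\omega$ itself, once we know the latter is tangent to $\overline{\M}$. First I would record what $F$-regularity at $m$ means: there is $X_m := \chi_F^\omega|_m \in T_m\M$ with $\omega_m(X_m,\,\cdot\,) = dF|_m$ on $T_m\M$. By hypothesis $X_m$ is actually tangent to $\overline{\M}$, i.e.\ $X_m\in T_m\overline{\M}$.

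\textbf{Main step.} Next I would verify that $X_m$ satisfies the defining equation for $\chi_{F|_{\overline{\M}}}^{\overline{\omega}}$ at $m$. For any $\bar{Y}\in T_m\overline{\M}$, viewing $\bar{Y}$ also as an element of $T_m\M$, we have $\overline{\omega}_m(X_m,\bar{Y}) = \omega_m(X_m,\bar{Y}) = dF|_m(\bar{Y}) = d(F|_{\overline{\M}})|_m(\bar{Y})$, where the first equality is the definition of $\overline{\omega}=\omega|_{\overline{\M}}$, the second is $F$-regularity, and the last is the chain rule for the restriction (the differential of $F|_{\overline{\M}}$ is the pullback of $dF$ along the inclusion). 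Hence $m$ is $F|_{\overline{\M}}$-regular and, by weak non-degeneracy of $\overline{\omega}$ (this is exactly where cleanness is used — without it $X_m$ would not be uniquely determined and indeed need not be the restricted Hamiltonian vector field), $\chi_{F|_{\overline{\M}}}^{\overline{\omega}}|_m = X_m = \chi_F^\omega|_m$. Then the bracket identity is immediate:
\begin{equation}
\{F|_{\overline{\M}},g|_{\overline{\M}}\}_{\overline{\M}}|_m = \chi_{F|_{\overline{\M}}}^{\overline{\omega}}(g|_{\overline{\M}})|_m = X_m(g|_{\overline{\M}})|_m = X_m(g)|_m = \chi_F^\omega(g)|_m = \{F,g\}|_m,
\end{equation}
using in the middle that $X_m$ is tangent to $\overline{\M}$ so that differentiating $g$ or its restriction $g|_{\overline{\M}}$ along $X_m$ at $m$ gives the same number.

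\textbf{Global version.} For the final ``in particular'' clause, I would simply note that if $\chi_F^\omega$ is tangent to $\overline{\M}$ at every point of $\overline{\M}$ — which by Lemma~\ref{lemma:1a-v1} (finite-dimensional) or Lemma~\ref{lemma:1a} (compact $\Phi$, infinite-dimensional) holds whenever $F$ is $\Phi$-invariant — then the pointwise identity holds at all $m\in\overline{\M}$, which is precisely the statement $(\{F,g\})|_{\overline{\M}} = \{F|_{\overline{\M}},g|_{\overline{\M}}\}_{\overline{\M}}$.

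\textbf{Expected obstacle.} The routine calculations are genuinely routine here; the only subtlety worth flagging is the role of non-degeneracy of $\overline{\omega}$, i.e.\ cleanness. The tangency of $\chi_F^\omega$ to $\overline{\M}$ guarantees that $X_m$ solves the equation defining the restricted Hamiltonian vector field, but to \emph{identify} it with $\chi_{F|_{\overline{\M}}}^{\overline{\omega}}$ — and hence to have a well-defined restricted bracket at all — one needs uniqueness of solutions, which is exactly weak non-degeneracy of $\overline{\omega}$. The footnote's counterexample (a constraint generating a nontrivial flow on its constraint surface while restricting to the zero function) is precisely the failure mode when this hypothesis is dropped, so I would make sure the write-up invokes cleanness at exactly this point and nowhere relies on it implicitly elsewhere.
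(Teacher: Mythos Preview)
Your proof is correct and follows essentially the same approach as the paper: both arguments show that the tangent vector $X_m=\chi_F^\omega|_m$ satisfies the defining equation for $\chi_{F|_{\overline{\M}}}^{\overline{\omega}}$ at $m$, invoke weak non-degeneracy of $\overline{\omega}$ (cleanness) to identify the two, and then read off the bracket identity from $\chi_F^\omega(g)|_m=\chi_{F|_{\overline{\M}}}^{\overline{\omega}}(g|_{\overline{\M}})|_m$. Your write-up is in fact slightly cleaner in isolating exactly where cleanness enters.
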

\begin{proof}
Since the vector field $\chi_F^{\omega}$ (associated with the full symplectic form $\omega$) is tangent to $\overline{\mathcal M}$ we can restrict it to $\overline{\M}$. We denote the restriction by $\chi_F^\omega|_{\overline{\M}}$. We have at $m$
\begin{equation}\label{lm_pb_restricted}
\chi^\omega_F|_{\overline{\M}}(g|_{\overline{\mathcal M}})|_m=\chi^{\omega}_F(g)|_m=\{F,g\}|_m
\end{equation}
The fact that $\chi^\omega_F$ is tangent allows us to restrict 
\begin{equation}
\omega|_{\overline{\mathcal M}}(\chi^\omega_F|_{\overline{\M}},\cdot)=
\omega(\chi^\omega_F,\cdot)|_{\overline{\mathcal M}}
\end{equation}
thus at $m$
\begin{equation}
\omega|_{\overline{\mathcal M}}(\chi^\omega_F|_{\overline{\M}},\cdot)=({\rm d}F)|_{\overline{\mathcal M}}={\rm d}|_{\overline{\mathcal M}}(F|_{\overline{\mathcal M}})
\end{equation}
However, we can define a Hamiltonian vector field $\chi^{\overline{\omega}}_F$ on the reduced manifold by
\begin{equation}
\omega|_{\overline{\mathcal M}}(\chi^{\overline{\omega}}_F,\cdot)|_m={\rm d}|_{\overline{\mathcal M}}(F|_{\overline{\mathcal M}})|_m
\end{equation}
with $\overline{\omega}=\omega|_{\overline{\mathcal M}}$ which is by assumption weakly non-degenerate. Hence, we have at $m$
\begin{align}\label{lm_vf_restricted}
\chi^{\overline{\omega}}_F|_m:=\chi^{\omega}_F|_{\overline{\M},m}
\end{align}
This result is crucial for the following line of equations:
\begin{align}
\{F,g\}|_m\overset{(\ref{lm_pb_restricted})}{=}\chi^\omega_F|_{\overline{\M}}(g|_{\overline{\mathcal M}})|_m\overset{(\ref{lm_vf_restricted})}{=}\chi^{\overline{\omega}}_F(g|_{\overline{\mathcal M}})|_m\overset{(\ref{def_pb})}{=}\{F|_{\overline{\mathcal M}},g|_{\overline{\mathcal M}}\}_{\overline{\mathcal M}}|_m
\end{align}
which ends the proof.
\end{proof}

This lemma is important as it allows to get easy access to the flow generated by symmetric functions. For example, we will see in subsection \ref{section:2_relation} an application thereof in the case of theories with gauge symmetries\footnote{The example of interest for a system with gauge symmetries and constraints is general relativity. Here,  $\Phi$ is a subgroup of gauge transformations and we will see that many constraints are already identically satisfied on $\overline{\M}$.}. Further, the lemma concerns the dynamics on the reduced manifold, which is enunciated in the following theorem.
\begin{tm} \label{theorem:2}
{\bf (Symmetry Restriction of Dynamics)} Let $(\overline{\mathcal M},\overline{\omega})$  be the symmetry restriction of $(\mathcal M, \omega)$ on which $\Phi$ acts clean, and let $H$ be a $\Phi$-invariant function. From lemma \ref{lemma:3}, it follows that the flow generated by $H|_{\overline{\mathcal{M}}}$ agrees with the flow of $H$ on $\overline{\M}$. In other words, the evolution with respect to $H$ of any observable $O:\mathcal{M} \to \mathbb{R}$, when restricted to $\overline{\mathcal M}$, agrees with the evolution of $O|_{\overline{\mathcal{M}}}$ with respect to $H|_{\overline{\M}}$ computed via $\{\cdot,\cdot\}_{\overline{\M}}$.
\end{tm}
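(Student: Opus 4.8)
The plan is to deduce Theorem \ref{theorem:2} essentially as a corollary of Lemma \ref{lemma:3} together with the basic theory of Hamiltonian flows, so the proof should be short. First I would observe that since $H$ is $\Phi$-invariant and $\Phi$ is clean, Lemma \ref{lemma:1a-v1} (or Lemma \ref{lemma:1a} in the infinite-dimensional case) guarantees that the Hamiltonian vector field $\chi_H^\omega$ is tangent to $\overline{\M}$ at every point, hence it restricts to a well-defined vector field $\chi_H^\omega|_{\overline{\M}}$ on $\overline{\M}$, and by equation (\ref{lm_vf_restricted}) in the proof of Lemma \ref{lemma:3} this restriction coincides with $\chi^{\overline{\omega}}_{H|_{\overline{\M}}}$, the Hamiltonian vector field of the restricted Hamiltonian computed with respect to $\overline{\omega}$.

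The second step is to pass from the equality of vector fields to the equality of flows. Because $\chi_H^\omega$ is tangent to $\overline{\M}$, its integral curves through points of $\overline{\M}$ stay in $\overline{\M}$; thus the flow $\varphi_t$ of $\chi_H^\omega$ on $\M$, when started at a point of $\overline{\M}$, is exactly the flow of the restricted vector field $\chi_H^\omega|_{\overline{\M}} = \chi^{\overline{\omega}}_{H|_{\overline{\M}}}$ on $\overline{\M}$. By uniqueness of integral curves (in finite dimensions; in the infinite-dimensional setting one assumes existence of the flow, as already done implicitly throughout Section \ref{section:2}), the two flows agree on $\overline{\M}$ for all $t$ for which they are defined. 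Equivalently, the restriction $\varphi_t|_{\overline{\M}}$ is the Hamiltonian flow of $H|_{\overline{\M}}$ with respect to $\overline{\omega}$.

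For the statement about observables, I would take any $O \in C^\infty(\M)$ and note that its evolution under $H$ is $O \circ \varphi_t$, so $(O \circ \varphi_t)|_{\overline{\M}} = (O|_{\overline{\M}}) \circ (\varphi_t|_{\overline{\M}})$, which by the previous paragraph is precisely the evolution of $O|_{\overline{\M}}$ under the flow of $H|_{\overline{\M}}$ on $(\overline{\M}, \overline{\omega})$. Infinitesimally this is just the identity $\{H, O\}|_{\overline{\M}} = \{H|_{\overline{\M}}, O|_{\overline{\M}}\}_{\overline{\M}}$, which is exactly the content of Lemma \ref{lemma:3} applied with $F = H$; integrating this relation along the flow (using that $\overline{\M}$ is flow-invariant so that the bracket identity holds at every time $t$) upgrades it to the statement about finite-time evolution.

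The only genuinely delicate point is the passage from the infinitesimal identity to the finite flow, and in particular the claim that the flow stays in $\overline{\M}$ — but this has already been secured by Lemmas \ref{lemma:1a-v1} and \ref{lemma:1a} via the commutation of $\varphi_t$ with $\Phi$ established in Lemma \ref{lemma_invVF_to_comFLow}. In the infinite-dimensional case one must additionally assume that $\chi_H^\omega$ is defined on a dense enough domain and that the flow exists, matching the standing assumptions of the section; with cleanness ensuring $\overline{\omega}$ is weakly non-degenerate, the restricted Hamiltonian vector field is then well-defined wherever $\chi_H^\omega$ is, and no further obstacle arises. Hence the theorem follows.
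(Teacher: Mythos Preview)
Your proposal is correct and follows exactly the approach intended by the paper: the theorem is stated there without a separate proof, as an immediate consequence of Lemma~\ref{lemma:3} (together with Lemmas~\ref{lemma:1a-v1}/\ref{lemma:1a} for tangency of the flow), and you have simply spelled out the details of that deduction. There is nothing to add or correct.
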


\subsection{Constraints and symplectic reduction}
\label{section:2_Symplectic_Red}

This section recalls the basic results of Marsden-Weinstein-Meyer symplectic reduction for constrained systems \cite{Mey:73,MW:74,MMOPR:07,Sni:13}. Constraints (denoted in the following by $\hat{J}$) are the generators of gauge transformation, i.e. a subgroup of symplectomorphisms  on a phase space $(\M,\omega)$. To allow comparison with our framework (and for later application) we state the reduction theorems for symplectic structures and the dynamics.\\

Assume that a finite dimensional Lie group $G$ acts symplectically on $\M$
\begin{equation}\label{def_actionG}
\Pi\colon G\rightarrow {\rm Symp}(\M), \quad{\rm that\; is}\quad \Pi(g)_*\omega=\omega
\end{equation}
with the $\Pi(g)_*$ as in definition \ref{definition:1}.\\
We assume that any one parameter family $\{g_t\}_{t\in\mathbb{R}}$ in $G$ posses a global generator $\hat{J}$ (in the sense that $\chi_{\hat{J}}$ generates the flow $\Pi(g_t)$) and moreover that the assignment of the generators is linear (i.e. for two families $g_t,g'_t$ with generators $\hat{J}, \hat{J}'$ respectively we have that $d/dt\;\Pi(g_tg'_t)|_{t=0}=\chi_{\hat{J}+\hat{J}'}$). Both properties are automatically given, if $(\Pi,G)$ allows for a {\it momentum map}: \\

\begin{df} {\bf (Momentum Map)} Let $(\M, \omega)$ be a symplectic manifold on which a Lie Group $G$ acts symplectically and smoothly via $\Pi$. A map $J: \M \to \mathfrak{g}^\ast$ (the dual of the Lie algebra of G) is said to be a \emph{momentum map}, iff 
\begin{enumerate}
\item $J$ is equivariant that is $\forall g\in G$:
\begin{align}\label{equivariance}
J(\Pi(g)m)=Ad^\ast_{g^{-1}} J(m), \hspace{20pt}\text{where}\hspace{20pt} Ad_g(\tau):= \frac{d}{dt}\; g\,e^{t\tau} g^{-1}|_{t=0}
\end{align}
\item For all elements $\tau\in\mathfrak{g}$ it is
\begin{align}\label{momentum_map:flow_generator}
\chi_{\hat{J}(\tau)} = \frac{d}{dt}\Pi(e^{t\tau})|_{t=0}
\end{align}
where $\hat{J}(\tau) :\mathcal M \to \R$ is defined by $\hat{J}(\tau)(m)=J(m)(\tau)$.
\end{enumerate}
\end{df}

In particular, every point $m\in\M$ is $\hat{J}(\tau)$-regular for $\tau\in \mathfrak{g}$.\footnote{This is true for Gauss and vector constraints in GR as they form a moment map. However, the scalar constraints do not belong to this class.}
Let us remind known facts about the momentum map (see \cite{MMOPR:07})

\begin{fact}
The momentum map is not unique. If one $J_1$ exists, another can be obtained by adding a constant map $\Delta J\colon \M\rightarrow g^*$ with property that 
\begin{align}\label{fact:constant_map}
Ad_g^*\Delta J=\Delta J\;.
\end{align}
Then, $J_2=J_1+\Delta J$ fulfills (\ref{equivariance}) and (\ref{momentum_map:flow_generator}). Conversely, the difference between any two momentum maps $J_1,J_2$ is a  constant $\Delta J$ obeying (\ref{fact:constant_map}).
\end{fact}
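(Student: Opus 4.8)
The plan is to prove the two assertions in turn: first that any $\Delta J := J_2 - J_1$ built from a constant $\mathfrak{g}^\ast$-valued map satisfying $Ad_g^\ast \Delta J = \Delta J$ produces a valid momentum map, and then conversely that the difference of two momentum maps is always such a constant.

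For the forward direction, fix $\Delta J \colon \M \to \mathfrak{g}^\ast$ which is literally constant on $\M$ (i.e. $\Delta J(m)$ does not depend on $m$) and satisfies $Ad_g^\ast \Delta J = \Delta J$ for all $g \in G$. Set $J_2 = J_1 + \Delta J$. I would check \eqref{equivariance} directly: $J_2(\Pi(g)m) = J_1(\Pi(g)m) + \Delta J = Ad_{g^{-1}}^\ast J_1(m) + Ad_{g^{-1}}^\ast \Delta J = Ad_{g^{-1}}^\ast J_2(m)$, using equivariance of $J_1$ and the invariance hypothesis on $\Delta J$. For \eqref{momentum_map:flow_generator}, I would note that for any $\tau \in \mathfrak{g}$ the function $\widehat{\Delta J}(\tau)\colon m \mapsto \Delta J(\tau)$ is constant on $\M$, hence $d(\widehat{\Delta J}(\tau)) = 0$ and therefore $\chi_{\widehat{\Delta J}(\tau)} = 0$ by definition \eqref{hamiltonian_vf} of the Hamiltonian vector field. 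Using linearity of $\hat J(\tau)$ in the first argument together with the fact (noted before the definition) that the Hamiltonian-vector-field assignment is linear, $\chi_{\hat J_2(\tau)} = \chi_{\hat J_1(\tau)} + \chi_{\widehat{\Delta J}(\tau)} = \chi_{\hat J_1(\tau)} = \frac{d}{dt}\Pi(e^{t\tau})|_{t=0}$, which is \eqref{momentum_map:flow_generator} for $J_2$.

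For the converse, let $J_1, J_2$ both be momentum maps and put $\Delta J := J_2 - J_1$. From \eqref{momentum_map:flow_generator} applied to each, $\chi_{\hat J_2(\tau)} = \chi_{\hat J_1(\tau)}$ for every $\tau$, so $\chi_{\widehat{\Delta J}(\tau)} = 0$, and weak non-degeneracy of $\omega$ forces $d(\widehat{\Delta J}(\tau)) = 0$; since this holds for a basis of $\mathfrak{g}$, every component of $\Delta J$ has vanishing differential, so $\Delta J$ is locally constant — assuming $\M$ connected, it is a genuine constant map. Then subtracting the equivariance relations \eqref{equivariance} for $J_1$ and $J_2$ gives $\Delta J = \Delta J(\Pi(g)m) = Ad_{g^{-1}}^\ast \Delta J(m) = Ad_{g^{-1}}^\ast \Delta J$ for all $g$, i.e. \eqref{fact:constant_map} (the sign/inverse on $g$ being immaterial since it holds for all group elements).

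I do not expect a serious obstacle here; the only points requiring a word of care are the implicit connectedness of $\M$ (needed to upgrade "locally constant" to "constant"), the compatibility of the linearity-in-$\tau$ bookkeeping with the stated linearity of the generator assignment, and being careful that "$d$ on $\M$ of a constant is zero $\Rightarrow$ Hamiltonian vector field is zero" uses precisely weak non-degeneracy of $\omega$ — all of which are routine given what the excerpt has already set up. The cleanest exposition is to isolate the single observation that $\widehat{\Delta J}(\tau)$ being constant on $\M$ is equivalent, via \eqref{hamiltonian_vf} and weak non-degeneracy, to $\chi_{\widehat{\Delta J}(\tau)} = 0$, and then run both directions off that equivalence together with the two defining properties of a momentum map.
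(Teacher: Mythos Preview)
Your proof is correct and follows essentially the same approach as the paper's. The only cosmetic difference is that in the converse direction the paper computes $d\widehat{\Delta J}(\tau) = \omega(\chi_{\hat J_1(\tau)},\cdot) - \omega(\chi_{\hat J_2(\tau)},\cdot) = 0$ directly, whereas you route through $\chi_{\widehat{\Delta J}(\tau)} = 0$ first; both are equivalent, and your explicit flagging of the connectedness assumption on $\M$ is a point the paper leaves implicit.
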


\begin{proof}
Adding such 
$\Delta J\colon \M\rightarrow g^*$ 
preserves the properties of the momentum map: \eqref{momentum_map:flow_generator} due to being constant and equivariance due to (\ref{fact:constant_map}).\\
On the other hand if $J_1,J_2$ are two momentum maps then their difference $\Delta J=J_1-J_2$ satisfies
\begin{equation}
d\widehat{\Delta J}(\tau)=d\hat{J_1}(\tau)-d\hat{J_2}(\tau)=
\omega(\chi_{\hat{J_1}(\tau)},\cdot)-\omega(\chi_{\hat{J_2}(\tau)},\cdot)
\end{equation}
However, by (\ref{momentum_map:flow_generator}) $\chi_{\hat{J_1}(\tau)}=\chi_{\hat{J_2}(\tau)}$ and thus $\widehat{\Delta J}(\tau)$ is constant. Using this fact we can check
\begin{align}
&\Delta J (m) = \Delta J(\Pi(g^{-1})(m))=J_1 (\Pi(g^{-1})m)-J_2(\Pi(g^{-2})(m))=\nonumber\\
&=Ad^*_g J_1(m)-Ad^*_g J_2 (m)=Ad^*_g \Delta J(m)
\end{align}
that is showing the fact.
\end{proof}

\begin{remark}
The requirement 1. of ``equivariance'' is sometimes dropped in the literature. If we consider $\Delta J$ without (\ref{fact:constant_map}), then equivariance is not preserved. Equivariance always holds up to a constant \cite{MMOPR:07}, however not always one can find $\Delta J$ such that $J+\Delta J$ is equivariant. The example is the group $\R^2$ with $\hat{J}(\alpha,\beta)=\alpha x+\beta p$. Let us notice that $J$ is an additional structure i.e. not every pair $(\Pi,G)$ admits a momentum map.
\end{remark}

However: 

\begin{fact}\label{lm:presymplectic}
Suppose that there exists a presymplectic potential 
\begin{equation}
\xi\in \prescript{}{1} C^\infty(\M),\quad \omega=d\xi
\end{equation}
that is $G$-invariant (i.e. $\Pi(g)_\ast\xi=\xi $).
Then the vector field $X_\tau=\frac{d}{dt}\Pi(e^{t\tau})|_{t=0}$ defines a momentum map via
\begin{equation}
\hat{J}(\tau)=-\xi(X_\tau)\;.
\end{equation}
\end{fact}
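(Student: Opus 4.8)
\textbf{Proof proposal for Fact \ref{lm:presymplectic}.}

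The plan is to verify the two defining properties of a momentum map for the candidate $\hat{J}(\tau) = -\xi(X_\tau)$, where $X_\tau = \frac{d}{dt}\Pi(e^{t\tau})|_{t=0}$. The key observation is that $\Pi(e^{t\tau})$ is a one-parameter family of symplectomorphisms preserving the presymplectic potential $\xi$, so $\mathcal{L}_{X_\tau}\xi = 0$. First I would invoke Fact \ref{fact-presymplectic} directly: since $X_\tau$ preserves $\xi$, it is automatically a Hamiltonian vector field with $X_\tau = \chi_{f_\tau}$ for $f_\tau = -\xi(X_\tau) = \hat{J}(\tau)$. This immediately gives property (\ref{momentum_map:flow_generator}): $\chi_{\hat{J}(\tau)} = X_\tau = \frac{d}{dt}\Pi(e^{t\tau})|_{t=0}$. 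The linearity of $\tau \mapsto \hat{J}(\tau)$ (needed to regard $J$ as a map to $\mathfrak{g}^\ast$) follows from linearity of $\tau \mapsto X_\tau$ together with linearity of the contraction $\xi(\,\cdot\,)$.

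Next I would establish equivariance (\ref{equivariance}). The natural route is to show $\hat{J}(Ad_g \tau)(m) = \hat{J}(\tau)(\Pi(g^{-1})m)$, which unpacks the claimed identity $J(\Pi(g)m) = Ad^\ast_{g^{-1}}J(m)$. For this I would compute $X_{Ad_g\tau} = \frac{d}{dt}\Pi(e^{t\,Ad_g\tau})|_{t=0} = \frac{d}{dt}\Pi(g e^{t\tau} g^{-1})|_{t=0} = \Pi(g)_\ast X_\tau$, using that $\Pi$ is a group homomorphism (this is essentially the corollary after Definition \ref{definition:1} on preservation of the group product, applied to conjugation). Then, using $G$-invariance of $\xi$, i.e. $\Pi(g)_\ast \xi = \xi$, and the fact that contraction is preserved under push-forward (equation (\ref{contraction})),
\begin{align}
\hat{J}(Ad_g\tau) = -\xi(\Pi(g)_\ast X_\tau) = -(\Pi(g)_\ast\xi)(\Pi(g)_\ast X_\tau) = -\Pi(g)_\ast\big(\xi(X_\tau)\big) = \Pi(g)_\ast \hat{J}(\tau),
\end{align}
and since $\Pi(g)_\ast f = f\circ\Pi(g)^{-1} = f\circ\Pi(g^{-1})$ for a function, this reads $\hat{J}(Ad_g\tau)(m) = \hat{J}(\tau)(\Pi(g^{-1})m)$, which is precisely the equivariance condition for $J$.

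I expect the main subtlety to lie not in any single computation but in keeping the index/variance bookkeeping straight: translating between the intrinsic statement "$J$ is equivariant" in terms of $Ad^\ast$ on $\mathfrak{g}^\ast$ and the concrete statement about the functions $\hat{J}(\tau)$ after pairing with $\tau\in\mathfrak{g}$, and making sure the $g$ versus $g^{-1}$ placement matches the convention fixed in (\ref{equivariance}). A secondary point worth a remark is that in the infinite-dimensional setting one should note that $X_\tau$ being a genuine Hamiltonian vector field (not merely defined on a subset of regular points) is guaranteed here precisely because $\mathcal{L}_{X_\tau}\xi = 0$ forces $\omega(X_\tau,\,\cdot\,) = -d(\xi(X_\tau))$ to be exact everywhere, so every point is $\hat{J}(\tau)$-regular — this is the content of the footnote following Definition (Momentum Map) and should be acknowledged rather than proved anew.
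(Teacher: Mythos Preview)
Your proposal is correct and follows essentially the same route as the paper: both use the Cartan formula (you via Fact~\ref{fact-presymplectic}, the paper by redoing that computation inline) to obtain $\chi_{\hat{J}(\tau)}=X_\tau$, and both establish equivariance from $X_{Ad_g\tau}=\Pi(g)_\ast X_\tau$ together with $G$-invariance of $\xi$ and compatibility of push-forward with contraction. Your version is slightly more streamlined in citing Fact~\ref{fact-presymplectic} directly rather than re-deriving it.
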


\begin{proof}
Let us notice that as $\Pi(g)^\ast\xi=0$ we also have
\begin{equation}
{\mathcal L}_{X_\tau}\xi=0
\end{equation}
and from Cartan formula
\begin{equation}
 {\mathcal L}_{X_\tau}\xi=d(\xi(X_\tau))+(d\xi)(X_\tau,\cdot)
\end{equation}
we get 
\begin{equation}
\omega(X_\tau,\cdot)=-d(\xi(X_\tau))
\end{equation}
Thus $-\xi(X_\tau)$ is a generator of $X_\tau$. Let us now notice that as $\xi$ is invariant under group action
\begin{equation}
\Pi(g)^*(\xi(X_\tau))=\xi(\Pi(g)^*X_\tau)
\end{equation}
However $\Pi(g)^*X_\tau=X_{Ad_g\tau}$ thus
\begin{equation}
\Pi(g)^*(\xi(X_\tau))=\xi(X_{Ad_g\tau})
\end{equation}
that is showing equivariance.
\end{proof}

\begin{fact}
Let $G$ be compact and acting symplectically on $\M$. If the symplectic structure $\omega$ of $\M$ is exact then there exists a moment map.
\end{fact}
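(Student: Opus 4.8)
The plan is to combine the previous two facts, namely that a $G$-invariant presymplectic potential $\xi$ induces a momentum map (Fact \ref{lm:presymplectic}), with the observation that compactness of $G$ lets one manufacture such an invariant potential out of an arbitrary one by group averaging over the Haar measure. Since $\omega$ is exact, we may pick some $\xi_0\in{}_1C^\infty(\M)$ with $d\xi_0=\omega$; this $\xi_0$ need not be $G$-invariant, but the averaged $1$-form
\begin{equation}
\xi := \int_G {\rm d}\mu_H(g)\; \Pi(g)_*\xi_0
\end{equation}
will be, precisely because $\mu_H$ is left-invariant (so $\Pi(h)_*\xi = \int_G {\rm d}\mu_H(g)\,\Pi(hg)_*\xi_0 = \xi$ after the change of variables $g\mapsto h^{-1}g$). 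One must also check that $\xi$ is still a presymplectic potential: since the exterior derivative commutes with the push-forward by each fixed $\Pi(g)$ and with the integral over the (compact, hence finite-measure) group, $d\xi = \int_G {\rm d}\mu_H(g)\,\Pi(g)_* d\xi_0 = \int_G {\rm d}\mu_H(g)\,\Pi(g)_*\omega = \int_G {\rm d}\mu_H(g)\,\omega = \omega$, using $\Pi(g)_*\omega=\omega$ from \eqref{def_actionG} and the normalisation of $\mu_H$. With a $G$-invariant presymplectic potential in hand, Fact \ref{lm:presymplectic} directly yields the momentum map $\hat J(\tau) = -\xi(X_\tau)$, completing the proof.

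The key steps, in order, are: (1) invoke exactness to choose a global presymplectic potential $\xi_0$; (2) define $\xi$ by Haar averaging and verify $G$-invariance using bi-invariance of $\mu_H$ together with the compatibility of push-forward with the group product ($\Pi(h)_*\Pi(g)_* = \Pi(hg)_*$, from the corollary after Definition \ref{definition:1}); (3) verify that averaging preserves the property $d\xi=\omega$, exchanging $d$ with the integral and using symplectic invariance of the $G$-action; (4) apply Fact \ref{lm:presymplectic} to conclude. I would also note in passing that this produces an equivariant momentum map, so the construction really does give a momentum map in the full sense of the earlier definition, not merely a weaker variant.

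The main technical obstacle is the manipulation of differential forms under the integral over $G$: one needs that $\int_G {\rm d}\mu_H(g)\,\Pi(g)_*\xi_0$ is well-defined as a smooth $1$-form and that $d$ and $\mathcal{L}_{X_\tau}$ pass through the integral sign. For a compact Lie group acting smoothly this is standard (the integrand depends smoothly on $g$, $G$ is a finite-measure space, and differentiation under the integral sign is justified by dominated convergence on compacta), so I would state it as a routine fact rather than belabour it — in the finite-dimensional setting of this section there is no subtlety. No other step presents difficulty: everything else is a direct citation of Fact \ref{lm:presymplectic} and the elementary bookkeeping of Haar invariance.
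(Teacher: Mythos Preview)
Your proof is correct and follows essentially the same approach as the paper: choose an arbitrary presymplectic potential, average it over $G$ using the normalised Haar measure to obtain a $G$-invariant one (checking that $d\xi=\omega$ survives averaging by $G$-invariance of $\omega$), and then invoke Fact \ref{lm:presymplectic}. You supply more justification for the invariance and the exchange of $d$ with the integral than the paper does, but the argument is the same.
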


\begin{proof}
Let us notice that $\omega$ being exact means
\begin{equation}
\omega={\rm d}\tilde{\xi}
\end{equation}
with a presymplectic potential $\tilde{\xi} \in \Omega^1(\M)$.
However, in general $\tilde{\xi}$ is not $G$-invariant. Let us notice that
\begin{equation}
\xi:=\int_G {\rm d}\mu_H(g)\ \Pi(g)_*\tilde{\xi}
\end{equation}
with $\mu_H$, the normalised Haar-measure on $G$, satisfies: $\Pi(g)_*\,\xi=\xi $, i.e. $G$-invariance, and also ${\rm d}\xi=\omega$ because $\omega$ is $G$-invariant by definition (\ref{def_actionG}). 
Thus, we can construct the moment map $J$ by fact \ref{lm:presymplectic}
\begin{equation}\label{J:compact}
\hat{J}(\tau):=-\xi\left(\left.\frac{d}{dt}\Pi\left(e^{t\tau}\right)\right|_{t=0}\right)
\end{equation}
\end{proof}

In  constrained systems we understand $\hat{J}(\tau)$ as the set of all constraints and are interested in the locus of the constraints. That is all $m\in\M$ such that $J(m)(\tau)=0$ for all $\tau\in \mathfrak{g}$ -- in other words: such that $J(m)$ is the $0$ map on $\mathfrak{g}$. Hence, the constraint surface is defined as the preimage of $0$:
\begin{equation}\label{J-1}
J^{-1}(0):=\{m\in \M\colon J(m)=0\}
\end{equation}
Note that the constraint surface is preserved by action of $G$ due to equivariance of $J$.\\
We consider all points in $\M$ to be physically equivalent that lie on the same orbit 
\begin{equation}
[m] := \{m' \in \mathcal M : \Pi(g)(m') = m \ \text{for some} \ g \in G\}
\end{equation}

\begin{remark}
The symplectic reduction is a description of first class constraints. Only first class (on shell) constraints are targeted by the momentum map.
\end{remark}

In nice situations (for example in the neighbourhood of the points where stabiliser $G_m=\{g\in G\colon \Pi(g)(m)=m\}$ is trivial) the resulting quotient space 
\begin{equation}\label{MmodulusG}
\M/\!\!/G=\{[m]\colon m\in J^{-1}(0)\}
\end{equation}
is a manifold with the projection
\begin{align}\label{projection_MWM}
{\bf P}^{J}\colon J^{-1}(0)\hspace{10pt}&\rightarrow\hspace{10pt} \M/\!\!/G\\
m' \hspace{10pt}&\mapsto\hspace{10pt} [m] {\rm \;such\;that\;}m'\in [m]
\end{align}

It turns that $\M/\!\!/G$ inherits a symplectic structure $\omega_J$ due to the symplectic reduction theorem of Mardsen-Weinstein-Meyer:

\begin{tm}{\bf (Symplectic Reduction Theorem \cite{Mey:73,MW:74,MMOPR:07})}\label{tm:sympl_red}
Let $(\M,\omega)$ be a symplectic manifold on which $G$ acts symplectically and has a momentum map $J:\M \to \mathfrak{g}^*$.\\
When $\M/\!\!/G$ is a manifold then $\M//G$ is moreover a symplectic manifold with symplectic from $\omega_J$ defined by
\begin{align}
{\bf P}_*^J \omega_J = \omega |_{J^{-1}(0)}
\end{align}
with ${\bf P}^J_*$ denoting the push-forward of ${\bf P}^J$ from (\ref{projection_MWM}).\footnote{Note that while restricting $\omega$ to $J^{-1}(0)\subset \M$ is in general degenerate, said degeneracy is due to the equivalence classes and thus removed by $P^J$.}
\end{tm}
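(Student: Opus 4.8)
The plan is to reproduce the classical Marsden--Weinstein--Meyer construction, working under the regularity hypotheses that make the statement go through — concretely, that $G$ acts freely and properly on $J^{-1}(0)$. Free action forces every $dJ_m$ to be surjective (the annihilator of its image is the Lie algebra of the stabiliser, which is then trivial), so $0$ is a regular value and $\iota\colon J^{-1}(0)\hookrightarrow\M$ is an embedded submanifold; properness then makes ${\bf P}^J\colon J^{-1}(0)\to\M/\!\!/G$ a surjective submersion exhibiting $J^{-1}(0)\to\M/\!\!/G$ as a principal $G$-bundle. Throughout I write $\mathcal{O}_m:=G\cdot m$ for the orbit through $m$ and $X_\tau=\frac{d}{dt}\Pi(e^{t\tau})|_{t=0}$ for the infinitesimal generators.

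First I would establish the pointwise identity underlying everything. For $m\in J^{-1}(0)$, $\tau\in\mathfrak{g}$ and $v\in T_m\M$, property (\ref{momentum_map:flow_generator}) gives $\omega_m(X_\tau|_m,v)=d\hat{J}(\tau)|_m(v)$, so $v\in\ker dJ_m=T_mJ^{-1}(0)$ if and only if $v$ is $\omega$-orthogonal to $T_m\mathcal{O}_m$, that is
\begin{equation}
T_mJ^{-1}(0)=\big(T_m\mathcal{O}_m\big)^{\omega}.
\end{equation}
Equivariance (\ref{equivariance}) together with $Ad^\ast_{g^{-1}}0=0$ shows $\mathcal{O}_m\subset J^{-1}(0)$, hence $T_m\mathcal{O}_m\subset T_mJ^{-1}(0)$; combined with the identity this says $J^{-1}(0)$ is coisotropic with isotropic fibres. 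Consequently the kernel of $\iota^\ast\omega$ at $m$ is exactly $T_mJ^{-1}(0)\cap\big(T_mJ^{-1}(0)\big)^{\omega}=T_m\mathcal{O}_m$, i.e. precisely the vertical directions of ${\bf P}^J$ — this both explains the footnote to the statement and shows that quotienting by the orbits is the right remedy.

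Next I would define $\omega_J$ on $T_{[m]}(\M/\!\!/G)\cong T_mJ^{-1}(0)/T_m\mathcal{O}_m$ by $\omega_J(d{\bf P}^J v,\,d{\bf P}^J w):=\omega_m(v,w)$, equivalently $({\bf P}^J)^\ast\omega_J=\iota^\ast\omega$, which is the defining relation in the statement. Independence of the representative is immediate: replacing $v$ by $v+X_\tau|_m$ changes the value by $\omega_m(X_\tau|_m,w)=d\hat{J}(\tau)|_m(w)=0$ since $w\in\ker dJ_m$, and symmetrically in $w$; independence of the point chosen in the orbit $[m]$ follows from $G$-invariance of $\omega$ (\ref{def_actionG}) together with ${\bf P}^J\circ\Pi(g)={\bf P}^J$. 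Since ${\bf P}^J$ is a surjective submersion, $({\bf P}^J)^\ast$ is injective on forms, so $\omega_J$ is the unique form obeying the defining relation, it is smooth because $\iota^\ast\omega$ is and descends, and it is closed because $({\bf P}^J)^\ast d\omega_J=d\iota^\ast\omega=\iota^\ast d\omega=0$.

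The step I expect to carry the real content is non-degeneracy of $\omega_J$. Suppose $\omega_J(d{\bf P}^J v,\cdot)=0$ at $[m]$; then $\omega_m(v,w)=0$ for all $w\in T_mJ^{-1}(0)$, i.e. $v\in\big(T_mJ^{-1}(0)\big)^{\omega}$. By the first step this equals $\big((T_m\mathcal{O}_m)^{\omega}\big)^{\omega}=T_m\mathcal{O}_m$, using the involution $(W^\omega)^\omega=W$ of the $\omega$-orthogonal complement in a symplectic vector space; hence $v\in T_m\mathcal{O}_m$, so $d{\bf P}^J v=0$, proving $\omega_J$ non-degenerate. I would close by noting that every step above is insensitive to $\dim\M$ except the involution $(W^\omega)^\omega=W$, which in a merely weakly symplectic $\M$ has to be checked on the relevant subspaces — unproblematic here since the orbit directions $T_m\mathcal{O}_m$ are finite-dimensional ($G$ being finite-dimensional) — and that all assertions are local to the region where the stabilisers are trivial, in accordance with the hypothesis that $\M/\!\!/G$ is a manifold.
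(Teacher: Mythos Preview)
Your proof is correct and reproduces the classical Marsden--Weinstein--Meyer argument. Note, however, that the paper does not actually prove this theorem: it is stated with citations to \cite{Mey:73,MW:74,MMOPR:07} and used as an imported result, so there is no paper proof to compare against. Your exposition of the standard route --- the identification $T_mJ^{-1}(0)=(T_m\mathcal{O}_m)^{\omega}$, descent of $\iota^\ast\omega$ through the orbit quotient, and non-degeneracy via the involution $(W^\omega)^\omega=W$ --- is exactly what one finds in those references, with the appropriate caveat about weak symplectic structures and finite-dimensional orbit directions.
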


Physically, we are interested only in gauge invariant observables $O$, by which we mean those functions whose restriction to the constraint surface is invariant under the action of $G$, that is: $O|_{J^{-1}(0)}$ is $G$-invariant. A gauge invariant observable can be transferred to
$\M/\!\!/G$ and its flow stays in $J^{-1}(0)$, which is the following basic fact about symplectic reduction \cite{MMOPR:07,MR86}

\begin{tm} {\bf (Symplectic Reduction of Dynamics)}\label{thm:3}
Let $H$ be a $G$-invariant function on $J^{-1}(0)$ and $H^J: \M/\!\!/G\to\mathbb{R}$ defined by $H=H^J\circ {\bf P}_J$. Then, $\chi_H^\omega=\chi_H$ is also $G$-invariant and its flow leaves $J^{-1}(0)$ invariant. Finally, $\chi_{H^J}^{\omega_J}={\bf P}^J_{\ast}\;(\chi_H^\omega |_{J^{-1}(0)})$.
\end{tm}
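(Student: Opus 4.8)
The plan is to prove the three assertions of Theorem \ref{thm:3} in the order they are stated, leaning heavily on the structural facts already established for momentum maps and on the Symplectic Reduction Theorem \ref{tm:sympl_red}. First, since $H$ is $G$-invariant on $J^{-1}(0)$ and the action is symplectic, I would argue that $\chi_H^\omega$ is $G$-invariant along the constraint surface: this is the same computation as in lemma \ref{lemma_invF_to_invVF}, using $\Pi(g)_*\omega = \omega$ together with $\Pi(g)_* H = H$ (on $J^{-1}(0)$, which is what is needed since $\chi_H$ is evaluated there), so that $\Pi(g)_{*}\chi_H^\omega = \chi_H^\omega$ at every point of $J^{-1}(0)$.

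Second, I would show the flow $\varphi_t$ of $\chi_H^\omega$ preserves $J^{-1}(0)$. The key point is that for each $\tau\in\mathfrak{g}$, $\{H,\hat J(\tau)\} = \chi_H^\omega(\hat J(\tau)) = -\chi_{\hat J(\tau)}^\omega(H)$ (antisymmetry of the Poisson bracket), and $\chi_{\hat J(\tau)}^\omega = \frac{d}{dt}\Pi(e^{t\tau})|_{t=0}$ by (\ref{momentum_map:flow_generator}) is the infinitesimal generator of the $G$-action; since $H$ is $G$-invariant on $J^{-1}(0)$, this derivative vanishes there. Hence $\frac{d}{dt}\hat J(\tau)(\varphi_t(m)) = \{\hat J(\tau), H\}(\varphi_t(m)) = 0$ for $m\in J^{-1}(0)$ as long as the trajectory stays on the surface, so by uniqueness of solutions $\varphi_t(m)\in J^{-1}(0)$ for all $t$. (Strictly, one runs the standard argument: the function $t\mapsto \hat J(\tau)(\varphi_t(m))$ satisfies a linear ODE with zero initial data on the surface, forcing it to remain zero.)

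Third, for the identity $\chi_{H^J}^{\omega_J} = {\bf P}^J_*(\chi_H^\omega|_{J^{-1}(0)})$, I would first note the previous two steps make the right-hand side well-defined: $\chi_H^\omega$ is tangent to $J^{-1}(0)$ by step two, and being $G$-invariant it is $\Pi(g)$-related to itself, so it descends to a well-defined vector field on the quotient under ${\bf P}^J_*$. Then I would verify it satisfies the defining equation (\ref{hamiltonian_vf}) for $\chi_{H^J}^{\omega_J}$ with respect to $\omega_J$. Contracting an arbitrary $\bar Y$ on $\M/\!\!/G$, lift it to a vector $Y$ on $J^{-1}(0)$ that is ${\bf P}^J$-related to $\bar Y$; then $\omega_J({\bf P}^J_* \chi_H^\omega, \bar Y) \circ {\bf P}^J = ({\bf P}^J_*\omega_J)(\chi_H^\omega, Y) = \omega|_{J^{-1}(0)}(\chi_H^\omega, Y) = dH(Y) = d(H^J\circ {\bf P}^J)(Y) = (dH^J(\bar Y))\circ {\bf P}^J$, using $H = H^J\circ {\bf P}^J$ and the defining property of $\omega_J$ from Theorem \ref{tm:sympl_red}. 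Since ${\bf P}^J$ is a surjective submersion this gives $\omega_J({\bf P}^J_*\chi_H^\omega, \bar Y) = dH^J(\bar Y)$ for all $\bar Y$, and nondegeneracy of $\omega_J$ identifies ${\bf P}^J_*\chi_H^\omega$ with $\chi_{H^J}^{\omega_J}$.

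The main obstacle is not any single computation but the bookkeeping of well-definedness: one must be careful that $\chi_H^\omega$ genuinely projects to the quotient (it is tangent to the orbits' ambient surface, is $G$-invariant, and — crucially — is tangent along orbit directions to orbits so that ${\bf P}^J_*$ is unambiguous), and that the lift $Y$ of $\bar Y$ exists and the final equality is independent of the choice of lift (it is, because the two sides differ by something in the kernel of ${\bf P}^J_*$ on which both $\omega|_{J^{-1}(0)}(\chi_H^\omega,\cdot)$ and $dH$ vanish — the former by the degeneracy-direction remark in the footnote to Theorem \ref{tm:sympl_red}, the latter by $G$-invariance of $H$). In the infinite-dimensional or merely-$G$-invariant-on-shell setting one should also note these arguments only use the values of $H$ and $\omega$ on $J^{-1}(0)$, which is all that is available, so no extension of $H$ off the constraint surface is needed.
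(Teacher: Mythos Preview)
The paper does not supply its own proof of Theorem~\ref{thm:3}; it simply cites it as a basic fact from the Marsden--Weinstein--Meyer literature \cite{MMOPR:07,MR86}. Your argument is the standard textbook one and is correct.

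Two small refinements are worth making. In step~2, your parenthetical about a ``linear ODE'' is only literally true in the setting of the remark following the theorem (where $\{\hat J(\tau),H\}=\hat J(\mu_\tau)$); if $H$ is $G$-invariant on all of $\M$ the situation is simpler, since then $\{\hat J(\tau),H\}\equiv 0$ identically and the conservation of $\hat J(\tau)$ along the flow is immediate without any bootstrapping. In step~1, if one reads the hypothesis as ``$G$-invariant only on $J^{-1}(0)$'', then $\chi_H^\omega$ need not be $G$-invariant as a vector field on $\M$; what survives is that $\Pi(g)_*\chi_H^\omega-\chi_H^\omega$ lies in the orbit directions at points of $J^{-1}(0)$ (since $d(\Pi(g)_*H-H)$ annihilates $T_mJ^{-1}(0)$ and the symplectic complement of $T_mJ^{-1}(0)$ is the orbit tangent space), which is exactly what is needed for ${\bf P}^J_*\chi_H^\omega$ to be well-defined. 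You essentially note this in your final paragraph, but it is worth being explicit that the theorem's phrase ``$\chi_H$ is also $G$-invariant'' should in that generality be read modulo orbit directions.
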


\begin{remark}
The Hamiltonian $H$ does not need to be invariant on the whole phase space. In particular, the following transformation rule
\begin{equation}
\forall_{g\in G}\hspace{5pt}  g_*H=H+J(\lambda_g),\quad \lambda_g\colon \M\rightarrow  {\mathfrak{g}}
\end{equation}
is covered by the theorem. Moreover, even when the constraints do not form a Lie algebra but some linear space $W$, the statement can be generalised given that the following additional condition holds:
\begin{equation}
\forall_{\tau\in W} \hspace{5pt} \{\hat{J}(\tau), H\}=\hat{J}(\lambda_\tau),\quad \lambda_\tau\colon \M\rightarrow  {\mathfrak{g}}
\end{equation}
This ensures that the dynamics can be defined unambiguously on the symplectically reduced phase space.
\end{remark}

\begin{remark}
We finish this section with a brief remark on the situation when $G$ is {\it not} generated by its Lie algebra. A typically situation for such a scenario are not-connected groups. If $G$ is not connected, we denote the component connected to the identity by $G^o$ and (keeping mind that all components of $G$ are isomorphic to $G^o$)
\begin{align}
\mathcal{G}:= G/G^o
\end{align}
is called the {\it mapping class group}. Now, $\mathfrak{g}$ is the tangent space of $G$ at the identity element, thus its flow can only generate $G^o$. Hence, we see that the moment map is related to and describes only $G^o$ not the whole $G$.
The symplectormorphisms associated to the rest $G$ cannot be expressed via constraints,
they are so called ``large'' gauge transformations. Of course, they preserve \eqref{J-1}  as well, and we take the modulus with respect to the whole gauge group \eqref{MmodulusG}. Alternatively, one may perform symplectic reduction restricted to $G_o$ and treat the large diffeomorphisms manually afterwards.\footnote{An example for this is GR, where $\Diff(\sigma)$ is not connected. The diffeo-constraint $D$ actually only generates $\Diff^0(\sigma)$ and there are many nontrivial diffeomorphisms that still can be implemented by symplectic transformations but are not of this form. An intersting aspect is that the canonical analysis forces us only to implement $\Diff^o(\sigma)$ not all of $\Diff(\sigma)$. It is therefore not agreed upon, whether the elements of $\mathcal{G}(\sigma)$ produce physical change or not \cite{Giu95,Giu97,Giu06}.} In particular the Hamiltonian on $J^{-1}(0)$ is supposed to be invariant also under large gauge transformations.
\end{remark}

\subsection{Relation between symmetry restriction and symplectic reductions}
\label{section:2_relation}
This section will elucidate on the relation between symmetry restriction and symplectic reduction. Let us therefore consider $\Phi\subset G\,$.\footnote{This situation is similar to the one encountered in GR which will be topic of the next sections} By restricting to $\overline{\M}$ from (\ref{Mbar-def}) we are thus considering only symmetric configurations and moreover only in a specific gauge.\footnote{In this case $\overline{\M}$ will usually not be in the smooth part of $J^{-1}(0)$, but this issue is not relevant for our construction although it might be important if one wants to develop some perturbation theory to go beyond exact symmetry.} \\

First, we want to point out that the existence of a non-empty $\overline{\M}$ poses restrictions on the framework of symplectic reduction: theorem \ref{tm:sympl_red} requires $\M/\!\!/ G$ to be a manifold. In general there will be singular points, precisely those on which $G$ does not act free. In other words, at $\overline{\M}/\!\!/G\subset \M/\!\! /G$ the later one will not be a manifold and thus symplectic reduction is not applicable.\footnote{Especially in GR the diffeomorphism group is such that does not act free and hence gives rise to the mentioned problem at the points where symmetries are present. Of course, these are exactly the points into which we are interested over the course of this manuscript.} In a certain sense, {\it s} will be complementary to the symplectic reduction: first, we will symmetry reduce to $\overline{\M}$ on which a non-degenerate symplectic form exists -- thereby taking care of the singular points -- and afterwards we can perform a symplectic reduction with respect to the remaining symplectomorphisms in $G$.\\
Hence, we ask what part of $G$ remains after symmetry restriction with respect to $\Phi\subset G$ has been performed.\\

Our first considerations concern the constraint surface for which we are obviously interested into determining $J^{-1}(0)\cap \overline{\M}$. A possible strategy is to compute
\begin{align}
\hat{\overline{J}}(\tau):=\hat{J}(\tau)\mid_{\overline{\M}},\hspace{30pt}\forall \tau\in\mathfrak{g},
\end{align}

i.e. the restriction of the constraints to $\overline{\M}$, in order to find out which constraints are already trivially satisfied (i.e. for which $\tau$ it is $\hat{\overline{J}}(\tau)=0$) and which subset $\mathfrak{g}_\Phi\subset \mathfrak{g}$ remains to be imposed. In general, this may be quite cumbersome (especially in cases where the constraints do not form a Lie algebra), however also a general characterisation of the remaining constraints is possible:\\

\begin{lm}\label{lemma5}
The only non-vanishing constraints on $\overline{\M}$ are $\hat{\overline{J}}$ for which
\begin{align}
\overline{J}: \overline{\M} \to V
\end{align}
with 
\begin{equation}\label{remaining_constraints_V}
V=\{v\in\mathfrak{g}^*\colon \forall \phi\in \Phi\subset G,\ Ad^\ast_{\phi^{-1}}v=v\}
\end{equation}
Moreover, if $\Phi$ is compact then $V$ is the dual to the Lie algebra
\begin{equation}\label{remaining_constraints_gphi}
\mathfrak{g}_\Phi=\{\tau\in \mathfrak{g}\colon \forall_{\phi\in\Phi}\;Ad_\phi(\tau)=\tau\}
\end{equation}
\end{lm}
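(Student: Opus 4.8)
The plan is to first establish the claim about $V$ by a direct computation using the equivariance of the momentum map, and then deduce the statement about $\mathfrak{g}_\Phi$ in the compact case by an averaging argument together with the exponential map. I would begin by recalling that $\hat{\overline J}(\tau) = \hat J(\tau)|_{\overline\M}$ and that, for $\phi\in\Phi\subset G$ and $m\in\overline\M$, we have $\Pi(\phi)(m) = m$ (this is exactly the defining property of $\overline\M$, since $\Phi$ acts as symplectomorphisms). Plugging this into the equivariance relation \eqref{equivariance} gives, for every $m\in\overline\M$,
\begin{equation}
\overline J(m) = J(\Pi(\phi)(m)) = Ad^\ast_{\phi^{-1}} J(m) = Ad^\ast_{\phi^{-1}}\overline J(m),
\end{equation}
so the image of $\overline J$ automatically lies in the $\Phi$-invariant subspace $V$ of $\mathfrak g^\ast$ defined in \eqref{remaining_constraints_V}. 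Conversely, one should argue that all of $V$ is genuinely needed, i.e.\ that the components of $\overline J$ along $V$ do not identically vanish in general and that no smaller space suffices — this is really a statement that the constraints one must still impose on $\overline\M$ are parametrised by (the annihilator structure dual to) $V$, so it is more a bookkeeping identification than something requiring proof; I would phrase it as: among the $\tau\in\mathfrak g$, the constraint $\hat{\overline J}(\tau)$ depends only on the class of $\tau$ modulo those directions that pair trivially with $V$, and $V$ is precisely the space of values that $\overline J$ can take.

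Next, for the compact case, the plan is to show $V = (\mathfrak g_\Phi)^\ast$ where $\mathfrak g_\Phi$ is as in \eqref{remaining_constraints_gphi}, by exhibiting $\mathfrak g_\Phi$ and its annihilator as complementary to $V$ under the natural pairing. Concretely, when $\Phi$ is compact I would use the normalised Haar measure $\mu_H$ on $\Phi$ to define the averaging projector $p(\tau) := \int_\Phi \mathrm d\mu_H(\phi)\, Ad_\phi(\tau)$ on $\mathfrak g$; by invariance of the Haar measure, $p$ is a projection onto $\mathfrak g_\Phi = \{\tau : Ad_\phi\tau = \tau\ \forall\phi\}$, and dually $p^\ast := \int_\Phi \mathrm d\mu_H(\phi)\, Ad^\ast_{\phi^{-1}}$ is a projection of $\mathfrak g^\ast$ onto $V$. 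The image of $p^\ast$ can then be identified with the dual of the image of $p$: an element $v\in V$ annihilates $\ker p = \{\tau - Ad_\phi\tau\}$-span, hence factors through $\mathfrak g/\ker p \cong \operatorname{im} p = \mathfrak g_\Phi$, giving a canonical isomorphism $V \cong (\mathfrak g_\Phi)^\ast$. (One could alternatively put an $Ad$-invariant inner product on $\mathfrak g$ via averaging, which exists precisely because $\Phi$ is compact, and use it to identify $\mathfrak g \cong \mathfrak g^\ast$ equivariantly, whence $\mathfrak g_\Phi$ maps to $V$; I would probably present whichever is shorter.)

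The main obstacle I anticipate is not any single computation but rather the precise meaning of the ``conversely'' / ``only'' part of the statement — i.e.\ making rigorous that $V$ is not merely an upper bound for the image of $\overline J$ but exactly the right space, and relatedly clarifying in what sense $\mathfrak g_\Phi$ (rather than some proper subspace of it) is the set of directions along which constraints remain to be imposed. In the non-compact case one has only $V$ (a closed $\Phi$-invariant subspace of $\mathfrak g^\ast$) without a clean dual description, and the averaging projector need not exist, so the identification $V = (\mathfrak g_\Phi)^\ast$ genuinely uses compactness; I would flag this explicitly. I would also be careful that $\overline\M$ need not lie in the smooth locus of $J^{-1}(0)$ (as already remarked in the text), so all statements about $\overline J$ should be read pointwise on $\overline\M$ without invoking regularity of the constraint surface there — equivariance \eqref{equivariance} is an identity of maps and needs no such hypothesis, which is why the first step goes through cleanly.
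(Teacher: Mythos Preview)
Your proposal is correct and follows essentially the same line as the paper. The first part---using equivariance at $\Phi$-fixed points to force $\overline J(m)\in V$---is identical to the paper's argument. For the compact case the paper chooses your second alternative: it builds an $Ad_\Phi^*$-invariant scalar product on $\mathfrak g$ by Haar averaging, uses it to identify $\mathfrak g\cong\mathfrak g^*$, and then checks that the vector $s_{J(m)}$ dual to $J(m)$ lands in $\mathfrak g_\Phi$ and that $\hat J(\tau)$ vanishes for $\tau\perp\mathfrak g_\Phi$. Your primary route via the averaging projector $p=\int_\Phi Ad_\phi\,d\mu_H$ and the annihilator identification $V=\operatorname{Ann}(\ker p)\cong(\mathfrak g_\Phi)^*$ is an equally valid and slightly more invariant packaging of the same idea; both arguments use compactness in exactly the same way (existence of normalised Haar measure).
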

\begin{proof}
When restricting the moment map $J$ to $\overline{\M}$, it follows automatically for $\phi\in\Phi$ and $m\in\overline{\M}$:
\begin{align}\label{equiv_on_mbar}
J(m)=J(\phi_* m)= Ad^*_{\phi^{-1}} J(m)
\end{align}
Thus, the possible remaining constraints are $\hat{\overline{J}}$ for which $\overline{J}:\overline{\M}\to V$ and $V$ is given by (\ref{remaining_constraints_V}).\\
Now, assume that $\Phi$ is compact, then by group-averaging we have a $Ad_\Phi^*$-invariant scalar product $\langle .,.\rangle$ on $\mathfrak{g}$  which allows us to identify covectors with vectors, hence $\mathfrak{g}\equiv \mathfrak{g}^*$.\\
Therefore, we can associate with every $J(m)\in \mathfrak{g}^*$ and Lie algebra element via:
\begin{align}
J(m)( \;.\;) =: \langle s_{J(m)}, \;.\;\rangle
\end{align}
We will show that $s_{J(m)}$ is actually in $\mathfrak{g}_\Phi$ from (\ref{remaining_constraints_gphi}): Let $s\in\mathfrak{g}$ arbitrary and $v\in \mathfrak{g}^*$, $t\in \mathfrak{g}$ such that $\langle t,.\rangle =\alpha(.)$. Then: $\forall g \in G$:
\begin{align}
(Ad^*_g \alpha)(s)=\alpha (Ad_{g^{-1}} s) = \langle t, Ad_{g^{-1}} s\rangle= \langle Ad_g t ,s \rangle
\end{align}
where we used in the last step that the scalar product is invariant under the adjoint action. Now, $s$ was arbitrary, hence $(Ad^*_g \alpha)(.) = \langle Ad_g t ,.\rangle$. If we take $\alpha=J(m)$ with $m\in\overline{\M}$ and $g=\phi\in\Phi$  we can use equivariance (\ref{equiv_on_mbar}):
\begin{align}
\langle s_{J(m)},\; .\; \rangle =J(m) = Ad^*_\phi J(m) =\langle Ad_\phi s_{J(m)} ,\;.\; \rangle
\end{align}
Thus, $s_{J(m)}\in\mathfrak{g}_{\Phi}$.\\
It remains to show that all other constraints are vanishing: Let $\tau\perp \mathfrak{g}_\Phi$ be an element orthogonal to (\ref{remaining_constraints_gphi}). Then:
\begin{align}
J(m)(\tau) = \langle s_{J(m)}, \tau \rangle =0
\end{align}
Hence, $\hat{J}(\tau)=0$ for all $m\in \overline{\M}$. Finally, from linearity of $J$, when decomposing any $\tilde{\tau}\in\mathfrak{g}$ as $ \tilde{\tau}= \tau + \tau'$ with $\tau\perp \mathfrak{g}_\Phi$ and $\tau'\in\mathfrak{g}_\Phi$ it follows
\begin{align}
J(\tilde{\tau})=J(\tau)+J(\tau') =J(\tau')
\end{align}
\end{proof}
The $\hat{\overline{J}}(\tau)$ with $\tau\in \mathfrak{g}_\Phi$ are in general not vanishing must be treated separately. However, all other constraints are identically satisfied on $\overline{\M}$.\\ 

\begin{lm}
A known fact is that any equivariant momentum map $J$ has the property:
\begin{align}\label{standard_PB_formula}
\{\hat{J}(\tau_1), \hat{J}(\tau_2)\} = \hat{J}([\tau_1,\tau_2])
\end{align}
If either $\tau_1\in \mathfrak{g}_\Phi$ or $\tau_2\in\mathfrak{g}_\Phi$ then, also the Poisson bracket between two reduced constraints has the standard formula:
\begin{align}\label{reduced_PB_formula}
\{\hat{\overline{J}} (\tau_1),\hat{\overline{J}}(\tau_2)\}=\hat{\overline{J}}([\tau_1,\tau_2])
\end{align}
\end{lm}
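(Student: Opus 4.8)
The plan is to reduce the identity (\ref{reduced_PB_formula}) to the symmetry-restriction machinery of Subsection \ref{section:2_Symmetry_Red}. The key intermediate fact is that, for $\tau_1\in\mathfrak{g}_\Phi$, the generator $\hat{J}(\tau_1)$ is $\Phi$-invariant \emph{on all of $\M$} (not merely on $\overline{\M}$). Once this is established, one is exactly in the hypotheses of the ``in particular'' part of Lemma \ref{lemma:3} (with $F=\hat{J}(\tau_1)$, which is then tangent to $\overline{\M}$ everywhere), and the claim follows by comparing with the standard bracket relation (\ref{standard_PB_formula}) valid on the unreduced phase space.

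Concretely I would proceed in three steps. \textbf{Step 1 (global $\Phi$-invariance).} For $\phi\in\Phi$ and $m\in\M$, unwinding Definition \ref{definition:1} and the definition of $\hat{J}$ gives $(\phi_*\hat{J}(\tau_1))(m)=J(\Pi(\phi^{-1})m)(\tau_1)$; equivariance (\ref{equivariance}) rewrites this as $J(m)(Ad_{\phi^{-1}}\tau_1)$, so in general $\phi_*\hat{J}(\tau_1)=\hat{J}(Ad_{\phi^{-1}}\tau_1)$. Since $\tau_1\in\mathfrak{g}_\Phi$, property (\ref{remaining_constraints_gphi}) gives $Ad_{\phi^{-1}}\tau_1=\tau_1$, hence $\phi_*\hat{J}(\tau_1)=\hat{J}(\tau_1)$. \textbf{Step 2 (tangency).} With $\hat{J}(\tau_1)$ now $\Phi$-invariant on $\M$ and $\Phi$ clean, Lemma \ref{lemma:1a-v1} in the finite-dimensional setting, or Lemma \ref{lemma:1a} for compact $\Phi$ in the infinite-dimensional setting (recalling that every point is $\hat{J}(\tau)$-regular because $J$ is a momentum map), shows that $\chi^{\omega}_{\hat{J}(\tau_1)}$ is tangent to $\overline{\M}$ at every point. \textbf{Step 3 (comparison).} Lemma \ref{lemma:3} with $F=\hat{J}(\tau_1)$ and $g=\hat{J}(\tau_2)$ then yields
\begin{equation}
\{\hat{J}(\tau_1),\hat{J}(\tau_2)\}\big|_{\overline{\M}}=\{\hat{\overline{J}}(\tau_1),\hat{\overline{J}}(\tau_2)\}_{\overline{\M}},
\end{equation}
while restricting (\ref{standard_PB_formula}) to $\overline{\M}$ and using $\hat{\overline{J}}(\tau):=\hat{J}(\tau)|_{\overline{\M}}$ gives $\{\hat{J}(\tau_1),\hat{J}(\tau_2)\}|_{\overline{\M}}=\hat{J}([\tau_1,\tau_2])|_{\overline{\M}}=\hat{\overline{J}}([\tau_1,\tau_2])$; combining the two proves (\ref{reduced_PB_formula}) when $\tau_1\in\mathfrak{g}_\Phi$. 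The case $\tau_2\in\mathfrak{g}_\Phi$ follows by running the same argument with the roles of $\tau_1$ and $\tau_2$ interchanged and invoking antisymmetry of both Poisson brackets and of the Lie bracket. (Observe that $\hat{\overline{J}}([\tau_1,\tau_2])$ is well defined even if $[\tau_1,\tau_2]\notin\mathfrak{g}_\Phi$; by Lemma \ref{lemma5} it then simply vanishes.)

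I expect the only genuine subtlety — the main obstacle — to be Step 1, i.e. upgrading from the information that $J$ restricted to $\overline{\M}$ is $\mathrm{Ad}^*_\Phi$-invariant, which holds automatically (cf. (\ref{equiv_on_mbar})), to genuine $\Phi$-invariance of $\hat{J}(\tau_1)$ as a function on $\M$, as required by Lemmas \ref{lemma:1a-v1}/\ref{lemma:1a}; this step truly uses the hypothesis $\tau_1\in\mathfrak{g}_\Phi$, and without it $\chi^\omega_{\hat{J}(\tau_1)}$ need not be tangent to $\overline{\M}$, so that (\ref{reduced_PB_formula}) can indeed fail. A secondary, purely technical point relevant only to the infinite-dimensional generalisation is to track regularity of the reduced Hamiltonian vector fields when using antisymmetry of $\{\cdot,\cdot\}_{\overline{\M}}$; in the finite-dimensional case of primary interest this is automatic. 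Finally, cleanness of $\Phi$ is essential throughout, as it is what guarantees that $\overline{\omega}$ is non-degenerate and hence that $\{\cdot,\cdot\}_{\overline{\M}}$ is defined at all.
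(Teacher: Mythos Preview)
Your proposal is correct and follows essentially the same route as the paper: establish global $\Phi$-invariance of $\hat{J}(\tau_1)$ from equivariance and $\tau_1\in\mathfrak{g}_\Phi$, invoke Lemmas \ref{lemma:1a-v1}/\ref{lemma:1a} for tangency, and then apply Lemma \ref{lemma:3} to transport (\ref{standard_PB_formula}) to $\overline{\M}$. The only difference is cosmetic: the paper also spells out the standard derivation of (\ref{standard_PB_formula}) from equivariance, whereas you take this ``known fact'' as given; your treatment of the second case via antisymmetry and your remarks on regularity and cleanness are more explicit than the paper's but add nothing new.
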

\begin{proof}
First, we show (\ref{standard_PB_formula}): Let $\tau_1,\tau_2\in \mathfrak{g}$. From equivariance (\ref{equivariance}) take derivative at $t=0$ for all $m\in \M$:
\begin{align}\label{lemma_sidestep1}
\frac{d}{dt}|_{t=0} J(\Pi(e^{t\tau_2})m)= \frac{d}{dt}|_{t=0} J(M)(Ad_{e^{-t\tau_2}} \; .\; )=J(m)(ad_{-\tau_2}\;-\;) =J(m)([\;.\;, \tau_2])
\end{align}
where $ad_{\tau}=[\tau,\;.\;]$ denotes the standard derivative of the $Ad$, i.e. the adjoint action of the Lie algebra. Then
\begin{align}
\{\hat{J}(\tau_1),\hat{J}(\tau_2)\}&=d\hat{J}(\tau_1)(\chi_{\hat{J}(\tau_2)})=\chi_{\hat{J}(\tau_2)}[\hat{J}(\tau_1)]=\frac{d}{dt}|_{t=0}\Pi(e^{t\tau_2})[\hat{J}(\tau_1)]=\mathcal{L}_{\Pi(e^{t\tau_2})}\hat{J}(\tau_1)=\\
&=\frac{d}{dt}|_{t=0} \Pi(e^{t\tau_2})_* \hat{J}(\tau_1) \overset{(\ref{lemma_sidestep1})}{=} \hat{J}([\tau_1,\tau_2])
\end{align}
Regarding the second claim, if $\tau_1\in\mathfrak{g}_\Phi$, we get from (\ref{equivariance}) that
\begin{align}
J(\Pi(\phi)m)[\tau_1]=Ad^*_{\phi^{-1}} J(m)[\tau_1] = J(m)[\tau_1]
\end{align}
and thus
\begin{align}
\hat{J}[\tau_1] (\Pi(\phi)m)=\hat{J}[\tau_1](m)
\end{align}
that is, $\hat{J}[\tau]$ is $\Phi$-invariant. Hence, we can make use of \ref{lemma:1a}  to see that the Hamiltonian vectorfield $\chi_{\hat{J}[\tau]}$ is tangent to $\overline{\M}$ and with lemma \ref{lemma:3} we infer
\begin{align}
\{\hat{\overline{J}}(\tau_1),\hat{\overline{J}}(\tau_2)\}=\{\hat{J}(\tau_1),\hat{J}(\tau_2)\}|_{\overline{\M}}
\end{align}
all further steps are exactly as before.
\end{proof}

\begin{remark}
{\bf (Nongroup constraints)} 
There exist  physical application where one is interested in constraints which do not form a finite dimensional Lie algebra but some linear space $W$. \footnote{E.g. General Relativity, where the diffeomorphism constraints form the infinite dimensional Lie Algebroid whose structure functions depend on the metric (for an infinite dimensional group G, not every 1-parameter group is generated by constraints).} However, the important criterion to repeat the construction above is equivariance. That is, almost everything goes as before, if we have a moment map (now valued in linear space $W^*$ dual to $W$) 
\begin{equation}
J\colon \M\rightarrow W^*
\end{equation}
and the group $\Phi$ acts linearly on this space (by a generalisation of $Ad_g^*: \mathfrak{g}\to \mathfrak{g}$)
\begin{equation}
\kappa\colon \Phi\times W^*\rightarrow W^*
\end{equation}
such that $J$ is equivariant 
\begin{equation}
\kappa(g) (J(m))=J(g(m))
\end{equation}
Then the remaining constraints are characterised by $\overline{J}\colon \overline{\M}\rightarrow V$ where
\begin{equation}
V=\{w\in W^*\colon \forall g\in \Phi,\ \kappa(g)(w)=w\}
\end{equation}
Suppose now that $m\in \overline{\M}$ is $\hat{J}(w)$-regular for some $w\in V$. Then as $\hat{J}(w)$ is $\Phi$-invariant
\begin{equation}
\{\hat{J}(w),\hat{J}(w')\}|_m=\{\hat{J}(w)|_{\overline{\M}},\hat{J}(w')|_{\overline{\M}}\}|_{\overline{\M},m}
\end{equation}
The constraint algebra can be completely described on $\overline{\M}$. Let us stress that in this case we do not assume that every point is $\hat{J}(w)$-regular for all $w\in V$.
\end{remark}

\vspace{10pt}

\paragraph{Residual gauge transformations}

We turn now towards the second concern regarding the remnant of the gauge group $G$ after symmetry restriction with respect to $\Phi \subset G$: every $g\in G$ generates transformations which identify elements which we interpret to be physically equivalent. Next to ensuring that all $J$ are implemented on the constraint surface, we must ask further, what is the fate of the physical equivalent phase space points or in other words: how do the orbits of $g$ behave with respect to $\overline{\M}$?\\

We consider first a situation when two points $m,m'\in \overline{\M}$ are connected by a path $m(t)\in\overline{\M}$ which follows a trajectory generated by constraints
\begin{equation}\label{gauge:J}
\frac{dm}{dt}=\chi_{\hat{J}(\tau(t))}
\end{equation}
where $\tau(t)\in W$ is smooth. 

The flow is tangent to $\overline{\M}$ thus
\begin{equation}
\chi_{\hat{J}(\tau(t))}=\chi^{\overline{\omega}}_{\hat{\overline{J}}(\tau(t))}
\end{equation}
We used the fact that $V\subset W^*$ thus we can act on $\tau(t)$. We see that the points $m$ and $m'$ are also connected by a curve generated by constraints on $\overline{\M}$. The functions $\overline{J}(v)$ constitute (possibly overcomplete) basis of constraints restricted to $\overline{\M}$. Their flows are thus also a flow of some combinations of constraints.

In order to analyse it further we introduce a subgroup
\begin{equation}\label{gauge:G}
\overline{G}=\{g\in G\colon \forall_{m\in \overline{\M}} g(m)\in \overline{\M}\}
\end{equation}
These transformations acts symplectically on $\overline{\M}$. Together transformations \eqref{gauge:J} and \eqref{gauge:G} generate a group of reduced gauge transformations.

However, this is not enough to describe  gauge equivalence on $\overline{\M}$. It may happen that for two points $m,m'\in \overline{\M}$ there exists $h\in G$ such that $h\not\in \overline{G}$, but
\begin{equation}
h(m)=m'
\end{equation}
Similarly, we can have a flow generated by constraints which leave $\overline{\M}$ and then return. Apparently, in order to describe gauge equivalence we need to take into account more then the flow of constraints restricted to $\overline{\M}$ and the group of transformations preserving $\overline{\M}$. In what follows, we will say the two points are {\it globally related} if they are gauge equivalent in $\M$ but not connected by a flow of reduced gauge transformations. There is no obvious structure behind this identifications.\\

Summarizing, after restriction to $\Phi$-invariant configurations the remaining constraints  are given by $\overline{J}$. The gauge transformations are given by reduced gauge transformations, but there exist also global identifications.

\begin{remark} {\bf (Reduction in stages)}
Let us now assume that we have two clean groups $\Phi_1\subset \Phi_2$ and $\Phi_1$ is a normal subgroup of $\Phi_2$ (i.e. $\phi^{-1}\Phi_1\phi\subset \Phi_1$ $\forall \phi\in\Phi_2$) and let
\begin{equation}
H=\Phi_2/\Phi_1
\end{equation}
In such a case $\Phi_2$ preserves $\overline{\M}_1$ due to
\begin{align}
\phi_1 \phi_2 (m)=\phi_2 \phi_1 ' (m)  = \phi_2 (m) \quad \Rightarrow \quad \phi_2(m)\in \overline{M}_1
\end{align}
and its action factorises by $H$. We also have
\begin{equation}
J|_{\overline{\M}_2}=\overline{J}_1|_{\overline{\M}_2},\quad \overline{\M}_2=\{m\in\overline{\M}_1\colon \forall_{h\in H} h(m)=m\}
\end{equation}
We can thus perform reduction in stages.
\end{remark}

\subsubsection{Hamiltonian dynamics}

In constrained systems, the Hamiltonian is not uniquely defined. We can always correct it by adding constraint generators. However, in the systems that we will consider it does not pose any problem.

\begin{fact}
Let  $\Phi\subset G$ be a compact subgroup of the gauge transformations. There exists a $\Phi$-invariant Hamiltonian. Every two such Hamiltonians differ on $\overline{\M}$ by a combination of $\overline{J}$ constraints.
\end{fact}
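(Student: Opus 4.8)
The plan is to prove existence of a $\Phi$-invariant Hamiltonian by group-averaging the given (not necessarily invariant) Hamiltonian $H_0$ over the compact group $\Phi$, and then to prove the uniqueness-up-to-constraints claim using equivariance of the momentum map together with the characterisation of the restricted constraints from Lemma \ref{lemma5}.

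\emph{Existence.} Suppose the constrained system comes with some Hamiltonian $H_0$ on $J^{-1}(0)$ (or on $\M$). Since $\Phi\subset G$ acts symplectically and, crucially, $\Phi$ is compact, it carries a normalised bi-invariant Haar measure $\mu_H$, and we can define
\begin{equation}
H:=\int_\Phi {\rm d}\mu_H(\phi)\ \phi_* H_0 .
\end{equation}
By left-invariance of $\mu_H$ one checks $\phi'_* H = H$ for every $\phi'\in\Phi$, so $H$ is $\Phi$-invariant. The only point needing care is that $H$ is still an admissible Hamiltonian for the constrained system, i.e.\ that it differs from $H_0$ by constraint generators on $J^{-1}(0)$; this follows because for a gauge transformation $\phi$ one has $\phi_* H_0 = H_0 + \hat J(\lambda_\phi)$ for some $\lambda_\phi$ (the transformation rule recalled in the remark after Theorem \ref{thm:3}), and averaging this identity over $\Phi$ keeps the correction inside the span of the constraints, using linearity of $\hat J$ in its argument.

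\emph{Uniqueness up to $\overline{J}$ on $\overline{\M}$.} Let $H_1, H_2$ be two $\Phi$-invariant Hamiltonians; then $H_1 - H_2$ is $\Phi$-invariant and, being a difference of admissible Hamiltonians for the constrained system, equals $\hat J(\lambda)$ on the constraint surface for some $\lambda:\M\to\mathfrak g$ (or into $W$ in the nongroup case). Restricting to $\overline{\M}$ and decomposing $\lambda$ pointwise as $\lambda = \lambda_\Phi + \lambda^\perp$ with $\lambda_\Phi\in\mathfrak g_\Phi$ and $\lambda^\perp$ in the orthogonal complement with respect to the $Ad^*_\Phi$-invariant inner product furnished by compactness, Lemma \ref{lemma5} gives $\hat J(\lambda^\perp)|_{\overline{\M}} = 0$, so that
\begin{equation}
(H_1 - H_2)|_{\overline{\M}} = \hat{\overline J}(\lambda_\Phi)
\end{equation}
is a combination of the restricted constraints $\overline{J}$, as claimed. (One subtlety: the decomposition should be done fibrewise over points of $\overline{\M}$, so $\lambda_\Phi$ is a $\mathfrak g_\Phi$-valued function on $\overline{\M}$, which is exactly what ``a combination of $\overline J$ constraints'' should mean; smoothness of this decomposition is automatic since orthogonal projection onto the fixed subspace $\mathfrak g_\Phi$ is a fixed linear map.)

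\emph{Expected main obstacle.} The genuinely delicate point is not the averaging — that is routine given compactness — but pinning down precisely the admissibility/transformation rule ``$\phi_*H_0 = H_0 + \hat J(\lambda_\phi)$ on $J^{-1}(0)$'' in the generality the paper wants (constraints forming only a linear space $W$, and the possible presence of large gauge transformations in $G/G^o$). For the purposes of this Fact I would restrict attention to $\Phi$ contained in the identity component, so that its elements are generated by the constraints and the transformation rule holds by integrating $\{\hat J(\tau),H_0\}=\hat J(\lambda_\tau)$; the extension to the general case would require assuming the Hamiltonian is invariant under large gauge transformations from the outset, as the paper indeed stipulates in the earlier remark.
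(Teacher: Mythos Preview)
Your proposal is correct and follows essentially the same approach as the paper: existence via Haar-averaging $H:=\int_\Phi d\mu_H(\phi)\,\phi_*H_0$, and uniqueness by noting that any two Hamiltonians differ by $\hat J(\lambda)$, which on $\overline{\M}$ reduces to a combination of $\overline{J}$ constraints. The only difference is one of explicitness: for the uniqueness part you spell out the orthogonal decomposition $\lambda=\lambda_\Phi+\lambda^\perp$ and invoke Lemma~\ref{lemma5} to kill $\hat J(\lambda^\perp)|_{\overline{\M}}$, whereas the paper simply states that restricting $J$ to $\overline{\M}$ yields $\overline{J}$ (which is precisely the content of Lemma~\ref{lemma5}, so the decomposition is implicit). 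Your caveat about large gauge transformations and the transformation rule $\phi_*H_0=H_0+\hat J(\lambda_\phi)$ is a genuine subtlety that the paper's proof does not address explicitly either; it simply uses the rule as given.
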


\begin{proof}
Every two Hamiltonians differ by a combination of constraints $J$. After restricting to $\overline{\M}$ it becomes combination of $\overline{J}$ constraints. In order to obtain a $\Phi$-invariant Hamiltonian $H$ we can choose any Hamiltonian $H'$ and define
\begin{equation}
H:=\int_\Phi d\mu_H(\phi)\ \phi_*H'=\int_\Phi d\mu_H(\phi)\ (H'+J(\lambda_\phi))=
H'+\int_\Phi d\mu_H(\phi)\ J(\lambda_\phi)
\end{equation}
because $\int_\Phi d\mu_H(\phi)=1$ for normalised Haar measure.
This is also a Hamiltonian, but it is now $\Phi$-invariant.
\end{proof}

This situation covers our application, even if we consider deparametrised models.

\newpage
\section{Application to General Relativity}
\label{section:3}
In this section, we will apply the framework of {\it symmetry restriction} (developed in the previous section) to the Hamiltonian formulation of general relativity (GR). For this, we adopt in subsection \ref{section:3_ADM} the language of Arnowitt-Deser-Misner (ADM) \cite{ADM:62}, where the phase space is coordinatised by the spatial metric $q$ and its canonical conjugated momentum $P$ over some spatial manifold $\sigma$. In the ADM formalism GR turns out to be a totally constraint system, i.e. its Hamiltonian vanishes on the constraint surface due to being a combination of constraints, which encode the invariance under temporal and spatial diffeomorphisms. Background-independence enters the framework further by not only demanding vanishing of the spatial diffemorphism constraint, but actually demanding that all (even the non-connected part) of $\Diff(\sigma)$ forms a gauge group. $\Diff(\sigma)$ is the group of diffeomorphims on $\sigma$ and we devote subsection \ref{s311} to recall how this can be translated into an action of symplectomorphisms $\D(\Diff(\sigma))$ on the ADM phase space $\M$. The group of gauge transformations of GR is a product of $\Diff(\sigma)$ and the transformations generated by the flow of the constraints, see subsection \ref{section:3.3_gaugetrafos}. Afterwards, the stage is set for an application of the symmetry restriction theorem, which we discuss in subsection \ref{s312}: a symmetry group $\Phi\subset \D(\Diff(\sigma))$ needs to be chosen and due to spatial diffeomorphism-invariance of the theory, it is possible to find Lagrange multipliers inside the Hamiltonian of GR such that is invariant under $\Phi$ and hence makes theorems \ref{theorem:2} and \ref{thm:3} applicable.\footnote{It applies also to various deparametrised models.}
As we have already discussed in subsection \ref{section:2_relation}, in general there will be constraints whose vanishing remains to be imposed on the $\Phi$-invariant phase space $\overline{\M}\subset \M$ and we discuss their fate.\\
Having applications to quantum gravity in mind, we will also discuss a second incarnation of a phase space describing GR: section \ref{section:3_connection} introduces the connection formulation where we recall the phase space $\M_{AB}$ coordinatised by the Ashtekar-Barbero variables. $\M_{AB}$ is actually the phase space of a $SU(2)$ Yang-Mills theory, thus endowed with an additional Gauss constraint $G$. Further, there are two additional sets of constraints $D'$ and $C$, reassembling the spatial diffeomorphism $D$ and scalar constraints of the previous section. However, we highlight in section \ref{s321} an often ignored fact: the action of $D'$ does not produce purely spatial diffeomorphisms on $\M_{AB}$ but instead also induces a transformation on the Lie-algebra indices of the connection and its conjugated momentum. Of course, this can be remedied when realising that $D'$ and the equivalent of $D$ differ only by a Gauss constraint. In other words: many equivalent representation of ``spatial diffeomorphisms'' exist on $\M_{AB}$ which all agree only on shell, i.e. once vanishing of the Gauss constraint has been imposed. Hence, we discuss in subsection \ref{s322} standard symplectic reduction of $G$ by which we project back onto the familiar ADM phase space. When doing so, we are faced with one final problem in subsection \ref{s323}, namely that not every symmetry group $\Phi\subset\D(\Diff(\sigma))$ acting on $\M$ can be translated into a group $\Phi_{AB}$ of symplectomorphisms on $\M_{AB}$ which are generated by $D'$. We discuss in which situations a suitable extensions to a subgroup of symplectomorphisms generated by $D$ and $G$ can be achieved and how the Gauss transformation has to be chosen for a given diffeomorphism. Lastly, we prove that in such a case the symmetry restriction of $\M_{AB}$ with respect to $\Phi_{AB}$ is isomorphic to the symmetry reduction of the ADM phase space $\M$ with respect to $\Phi$, thus showing commutativity between symmetry restriction and symplectic reduction in our application.

\subsection{ADM formulation}
\label{section:3_ADM}

Let $\sigma$ be a 3-dimensional, spatial, orientable manifold (either compact or no-compact at the moment).\\
The Arnowitt-Deser-Misner variables \cite{ADM:62}\footnote{Throughout this article we work on the ``reduced ADM'' phase space where the primary constraints have been implemented, fixing lapse function and shift vector to Lagrange multipliers. However, basically the whole setting of symmetry restriction could also be lifted to the phase space including lapse and shift functions and their momenta, by realizing that the only $4$ metric invariant under a group $\Psi$ acting on the spacelike sections is a metric of the form
\begin{equation}
-N(x,t)^2dt^2+2q_{ab}(x,t)N^a(x,t)dx^bdt+q_{ab}(x,t)dx^adx ^b
\end{equation}
where for every $t$, $N$ is an invariant function $N^a$ invariant vector and $q_{ab}$ invariant metric. } coordinatise the phase space $\mathcal{M}$ of general relativity theory on $\sigma$, described by a metric field  $q_{ab}\in {}_{(2)}C^{\infty}(\sigma)$ and a conjugated momentum field $P^{ab}\in {}^{(2)}\widetilde{C}^{\infty}(\sigma)$:
\begin{align}
\M_{ADM} = \{ (q_{ab},P^{ab})\, :\, q\in {}_{(2)} C^\infty(\sigma), P\in {}^{(2)}\widetilde{C}^\infty(\sigma)\, ,\, \det(q)>0  \}
\end{align}
We will be using index notation with spatial indices $a,b,..=1,2,3$. The symplectic form on $\mathcal {M}$ is $\omega=d\xi$ with the symplectic potential $\xi= (2 /\kappa) \int_\sigma {\rm d}^3x\; P^{ab}(x)dq_{ab}(x)$ and reads explicitly:
\begin{align}\label{symplform-ADM}
\omega = \frac{2}{\kappa}\int_\sigma {\rm d}^3x \; dP^{ab}(x)\wedge dq_{ab}(x)
\end{align}
As usual Poisson brackets are defined through $\{f,g\}=-\omega(\chi_f,\chi_g)$:
\begin{align}
\{q_{ab}(x),P^{cd}(y)\}=\kappa \,\delta^c_{(a}\delta^d_{b)}\delta^{(3)}(x,y)
\end{align}
with $\delta_{(a}\delta_{b)}:=\frac{1}{2}(\delta_a\delta_b+\delta_b\delta_a)$.\\
We keep in mind that spatial indices $a,b...$ are pulled up and down with the spatial metric $q_{qb}$ or its inverse denoted by $q^{ab}$.

Let us shortly describe functional analytic side of this construction.
In the case of the cosmological models we choose as our Hilbert manifold an open subset of
\begin{equation}
(q_{ab},P^{ab})\in H^{c}(\Sigma,d\mu)
\end{equation}
where $H^c$ is a weighted Sobolev space with $c>0$ large enough such that the second derivatives of the fields are continuous. The condition for open subset is then $\det q\not=0$ for every point. \\
The weight needs to be chosen in such a way that symmetric configurations belong to this space. For the case of Bianchi models with noncompact Cauchy surfaces this leads to the symplectic form being ill-defined (see section \ref{section:4}). However in the case of compact Cauchy surfaces, the weight has no matter and we can define the symplectic form $\omega$ by the standard formula (\ref{symplform-ADM}).

We are mainly interested in the following class of functions on $\M$
\begin{equation}
F=\int_\Sigma N(x) f(J^2q, J^2P)
\end{equation}
where $J^n$ denotes $n$-th jets 
and $N(x)$ is compactly supported smooth function on $\sigma$ and $f$ is smooth function of its variables. The crucial property of $F$ is that it is  a smooth function on $\M$. Moreover, every point of $\M$ with $q_{ab}$ and $P^{ab}$ smooth is $F$-regular. We are exclusively interested in those smooth points. The Banach manifold structure is a technical detail needed for applying our theory. Once the proper statements are proved we can take $c\rightarrow \infty$ to obtain a statement in smooth category. In what follows we will skip the details of this process.

Similar considerations apply also for the Ashtekar-Barbero phase space which is introduced later in \ref{section:3_connection}.

\subsubsection{Constraints and gauge transformations}

We subject this phase space to the following two families of constraints:
\begin{itemize}
	\item The spatial Diffeomorphism (or vector) constraint:
	\begin{align}\label{Diffeo_constraint-ADM}
	D_a =P^{bc}\partial_a q_{bc}-2\partial_bP^b_a+D_{\rm matter}
	\end{align}
	\item The Scalar constraint:
	\begin{align}\label{Scalar_constraint-ADM}
	C = \frac{\kappa}{\sqrt{\det(q)}}(P^{ab}P_{ab}-\frac{P^2}{2})-\frac{\det(q)}{\kappa}R^{(3)}+C_{\rm matter}
	\end{align}
\end{itemize}
where $P:=P^{a}_a$ and $R$ is the Ricci scalar of spatial geometry, i.e. a function of the metric $q_{ab}$. The functions $D_{\rm matter}$ and $C_{\rm matter}$ denote the contribution of some matter content.\\

\subsubsection{Spatial diffeomorphisms and $\Psi$-invariance}
\label{s311}
The terminology ``diffeomorphism constraint'' comes from the fact that $D_a$ indeed generates the {\it connected component} ${\Diff}^o(\sigma)$ of the spatial-diffeomorphim group in the following way: since the  the symplectic form is defined in a covariant way 
 every diffeomorhism $\psi\in \Diff(\sigma)$ defines a symplectic transformation in the following way
\begin{equation}
\D(\psi): \M \to \M,\quad \D(\psi)(P^{ab},q_{ab}):=(\psi_*(P^{ab}),\psi_*(q_{ab}))
\end{equation}
where the action of the diffeomorphism on a tensor field $T$ is defined by
\begin{equation}\label{diff-action}
(\psi_\ast T_{a...}^{b...})(x):=t_{a'...}^{b'...}(\psi^{-1}(x))\frac{\partial(\psi^{-1})^{a'}(x)}{\partial x^a}...\frac{\partial\psi^{b}(y)}{\partial y^{b'}}|_{y=\psi^{-1}(x)}...
\end{equation}
Let $\{\psi_t\}_{t\in\mathbb{R}}$ be a one parameter family of diffeomorphisms generated by a smooth compactly supported vector field $\xi^a$. Then $D[\vec{\xi}]$, the diffeomorphism constraint smeared against $\xi^a$, is actually the generator of $\D(\psi_t)$, that is:
\begin{align}
\{D[\vec \xi], f\} = \mathcal L_{\vec \xi} f = \lim_{t\to 0} \frac{1}{t}(\D(\psi_{-t})_{*}f-f)
\end{align}
for all functions $f: \M \to \M$.\\
We remind that the diffeomorphism constraints (smeared with smooth compactly supported vector fields) generate elements of $\D(\Diff(\sigma))$, but not all of them.

Particularly important will be the following subset of diffeomorphisms. We assumed that $\sigma$ is orientable. We denote by $\Diff^+(\sigma)$ the group of diffeomorphisms that preserves orientation and by $\Diff^-(\sigma)$ the remaining diffeomorphisms (those which changes orientation). These notions are independent of the particular choice of orientation, but they are defined only on orientable manifolds.\\

In the later sections we will be interested in systems that employ symmetries, that is there is a group of diffeomorphisms $\Psi\subset \Diff(\sigma)$ for which the spatial metric $q$ and its momentum $P$ are $\Psi$-invariant, i.e. $\Psi_*q=q$.\footnote{Common examples for symmetries in GR are often described via Killing vectors $k_i$ such that $\mathcal{L}_{k_i} q=0$. Obviously, this can be translated into the existence of 1-parameter groups of diffeomorphisms $\psi$ for which the metric remains invariant. However, in this manuscript we consider the more general case, including also symmetries not generated by Killing vectors.} As the diffeomorphism group is {\it simultaneously} a gauge group of GR, the group $\Psi$ can be deformed into another diffeomorphically equivalent group $\Psi'\sim \Psi$, i.e. only its internal geometry matters. In what follows, we will {\it fix a gauge} in the sense that we choose only one specific realisation $\Psi$ of the symmetry group and are interested in its representation as
\begin{align}
\Phi := \D (\Psi)
\end{align}
of symplectomorphisms on $\M$. Then, we will study symmetry restriction with respect to $\Phi$. Let us remind that $\overline{\M}$ is not a physical phase space as it will not be invariant under all constraints: it only represents symmetric configurations in a specific gauge. However, via a gauge-transformation (i.e. some diffeomorphisms) one may of course map $\overline{\M}$ into $\overline{\M}'$ , the invariant submanifold for $\Psi'\sim \Psi$, i.e. $\exists\, \xi\in \Diff(\sigma)$ with $\xi \circ\Psi' =\Psi$. But as $\xi$ is a diffeormophism, it has to be modded out at the end anyway. Hence, choosing this specific gauge is unproblematic.\\

\subsubsection{Gauge transformations in canonical theory}
\label{section:3.3_gaugetrafos}

According to the principle of equivalence, two solutions are equivalent if they differ by a spacetime diffeomorphism. We will now describe what does it mean in the canonical theory. The initial data $(q_{ab}, P^{ab})$ satisfying scalar and diffeomorphism constraints give rise via evolution \cite{ChoquetBruhat:69} to the maximal development. This maximal development is a globally hyperbolic spacetime $M$ with an embedding of the Cauchy surface
\begin{equation}
\Sigma'\colon \sigma\rightarrow M,\quad \Sigma=\Sigma'(\sigma)
\end{equation}
The restricted metric is equal to $q_{ab}$ when pulled back by $\Sigma'$ and $P^{ab}$ is related to the pulled back extrinsic curvature $K^{ab}$ by
\begin{equation}
P^{ab}=\frac{1}{\kappa}\sqrt{\det q}(K^{ab}-q^{ab} K^{cd}q_{cd})
\end{equation}
In order to determine $K^{ab}$ we fixed time orientation of the spacetime.

We can now ask the question when two initial data are equivalent. Namely, what is the condition for maximal developments of two initial data to be diffeomorphic by a diffeomorphism which does not change the time orientation? We will now answer this question.

\begin{lm}
Suppose that we have two embeddings of Cauchy surfaces into the globally hyperbolic solution of Einstein equations $M$
\begin{equation}
\Sigma'_i\colon \sigma\rightarrow M,\ \Sigma_i=\Sigma'_i(\sigma)\quad i=0,1,
\end{equation}
then there exists a smooth map
\begin{equation}\label{eq:124}
\Sigma'\colon \sigma\times\R\rightarrow M
\end{equation}
such that for every $t$, $\Sigma(t)=\{\Sigma'(x,t)\colon x\in \sigma\}$ is a Cauchy surface and $\Sigma(i)=\Sigma_i$ for $i=0,1$. 
\end{lm}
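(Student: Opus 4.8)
The plan is to build the interpolating map $\Sigma'$ by exhibiting, between the two Cauchy surfaces $\Sigma_0,\Sigma_1\subset M$, a smooth one-parameter family of Cauchy surfaces and then reparametrising. First I would recall the standard fact that a globally hyperbolic spacetime $M$ admits a smooth Cauchy temporal function $\mathcal{T}\colon M\to\R$, all of whose level sets are (spacelike) Cauchy surfaces, together with a complete timelike vector field transverse to these level sets (this is the content of the Bernal--Sánchez refinement of Geroch's splitting theorem; it is a standard result and, in the spirit of the functional-analytic caveats already made in the paper, can be assumed). Flowing along the normalised gradient of $\mathcal{T}$ gives a diffeomorphism $M\cong \Sigma_{c}\times\R$ for any chosen level value $c$. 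The two embedded surfaces $\Sigma_0,\Sigma_1$ need \emph{not} be level sets of $\mathcal{T}$, but because each is a smooth spacelike Cauchy surface, each is a smooth graph over the reference leaf: there exist smooth functions $s_i\colon \sigma\to\R$ with $\Sigma_i=\{(\Sigma'_{\mathrm{ref}}(x),s_i(x))\colon x\in\sigma\}$ in the splitting coordinates, where $\Sigma'_{\mathrm{ref}}$ is the reference embedding. Here one uses that $\Sigma'_i$ is an embedding whose image meets each integral curve of the flow exactly once (a Cauchy surface meets every inextendible causal curve exactly once, and the flow lines are timelike and complete hence inextendible).

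Next I would interpolate the two height functions: set $s_t(x) := (1-t)\,s_0(x) + t\,s_1(x)$ and define
\begin{equation}
\Sigma'(x,t) := \big(\Sigma'_{\mathrm{ref}}(x),\, s_t(x)\big)\in \sigma\times\R \cong M .
\end{equation}
This is manifestly smooth in $(x,t)$, recovers $\Sigma_i$ at $t=i$, and for each fixed $t$ the set $\Sigma(t)=\{\Sigma'(x,t):x\in\sigma\}$ is the graph of the smooth function $s_t$. The remaining point is that each such graph is again a Cauchy surface. This follows because a smooth spacelike-or-not graph $\{(x,s_t(x))\}$ over a Cauchy slice in a globally hyperbolic product $\Sigma_c\times\R$ is automatically a Cauchy surface as an \emph{achronal edgeless} set: it is edgeless since it is a graph of a continuous function over all of $\sigma$, and it is achronal because each flow line $\{x\}\times\R$ hits it exactly once. (If one insists on the interpolating surfaces being \emph{spacelike}, one instead notes that spacelikeness of a graph is the open condition $q^{ab}\partial_a s\,\partial_b s < N^{-2}$ in the lapse/shift of the splitting, which holds at $t=0,1$; one then either observes that achronality already suffices for ``Cauchy surface'' in the sense needed, or replaces the linear homotopy $s_t$ by one kept inside this open convex-in-fibres region, e.g.\ by first isotoping $\Sigma_0,\Sigma_1$ to graphs over a \emph{common} leaf close enough to make the straight-line homotopy stay spacelike.)

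Finally I would note that $\Sigma(0)=\Sigma_0$ and $\Sigma(1)=\Sigma_1$ hold by construction, and that $\Sigma'$ as defined is a genuine smooth map $\sigma\times\R\to M$ after composing with the splitting diffeomorphism $\sigma\times\R\cong \Sigma_c\times\R\cong M$; one may reparametrise $t$ to $\R$ or truncate to any interval as convenient.

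\textbf{Main obstacle.} The genuinely delicate step is the control of the interpolating slices: showing that a naive straight-line homotopy of the height functions stays among Cauchy surfaces (and, if desired, among \emph{spacelike} ones). Achronality and edgelessness come for free from the graph property, so the topological ``Cauchy surface'' conclusion is robust; the subtlety is only the optional spacelikeness, which is an open but not convex condition globally and is handled by the standard trick of first bringing both surfaces into the graph picture over a single, suitably chosen leaf. I expect the paper to finesse this exactly as it finesses the Sobolev/smoothing issues elsewhere, i.e.\ by invoking the Bernal--Sánchez splitting and the graph representation of Cauchy surfaces and leaving the routine openness estimate implicit.
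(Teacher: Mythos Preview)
Your argument contains a genuine gap at the step ``it is achronal because each flow line $\{x\}\times\R$ hits it exactly once.'' Meeting each integral curve of the splitting vector field exactly once does \emph{not} imply achronality: there are many more causal curves than the flow lines. A concrete counterexample already appears in two-dimensional Minkowski space with the standard temporal function $\mathcal{T}=t$: the graph $\{t=2x\}$ meets every line $\{x=\mathrm{const}\}$ exactly once, yet it is a timelike line and hence wildly non-achronal (the inextendible timelike curve $s\mapsto(2s,s)$ lies entirely in the graph). Thus the linear homotopy $s_t=(1-t)s_0+t s_1$ need not produce Cauchy surfaces at all. Your ``main obstacle'' paragraph misdiagnoses the issue: spacelikeness is not an optional refinement here, because without it achronality is simply not available, and your proposed fix (``isotope to a common leaf so the straight-line homotopy stays spacelike'') does not obviously work either, since the spacelikeness condition $g_{s_t(p)}(ds_t,ds_t)<\beta(p,s_t(p))$ involves the metric and lapse evaluated at the \emph{interpolated} height, not at the endpoints.

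The paper avoids this problem by a completely different construction. It passes to the one-higher-dimensional spacetime $M\times\R$ with metric $g+(c^2+1)\,db^2$, where $c$ is the maximal Lorentzian time distance between $\Sigma_0$ and $\Sigma_1$; this is again globally hyperbolic. The extra spacelike factor is scaled precisely so that the disjoint union $K=\Sigma_0\times\{0\}\cup\Sigma_1\times\{1\}$ becomes an achronal (indeed acausal, spacelike) submanifold of $M\times\R$. One then invokes the Bernal--S\'anchez extension theorem to enlarge $K$ to a smooth spacelike Cauchy hypersurface $\tilde\Sigma\subset M\times\R$, which is diffeomorphic to $\sigma\times\R$. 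For each fixed $b$, the slice $\tilde\Sigma\cap(M\times\{b\})$ is automatically a Cauchy surface of $M$, because any inextendible causal curve $\gamma$ in $M$ yields the inextendible causal curve $(\gamma,b)$ in $M\times\R$, which must meet $\tilde\Sigma$ exactly once; and for $b=0,1$ this slice contains, hence equals, $\Sigma_0,\Sigma_1$. Projecting to $M$ gives the desired $\Sigma'$. The upshot is that the interpolation problem you identify is not solved but \emph{circumvented}: rather than deforming one Cauchy surface into another inside $M$, one builds a single Cauchy surface one dimension up whose $b$-slices do the job automatically.
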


\begin{proof}
In fact, let us denote by $c$ the maximal time distance between $\Sigma_0$ and $\Sigma_1$. We consider a metric on $M\times \R$ (we denote additional dimension by $b$)
\begin{equation}
g_{\mu\nu}dx^\mu dx^\nu+(c^2+1)db^2
\end{equation}
It is also globally hyperbolic.
Let us take the surface $K=\Sigma_0\times \{0\}\cup \Sigma_1\times \{1\}$. It is an achronal set because two connected components lay far enough. By the theorem of \cite{BS:05}
one can extend this set to a smooth Cauchy surface $\tilde{\Sigma}\subset M\times \R$. This set is isomorphic with $\sigma\times \R$. The isomorphism is the map 
\begin{equation}
\tilde{\Sigma}'\colon \sigma\times\R\rightarrow M\times \R.
\end{equation}
We obtain $\Sigma'=\pi\circ \tilde{\Sigma}'$ where $\pi\colon M\times \R\rightarrow M$ is a projection.
\end{proof}

Additionally, we consider two embeddings of Cauchy surfaces $\tilde \Sigma_i'\colon \sigma\rightarrow M$ with $\Sigma_i=\tilde \Sigma'_I(\sigma)$ and corresponding data
$(q_{ab}^i,K_{ab}^i)$ defined by
\begin{equation}
q_{ab}^i=\tilde{\Sigma}_i^* g_{ab},\ K_{ab}^i=\tilde{\Sigma}_i^* \nabla_{(a} n^{\Sigma_i}_{b)}
\end{equation}
where $n^{\Sigma_i}_{\mu}$ is a normal vector to the Cauchy surface 
and we take $\Sigma'_i$ and $\Sigma'(t)$ as in the lemma. There exists a smooth diffeomorphism
\begin{equation}
\kappa_i\in \Diff(\sigma),\quad \tilde\Sigma'_i\circ\kappa_i=\Sigma_i'
\end{equation}
because corresponding Cauchy surfaces coincide. We now consider the metric and extrinsic curvature pull-backed by $\Sigma'(t)\circ\kappa_0$
\begin{equation}\label{eq:path-qK}
{q_{ab}'}^t, {K_{ab}'}^t
\end{equation}
with ${q_{ab}'}^0=q_{ab}^0$ and  ${K_{ab}'}^0=K_{ab}^0$. As we pulled back a solution to Einsteins equation, $t\mapsto({q_{ab}'}^t, {K_{ab}'}^t)$ is a path in the phase space, in fact on the constraint surface. We introduce tangent vector to this path
\begin{equation}\label{eq:tangent-qK}
\frac{d}{dt}{q_{ab}'}^t, \frac{d}{dt}{K_{ab}'}^t
\end{equation}

\begin{lm}
The tangent vector \eqref{eq:tangent-qK} in the phase space $\M_{ADM}$ belongs to the directions generated by constraints. 
\end{lm}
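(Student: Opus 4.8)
The plan is to show that the velocity vector \eqref{eq:tangent-qK} of the path \eqref{eq:path-qK} coincides, at each $t$, with a Hamiltonian vector field of the form $\chi_{D[\vec N(t)]+C[N^0(t)]}$ for suitable lapse $N^0(t)$ and shift $\vec N(t)$ built from the embedding. The key observation is that the path \eqref{eq:path-qK} was obtained precisely by pulling back a fixed solution $g$ of Einstein's equations along the moving family of Cauchy surfaces $t\mapsto \Sigma'(t)\circ\kappa_0$. Infinitesimally, moving the embedding $\Sigma'(\cdot,t)$ is governed by its deformation vector field along $M$, which decomposes uniquely into a part normal to $\Sigma(t)$ (with coefficient a scalar, the lapse $N^0(t)$) and a part tangent to $\Sigma(t)$ (the shift vector, which pulls back to $\vec N(t)$ on $\sigma$).

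First I would recall the standard kinematical identity for the deformation of the induced data under an infinitesimal change of embedding: if the embedding is pushed by the spacetime vector field $X^\mu = N^0 n^\mu + N^\mu$ with $N^\mu$ tangent to $\Sigma(t)$, then
\begin{align}
\frac{d}{dt} q_{ab} &= 2 N^0 K_{ab} + \mathcal{L}_{\vec N} q_{ab},\\
\frac{d}{dt} K_{ab} &= \mathcal{L}_{\vec N} K_{ab} + (\text{terms involving } \nabla_a\nabla_b N^0,\ K,\ \text{and the spacetime Ricci tensor}).
\end{align}
This is purely geometric (Gauss--Codazzi plus the definition of $K_{ab}$ as a Lie derivative of the normal) and does not yet use the equations of motion. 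The second step is to invoke that $g$ solves the vacuum (or matter-coupled) Einstein equations, so the spacetime Ricci terms in $\tfrac{d}{dt}K_{ab}$ can be re-expressed through $q$, $K$ and the matter fields on $\Sigma(t)$; after converting $K_{ab}$ to the momentum $P^{ab}$, the right-hand sides become exactly the Hamilton equations generated by $D[\vec N]+C[N^0]$ on the phase space $\M_{ADM}$. In other words, the ADM evolution equations are the statement that the geometric deformation of on-shell data equals the constraint-generated flow; this is the classical ADM result and I would cite \cite{ADM:62} (or the lemma in subsection \ref{section:3.3_gaugetrafos}) rather than rederive it.

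The third step is to produce $N^0(t)$ and $\vec N(t)$ concretely from the data at hand: the map $\Sigma'\colon \sigma\times\R\to M$ from the previous lemma provides, for each $t$, a vector field $\partial_t \Sigma'$ along the image $\Sigma(t)$; decompose it with respect to the unit normal $n^{\Sigma(t)}$ to get $N^0(t)>0$ (positivity uses that each $\Sigma(t)$ is Cauchy, hence spacelike, and that time orientation is fixed) and a tangential part, then pull the tangential part back by $(\Sigma'(\cdot,t)\circ\kappa_0)$ to obtain $\vec N(t)\in {}^1C^\infty(\sigma)$; it is smooth and, at fixed $t$, may be taken compactly supported up to the usual fall-off subtleties (on compact $\sigma$ this is automatic). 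Conclude that at every $t$ the tangent vector \eqref{eq:tangent-qK} equals $\chi_{D[\vec N(t)]+C[N^0(t)]}$ evaluated at $({q'}^t,{K'}^t)$, which is by definition a direction generated by constraints.

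The main obstacle I anticipate is \emph{not} the algebra of the ADM equations (that is textbook) but the bookkeeping needed to guarantee that the decomposition $\partial_t\Sigma' = N^0 n + \vec N$ is globally well-defined and smooth in $(x,t)$, with $N^0$ nowhere vanishing, so that $\Sigma'(t)$ really is a foliation-like family rather than merely a sequence of Cauchy surfaces; this rests on the output of the preceding lemma (existence of the smooth $\Sigma'\colon\sigma\times\R\to M$ with every slice Cauchy) and on the fact that for $t\in[0,1]$ the slices interpolate monotonically in time, which follows from the achronality construction used there. A secondary, purely technical point is matching the functional-analytic setting: one should check that $N^0(t),\vec N(t)$ lie in the multiplier/smearing class for which $C[N^0],D[\vec N]$ are smooth functions on the weighted Sobolev manifold $\M$, so that $\chi_{C[N^0]+D[\vec N]}$ genuinely exists; on compact $\sigma$ this is immediate, and the non-compact case is handled by the weight choices already discussed in subsection \ref{section:3_ADM}.
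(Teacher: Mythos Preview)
Your proposal is correct and follows essentially the same route as the paper: decompose the deformation vector $\partial_t\Sigma'$ along the slice into lapse times the unit normal plus a tangential shift, pull these back to $\sigma$, and invoke the ADM evolution equations to identify the tangent vector with the Hamiltonian flow of $C[N^0]+D[\vec N]$. The paper's proof is terser but structurally identical.

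One caveat: your claim that $N^0(t)>0$ everywhere, and that the slices ``interpolate monotonically in time'' as a consequence of the achronality construction in the preceding lemma, is neither needed nor justified. The previous lemma builds $\Sigma'$ by extending an achronal set in $M\times\R$ and then projecting via $\pi\colon M\times\R\to M$; the projection destroys any monotonicity, so the family $\Sigma(t)$ need not foliate $M$ and the lapse may vanish or change sign. This is harmless for the lemma at hand: the decomposition of $\partial_t\Sigma'$ into normal and tangential parts is always well-defined on a spacelike hypersurface regardless of the sign of $N^0$, and the constraint directions $\chi_{C[N^0]+D[\vec N]}$ make sense for arbitrary (not necessarily positive) $N^0$. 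You should simply drop the positivity assertion and the ``main obstacle'' paragraph that tries to secure it.
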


\begin{proof}
Suppose that we have a one parameter family of embeddings of Cauchy surface
\begin{equation}
\Sigma_t=\Sigma'(t,\sigma), \quad \Sigma'\colon I\times \sigma\rightarrow M
\end{equation}
with $\Sigma'$ smooth. We consider family of induced metric and extrinsic curvature
\begin{equation}
{q_{ab}'}^t, {K_{ab}'}^t
\end{equation}
The ADM momentum $P^{ab}$ is determined by them. Let us now consider a push forward vector $X={\Sigma'}^*\partial_t$ from $t\times \sigma$ on the surface $\Sigma_t$. We can decompose $X$
\begin{equation}
X_t=N_t'\vec{n}_t+\vec{N}_t'
\end{equation}
where $\vec{n}_t$ is future unit normal to $\Sigma_t'$ and $\vec{N}_t$ is tangent to $\Sigma_t'$. We can pull-back with $\Sigma_t'=\Sigma'(t,\cdot)$ both this objects to $\sigma$ (as the vector is tangent) to obtain
\begin{equation}
N_t={\Sigma_t'}_*N_t',\quad \vec{N}_t={\Sigma_t'}_*\vec{N}_t',
\end{equation}
The geometric interpretation of Hamiltonian formulation of GR says that
\begin{equation}
\frac{d}{dt}{q_{ab}'}^t=\{{q_{ab}'}^t, N_tC+N_t^iC_i\},\quad
\frac{d}{dt}{{P^{ab}}'}^t =\{{{P^{ab}}'}^t, N_tC+N_t^iC_i\},
\end{equation}
The flow is generated by $N(t)C+N^i(t)C_i$ thus it is tangent to the constraints direction.
\end{proof}

Finally the data $q_{ab}^1, K_{ab}^1$ is not exactly ${q_{ab}'}^1, {K_{ab}'}^1$ but it differs by a diffeomorphism
\begin{equation}
({\kappa_1}^{-1}\kappa_0)_*({q_{ab}'}^1,{K_{ab}'}^1)=(q_{ab}^1,K_{ab}^1)
\end{equation}
We proved that the gauge transformations are generated on shell, by scalar and vector constraints together with all spacial diffeomorphisms $\Diff(\sigma)$ if $\sigma$ is compact.\footnote{Thereby extending statements such as \cite{Blohmann:10} which were only concerned with the bracket relations stemming from the connected part of $\Diff(\sigma)$.}

Additionally, the map
$(q_{ab},P^{ab})\rightarrow (q_{ab},-P^{ab})$ corresponds to one particular time reversing diffeomorphism. Thus time reversing diffeomorphisms cannot be implemented as symplectic transformations (they exchange the sign in symplectic form).

In principle, one is interested in performing symplectic reduction from section \ref{section:2_Symplectic_Red} with respect to the group $\Diff(\sigma)$ and all scalar constraints in order to end up with only the physical degrees of freedom. 
However, in all generality the reduction is  a very hard task that has not been achieved up to full satisfaction as of today. Nonetheless, we will see that in the context of symmetry restriction many of the diffeomorphisms can be dealt with.
	\subsubsection{Fate of Einstein Equations under s}
\label{s312}
Assume a certain $\Phi= \mathbb{D}(\Psi)$ is fixed and we perform symmetry restriction to the invariant submanifold $\overline{\M}$. The generator of the dynamics in GR (at least for compact Cauchy surfaces) is totally constrained to vanish, that is it is a sum of the constraints multiplied by their respective Lagrange multipliers. Therefore, we have actually a whole family of Hamiltonians, parametrised by the possible choices of Lagrange multipliers $\lambda,\vec{\rho}$ that is:
\begin{align}
H(\lambda,\vec{\rho}):=\int_\sigma{\rm d}x\;\; [ \, \lambda(x)\,C(x)\, +\, \rho^a(x)\, D_a(x) \,]
\end{align}
All of the $H(\lambda,\vec{\rho})$ produce the same dynamics (i.e. gauge transformations) thus it is a free choice to pick the $\lambda,\vec{\rho}$ which are most suitable for our situation. For the purpose of symmetry restriction, it motivates itself to choose them such that
\begin{align}\label{transfLapseShift}
\phi_* \lambda= \psi_* \lambda,\hspace{30pt} \phi_* \vec{\rho} = \psi_* \vec{\rho}
\end{align}
Take note that we allow $\lambda,\rho$ to be dependent on the ADM phase space data.

Let us notice that for any point $m\in \M$ and $\phi=\D(\psi)$
\begin{equation}\label{spatial_diffeo_inv_of_GR}
C(\phi(m))=\psi_*C(m),\quad D(\phi(m))=\psi_*D(m)
\end{equation}
from spacial diffeomorphism invariance of the theory. Thus, it follows with (\ref{transfLapseShift}) that $H(\lambda,\vec{\rho})$ is $\Phi$-invariant and consequently theorem \ref{theorem:2} applies, i.e. the dynamics is correctly reduced to $\overline{\mathcal M}$.\\

However, this is not enough to ensure that all Einstein equations are satisfied. Indeed, physical points in the phase space are those for which the constraints $C[N]$ and $D[\vec N]$ vanish for all $N$ and $\vec N$ (of compact support). Fortunately, restriction to $\overline{\mathcal M}$ already takes care of many of these constraints, as we now show by utilizing the tools from  \ref{section:2_relation}. We understand the constraints as functions from the phase space into $\mathbb{R}$ (for simplicity we consider here smooth setting), and first focus on the general case where $\Phi$ might be non-compact.
From $\Phi$-invariance of $\overline{\mathcal{M}}$, it follows from (\ref{spatial_diffeo_inv_of_GR}) as $\phi(m)=m$ for $m\in\overline{\M}$ that 
\begin{align}
C\mid_{\overline{\M}}(m),\quad D\mid_{\overline{\M}}(m)
\end{align}
are invariant under $\Psi$ as functions on $\sigma$. Hence, we are in the first situation of lemma  \ref{lemma5} and can determine the remaining constraints on $\overline{\M}$ as follows:\\
Let us assume that there exists a finite dimensional basis of invariant functions and vector fields
\begin{equation}
C^\infty(\sigma)_\Psi=\Span\{f_i\;:\;f_i =\Psi_* f_i \}_{i\in I},\quad \prescript{1}{}{C}^\infty(\sigma)_\Psi=\Span\{\vec{v}_j\;:\;v_j =\Psi_* v_j\}_{j\in J}
\end{equation}
We can then define functions on the reduced phase space
\begin{equation}
c_i\colon \overline{\M}\rightarrow \R,\quad d_j\colon \overline{\M}\rightarrow \R,
\end{equation}
defined by identities
\begin{equation}\label{remaining_constraints_ADM}
C|_{\overline{\M}}=\sum_{i\in I} c_i f_i,\quad D|_{\overline{\M}}=\sum_{j\in J} d_j \vec{v}_j
\end{equation}
In summary, any possible remaining Einstein constraint, that is not vanishing on $\overline{\mathcal M}$, must be of the form (\ref{remaining_constraints_ADM}).\\

In case that $\Phi$ is compact, the second description of lemma \ref{lemma5} can be used: we parametrise the constraints by scalar functions $N$ and vector fields $\vec{N}$ spanning the hypersurface deformation algebra
\begin{align}
D[\vec N]:=\hat{D}(\vec N)\;:\; \M \to \mathbb{R},\hspace{30pt}
C[N]:=\hat{C}(N)\;:\; \M \to \mathbb{R}
\end{align}
Then, the only remaining constraints are these for which
\begin{align}
Ad_\phi (\vec{N}) =\vec{N},\hspace{30pt} 
Ad_\phi (N) = N \hspace{20pt}\forall \phi\in  \Phi
\end{align}
With some standard manipulation, we see for any vector field $v\in\mathfrak{g}$ that for $f: G=\Diff(\sigma) \to \mathbb{R}$:
\begin{align}
Ad_\phi (v)f &= \phi_* v\phi_*^{-1} f = \phi_* [ d(\phi_*^{-1}f)(v) ] = [\phi_*d(\phi_*^{-1}f)](\phi_* v)=[d(\phi_*\phi_*^{-1}f)](\phi_*v)=\nonumber\\
&=df(\phi_* v)=(\phi_*v)f
\end{align}
(using (\ref{contraction}) in the third step)
thus requiring $\vec{N}=\psi_*\vec{N}$ and similar for scalar functions $N=\psi_* N$. The characterisation of the remaining constraints in form (\ref{remaining_constraints_ADM}) follows. \\

The last step consists out of analysing the remaining gauge orbits on $\overline{\M}$: Suppose that we have a constraint $C'$ which preserves $\overline{\M}$. Then its Hamiltonian vector field $\chi^\omega_{C'}$ is tangent to $\overline{\M}$ and thus for any function
\begin{equation}
\{C',g\}|_{\overline{\M}}=\{C'|_{\overline{\M}},g|_{\overline{\M}}\}_{\overline{\M}}
\end{equation}
and the action can be only nontrivial if $C'|_{\overline{\M}}\not=0$. The functions $c_i$ and $d_j$ constitute (possibly overcomplete) bases of scalar and vector constraints restricted to $\overline{\M}$. Thus, each one creates a flow which is equivalent to the flow of some combinations of constraints.\\
Of course there might be some diffeomorphisms that leave $\overline{\M}$ and return thus providing physical identification that need to be check case by case. We will call such transformations global. An example to be discussed later are the permutation of scale factors and conjugated momenta in Bianchi I model.

\subsubsection{Deparametrisation } \label{cosmology-deparam} 

Our methods apply also to some deparametrised models, if the gauge transformations of the deparametrised model contain subgroup $\Psi$. Typically,
in this case we assume that the scalar constraint was solved at the unreduced level (i.e., on the full phase space), for example with the use of additional scalar field. Thus, we are left with a physical Hamiltonian, $H = C[1]$, together with the remaining constraints $\vec{D}$.

Let us notice that this might not be true if for the solving constraints we used additional nonsymmetric objects like a choice of observer etc.

\subsection{Connection formulation of General Relativity}
\label{section:3_connection}

The Ashtekar-Barbero variables \cite{Ash:86,Bar:94} coordinatise the phase space $\mathcal{M}_{AB}$ of a $SO(3)$ Yang-Mills theory of a connected manifold $\sigma$, described by a $SO(3)$ connection $A^I_a \;:\; \sigma\to \mathbb{R}$ and a non-abelian electric field $E^b_J\;:\; \sigma \to \mathbb{R}$  (with internal $\mathfrak{su}(2)$-indices $I,J,...=1,2,3$ and spatial indices $a,b,..=1,2,3$). We assume that electric field is nondegenerate. This means that we restrict to electric fields such that $\det E \not= 0$.

From the symplectic potential
\begin{align}\label{potential_AB}
\xi_{AB} = \frac{2}{\kappa\beta} \int_\sigma {\rm d}^3x\; E^a_I(x) dA^I_a(x)
\end{align}
we obtain the symplectic form on $\mathcal {M}_{AB}$ given by:
\begin{align}\label{symplform}
\omega = \frac{2}{\kappa \beta}\int_\sigma {\rm d}^3x \; dE^a_I(x)\wedge dA^I_a(x)
\end{align}
where $\beta\in \mathbb{R}\setminus\{0\}$ is the Immirzi parameter. As usual Poisson brackets are defined through $\{f,g\}=\omega(\chi_f,\chi_g)$ and read:
\begin{align}\label{AB_PB}
\{E_I^a (x), A_J^b(y)\} = \frac{\kappa\beta}{2}\delta^a_b \delta^J_I \delta^{(3)}(x,y)
\end{align}
We will subject this phase space to the following three families of constraints:
\begin{itemize}
	\item The Gauss constraint:
	\begin{align}\label{Gauss_constraint}
	G_J = \partial_a E^a_J +\epsilon_{JKL}A^K_aE^a_L
	\end{align}
	\item The spatial diffeomorphism (or vector) constraint:
	\begin{align}\label{Diffeo_constraint}
	D_a' = \frac{2}{\kappa\beta}F_{ab}^J E^b_J -\beta G_J  K^J_a +D_{\rm matter}
	\end{align}
	\item The scalar constraint:
	\begin{align}\label{Scalar_constraint}
	C = \frac{{\rm sgn}(\det(E))}{\kappa}(F^J_{ab}-(1+\beta^2)K_{a}^MK_{b}^N\epsilon_{MNJ})\epsilon_{JKL}\frac{E^a_KE^b_L}{\sqrt{|\det(E)|}}+C_{\rm matter}
	\end{align}
\end{itemize}
where $F_{ab}$ is the curvature of the connection
\begin{equation}
F_{ab}^I=\partial_a A^I_b-\partial_b A^I_a+\epsilon^I_{JK}A_a^J A_b^K
\end{equation}
$K_{ab}$, the extrinsic curvature of the ADM formalism\footnote{We point out that in the ADM formulation $K_{ab}$ is symmetric and a short manipulation shows that this  immediately forces the antisymmetric part of the right hand side of (\ref{extrCurvature}) to vanish, i.e. $K_{[a}^I e_{b]}^I=0$. This is in turn is equivalent to the Gauss constraint (\ref{Gauss_constraint}).}, can be expressed by the formula
\begin{equation}\label{extrCurvature}
K_{ab}= K_{(a}^I E_{b)}^I\sqrt{|\det(E)|},\hspace{30pt}\beta K^I_a:= A^I_a -\Gamma^I_a
\end{equation}
with $\Gamma^I_a$ being the spin connection \cite{Thi:07} and
\begin{align}\label{cotriad}
E^I_b := \frac{1}{2}{\rm sgn}(\det(E)) \epsilon^{abc} \epsilon_{IJK} \frac{E^a_J E^b_K}{|\det(E)|}
\end{align}
with the property that $E^a_J E^J_b=\delta^a_b$ and $E^a_I E^J_a =\delta^J_I$. As before, $D_{\rm matter}$ and $C_{\rm matter}$ denote the contribution of some matter content.\\

\subsubsection{(Kinematical) Symplectic Reduction to ADM phase space}

Finally, we subject this phase space to the gauge groups $O(3)$ whose connected component is generated by the Gauss constraint. The group of $O(3)$ gauge transformation (in case of connected manifold) can be written as a product of group of $SO(3)$ gauge transformations (pointwise multiplication)
\begin{equation}
\G^+=C^\infty(\sigma, SO(3))
\end{equation}
and the two element group $\{{\mathbb I},-{\mathbb I}\}$, that corresponds to $\pm {\mathbb I}$ constant transformations.
\begin{equation}\label{eq-GG}
\G=\G^+\times \{{\mathbb I},-{\mathbb I}\}
\end{equation}
The action for $O\in \G^+$ is given by a symplectic transformation\footnote{ Let us notice that if $O$ is constant then the connection transforms also by simple rotation. This will be the case in many applications for example in FRW $k=0$ cosmology.}
\begin{equation}\label{action_of_gauss_so(3)}
\G(O)(E,A)=(O^J_IE_J^a,O^I_JA^J_a+\frac{1}{2}\epsilon_{J}^{\;\;IK}(\partial_a O^J_{\ L}){O^{-1}}^L_{\ K})
\end{equation}
and such transformation of compact support are generated by Gauss constraints $G^I$.

The action of $-{\mathbb I}$ is implemented by a symplectic transformation
\begin{equation}
\G(-{\mathbb I}) (E,A):= (-E^a_I, 2\Gamma^I_a- A^I_a)
\end{equation}
It easy to check that this is a valid choice, as $2\Gamma^I_a-A^I_a$ transforms as a connection and the constraints are actually invariant under this transformation (actually only after adding a corresponding Gauss constraint to the scalar constraint as well).

We perform now  Marsden-Weinstein-Meyer symplectic reduction with respect to the group of $O(3)$ gauge transformations, that is imposing the Gauss constraint. The locus of the constraint $\G^{-1}(0) \subset \M_{AB}$ can actually be coordinatised by the $SO(3)$ gauge-invariant functions $q_{ab}$ and $P^{cd}$ which are canonically conjugated (see \cite{Thi:07} for details) 
\begin{align}\label{temp_equ147}
q^{ab}:= \frac{\delta^{IJ} E_I^a E_J^b}{|\det(E)|},\hspace{30pt} P^{ab}:= \frac{1}{\beta\kappa}(q^{c(a}q^{b)d}-q^{ab}q^{cd})(A^J_c-\Gamma_c^J)q_{de} E^e_J
\end{align}
where $P^{ab}$ is symmetric due to the Gauss constraints. From symplectic reduction we also know that we have a map
\begin{equation}
{\bf P}^G\colon \G^{-1}(0) \rightarrow \M_{AB}/\!\!/G \equiv \M_{ADM}
\end{equation}
where $\G^{-1}(0)$ denotes points where the Gauss constraints vanish. The reduced phase space becomes equivalent to the ADM phase space.  Conversely, if two points  $m,m'\in \M_{AB}$ give the same ADM data, i.e. $(q,P)(m)=(q,P)(m')$ they are related by a $O(3)$ gauge transformation (it can be a large gauge transformation).
This is in fact fibration with $\G$ group. The map is given by (\ref{temp_equ147}).
\\

Finally, one has still to impose equations (\ref{Diffeo_constraint})-(\ref{Scalar_constraint}) for all $x\in\sigma$ or one can equivalently study the vanishing of $C[N],\,\vec{D}[\vec{N}]$, i.e. their respective smearings against a lapse function $N$ and a shift vector $\vec{N}$ (of compact support).\\
Let us notice that the diffeomorphism and scalar constraints are $O(3)$
 invariant thus they can be pulled down to the ADM phase space (in fact the scalar constraint is invariant only up to Gauss constraint, but it is unimportant when Gauss constraints vanishes). After reducing the phase space by the Gauss constraint the diffeomorphism constraints are equal to $\vec{D}_{ADM}$ thus they are generators of spacelike diffeomorphisms. Similarly scalar constraints reduce to the ADM scalar constraints, thus the theory on Ashtekar Barbero phase space is equivalent to ADM formulation of General Relativity.\\

\begin{remark}
Some instances in the literature define the Ashtekar-Barbero variables with gauge group $SO(3)$ \cite{Thi:07}. As both groups share the same Lie algebra $su(2) \cong so(3)$, they obey the same Gauss constraint $G$.
However, in order to ensure that symplectic reduction from the extended phase with respect to the chosen symmetry group returns only $\M_{ADM}$, it is actual necessary to adapt the Ashtekar-Barbero phase space to the correspondingly chosen gauge group. That is:
\begin{align}
\M_{AB}^{O(3)} &=\{(E_I\in \prescript{1}{}{C}^\infty(\sigma),A^J\in \prescript{}{1} {C}^\infty(\sigma))\, ,\, \det(E) \neq 0\},\\
\M_{AB}^{SO(3)} &= \{(E_I\in \prescript{1}{}{C}^\infty(\sigma),A^I\in \prescript{}{1}{C}^\infty(\sigma))\, , \, \det(E) > 0 \}
\end{align}
Despite the fact that quantisations based on $SO(3)$ is better understood than $O(3)$, it is useful to allow also $O(3)$ transformations.  Throughout this manuscript we use the choice $\M_{AB} \equiv \M^{O(3)}_{AB}$ simplifying the notation accordingly.
\end{remark}

\subsubsection{Reappearance of the spatial diffeomorphism group}
\label{s321}
We devote this subsection to recall the relation of the vector constraint (\ref{Diffeo_constraint}) of the connection formulation with the spatial diffeomorphism constraint of the ADM formalism (\ref{Diffeo_constraint-ADM}).\\

First, we point out that there exists a natural generalisation of the action of the diffeomorphisms \eqref{diff-action} on the spatial manifold $\sigma$ for tensors with internal indices: Let $\Psi\subset \Diff(\sigma)$ and $T$ some $(n,m)$-tensor field with $n_3$-many internal indices $I_1,...I_{n_3}$. The action $\forall \psi\in\Psi$ on $T$ is defined by
\begin{align}\label{ActionOnGaussInv}
(\psi_* T_{a...\;I_1..I_{n_3}}^{b...})(x):=(\psi_\ast t^{a...}_{b...})_{I_1,...I_{n_3}}(x)=T_{a'...\; I_1...I_{n_3}}^{b'...}(\psi^{-1}(x))\frac{\partial(\psi^{-1})^{a'}(x)}{\partial x^a}...\frac{\partial\psi^{b}(y)}{\partial y^{b'}}|_{y=\psi^{-1}(x)}...
\end{align}
That is, the action does not touch internal indices (we regard them as labels).\\
We translate this into an action of symplectormorphisms on the phase space $\M_{AB}$:
\begin{align}\label{Action_on_AE_var}
\D_{AB}(\psi)f(A^I_a,E^a_I):=f(\psi_*(A^I_a),\psi_* (E^a_I))
\end{align}
Obviously, we are now interested in its generator $D$ such that $\forall \psi\in \Psi$ there exists $\vec{N}$ with
\begin{align}\label{generator_of_ABdiffeos}
\psi_* f = \sum_{n=0}^\infty \frac{t^n}{n!} \{ \int_\sigma {\rm d}^3x\; D_a(x) N^a(x)  ,f\}_{(n)}
\end{align}
It is tempting to think that $D$ is the actually the diffeomorpism constraint $D'$ from (\ref{Diffeo_constraint}). However it is not true. 
In order to determine the actual generator we will use fact \ref{fact-presymplectic} and the presymplectic potential $\xi_{AB}$ \eqref{potential_AB}\footnote{The crucial part is the diffeomorphism invariance of $\xi_{AB}$ in order for this derivation to work. Similarly, one may use that $\xi_{AB}$ is invariant under SO(3) transformation as well in order to derive $G_I$ from (\ref{Gauss_constraint}) as the generator of gauge transformations. In fact $\xi_{AB}$ is also invariant under all $O(3)$ transformation. It is enough to check it for $\G(-{\mathbb I})$, i.e. $\G(-{\mathbb I})_*\xi_{AB}=\xi_{AB}-2\int E_I^a\delta A_a^I=\xi_{AB}$ by \cite{Thi:07} eq. (4.2.27).}

The following is well known \cite{Ash:91
,AL04
 }:

\begin{fact}\label{spt_diffeo_AB}
Let $\{\psi_t\}_{t\in\mathbb{R}}$ be a smooth 1-parameter group of diffeomorphisms acting via (\ref{ActionOnGaussInv}), such that each $\psi_t$ is identity outside of some compact support and $\psi_0={\rm id}$.\\
 The proper generator of the action $\D_{AB}(\psi_t)$ is $D$ in the sense that there exists some compactly supported vector field $N$ such that
\begin{equation}
\frac{d}{dt}|_{t=0}\D_{AB}(\psi_t) =\{ D[\vec{N}] ,\;.\;\}\;,\quad D_a:=\frac{2}{\kappa\beta}( F_{ab}^IE^b_I-G_IA^I_a)
\end{equation}
\end{fact}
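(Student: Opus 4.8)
The plan is to apply Fact~\ref{fact-presymplectic} to the vector field $X$ on $\M_{AB}$ generating the one-parameter family $t\mapsto\D_{AB}(\psi_t)$ of symplectomorphisms. First I would observe that $X$ does indeed preserve the presymplectic potential $\xi_{AB}$ from \eqref{potential_AB}: since $\psi_t$ acts on $(A^I_a,E^a_I)$ purely through the diffeomorphism action \eqref{ActionOnGaussInv} that leaves internal indices untouched, and since $\xi_{AB}=\frac{2}{\kappa\beta}\int_\sigma\mathrm{d}^3x\,E^a_I\,\mathrm{d}A^I_a$ is built by contracting the spatial density index of $E$ against the spatial covector index of $\mathrm{d}A$ with the internal indices merely summed, the integrand is a scalar density and the whole integral is diffeomorphism-invariant (this is the content of the footnote accompanying the statement). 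Hence $\mathcal{L}_X\xi_{AB}=0$, and Fact~\ref{fact-presymplectic} immediately gives that the generator is $\chi_f$ with $f=-\xi_{AB}(X)$.

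\textbf{Computing $f=-\xi_{AB}(X)$.} The key step is then to evaluate $\xi_{AB}(X)$ explicitly. The vector field $X$ at a point $(A,E)$ has components $\delta A^I_a = -\mathcal{L}_{\vec N}A^I_a$ and $\delta E^a_I = -\mathcal{L}_{\vec N}E^a_I$ where $\vec N$ is the generating vector field of $\{\psi_t\}$ (with the sign dictated by the push-forward convention in Definition~\ref{definition:1}, which I would track carefully). Contracting $\xi_{AB}$ with $X$ means replacing $\mathrm{d}A^I_a$ by $\delta A^I_a$, giving $-\frac{2}{\kappa\beta}\int_\sigma\mathrm{d}^3x\,E^a_I\,\mathcal{L}_{\vec N}A^I_a$. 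Now I would expand the Lie derivative, $\mathcal{L}_{\vec N}A^I_a = N^b\partial_b A^I_a + A^I_b\partial_a N^b$, and integrate by parts to move all derivatives off $N^b$. A standard manipulation (using $F^I_{ab}=\partial_aA^I_b-\partial_bA^I_a+\epsilon^I{}_{JK}A^J_aA^K_b$ and $\partial_aE^a_I = G_I-\epsilon_{IKL}A^K_aE^a_L$ to trade the bare $\partial A$ and $\partial E$ terms for the curvature and Gauss pieces) rearranges the integrand into $\frac{2}{\kappa\beta}N^a(F^I_{ab}E^b_I - G_I A^I_a)$, i.e. $f = D[\vec N]$ with $D_a = \frac{2}{\kappa\beta}(F^I_{ab}E^b_I - G_I A^I_a)$ up to a boundary term that vanishes because $\vec N$ has compact support.

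\textbf{Main obstacle.} The conceptually trivial but technically error-prone part is the integration-by-parts bookkeeping in the previous paragraph: one must carefully collect the $\partial N$ terms, recognize which combinations assemble into $F^I_{ab}E^b_I$ versus which produce the $G_IA^I_a$ correction, and confirm that the leftover non-covariant pieces cancel exactly — this is precisely why the naive guess $D'$ from \eqref{Diffeo_constraint} is wrong and the corrected $D$ appears, the difference being $\beta G_J K^J_a$, a Gauss-constraint term. A secondary subtlety is keeping the sign conventions consistent: the direction of $\psi_t$ versus $\psi_t^{-1}$ in the push-forward, and the overall sign $f=-\xi(X)$ from Fact~\ref{fact-presymplectic}, must be chosen so that the final generator reproduces $\mathcal{L}_{\vec N}$ acting on phase-space functions with the correct sign, matching the ADM convention $\{D[\vec\xi],f\}=\mathcal{L}_{\vec\xi}f$ recorded in subsection~\ref{s311}. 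Finally, since the infinite-dimensional setting makes $\chi_f$ only densely defined, I would note that every smooth point of $\M_{AB}$ is $D[\vec N]$-regular (as remarked for the analogous ADM functionals), so the identity $X=\chi_{D[\vec N]}$ holds pointwise on the relevant domain.
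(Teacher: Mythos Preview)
Your proposal is correct and follows essentially the same route as the paper: invoke Fact~\ref{fact-presymplectic} using the diffeomorphism-invariance of $\xi_{AB}$, then compute $-\xi_{AB}(X)$ by expanding $\mathcal{L}_{\vec N}A^I_a$ and integrating by parts to produce the $F^I_{ab}E^b_I$ and $G_IA^I_a$ pieces. The paper closes with an explicit consistency check, verifying $\{D[\vec N],E^c_K\}=\mathcal{L}_{\vec N}E^c_K$ and $\{D[\vec N],A^K_c\}=\mathcal{L}_{\vec N}A^K_c$, which neatly resolves the sign-tracking worry you flag; you could do the same rather than chasing the push-forward convention through the argument.
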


\begin{proof}
We use fact \ref{fact-presymplectic} that is valid also in our infinite dimensional setting. We can compute
for $X=\frac{d}{dt}\D_{AB}(\psi_t)$
\begin{align}
\xi_{AB}(X)=\int {\rm d}^3x\; E^a_I(x) \, X[A^I_a(x)] =\int {\rm d}^3x \;E^a_I(x)\frac{dA^I_a(x)}{ds} =\int {\rm d}^3x\; E^a_I(x) \mathcal{L}_{\vec{N}} A^I_a(x)
\end{align}
because we know that for spatial diffeomorphisms there exists a vector field $N$ on $\sigma$ such that
\begin{align}
\frac{dA^I_a(x)}{ds}=\mathcal{L}_{\vec{N}} A^I_a(x)
\end{align}
Therefore, the generator of the spatial diffeomorphisms $\mathbb{D}_{AB}(\psi_t)$ is given by
\begin{align}
F&= -\xi_{AB}(X) = - \int {\rm d}^3x\; E_I^a(x) \mathcal{L}_{\vec{N}}A^I_a(x) = \int {\rm d}^3x\; [E_I^a N^b\partial_b A^I_a + E^a_I A^I_b \partial_a N^b](x)\nonumber\\
&=  \int {\rm d}^3x\;  [E^a_I N^b \partial_{[a} A^I_{b]} -E^a_I \partial_a (A^I_b N^b)](x)\nonumber\\
&=  \int {\rm d}^3x\;  [E^a_I N^b(\partial_{[a}A^I_{b]}+\epsilon_{IJK} A^J_a A^K_b)+A^I_b N^b (\partial_a E^a_I+\epsilon_{IJK} A^J_a E^a_K)](x)\nonumber\\
&= \int {\rm d}^3x\; N^b(x) [E^a_I F^I_{ab} + A^I_b G_I](x) =  D[\vec{N}] 
\end{align}
It is easy to perform a consistency check (recall that $E^c_K$ is a density of weight 1)
\begin{align}
\{D[N], E^c_K(x)\}&= N^a\partial_a E^c_K - E^a_K\partial_a N^c+ (\partial_a N^a) E^c_K  =\mathcal{L}_{\vec{N}} E^c_K\\
\{D[N], A^K_c(x)\}&= N^a \partial_a A_c^K +A^K_a \partial_c N^a =\mathcal{L}_{\vec{N}} A^K_c
\end{align}
which proves the fact.\\
\end{proof}

The constraint $D'$ differs from $D$ by a Gauss constraint. Although all inner indices are contracted $D$ is not $SO(3)$ invariant, because of the nontensor transformation of the connection.

Concerning this situation, one should note that $\Lambda^IG_I$ for
$\Lambda^I$ dependent on phase space functions does not generate Gauss transformation on the whole phase space. However, if we restrict to the locus of the Gauss constraint by $\G^{-1}(0)$, then
\begin{equation}
\{\Lambda^I G_I, F\} = \Lambda^I\{G_I,F\}+\{\Lambda^I,F\} G_I=\Lambda^I\{G_I,F\}
\end{equation}
we obtain again $SO(3)$  transformation\footnote{We remind that Gauss constraints generate only $SO(3)$ not $O(3)$ gauge transformations.}. From this follow an important fact. Addition of the term $G_J  K^J_a$ in (\ref{Diffeo_constraint}) is not only vanishing on the Gauss constraints surface, but also it does not alter the action of the generator on the $O(3)$ invariant functions. Actually, we are allowed to modify
the vector constraint (\ref{Diffeo_constraint}) with any element of the form
\begin{align}
{\bf D}[\vec N, \Lambda]= D [\vec N] +G[\Lambda]
\end{align}
for any choice of $\mathfrak{so}(3)$-smearing $\Lambda$.

\begin{remark}
The generator $D$ seems distinguished by its relation to the diffeomorphism transformations, thus the lack of $O(3)$ invariance may be surprising. However, there is a simple geometric reason for that. In fact, we should think about internal indices rather in terms of $O(3)$ bundle. Such bundles in $3$ dimensions always admit three independent global sections and these are our three indices. Transformation generated by $D$ is not changing this particular trivialisation, thus it is not purely geometric but depend on additional choice. For interpretation of other possible generators see \cite{Simone2019}.
\end{remark}

In summary, we are interested in both the transformations described by $\G(O)$, i.e. Gauss transformations, as well as $ \D_{AB}(\psi)$, i.e. the spatial diffeomorphisms. Let us notice that $\G(O)$ and $\D_{AB}(\psi)$ are symplectic transformations on $\M$.\\
The group generated by these transformations is a semidirect product
\begin{equation}
\D_{AB}(\Diff( \sigma))\ltimes \G
\end{equation}
where $\G$ is defined by \eqref{eq-GG}.
Let us notice that
\begin{equation}\label{eq:semi-direct}
\D_{AB}(\psi)\G(O)\D_{AB}(\psi^{-1})=\G(\psi^*O)
\end{equation}
We will use the basis of gauge transformations given by $\D_{AB}$ and $\G$, but in the moment map it is more useful to use the basis given by $G$ and $D'$ or $D'':=\frac{2}{\kappa \beta}F^J_{ab}E^b_J$ (instead of $D$). This is due to the nicer version of equivariance property \eqref{equiv-prop:diff}. 
For the same reason we will use the scalar constraint given in equation (\ref{Scalar_constraint}), albeit one can also be modified it by the Gauss constraint to obtain a generator of the flow acting on spatial indices only \cite{Thi:07}.

\subsubsection{Total symplectic reduction}
\label{s322}
Indeed, the symplectic reduction (in stages) from section \ref{section:2_Symplectic_Red} fits into the outlined framework, as the division of constraints presents a proper choice. We have seven constraints per point $G_I(x), {D''}^a(x),  C(x)$, which can be brought to together as the corresponding momentum maps:
\begin{equation}
G\colon \M_{AB}	\rightarrow C^\infty(\sigma, so(3)^*),\quad
D''\colon \M_{AB}\rightarrow \prescript{}{1}{C}^\infty(\sigma),\quad
C\colon \M_{AB} \rightarrow C^\infty(\sigma),
\end{equation}
We should regard theses spaces as a dual to the Lie algebra.
In other words
\begin{align}
G[\,\cdot\,] (\,\cdot \,) :\hspace{20pt} \M_{AB} &\times  C_0^\infty(\sigma,so(3)) \hspace{10pt}\to \mathbb{R}\\
(A^I_a,E^b_J)(x) &\times \Lambda(x) \hspace{50pt}\mapsto G[\Lambda] (A,E)\nonumber
\end{align}
Let us notice that $O$ naturally acts on $C^\infty(\sigma, so(3)^*)$ by pointwise co-adjoint action. Diffeomophism $\psi$ acts naturally on all functions or one forms.
One can show that these are indeed momentum maps as they obey the criterion {\it equivariance}  (\ref{equivariance}) that is
\begin{equation}\label{equiv-prop:gauss}
G\circ \D(\psi^{-1})=\psi^*G,\quad
D''\circ \D(\psi^{-1})=\psi^*D,\quad
C\circ \D(\psi^{-1})=\psi^*C
\end{equation}
(due to fact \ref{spt_diffeo_AB}) and
\begin{equation}\label{equiv-prop:diff}
G\circ \G(O^{-1})=\operatorname{Ad}_O^*G,\quad
D''\circ \G(O^{-1})=D'',\quad
C\circ \G(O^{-1})=C
\end{equation}
This equivariance property will be important for the classification of the remaining constraints under symmetry restriction (see below).\\

Finally, one has still to impose equations (\ref{Diffeo_constraint})-(\ref{Scalar_constraint}) for all $x\in\sigma$ or one can equivalently study the vanishing of $C[N],\,\vec{D}[\vec{N}]$, i.e. their respective smearings against a lapse function $N$ and a shift vector $\vec{N}$ (of compact support).\\
Let us notice that the diffeomorphism and scalar constraints are $SO(3)$
 invariant thus they can be pulled down to the ADM phase space. After reducing the phase space by Gauss constraint the diffeomorphism constraints are equal to $\vec{D}_{ADM}$ thus they are generators of spacelike diffeomorphisms.\\

\subsubsection{Lifting $\Phi$-actions from $\M_{ADM}$ to $\M_{AB}$}
\label{s323}
In this section, we expand on how the action of (suitable) symplectomorphism groups $\D(\Psi)$ on $\M_{ADM}$ can be lifted to $\M_{AB}$, making the framework of {\it symmetry restriction} applicable here as well.\\

Let us introduce a symmetry group $\Psi$, which acts on the spatial manifold $\mathcal{\sigma}$ via diffeomorphisms, i.e. $\psi:\sigma \to \sigma$. We have seen that this group can be translated into a group of symplectomorphisms $\D(\Psi)$ on the ADM- phase space with (\ref{diff-action}):
\begin{align}
\D(\psi) (P^{ab}, q_{cd}):= (\psi_* ( P^{ab}),\psi_*(q_{ab}))
\end{align}
and that for every $\psi\in\Psi$ there exists plenty of symplectormorphisms on $\M_{AB} $ whose projections on $\M_{AB} /\!\!/G$ agree. One particular choice was $\D_{AB}(\Psi)$ with (\ref{Action_on_AE_var}), i.e.
\begin{align}
\D_{AB}(\psi)(A^I_a,E^a_I):=(\psi_*(A^I_a),\psi_* (E^a_I))
\end{align}
where $\psi_*$ has to be understood via (\ref{ActionOnGaussInv}). However, the later choice might be impractical for our purposes, as the following problem occurs: Not every group $\Psi$ that allows for a non-empty invariant subspace $\overline{\M}_{ADM}\subset \M_{ADM}$ under $\D(\Psi)$ allows for a non-empty invariant subspace $\overline{\M}\subset\M_{AB}$ under $\D_{AB}(\Psi)$.\\

\begin{example}
Let $\Psi_x$ be the group of rotations around $x\in\sigma=\mathbb{R}^3$, i.e. $SO(3)$, with $\Pi(\psi)^a_b:= \partial(\psi^{-1})^a/\partial x^b (y)$. Then there  exists a set of invariant $(q,P)$, e.g. $q_{ab}=a(t) \delta_{ab}$. However, no invariant $E_I$ exists, as there are no invariant vectors under rotations.
\end{example}

The remedy is to compensate rotation of the space indices by a $O(3)$ transformation of the internal indices. Although at first unintuitive, we will later see that only when choosing non-vanishing Gauss transformation we will be able to find the invariant subset $\bar {\mathcal{M}}_{AB} \subset \mathcal M_{AB}$ which agrees with the cosmological examples we have in mind.
In order to be in a situation where we can apply section 2 theorem 4, it is necessary to lift the action of $\Phi$ on $\M_{ADM}$ to an action on $\M_{AB}$. Let us explain what it means. We have the surjection of the groups
\begin{align}
G_{AB}\colon \D_{AB}(\Diff(\sigma))\ltimes \G\rightarrow \D(\Diff(\sigma))\\
G_{AB}(\G(O)\D_{AB}(\psi))=\D(\psi)
\end{align}

Now we want to find a group $\Phi_{AB}\subset \D_{AB}(\Diff(\sigma))\ltimes \G$ such that
\begin{equation}
G_{AB}(\Phi_{AB})=\D(\Psi)
\end{equation}
This already implies ${\bf P}^G(G^{-1}(0)\cap \overline{\M}_{AB})\subset \overline{\M}_{ADM}$. In the following, we ask whether symplectic reduction with respect to gauge $O(3)$ transformation commutes with symmetry restriction.\\

The construction of $\Phi_{AB}$ is done in the following way:
We are given a group of diffeomorphisms $\Psi$. First, determine the $\D(\Psi)$-invariant subspace of $\M_{ADM}$. Given some parametrisation ${q_o}_{ab}$ of the $\D(\Psi)$-invariant metric, we {\it choose} some arbitrary vector field $E^a_{oI}$ satisfying
\begin{equation}\label{find_Eo_from_qo}
{q_o}^{ab}=\frac{{E_o}_I^a {E_o}_J^b}{|\det(E_o)|} \delta_{IJ}
\end{equation}
Dependent on the choice $E_o$, for any $\psi\in\Psi$ the lift $\Phi_{AB,E_o}(\psi)$ to the Ashtekar-Barbero phase space will be chosen in form of an element $(\psi, O(\psi)) \in \D_{Ab}(\Diff(\sigma))\ltimes \G$ which acts on $\M_{AB}$ via
\begin{equation}
\G(O(\psi))\D_{AB}(\psi)
\end{equation}
such that:
\begin{equation}\label{ensure_inv_of_E}
([O(\psi)]^I_J \circ \psi_\ast \;{E_o}_I)^b (x)\overset{!}{=}{E_o}_J^b(x) \\
\end{equation}
where $O(\psi)$ is orthogonal from invariance of the metric. From non-degeneracy of $E^a_I$ and from the fact that $\psi$ is orientation-preserving, we know that $\det(O)=1$, i.e. $\det(\psi_* E) =\det(E)$. In other word, we have introduced the additional, non-trivial condition that the diffeomorphism $\psi$ lifted to $\M_{AB}$ also leaves the electric field ${E_o}_I^a$ invariant. The formula for matrix $O$ is given by
\begin{equation}\label{GaugeActionFromDiffeoForSymmetry}
[O(\psi)]^I_J(y)={E_o}_J^a(y)\psi_\ast({E_o}^I_a)(y)
\end{equation}
since $E^I_a E_I^b = \delta^b_a$.

So far, we have not considered the connection. In principle, demanding the analogue of (\ref{ensure_inv_of_E}) might impose additional restrictions. But we show in theorem \ref{theorem4} that there is no need to do this: the symmetry reduced phase space $\overline{\M}_{AB}$ reduces to $\overline{\M}_{ADM}$ by only considering $E_o$ for the construction of $O(\psi)$.

Thus, given that $\Psi$ is orientation preserving, such transformations form a group that will be denoted by 
\begin{equation}\label{final_sym_group_on_Ab}
\Phi_{AB,E_o}=\{\;(\,\D_{AB}(\psi)\,,\, {E_o}_J^a\psi_\ast({E_{o}}^I_a) \,)\;\colon \psi\in\Psi\}\subset \D_{AB}(\Diff(\sigma))\ltimes \G
\end{equation}
 Let us denote
\begin{equation}
\overline{\M}_{AB}=\{m\in \M_{AB}\colon \phi(m)=m,\ \forall \phi\in\Phi_{AB,E_o}\}
\end{equation}

We devote the remainder of this section to analyse $\M_{AB}$ for being a symplectic manifold and its relation to $\overline{\M}_{ADM}\subset \M_{ADM}$, the invariant submanifold under $\D(\Psi)$.\\

\begin{lm}
The space $\overline{\M}_{AB}$ is a manifold.
If $\Psi$ is compact and $\overline{\M}_{AB}$ is finite dimensional then $\Phi_{AB, E_o}$ is clean.
\end{lm}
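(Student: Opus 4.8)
The plan is to prove the two assertions in turn: first exhibit $\overline{\M}_{AB}$ as an open subset of an affine subspace of the configuration space (hence a submanifold), and then deduce cleanness from Fact \ref{Fact1} once I have checked that $\Phi_{AB,E_o}$ is a compact group.

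For the manifold statement I would use that $\M_{AB}$ is an open subset (carved out by $\det E\neq 0$) of the vector space of pairs $(A^I_a,E^a_I)$, and that every generator $\G(O(\psi))\D_{AB}(\psi)$ of $\Phi_{AB,E_o}$ acts on this vector space as an affine map. Indeed $\D_{AB}(\psi)$ is linear, being the geometric push-forward (\ref{ActionOnGaussInv}) on both slots, and by (\ref{action_of_gauss_so(3)}) the map $\G(O(\psi))$ is linear in $E$ and affine in $A$, with inhomogeneous ``Maurer--Cartan'' term built only from $O(\psi)$; and $O(\psi)$, by (\ref{GaugeActionFromDiffeoForSymmetry}), is constructed solely from the fixed field $E_o$, hence is independent of the dynamical $(A,E)$. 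Therefore the fixed-point set $\{(A,E):\phi(A,E)=(A,E)\ \forall\phi\in\Phi_{AB,E_o}\}$ is the solution set of an inhomogeneous linear system in which the conditions $E=O(\psi)\psi_*E$ decouple from the conditions $A=O(\psi)\psi_*A+(\text{MC})$; the former cut out a closed linear subspace $V_E$ (which contains $E_o$, with $\det E_o\neq 0$ by non-degeneracy), the latter a closed affine subspace $A_0+V_A$ (nonempty in the situations of interest, e.g.\ containing the spin connection $\Gamma(E_o)$; an empty set being trivially a manifold). Hence $\overline{\M}_{AB}=\big((A_0+V_A)\times V_E\big)\cap\{\det E\neq 0\}$ is an open subset of a closed affine subspace of the ambient (Hilbert) space; since closed subspaces of a Hilbert space are complemented, this is a genuine submanifold, with $T_m\overline{\M}_{AB}=V_A\oplus V_E$ for each $m$.

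For cleanness the extra input is that $\Phi_{AB,E_o}$ is a compact group. By (\ref{final_sym_group_on_Ab}) it is the image of $\Psi$ under $\psi\mapsto \G(O(\psi))\D_{AB}(\psi)$; using (\ref{eq:semi-direct}) together with the defining relation (\ref{ensure_inv_of_E}) one checks this assignment is a group homomorphism (the required cocycle identity $O(\psi_1\psi_2)=O(\psi_1)\,\psi_{1*}O(\psi_2)$ follows from (\ref{ensure_inv_of_E})), and it is continuous since $\psi\mapsto O(\psi)$ is continuous by (\ref{GaugeActionFromDiffeoForSymmetry}) and the $\Psi$-action on $\sigma$ is smooth; a continuous homomorphic image of a compact group is compact. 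Combined with the first part, Fact \ref{Fact1} now applies and gives that $\Phi_{AB,E_o}$ is clean, the finite-dimensionality of $\overline{\M}_{AB}$ ensuring that the tangent-vector extension and Haar-averaging steps in the proof of Fact \ref{Fact1} are carried out in a finite-dimensional setting. Alternatively one may argue directly: by the first part $T_m\overline{\M}_{AB}$ is exactly the $\Phi_{AB,E_o}$-invariant subspace of $T_m\M_{AB}$, so given $0\neq v\in T_m\overline{\M}_{AB}$ and, by weak non-degeneracy of $\omega$, some $w\in T_m\M_{AB}$ with $\omega_m(v,w)\neq 0$, the average $\bar w:=\int_{\Phi_{AB,E_o}}{\rm d}\mu_H(\phi)\,\phi_{*}w$ again lies in $T_m\overline{\M}_{AB}$ and satisfies $\omega_m(v,\bar w)=\omega_m(v,w)\neq 0$ because $\phi_*v=v$ and $\phi_*\omega=\omega$; hence $\overline\omega=\omega|_{\overline{\M}_{AB}}$ is weakly non-degenerate.

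I expect the main obstacle to be the bookkeeping that makes $\Phi_{AB,E_o}$ a bona fide compact group of symplectomorphisms: verifying the cocycle identity for $O(\psi)$ (so that the lift is a homomorphism, i.e.\ $\Phi_{AB,E_o}$ is a group and not merely a set of maps), and checking that the topology inherited from $\Psi$ is the one under which the Haar-averaging above — and in Fact \ref{Fact1} — is legitimate. The affine-subspace argument for the manifold structure is then essentially routine, modulo the usual care in the Banach/Hilbert-manifold category (openness of $\det E\neq 0$ and complementedness of closed subspaces).
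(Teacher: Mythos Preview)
Your proof is correct and follows essentially the same route as the paper: exhibit $\overline{\M}_{AB}$ as an open subset of an affine subspace (the paper phrases this as ``the difference of two invariant connections is an invariant form'', which is your affine-action observation), then invoke Fact~\ref{Fact1} using compactness. You are in fact more careful than the paper on two points: you explicitly verify the cocycle identity for $O(\psi)$ so that $\Phi_{AB,E_o}$ is a genuine group and the continuous homomorphic image of the compact $\Psi$, and your direct tangent-space averaging argument for non-degeneracy cleanly sidesteps the infinite-dimensional adaptation of Fact~\ref{Fact1} that the paper only alludes to.
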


\begin{proof}
We will first show that the space $\overline{\M}_{AB}$ is a manifold. Let us notice that the difference of two invariant connections is an invariant form valued in internal space
\begin{equation}
{A^1}^I_a-{A^o}^I_a\in \prescript{}{1}{C}^\infty(\sigma,\R^3)_\Phi
\end{equation}
The space of invariant forms is from assumption of the lemma finite dimensional. Similarly, $\prescript{1}{}{C}^\infty(\sigma,\R^3)_\Phi$ -- the space of invariant $E^a_I$ -- is finite dimensional. Thus we can parametrise $\overline{\M}_{AB}$ as follows
\begin{equation}
\overline{\M}_{AB}=\{(E^a_I, A^I_a)\colon E^a_I\in \prescript{1}{}{C}^\infty(\sigma,\R^3)_\Phi,\ \det E^a_I\not=0 ,\ A^I_a={A^o}^I_a+L^I_a,\ L^I_a\in \prescript{}{1}{C}^\infty (\sigma,\R^3)_\Phi\}
\end{equation}
This shows that $\overline{\M}_{AB}$ is a manifold.\\
This realisation together with compact $\Psi$ allows that we can now revoke fact \ref{Fact1} to show that $\overline{\omega}$ is clean. In fact the proof of the lemma needs to be adapted to the infinite dimensional setting.
\end{proof}

However, of course this does not tell whether $\overline{\M}_{AB}$ is empty. But even then it is a manifold.

\begin{lm}
We denote the intersection of the Gauss constraint locus and $\Phi_{AB,E_o}$-invariant subspace by $\overline{G^{-1}(0)}:= G^{-1}(0)\cap \overline{\M}_{AB}$.
The map ${\bf P}^G\colon \overline{G^{-1}(0)}\rightarrow \overline{\M}_{ADM}$ is surjective.\footnote{Attention: $\overline{G^{-1}(0)}$ is the constraint surface and not gauge-invariant space, thus the map is not injective.}
\end{lm}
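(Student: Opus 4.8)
The plan is to construct, for every $(q,P)\in\overline{\M}_{ADM}$, an explicit preimage $(E,A)\in\overline{G^{-1}(0)}$. I would first produce a $\Phi_{AB,E_o}$-invariant densitised triad $E$ reproducing $q$, then fix the connection $A$ by the extrinsic curvature encoded in $(q,P)$, and finally check that the Gauss constraint holds, that $\mathbf{P}^G(E,A)=(q,P)$, and that $(E,A)$ is $\Phi_{AB,E_o}$-invariant.

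For the triad: since $q$ is $\Psi$-invariant, the unique field of $q_o$-self-adjoint positive linear maps $N^a_b$ determined pointwise by $N^a_c N^b_d\,q_o^{cd}/|\det N|=q^{ab}$ is built covariantly from the invariant tensors $q$ and $q_o$, hence is itself $\Psi$-invariant; as it carries only spatial indices, the field $E^a_I:=N^a_b\,{E_o}^b_I$ inherits $\Phi_{AB,E_o}$-invariance from ${E_o}$ (under $\D_{AB}(\psi)$ only the spatial indices of $N$ and $E_o$ move, while $\G(O(\psi))$ acts on the internal index alone, which is absorbed exactly by (\ref{ensure_inv_of_E})), and one checks $\delta^{IJ}E^a_I E^b_J/|\det E|=q^{ab}$. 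For the connection: inverting the ADM relation $P^{ab}=\tfrac1\kappa\sqrt{\det q}\,(K^{ab}-q^{ab}q_{cd}K^{cd})$ yields a unique symmetric $K_{ab}$; I then set $K^I_a:=K_{ab}\,E^b_J\delta^{IJ}/\sqrt{|\det E|}$ and $A^I_a:=\Gamma^I_a(E)+\beta K^I_a$, where $\Gamma(E)$ is the spin connection. Since $K^I_a E^I_b\sqrt{|\det E|}=K_{ab}$ is symmetric, $K^I_{[a}E^I_{b]}=0$, which by the footnote to (\ref{extrCurvature}) is the Gauss constraint, so $(E,A)\in G^{-1}(0)$; and substituting $A-\Gamma(E)=\beta K$ into (\ref{temp_equ147}) returns $(q,P)$, so $\mathbf{P}^G(E,A)=(q,P)$.

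It remains to see $(E,A)\in\overline{\M}_{AB}$. For $\psi\in\Psi$ the point $\Phi_{AB,E_o}(\psi)(E,A)=\G(O(\psi))\D_{AB}(\psi)(E,A)$ still has electric field $E$ (by invariance of $E$), still lies in $G^{-1}(0)$ (equivariance of $G$ under $\D_{AB}$ and $\G$), and maps under $\mathbf{P}^G$ to $\D(\psi)(q,P)=(q,P)$ (because $\mathbf{P}^G\circ\G(O(\psi))=\mathbf{P}^G$, $\mathbf{P}^G\circ\D_{AB}(\psi)=\D(\psi)\circ\mathbf{P}^G$, and $(q,P)$ is $\Psi$-invariant). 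The decisive point is that on $G^{-1}(0)$ a point is determined by its electric field together with its $\mathbf{P}^G$-image: once Gauss holds, $A-\Gamma(E)=\beta K$ with $K^I_aE^I_b$ symmetric, so $K^I_a=K_{ab}E^b_I/\sqrt{|\det E|}$ is recovered from $K_{ab}$, which in turn is recovered from $(q,P)$. Hence $\Phi_{AB,E_o}(\psi)(E,A)=(E,A)$ for all $\psi$, i.e. $(E,A)\in\overline{\M}_{AB}$, and surjectivity follows.

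The step I expect to be the main obstacle is this last uniqueness argument, namely cleanly separating the part of $A$ pinned down by $(q,P)$ from the residual Gauss gauge and confirming that the Gauss constraint removes precisely the latter, so that $(E,A)\mapsto(E,\mathbf{P}^G(E,A))$ is injective on $G^{-1}(0)$; a secondary technical point is the existence of the invariant triad $E$, which the covariant square-root construction above takes care of.
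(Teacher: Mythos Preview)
Your proof is correct and follows essentially the same route as the paper: both construct the invariant densitised triad by a positive $q_o$-self-adjoint square-root interpolating between $q_o$ and $q$ (your $N^a_b$ is the determinant-weighted inverse of the paper's $s^a_b$), and both set $A=\Gamma(E)+\beta K^I_a$ with $K^I_a=K_{ab}E^b_J\delta^{IJ}/\sqrt{|\det E|}$.

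The one noteworthy difference is how invariance of $A$ is verified. The paper leaves this implicit, relying on the covariance of $\Gamma(E)$ and of $K^I_a$ under the combined diffeomorphism-plus-gauge action (both are built covariantly from the already-invariant $E$ and $K_{ab}$). You instead argue via injectivity: on $G^{-1}(0)$ the pair $(E,\mathbf{P}^G(E,A))$ determines $(E,A)$, so any symmetry preserving both must fix $(E,A)$. Your argument is slightly more indirect but has the advantage of not requiring one to unpack how $\Gamma^I_a$ transforms under $\G(O(\psi))\D_{AB}(\psi)$; the paper's implicit argument is shorter once that transformation law is taken for granted.
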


\begin{proof}
Suppose that $(q_{ab},K_{ab})\in\overline{\M}_{ADM}$ then the claim is that there exists $E^I_a$ invariant under $\Phi_{AB,E_o}$ group (spatial diffeomorphisms and $SU(2)$ transformations) and such that
\begin{equation}
q_{ab}=E^I_a E^J_b \delta_{IJ} |\det(E)|
\end{equation}
which describe a point in $\overline{G^{-1}(0)}$:
\begin{equation}
(E^a_I,A_a^I=\Gamma^I_a+\beta K_{ab}E^b_J\delta^{IJ}\sqrt{|\det(E)|})\in\overline{\M}_{AB}
\end{equation}
due to the fact that $K_{ab}$ is symmetric. We will now find such a suitable $E^a_I$.\\
For the metric $q_{ab}$ we can introduce 
\begin{equation}
{q_o}^{ac}q_{cb}
\end{equation}
that is positive and symmetric in $q_o$ scalar product for every $x$. We can thus define $s^a_b$ as the positive square root  that depends smoothly on $\sigma$
\begin{equation}
s^a_cs^c_b={q_o}^{ac}q_{cb}
\end{equation}
and moreover $s_{ac}:=s^b_aq_{o,bc}$ is choosen positive and symmetric (which makes it unique). Contracting both sides with $q_{o,ad}$ gives:
\begin{align}\label{lemma6_stepinbetween}
q_{db}=s_{dc}s^c_b
\end{align}
Let us notice that 
\begin{equation}
E^I_a:=\det(s)^{-1}{E_o}^I_b s^b_a
\end{equation}
satisfies 
\begin{align}
q_{ab}\overset{(\ref{lemma6_stepinbetween})}{=}s_b^{b'}s_a^{a'}q_{o,a'b'}=E^I_a E^J_b \delta_{IJ} |\det(E)|
\end{align}
 thus $(E^a_I,A^I_a)$ is invariant under $\Phi_{AB,E_o}$. Thus ${\bf P}^G$ is surjective. 
\end{proof}

So although this identification is not injective: if two points give the same ADM data, they are related by a Gauss transformation:

\begin{lm}\label{lemma11}
Let us suppose that two points in $\overline{\M}_{AB}$ are related by Gauss transformation $\G(O)$ then $\G(O)$ commutes with $\Phi_{AB,E_o}$
\begin{equation}
\phi \G(O) \phi^{-1}=\G(O),\quad \forall \phi\in \Phi_{AB,E_o}
\end{equation}

\end{lm}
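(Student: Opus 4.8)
The plan is to use that the Gauss group $\G$ acts \emph{freely} on $\M_{AB}$ — a consequence of the non-degeneracy assumption $\det(E)\neq0$ — together with the semidirect-product structure $\D_{AB}(\Diff(\sigma))\ltimes\G$, which guarantees that conjugating a Gauss transformation by an element of $\Phi_{AB,E_o}$ lands again inside $\G$.

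First I would record that $\G$ has trivial stabilisers on $\M_{AB}$. If $O\in C^\infty(\sigma,SO(3))$ and $\G(O)(E,A)=(E,A)$ with $\det(E)\neq0$, then the $E$-component of (\ref{action_of_gauss_so(3)}) reads $O^J_I(x)\,E^a_J(x)=E^a_I(x)$ for every $x$, and invertibility of the matrix $(E^a_J(x))$ forces $O\equiv\mathbb I$. If instead the transformation involves the central factor, $\G(-\mathbb I)\,\G(O)(E,A)=(E,A)$, the $E$-component becomes $-O^J_I(x)\,E^a_J(x)=E^a_I(x)$, i.e. $O(x)=-\mathbb I$ for all $x$, which is impossible since $\det(-\mathbb I)=-1$ whereas elements of $SO(3)$ have determinant $1$. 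Hence $\G(O_1)(m)=\G(O_2)(m)$ implies $O_1=O_2$.

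Now fix $\phi\in\Phi_{AB,E_o}$, written as $\phi=\G(O(\psi))\,\D_{AB}(\psi)$ with $\psi\in\Psi$, and let $m,m'\in\overline{\M}_{AB}$ with $m'=\G(O)(m)$. Since $m$ and $m'$ are $\Phi_{AB,E_o}$-invariant, $\phi(m)=m$ and $\phi(m')=m'$, so
\begin{equation}
\phi\,\G(O)\,\phi^{-1}(m)=\phi\,\G(O)(m)=\phi(m')=m'=\G(O)(m).
\end{equation}
On the other hand, using $\D_{AB}(\psi)\,\G(O)\,\D_{AB}(\psi^{-1})=\G(\psi^*O)$ from (\ref{eq:semi-direct}) and the group law on $\G$ (pointwise multiplication of the $SO(3)$-valued maps),
\begin{equation}
\phi\,\G(O)\,\phi^{-1}=\G(O(\psi))\,\G(\psi^*O)\,\G(O(\psi))^{-1}=\G\!\big(O(\psi)\,(\psi^*O)\,O(\psi)^{-1}\big),
\end{equation}
so $\phi\,\G(O)\,\phi^{-1}=\G(O'')$ is again a genuine Gauss transformation. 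Comparing the two displays gives $\G(O'')(m)=\G(O)(m)$, and freeness from the first step yields $O''=O$, that is $\phi\,\G(O)\,\phi^{-1}=\G(O)$, the asserted identity.

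I do not expect a real obstacle here; the points that need care are only that conjugation by $\phi$ does not leave $\G$ — handled by (\ref{eq:semi-direct}) together with the product rule $\G(O_1)\G(O_2)=\G(O_1O_2)$, which one should check once from (\ref{action_of_gauss_so(3)}) — and that the central $\pm\mathbb I$ factor cannot destroy freeness, which is precisely why the determinant step appears above. Equivalently, $O''=O$ reads $\psi^*O=O(\psi)^{-1}\,O\,O(\psi)$, and could instead be extracted directly from (\ref{ensure_inv_of_E})--(\ref{GaugeActionFromDiffeoForSymmetry}) defining $\Phi_{AB,E_o}$; but passing through freeness is shorter.
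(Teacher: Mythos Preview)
Your proof is correct and follows essentially the same route as the paper: both establish that the Gauss transformation is uniquely determined by its effect on the $E$-field at a single phase-space point (you phrase this as freeness of the $\G$-action; the paper writes down the explicit formula $O^I_J=\delta_{KJ}|\det E_1|\,{E_1}^K_b\,q_1^{ba}\,{E_2}^I_a$), then compute $\phi\,\G(O)\,\phi^{-1}=\G(O(\psi)\,\psi^*O\,O(\psi)^{-1})$ via the semidirect-product relation, observe that this conjugate also sends $m$ to $m'$ by $\Phi_{AB,E_o}$-invariance of both points, and conclude by uniqueness. Your additional check that the central $-\mathbb I$ factor cannot spoil freeness is a nice point the paper's proof leaves implicit.
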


\begin{proof}
We assume that $\G(O)$ connects two points $(E_i,A_i)\in\overline{\M}$
\begin{equation}
\G(O)(E_1,A_1)=(E_2,A_2)
\end{equation}
then $O\in C^\infty(\sigma,SO(3))$ is uniquely determined by
\begin{equation}
O^I_J{E_1}^J_a={E_2}^I_a
\end{equation}
because using $|\det(E_1)|{E_1}^J_b{q_1}^{ba}{E_1}^I_a=\delta^{IJ}$ we get
\begin{equation}
O^I_J=\delta_{KJ}|\det(E_1)|{E_1}^K_b{q_1}^{ba}{E_2}^I_a
\end{equation}
Now for $\phi=\G(O(\psi))\D_{AB}(\psi)$ we have
\begin{equation}
\phi\G(O)\phi^{-1}=\G(O(\psi))\G(\psi_*O)\G(O(\psi)^{-1})=\G(O(\psi)\psi_*OO(\psi)^{-1})
\end{equation}
is also $O(3)$ gauge transformation. Moreover
\begin{equation}
\G(O(\psi)\psi_*OO(\psi)^{-1})(E_1,A_1)=(E_2,A_2)
\end{equation}
because $(E_i,A_i)\in\overline{\M}$. We see that
\begin{equation}
O(\psi)\psi_*OO(\psi)^{-1}=O
\end{equation}
and the transformation $\G(O)$ commutes with $\Phi_{AB,E_o}$.
\end{proof}

A consequence of this lemma is that no global identifications due to the Gauss group occur. In order to perform symplectic reduction we only need to know the remaining Gauss constraints and the sub group of gauge transformations that commutes with the symmetry group.

Let us introduce a group $\overline{\mathbb{G}}\subset \mathbb{G}$ of the $O(3)$ gauge transformations that commutes with $\Phi_{AB,E_o}$. It acts on $\overline{\M}_{AB}$ by symplectic transformations. As usually not all of them can be generated by Hamiltonians due to noncompact support. Let us now assume that $\Phi_{AB,E_o}$ is compact and consider only compactly supported invariant gauge transformations $\overline{\mathbb{G}}_0$. In the setting of compact Cauchy surfaces, every $O(3)$ gauge transformation is compactly supported, i.e. $\mathbb G_0 = \mathbb G$. The generators of $\overline{\mathbb{G}}_0$ are in one to one correspondence with $\Phi_{AB,E_o}$-invariant, $so(3)$-valued functions with compact support.

Let us now consider a Gauss constraint as a function
\begin{equation}
G\colon \M_{AB}\rightarrow C^\infty(\sigma,so(3))
\end{equation}
Restricted to $\overline{\M}_{AB}$ 
\begin{equation}
\overline{G}\colon \overline{\M}_{AB}\rightarrow C^\infty(\sigma,so(3))_\Phi
\end{equation}
Let us notice that $C^\infty(\sigma,so(3))_\Phi$ plays a role of dual to the Lie algebra of the group $\overline{\mathbb{G}}_0$.
Invariant $so(3)$-valued function with compact support integrated with the Gauss constraint are generators of $\Phi_{AB,E_o}$-invariant gauge transformation with compact support.

\begin{tm}\label{theorem4}
Let $\Phi$ be compact. The phase space $\overline{\M}_{ADM}$ is a symplectic reduction with respect to group $\overline{\mathbb{G}}$ and momentum map $\overline{G}$ of the phase space $\overline{\M}_{AB}$.
\end{tm}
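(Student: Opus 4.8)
The plan is to assemble Theorem \ref{theorem4} from the three lemmas immediately preceding it, checking that the hypotheses of the Marsden--Weinstein--Meyer symplectic reduction theorem (Theorem \ref{tm:sympl_red}) are met on the restricted phase space $\overline{\M}_{AB}$. Concretely, one must verify four things: (i) $\overline{\M}_{AB}$ is a symplectic manifold on which $\overline{\mathbb{G}}$ acts symplectically; (ii) $\overline{G}$ is a genuine (equivariant) momentum map for this action; (iii) the quotient $\overline{G^{-1}(0)}/\!\!/\overline{\mathbb{G}}$ is a manifold isomorphic to $\overline{\M}_{ADM}$ as a set, via ${\bf P}^G$; (iv) the symplectic form that the reduction theorem produces on the quotient coincides with $\overline{\omega}_{ADM}=\omega_{ADM}|_{\overline{\M}_{ADM}}$.

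\textbf{Step 1 (symplectic structure on $\overline{\M}_{AB}$).} Since $\Phi:=\Phi_{AB,E_o}$ is compact and $\overline{\M}_{AB}$ was shown (in the lemma preceding Lemma \ref{lemma11}) to be a finite-dimensional manifold, Fact \ref{Fact1} gives that $\Phi$ is clean, so $\overline{\omega}_{AB}=\omega|_{\overline{\M}_{AB}}$ is weakly non-degenerate. The group $\overline{\mathbb{G}}$ consists precisely of the $O(3)$ gauge transformations commuting with $\Phi$, hence it maps $\overline{\M}_{AB}$ to itself and acts there by symplectomorphisms (restrictions of symplectomorphisms of $\M_{AB}$).

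\textbf{Step 2 (momentum map and constraint surface).} The full Gauss map $G$ is an equivariant momentum map for the $O(3)$-action (the $SO(3)$-part is generated by $G_I$; equivariance under $O(3)$ follows from $\G(-\mathbb{I})_*\xi_{AB}=\xi_{AB}$). Restricting to $\overline{\M}_{AB}$, one uses that $\Phi$-invariant $so(3)$-valued test functions are exactly the dual of the Lie algebra of $\overline{\mathbb{G}}_0$ (noted in the paragraph before the theorem), so $\overline{G}\colon\overline{\M}_{AB}\to C^\infty(\sigma,so(3))_\Phi$ is the momentum map for $\overline{\mathbb{G}}$; property \eqref{momentum_map:flow_generator} holds because restricting a Hamiltonian vector field tangent to $\overline{\M}_{AB}$ restricts its Hamiltonian (Lemma \ref{lemma:3}), and equivariance \eqref{equivariance} descends from that of $G$. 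The constraint surface is $\overline{G^{-1}(0)}=G^{-1}(0)\cap\overline{\M}_{AB}$ by definition.

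\textbf{Step 3 (identifying the quotient with $\overline{\M}_{ADM}$).} The lemma before Lemma \ref{lemma11} gives that ${\bf P}^G\colon\overline{G^{-1}(0)}\to\overline{\M}_{ADM}$ is surjective; Lemma \ref{lemma11} gives that the fibres are single $\overline{\mathbb{G}}$-orbits with no extra global identifications (any $\G(O)$ relating two points of $\overline{\M}_{AB}$ already commutes with $\Phi$, hence lies in $\overline{\mathbb{G}}$). Therefore, as sets, $\overline{G^{-1}(0)}/\!\!/\overline{\mathbb{G}}\cong\overline{\M}_{ADM}$ via ${\bf P}^G$, and the smooth-manifold structure on the former (from Theorem \ref{tm:sympl_red}) matches the known one on the latter because both are cut out by the same finite set of $\Phi$-invariant fields. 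For the symplectic forms, one chases the defining relation ${\bf P}^G_*\omega_{\overline{G}}=\overline{\omega}_{AB}|_{\overline{G^{-1}(0)}}$ and compares with the unrestricted reduction ${\bf P}^G_*\omega_{ADM}=\omega_{AB}|_{G^{-1}(0)}$: restriction to $\overline{\M}_{AB}$ commutes with pullback along ${\bf P}^G$ since ${\bf P}^G$ maps $\overline{G^{-1}(0)}$ into $\overline{\M}_{ADM}$, and $\omega_{ADM}|_{\overline{\M}_{ADM}}=\overline{\omega}_{ADM}$ by definition. Hence $\omega_{\overline{G}}=\overline{\omega}_{ADM}$.

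\textbf{Main obstacle.} The delicate point is not any single computation but the interplay of the infinite-dimensional ($\M_{AB}$ is a Hilbert/Banach manifold, the full $O(3)$-action need not be strongly Hamiltonian) with the finite-dimensional reduced picture: one must check that the restriction to the $\Phi$-invariant submanifold genuinely turns the weakly-symplectic, possibly-non-regular situation into the clean finite-dimensional one where Theorem \ref{tm:sympl_red} literally applies — in particular that $\overline{G}$ is regular enough ($\overline{G}(\tau)$-regularity of all points of $\overline{\M}_{AB}$ for $\tau\in$ the Lie algebra of $\overline{\mathbb{G}}_0$) and that $\overline{\M}_{ADM}$ consists of smooth points so the limit $c\to\infty$ in the Sobolev-weight construction is harmless. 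Once that bookkeeping is in place, the theorem is a corollary of the three lemmas plus Theorem \ref{tm:sympl_red}.
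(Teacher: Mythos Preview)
Your proposal is correct and takes essentially the same approach as the paper: the paper's proof consists of the single line ``Follows from previous lemmas,'' and your Steps 1--3 spell out precisely how those lemmas assemble into the conclusion via Theorem~\ref{tm:sympl_red}. Your discussion of the infinite-dimensional-to-finite-dimensional bookkeeping is more explicit than anything the paper provides, but it is consistent with the paper's stated policy of passing to smooth points and taking $c\to\infty$ after the fact.
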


\begin{proof}
Follows from  previous lemmas.
\end{proof}

\subsection{Classification of residuals}

We can now analyse fate of Einstein equations
\begin{equation}
C_{\overline{\M}}\in C^\infty(\sigma)_\Psi,\quad D_{\overline{\M}}\in {}^1C^\infty(\sigma)_\Psi
\end{equation}
(where $\M$ is either $\M_{ADM}$ or $\M_{AB}$).\\
Let us turn to the Gauss constraint. Let us notice that in $\Psi\subset \Diff^+(\sigma)$ then
\begin{equation}
\frac{1}{\sqrt{|\det(E_o)}|}\delta_{IJ}{E_o}^J_a\overline{G}^I{q_o}^{ab}\in \prescript{1}{}{C}^\infty(\sigma)_\Psi
\end{equation}
thus the Gauss constraint can be analyse in parallel to the diffeomorphism constraint.

Similarly, if $O\in C^\infty(\sigma, O(3))_{\Phi}$ then
\begin{equation}
|\det( E_o)|O_{IJ}{E_o}^I_a  {E_o}^J_b
\end{equation}
is an $\Psi$-invariant bi-covector. Thus we can restrict our attention to the action of diffeomorphism group $\Psi$.

The following fact can be used in symmetry restriction, because it will help to determine   the $\Phi$-invariant submanifold $\overline{\M}$ for $\Phi=\mathbb{D}(\Psi)$. 
We will classify remaining constraints and gauge transformation under some assumption on the group $\Psi$. The assumptions are as follows. 
\begin{enumerate}
\item We can introduce a finite dimensional Lie group structure on $\Psi$ and the action is smooth
\item The group acts transitively on $\sigma$:
\begin{equation}
\forall_{x,x'\in\sigma}\exists_{\psi\in\Psi} \psi(x)=x'
\end{equation}
\end{enumerate}
These assumptions will be satisfied in our examples.

Let us consider $\Psi$-invariant tensor field 
\begin{equation}
T\in \prescript{n}{m}{C}^\infty(\sigma)_\Psi=\{T\in \prescript{n}{m}{C}^\infty(\sigma)\colon \forall_{\psi\in \Psi}\psi_*T=T\}
\end{equation}
One notices that $\phi_* T(x)$ depends only on the coefficients of tensor in point $\phi^{-1}(x)$. We denote this linear map by
\begin{equation}
\phi_{*x}\colon \prescript{n}{m}{T}_{\phi^{-1}(x)}(\sigma)\rightarrow \prescript{n}{m}{T}_{x}(\sigma) 
\end{equation}
where $\prescript{n}{m}{T}_p(\sigma)$ is a vector space of $m$ vector $n$-covector tensors at point $p\in\sigma$.
Let us now choose $x_0\in \sigma$ and consider $T(x_0)\in \prescript{n}{m}{T}_{x_0}(\sigma)$. We define the stabiliser group
\begin{equation}\label{stabiliser_def}
\Psi_0=\{\psi\in \Psi\colon \psi(x_0)=x_0\}
\end{equation}
The stabilizing group acts on $\prescript{n}{m}{T}_{x_0}(\sigma)$ via $\psi_{*x_0}$. The tensor $T(x_0)$ is invariant under this action. 
\begin{equation}
T(x_0)\in \prescript{n}{m}{T}_{x_0}(\sigma)_{\Psi_0}=\{T_0\in \prescript{n}{m}{T}_{x_0}(\sigma)\colon \forall_{\psi\in \Psi_0} \psi_{*x_0}T_0=T_0\}
\end{equation}
In fact the following holds:

\begin{fact}\label{fact:6_transitivity}
There is a bijection
\begin{equation}
R_{x_0}\colon \prescript{n}{m}{C}^\infty(\sigma)_\Psi \ni T\rightarrow T(x_0)\in \prescript{n}{m}{T}_{x_0}(\sigma)_{\Psi_0}
\end{equation}
\end{fact}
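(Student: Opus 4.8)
The plan is to produce the inverse of $R_{x_0}$ explicitly by transporting a tensor at $x_0$ over all of $\sigma$ along the group action, which is possible precisely because $\Psi$ acts transitively. First I would verify that $R_{x_0}$ lands in the claimed target and is injective. Spelling out $\Psi$-invariance of a tensor field $T\in\prescript{n}{m}{C}^\infty(\sigma)_\Psi$ pointwise gives $\psi_{*m}\,T(\psi^{-1}(m))=T(m)$ for all $\psi\in\Psi$, $m\in\sigma$; taking $m=x_0$ and $\psi\in\Psi_0$ (so that $\psi^{-1}(x_0)=x_0$) yields $\psi_{*x_0}T(x_0)=T(x_0)$, i.e. $T(x_0)\in\prescript{n}{m}{T}_{x_0}(\sigma)_{\Psi_0}$, so $R_{x_0}$ is well-defined. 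Injectivity is immediate: if $T,T'\in\prescript{n}{m}{C}^\infty(\sigma)_\Psi$ with $T(x_0)=T'(x_0)$, then for any $x\in\sigma$ choose (transitivity) $\psi\in\Psi$ with $\psi(x_0)=x$ and conclude $T(x)=\psi_{*x}T(x_0)=\psi_{*x}T'(x_0)=T'(x)$.

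For surjectivity, given $T_0\in\prescript{n}{m}{T}_{x_0}(\sigma)_{\Psi_0}$ I would define a tensor field by
\begin{equation}
T(x):=\psi_{*x}T_0,\qquad \psi\in\Psi \text{ any element with } \psi(x_0)=x .
\end{equation}
This is independent of the choice of $\psi$: if $\psi'(x_0)=x$ as well, then $h:=\psi^{-1}\psi'\in\Psi_0$ and, using the composition law $\phi_{*m}\,\phi'_{*\phi^{-1}(m)}=(\phi\phi')_{*m}$ from the corollary after Definition \ref{definition:1} together with $\Psi_0$-invariance of $T_0$,
\begin{equation}
\psi'_{*x}T_0=(\psi h)_{*x}T_0=\psi_{*x}\,h_{*x_0}T_0=\psi_{*x}T_0 .
\end{equation}
The same composition law shows $\Psi$-invariance of the field $T$: for $\phi\in\Psi$ and $x\in\sigma$, picking $\psi$ with $\psi(x_0)=\phi^{-1}(x)$,
\begin{equation}
(\phi_*T)(x)=\phi_{*x}T(\phi^{-1}(x))=\phi_{*x}\psi_{*\phi^{-1}(x)}T_0=(\phi\psi)_{*x}T_0=T(x),
\end{equation}
the last step because $(\phi\psi)(x_0)=x$. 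Finally, taking any $\psi\in\Psi_0$ in the defining formula gives $T(x_0)=\psi_{*x_0}T_0=T_0$, so $R_{x_0}(T)=T_0$; thus $T\mapsto T(x_0)$ and $T_0\mapsto T$ are mutually inverse once smoothness of $T$ is established.

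The only nontrivial point — and the one place where assumptions 1 and 2 are used — is that $x\mapsto T(x)$ is smooth; this I expect to be the main obstacle, though it is standard. I would invoke the classical fact that for a smooth transitive action of a finite-dimensional Lie group $\Psi$ on $\sigma$ the stabiliser $\Psi_0$ of \eqref{stabiliser_def} is a closed Lie subgroup and the orbit map $\Psi\ni\psi\mapsto\psi(x_0)\in\sigma$ is a surjective submersion identifying $\sigma$ with the homogeneous space $\Psi/\Psi_0$; in particular it admits smooth local sections $U\to\Psi$, $x\mapsto\psi_x$, around every point of $\sigma$. On such $U$ one has $T(x)=(\psi_x)_{*x}T_0$, which is smooth because the action (hence $\psi\mapsto\psi_{*x}$) is jointly smooth; since smoothness is local, $T\in\prescript{n}{m}{C}^\infty(\sigma)_\Psi$. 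This completes the argument: the bulk of it is formal bookkeeping with the functorial properties of $\phi_*$ recorded after Definition \ref{definition:1}, and the only genuine input is the elementary homogeneous-space structure guaranteed by assumptions 1--2.
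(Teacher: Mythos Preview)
Your proof is correct and follows essentially the same route as the paper: define the inverse of $R_{x_0}$ by transporting $T_0$ along the group action, check well-definedness via $\Psi_0$-invariance of $T_0$, verify $\Psi$-invariance of the resulting field, and establish smoothness using the Lie-group structure of $\Psi$. The only minor difference is in the smoothness step: the paper argues by lifting curves in $\sigma$ to curves in $\Psi$, whereas you invoke smooth local sections of the submersion $\Psi\to\Psi/\Psi_0\cong\sigma$; your version is arguably cleaner and more standard, but the two arguments are morally identical.
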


\begin{proof}
The map is well-defined. We need to show that it is bijective. We will define its inverse map.
For an invariant tensor $T_0\in \prescript{n}{m}{T}_{x_0}(\sigma)_{\Psi_0}$, we introduce a tensor field defined by
\begin{equation}
T(\psi(x_0))=\psi_{*x_0}T_0,\quad \psi\in \Psi
\end{equation}
It is well-defined because if $\psi(x_0)={\psi'}(x_0)$ then $\psi_0=\psi^{-1}\psi'\in \Psi_0$
\begin{equation}
T(\psi'(x_0))=\psi'_{*x_0}T_0=\psi_{*x_0}{\psi_0}_{*x_0}T_0=\psi_{*x_0}T_0=T(\psi(x_0))
\end{equation}
Tangent space $T_{x}\sigma$ can be identify with a subspace of the tangent space of $\Psi$ at $id$ as it acts smoothly.  Thus every curve in $\gamma\in\sigma$ can be lifted to a curve $\psi(t)$ in $\Psi$ such 
\begin{equation}
\gamma(t)=\psi(t)(x_0)
\end{equation}
and the tensor field is smooth.
Summarizing, we defined a map
\begin{equation}
Q_{x_0}\colon \prescript{n}{m}{T}_{x_0}(\sigma)_{\Psi_0}\rightarrow \prescript{n}{m}{C}^\infty(\sigma)
\end{equation}
We can check that in fact $Q_{x_0}(T_0)\in \prescript{n}{m}{C}^\infty(\sigma)_\Psi$ and
\begin{equation}
R_{x_0}Q_{x_0}={\mathbb I}
\end{equation}
thus $R_0$ is surjective.

However, $R_{x_0}$ is also injective: If $T(x_0)=0$ then as for every $x\in\sigma$ there exists $\psi\in \Psi$ such that $\psi(x_0)=x$
\begin{equation}
T(x)=(\psi_*T)(x)=\psi_{*x}T(x_0)=0.
\end{equation}
So finally we proved bijectivity.
\end{proof}

Let us notice that scalars are always preserved by $\Psi_0$ thus in the transitive case there exists exactly one dimensional space of invariant scalar spanned by function equal to $1$ i.e. constant functions.

Let us now consider $O(3)$ gauge transformations. We notice the following
\begin{align}
&\left(\G(O(\psi))\D_{AB}(\psi)\right)\G(O)\left(\G(O(\psi))\D_{AB}(\psi)\right)^{-1}=\nonumber\\&=\G(O(\psi))\G(\psi^*O)\G(O(\psi))^{-1}=\G(O(\psi)(\psi^*O)O(\psi)^{-1})
\end{align}
Thus the $\Phi$-invariant $O(3)$ transformations satisfy
\begin{equation}
O=O(\psi)(\psi^*O)O(\psi)^{-1},\quad \psi\in \Psi
\end{equation}
As before we introduce
\begin{equation}
O(3)_{\Psi_0}=\{O\in O(3)\colon \forall_{\psi\in \Psi_0}O=O(\psi)OO(\psi)^{-1}\}
\end{equation}
By the same method as before we obtain:

\begin{fact}	
There is a bijection
\begin{equation}
H_0\colon C^\infty(\sigma,O(3))_\Phi\ni O\rightarrow O(x_0)\in O(3)_{\Psi_0}
\end{equation}
\end{fact}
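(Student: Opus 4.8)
The proof follows the template of Fact~\ref{fact:6_transitivity}, with the linear isotropy action $\psi_{*x_0}$ on tensors replaced by the conjugation action of $O(3)$ on itself twisted by the assignment $\psi\mapsto O(\psi)$ of (\ref{GaugeActionFromDiffeoForSymmetry}). Recall that $\Phi$-invariance of $O\in C^\infty(\sigma,O(3))$ means $O=O(\psi)\,(\psi^*O)\,O(\psi)^{-1}$ for all $\psi\in\Psi$, i.e. pointwise $O(y)=[O(\psi)](y)\,(\psi^*O)(y)\,[O(\psi)](y)^{-1}$, with $\psi^*$ acting on the base point only as in (\ref{eq:semi-direct}). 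First I would check that $H_0$ is well defined: for $\psi\in\Psi_0$ one has $\psi(x_0)=x_0$, hence $(\psi^*O)(x_0)=O(x_0)$, and evaluating the invariance relation at $x_0$ gives $O(x_0)=[O(\psi)](x_0)\,O(x_0)\,[O(\psi)](x_0)^{-1}$. Since $[O(\psi)](x_0)$ is exactly the $O(3)$-matrix entering the definition of $O(3)_{\Psi_0}$, this shows $O(x_0)\in O(3)_{\Psi_0}$.

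Injectivity of $H_0$ is immediate from transitivity (assumption~2): given $x\in\sigma$, choose $\psi\in\Psi$ with $\psi(x_0)=x$; the invariance relation evaluated along $\psi$ determines $O(x)$ uniquely from $O(x_0)$, so two invariant functions agreeing at $x_0$ agree everywhere. For surjectivity I would exhibit the inverse map $Q_0$: given $O_0\in O(3)_{\Psi_0}$, define $O$ on $\sigma$ by declaring, for each $\psi\in\Psi$, that $O(\psi(x_0))$ equals the conjugate of $O_0$ prescribed by the invariance relation (the analogue of $T(\psi(x_0))=\psi_{*x_0}T_0$ in Fact~\ref{fact:6_transitivity}). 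The one genuinely non-routine point is the well-definedness of this assignment, and it is precisely here that the hypothesis $O_0\in O(3)_{\Psi_0}$ is used: if $\psi(x_0)=\psi'(x_0)$ then $\psi_0:=\psi^{-1}\psi'\in\Psi_0$, and combining the cocycle identity $O(\psi\psi_0)=O(\psi)\,\psi^*O(\psi_0)$ --- which holds because $\Phi_{AB,E_o}$ is a subgroup of $\D_{AB}(\Diff(\sigma))\ltimes\G$ and the semidirect structure (\ref{eq:semi-direct}) forces it --- with the stabiliser condition on $O_0$ shows that the two prescriptions for $O(\psi(x_0))$ coincide. Smoothness of $O$ is obtained by lifting curves in $\sigma$ to curves in $\Psi$, exactly as in the proof of Fact~\ref{fact:6_transitivity} and using assumptions~1 and~2. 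One then checks, again from the cocycle identity, that the resulting $O$ is $\Phi$-invariant and that $H_0\circ Q_0=\mathrm{id}$ and $Q_0\circ H_0=\mathrm{id}$.

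I expect the main obstacle to be bookkeeping rather than anything conceptual: one has to keep the pushforward/pullback and the conjugation conventions mutually coherent so that $\psi\mapsto O(\psi)$ really satisfies the stated cocycle identity, since it is this identity --- absent in the purely tensorial setting of Fact~\ref{fact:6_transitivity} --- that simultaneously guarantees the well-definedness of $Q_0$ and the $\Phi$-invariance of its output. No analytic ingredient beyond those already used for Fact~\ref{fact:6_transitivity} is required, since $\Psi$ and hence its stabiliser $\Psi_0$ are finite-dimensional Lie groups acting smoothly.
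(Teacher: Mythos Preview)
Your proposal is correct and matches the paper's approach: the paper's proof is literally the single line ``By the same method as before we obtain'', referring back to Fact~\ref{fact:6_transitivity}, which is exactly the template you follow. Your explicit identification of the cocycle identity $O(\psi\psi_0)=O(\psi)\,\psi^*O(\psi_0)$ as the ingredient replacing the linear action on tensors is a useful elaboration that the paper leaves implicit in the statement that $\Phi_{AB,E_o}$ is a subgroup of the semidirect product.
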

Thus, $O(3)_{\Psi_0}$ includes all remaining $O(3)$-gauge transformations on $\overline{\M}$. There are no additional points of $\overline{\M}$ globally related by $O(3)$ gauge transformations  (which can be seen by application of lemma \ref{lemma11}).\\

Finally, we have only partial result about residual diffeomorphisms. Let us suppose that $\phi\in \Diff(\sigma)$ such that
\begin{equation}
\forall_{\psi\in \Psi}\phi^{-1}\psi\phi\in \Psi
\end{equation}
then
\begin{equation}
\forall_{x\in \overline{\M}_{ADM}}\phi(x)\in \overline{\M}_{ADM}
\end{equation}
In fact, if we can lift $\phi$ in the same way as we lifted $\Psi$ to Ashtekar-Barbero phase-space then we obtain similar result for this lifted diffeomorphism. 
\section{Examples in classical cosmology}
\label{section:4}

In the previous section, we derived a theorem stating that the flow of a $\Phi$-invariant function $F$ on the $\Phi$-invariant submanifold of some phase space agrees with the flow of the reduced function $\bar{F}$ on the reduced phase space with respect to $\Phi$. This can in general allow to simplify computations, as no longer needs to deal with the full phase space but with a much simpler setup where symmetries have been taken care of.\\
We will now take advantage of this theorem by applying it to the connection formulation of canonical general relativity in terms of Ashtekar-Barbero variables and some cosmological subsectors of it. By cosmological subsectors we mean those spacetimes that feature a high degree of symmetry and consider in this article explicitly the following families: (1) isotropic, (flat) spacetimes of Robertson-Walker type \cite{Fri:22,Lem:33,Rob:35,Wal:37}, (2) homogeneous spacetimes of Bianchi I type \cite{Bianchi:1898} and (3) spacetimes of the isotropic, open $k=-1$ spacetime. As we will see, all of these families (parametrised by only finitely many degrees of freedom) can be described by some symmetry groups. Symmetry restriction of the canonical version of general relativity with respect to these groups allows then to restrict the action of $\Phi$-invariant constraints (which drive the dynamics) to the phase space spanned by the parameters of the cosmological family in question. We will study this for non-compact as well as toroidal topologies.\\
After this warm-up exercise, we will in the next section turn towards a discretisation of the theory (having applications to quantum gravity in mind as we will outline in the conclusion).

There are essentially three levels of difficulty here. In the case of gravity on graph we are in finite dimensional situation we considered so far. Although we did not consider infinite dimensional situation at the case at hand situation is similar if $\Phi$ is clean and compact as in case of cosmology with compact spacelike sections.  In these situations our results as described below apply completely. If the spacelike sections are not compact then it turns out that $\Phi$ is not clean (even the restriction of symplectic form is ill-defined).

We will consider first examples with non-compact groups. In these cases the group action is not clean and even the restriction of the symplectic form is not well-defined, but
some of the examples can be corrected and in this improved form they are used extensively in mini-super-space models.

		\subsection{Examples in non-compact cosmology}
		\label{section:5_non_compact}

We will now consider warm-up examples from classical cosmology. They cannot be completed to full satisfaction but present easy examples of the overall strategy. Thus, we present them in increasing order of difficulty:\\
First the flat Friedmann-Lema\^itre-Robertson-Walker case, describing a isotropic and homogenous universe. Due to the high degree of symmetry present, it is easy to find a group  $\Phi$ such that symmetry restriction with respect to it, gives a submanifold parametrised only by two real numbers, as is often used in classical cosmology. Second, we look at a generalisation, the Bianchi I model, where the requirement of isotropy is dropped. Again, we find a symmetry group suitable to reduce to the invariant submanifold. Third, we expand also to the non-flat case, i.e. the open, hyperbolic FLRW universe.\\
A common feature of all these non-compact examples is that there is symplectic structure on each invariant submanifold descending from the respective full phase space. Therefore, it it impossible to make use of tools to simplify computations such as the ``restriction theorem for dynamics'' from section \ref{section:2_Symmetry_Red}.

\subsubsection{Flat FRWL metric}

The spatial manifold is chosen to be $\sigma=\R^3$ on which a following symmetry group $\Psi\subset\Diff(\sigma)$ acts. We choose it to be isomorphic to $\Psi \cong SO(3) \rtimes \R^3$ and acting by
\begin{itemize}
\item three spatial translations, generated by spacelike killing vector field $\vec\xi_i =\partial_i$ obeying the Lie algebra of $\R^3$, i.e. $[\vec{\xi}_i,\vec{\xi}_j]=0$
\item three spatial rotations, generated by spacelike killing vector field $\vec{\xi}_i'=\epsilon_{ijk}x^j\partial_k$ obeying the Lie algebra of $SO(3)$, i.e. $[\vec \xi_i', \vec \xi_j'] = -\epsilon_{ijk} \vec \xi_k'$
\end{itemize}  
In general, we would like to lift $\Psi$ to an action on $\M_{ADM}$. However, we point out that one can not write $\Phi=\D(\Psi)$ in the form of $\vec D[\vec N]$ with some compactly supported vector fields $\vec{N}$, as the elements in $\Psi$ act non-trivially everywhere on $\sigma$. Luckily, this is not necessary after all:\\
Let us notice that $x=0\in\sigma$ has the stabiliser from $\Psi_0=SO(3)$, see (\ref{stabiliser_def}), and $\Psi$ acts transitively. Hence, we can make use of fact \ref{fact:6_transitivity} to immediately determine the possible spaces of invariant vectors and invariant symmetric bivectors:
\begin{equation}
T_0(\sigma)_{\Psi_0}=\{0\}, \ \prescript{(2)}{}{T}_0(\sigma)_{\Psi_0}=\{p\,\delta_{ab}\;:\; p\in\mathbb{R}\}
\end{equation}
There is a one dimensional space of invariant scalars and symmetric bivectors and no invariant vectors (or pseudovectors) under $SO(3)$. Thus 
\begin{equation}
C^\infty(\sigma)_\Psi=\Span\{1\},\quad
{}^1 C^\infty(\sigma)_\Psi=\{0\},\quad
{}^{(2)}C^\infty(\sigma)_\Psi=\Span\{{q_o}^{ab}\},
\end{equation}
where ${q_o}^{ab}=\delta^{ab}$. This gives us a full characterisation of the $\D(\Psi)$-invariant subspace:
\begin{align}\label{invariant_phasespace_cosmology}
\overline{\M}_{ADM}&=\{(q_{ab}, P^{ab})\; \colon \D(\Psi) q_{ab}= q_{ab},\; \D(\Psi) P^{ab}=P^{ab}\}\\
&= \{(\tilde{p}\,{q_o}_{ab},\alpha\,{q_o}^{ab})\colon \tilde{p}>0,\alpha\in \R\}\nonumber
\end{align}
From the discussion of (\ref{remaining_constraints_ADM}) this also implies that no non-vanishing vector constraints remain on $\overline{\M}_{ADM}$. And only one $\sigma$-independent scalar constraints remains to be imposed, namely:
\begin{align}
C|_{\overline{\M}_{ADM}}(x)= -\frac{3\kappa}{2}\sqrt{p}\,\alpha^2+C_{\rm matter}|_{\overline{\M}_{ADM}}
\end{align}
\\
We investigate the situation for the same symmetry group on the connection phase space. As we already know the parametrisation of $q_{o,ab}= \tilde{p}\delta_{ab}$, we choose
\begin{align}\label{fiducial_triad_FRW}
E_{oI}^a = \delta^a_I
\end{align}
in line with (\ref{find_Eo_from_qo}). Thus, we have a singled out representation $\Phi_{AB, E_o}$ on $\M_{AB}$ which upon symplectic reduction of Gauss constraint $G$ will reduce to the action of $\D(\Psi)$ on $\M_{ADM}$, namely (\ref{final_sym_group_on_Ab}):
\begin{align}
\Phi_{AB,E_o}=\{(\D_{AB}(\psi),(\frac{\partial \psi}{\partial x})^J_I)\; \colon\; \psi \in \Psi\}
\end{align}
where the derivative does not depend on the point.
One can check that the only invariant connection and densitised triad are (using that $\Gamma(e_o)=0$)
\begin{align}
\overline{\M}_{AB}=\{(E_{I}^a=p \,\delta^a_I, A_{a}^I=c\, \delta^I_a)\; \colon\; p,c\in\mathbb{R}\}
\end{align}
In order to identify $\M_{AB}$ with $\M_{ADM}$, one notes that the Gauss constraint is trivially satisfied. However, there still exists a nontrivial group of $O(3)$ gauge transformations, because
\begin{equation}
O(3)_{\Psi_0}=\{{\mathbb I},-{\mathbb I}\}
\end{equation}
has remaining nontrivial transformation $O(-{\mathbb I})$.
Taking into account that $\Gamma^I=0$ we derive
\begin{equation}
O(-{\mathbb I})(c,p)\rightarrow (-c,-p)
\end{equation}
and the phase space agrees with (\ref{invariant_phasespace_cosmology}) up to this identification.
We obtain directly the map
\begin{equation}
\tilde{p}=|p|,\quad \alpha=-\frac{2c}{\kappa\beta}\operatorname{sgn} p
\end{equation}
And similar as before, we know that also the diffeomorphism constraints are already satisfied on $\overline{\M}_{AB}$ and there is only one scalar constraint left for any point $x\in\sigma$:
\begin{align}
C|_{\overline{\M}_{AB}}(x)=-\frac{6}{\kappa \beta^2}c^2 \sqrt{|p|}+ C_{\rm matter}|_{\overline{\M}_{ADM}}
\end{align}
that is the pull-back of the scalar constraint from $\overline{\M}_{ADM}$.

\begin{remark}
It is important to emphasise that the restriction of the symplectic form is ill-defined, irrespective of using $\omega$ from (\ref{symplform-ADM}) or (\ref{symplform}). To see this, we show that for $\M_{AB}$ any $\Phi=\D(\Psi)$-invariant vector field $X_{\rm inv} $ there exists {\it no} $\Phi$-invariant, finite generator $F$, i.e.
with $\omega(X_{\rm inv} ,\,.\,) \neq dF$. First, note that the basis of $\Phi$-invariant vector fields is $\frac{\delta}{\delta A^I_a(x)},\, \frac{\delta}{\delta E^a_I(x)}$. Thus
\begin{align}
\omega(\frac{\delta}{\delta A^I_a(x)},\,.\,) = \frac{2}{\kappa\beta} \int_\sigma {\rm d}^3x\; \delta^J_b dE^b_J(x)\overset{!}{=}d F \hspace{10pt}\Rightarrow\hspace{10pt} F\equiv  \frac{2}{\kappa\beta}\int_\sigma {\rm d}^3x \delta^J_b E^J_b(x)
\end{align}
But now, we need to ensure that $F$ itself is $\Phi$-invariant as well, i.e. $E^J_b(x)=p \delta^J_b$. Thus
\begin{align}
F =\frac{6p}{\kappa\beta} \int_{\mathbb{R}^3} {\rm d}^3x \; 1 = \infty
\end{align}
Similar, the divergent integral over the spatial manifold appears in the ADM case as well. We see that no finite invariant generator for invariant vector fields exists.\\
Thus, we cannot restrict the symplectic form to neither $\overline{\M}_{ADM}$ nor $\overline{\M}_{AB}$, and statements such as restriction of the dynamics cannot be applied here. The only way to remedy this is by working on a compact slice $\sigma$ as we will do in subsection \ref{section:5_compact}.

Finally, the symplectic form usually used in context of LQC is not a restriction of the symplectic form of the full theory \cite{ABL:03}. We expect that this issue is related to problems with the infra-red cut-off in the cosmological perturbation theory \cite{ST:19,NST:19}.
\end{remark}
\subsubsection{Bianchi I}

Let us first analyse Bianchi I. Naively, we would like to consider the group $\R^3$. However, this symmetry does not reduce the metric to its diagonal form. With the group $\R^3$ we are left with $3$ vector constraints and one scalar constraint. The three vector constraints can be used to transform the metric to the diagonal form (and by solving these constraints we get rid of nondiagonal momenta). Such a procedure is not optimal through.

We can assume the symmetry of the bigger group
\begin{equation}\label{BianchiIsymmtery}
\Psi'\cong H\rtimes \mathbb R^3
\end{equation}
where the group $H\subset SO(3)$ is a $4$ element group generated by rotations by $\pi$ around coordinate axes
\begin{equation}\label{group_H}
1,\; \ R_{e_x}(\pi),\ R_{e_y}(\pi), \ R_{e_z}(\pi)=R_{e_x}(\pi)R_{e_y}(\pi)
\end{equation}

\begin{lm}
We recall that for $\Psi_{0}=H$ we have
\begin{align}
T_0(\sigma)_{H}={}^1T_0(\sigma)_H=\{0\},\hspace{10pt}
{}_{(2)}T_0(\sigma)_{H}=\Span\{\delta^1_{a}\delta^1_{b},\delta^2_{a}\delta^2_{b},\delta^3_{a}\delta^3_{b}\}
\end{align}
\end{lm}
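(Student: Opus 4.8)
The plan is to compute the space of $H$-invariant tensors at the origin $x_0 = 0 \in \sigma = \mathbb{R}^3$ by exploiting the fact that the three generators $R_{e_x}(\pi)$, $R_{e_y}(\pi)$, $R_{e_z}(\pi)$ of $H$ act on $T_0(\sigma) \cong \mathbb{R}^3$ as the diagonal matrices $\mathrm{diag}(1,-1,-1)$, $\mathrm{diag}(-1,1,-1)$, $\mathrm{diag}(-1,-1,1)$ respectively. Since $H$ is a subgroup of $SO(3)$ fixing the point $0$, its action $\psi_{*0}$ on $T_0(\sigma)$ is just the linear action by these rotation matrices, and the action on covectors and on higher tensor powers is the induced one.

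First I would treat the vector case. A vector $v = v^a \partial_a \in T_0(\sigma)$ is $H$-invariant iff it is fixed by all three reflections above. Being fixed by $\mathrm{diag}(1,-1,-1)$ forces $v^2 = v^3 = 0$; being fixed by $\mathrm{diag}(-1,1,-1)$ forces $v^1 = v^3 = 0$; hence $v = 0$, so $T_0(\sigma)_H = \{0\}$. The covector case ${}^1T_0(\sigma)_H$ is identical since the transpose-inverse of a $\pm1$-diagonal matrix is itself, giving $\{0\}$ as well. Then for the symmetric bivectors I would write a general element $S = S^{ab}\,\partial_a \otimes_{\mathrm{sym}} \partial_b$ and note that under the reflection $R_{e_z}(\pi) = \mathrm{diag}(-1,-1,1)$ the component $S^{ab}$ picks up the sign $\epsilon_a \epsilon_b$ where $(\epsilon_1,\epsilon_2,\epsilon_3) = (-1,-1,1)$. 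Invariance under this reflection kills $S^{13}$ and $S^{23}$; invariance under $R_{e_x}(\pi)$ additionally kills $S^{12}$ and $S^{13}$ again; so all off-diagonal components vanish and only $S^{11}, S^{22}, S^{33}$ survive, with no further constraint. Hence ${}_{(2)}T_0(\sigma)_H = \mathrm{Span}\{\delta^1_a\delta^1_b,\ \delta^2_a\delta^2_b,\ \delta^3_a\delta^3_b\}$, which is the claimed decomposition (here I read the stated basis as the symmetric bivectors dual to these diagonal directions).

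There is essentially no hard obstacle here; the only point requiring a little care is bookkeeping the index placement and the fact that we are computing invariants of tensors \emph{at the point} $x_0$, not of tensor fields — but this is exactly the local object that fact \ref{fact:6_transitivity} then promotes to a statement about $\Psi'$-invariant tensor fields on all of $\sigma$, since $\Psi' \cong H \rtimes \mathbb{R}^3$ acts transitively with stabiliser $\Psi'_0 = H$. So the proof is just the three short linear-algebra computations above, one for vectors, one (by duality, or identically) for covectors, and one for symmetric bivectors, followed by the remark that the same $\pm1$-diagonal reflections also act trivially on scalars, consistent with the one-dimensional space of invariant scalars noted earlier.
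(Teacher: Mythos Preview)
Your proof is correct and follows essentially the same approach as the paper: both arguments write the generators of $H$ as the diagonal $\pm 1$ matrices and read off componentwise which entries of a vector or symmetric $2$-tensor are forced to vanish by invariance. Your version is slightly more explicit in listing all three reflections and in handling the covector case separately, but the content is identical.
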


\begin{proof}
As the group is orientation preserving, vectors and pseudovectors transform in the same way. Let us concentrate on vectors. Consider an invariant vector $v^a$. Rotating it around the $z$ axis we get from invariance
\begin{equation}
v^x=-v^x
\end{equation}
thus $v^x=0$ and similarly other components.

Let us consider an invariant symmetric bi-covector $q_{ab}$ and apply the same rotation:
\begin{equation}\label{qe-q}
q_{xz}=-q_{xz}
\end{equation}
thus only diagonal coefficients might be nonzero. On the other hand diagonal bi-covectors are invariant.
\end{proof}

Then $\overline{\M}_{ADM}$ is characterised analogously to before in (\ref{invariant_phasespace_cosmology}).\footnote{Let us notice that the $4$ metric is of the form
\begin{equation}
-b(t)^2+\sum_{i=1}^3 a_i(t)^2(dx^i)^2
\end{equation}
} Also, from transitivity of the action
\begin{equation}
C^\infty(\sigma)_\Psi=\Span\{1\},\quad
{}^1C^\infty(\sigma)_\Psi=\{0\},\quad
{}_{(2)}C^\infty(\sigma)_\Psi=\Span\{\delta^1_{a}\delta^1_{b},\delta^2_{a}\delta^2_{b},\delta^3_{a}\delta^3_{b}\}
\end{equation}
We are left with one scalar constraint. We will show how to compute it later.\\

To extend our symmetry restriction to the phase space $\mathcal{M}_{AB}$ of Ashtekar-Barbero variables, we consider the same lift of the action as in FRWL case. Namely ${E_o}^I_a=\delta^I_a$ and ${q_o}_{ab}=\delta_{ab}$.

\begin{fact}
The sets of $(E,A)$ invariant under the action of $\Phi'=\D_{AB}(\Psi)$ from \eqref{BianchiIsymmtery} consists of the connections and densitised triads of the form
\begin{equation}
A^I_a=V_o^{-\frac{1}{3}} c_a\delta^I_a,\quad E_I^a=V_o^{-\frac{2}{3}}p_a\delta_I^a
\end{equation}
where $c_a,p_a\in\mathbb R$ and $V_o$ is an arbitrary parameter of the volume units.
\end{fact}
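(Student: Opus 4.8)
The plan is to build $\overline{\M}_{AB}$ in two stages matching the two factors of $\Psi'\cong H\rtimes\R^3$ from \eqref{BianchiIsymmtery}, and then treat the fiducial volume $V_o$ as a harmless reparametrisation. First I would note that for a pure translation $\psi\colon x\mapsto x+a$ the Jacobian is the identity, so by \eqref{GaugeActionFromDiffeoForSymmetry} the associated Gauss compensator is $O(\psi)=\mathbb{I}$, and the inhomogeneous term in \eqref{action_of_gauss_so(3)} vanishes; hence the lift $\Phi'$ restricted to the translation subgroup acts on $\M_{AB}$ purely by the pull-back $\D_{AB}(\psi)$, which does not touch internal indices. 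Since $\R^3$ acts transitively on $\sigma=\R^3$ with trivial action on all tensor components along its orbits, a field invariant under all translations is constant, so $E^a_I(x)\equiv E^a_I$ and $A^I_a(x)\equiv A^I_a$.

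Second, I would impose invariance under the four-element group $H$ of \eqref{group_H}. For $\psi=R_{e_k}(\pi)$ the Jacobian is the \emph{constant} diagonal matrix $M=\mathrm{diag}(\epsilon_1,\epsilon_2,\epsilon_3)$ with $\epsilon_j=+1$ for $j=k$ and $\epsilon_j=-1$ otherwise; note $\det M=\epsilon_1\epsilon_2\epsilon_3=+1$, consistent with $\Psi'\subset\Diff^+(\sigma)$. Using $E_o{}^I_a=\delta^I_a$ in \eqref{ensure_inv_of_E}--\eqref{GaugeActionFromDiffeoForSymmetry} gives $O(\psi)=M$ acting on the internal index, and once again $M$ being constant kills the inhomogeneous piece of \eqref{action_of_gauss_so(3)}. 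Therefore the combined transformation on the constant matrices is $E\mapsto MEM$ and $A\mapsto MAM$, i.e. componentwise $E^a_I\mapsto\epsilon_a\epsilon_I\,E^a_I$ and $A^I_a\mapsto\epsilon_a\epsilon_I\,A^I_a$ (no summation). Invariance then forces $E^a_I$ and $A^I_a$ to vanish whenever $\epsilon_a\epsilon_I=-1$ for some generator; intersecting these vanishing conditions over $k=1,2,3$ leaves precisely the diagonal pairs $a=I$. Hence $E^a_I=\tilde p_a\,\delta^a_I$, $A^I_a=\tilde c_a\,\delta^I_a$ with $\tilde p_a,\tilde c_a\in\R$, and conversely every such pair is $\Phi'$-invariant because diagonal matrices commute with $M$. (Equivalently, one may phrase the second stage through fact \ref{fact:6_transitivity} and the residual-$O(3)$ analysis of subsection \ref{section:3}, but the direct computation above is self-contained.)

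Third, for any fixed $V_o>0$ the substitution $\tilde p_a=V_o^{-2/3}p_a$, $\tilde c_a=V_o^{-1/3}c_a$ is a bijection of $\R$, so the invariant set is exactly the one stated. The particular powers of $V_o$ are the scaling dimensions of a weight-one densitised triad and of a connection one-form, and are fixed by requiring that the symplectic form \eqref{symplform} integrated over a fiducial cell of coordinate volume $V_o$ become $V_o$-independent ($\propto \mathrm d p_a\wedge \mathrm d c_a$); this is why $V_o$ can be left as an arbitrary choice of unit.

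The main obstacle is the second stage, and specifically getting the lifted action of the $\pi$-rotations correct: one must use that the Gauss compensator $O(\psi)$ prescribed by \eqref{ensure_inv_of_E}--\eqref{GaugeActionFromDiffeoForSymmetry} is precisely the sign matrix $M$ acting on the \emph{internal} index (not on the spatial one), and — crucially — that $O(\psi)$ being constant annihilates the inhomogeneous term in \eqref{action_of_gauss_so(3)}, so that the connection transforms by the same $M(\cdot)M$ sandwich as the triad. Without the internal-index compensation the only $\D_{AB}(\Psi')$-invariant densitised triad would be the degenerate $E\equiv 0$, which is excluded from $\M_{AB}$; this is the same mechanism already seen in the FLRW example, where the residual $O(3)$ transformations were indispensable.
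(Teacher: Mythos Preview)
Your proposal is correct and follows essentially the same route as the paper: reduce to constant fields by translational invariance, compute the Gauss compensator $O(R_{e_k}(\pi))$ from \eqref{GaugeActionFromDiffeoForSymmetry} with $E_o{}^I_a=\delta^I_a$ to obtain the diagonal sign matrices, and then read off that the combined $M(\cdot)M$ action forces $A^I_a,E^a_I$ to be diagonal. Your treatment is in fact more explicit than the paper's --- you spell out why the inhomogeneous term of \eqref{action_of_gauss_so(3)} drops (constancy of $O$), and you discuss the $V_o$ rescaling, neither of which the paper's proof addresses in detail.
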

\begin{proof}
First, due to translational invariance, we can restrict our attention to one point. Then, from observation (\ref{qe-q}) we choose $E_{oI}^a= \delta^a_I$ satisfying our requirement. Then we can determine the form of $O(R_{e_k})(\pi)$ via (\ref{GaugeActionFromDiffeoForSymmetry}) namely
\begin{align}
O(R_{e_x}(\pi))={\rm diag}(1,-1,-1),\quad O(R_{e_y}(\pi))={\rm diag}(-1,1-1),\quad {\rm etc.}
\end{align}
Thus, we have the full action of $\Phi_{AB,E_o}(\Psi')$. Invariance of the connection with respect to it means
\begin{align}
A^I_a \overset{!}{=} [R_{e_k}(\pi)]_a^b A^J_b [O(R_{e_k}(\pi))]^I_J \quad \Rightarrow \quad A^I_a =\delta^I_a c_a
\end{align}
for some function $c_a$. Similar for the triad $E$.
\end{proof}

\begin{remark}
This fact is a special case of the general theorem about the classification of invariant connections \cite{Nomizu:54,Wang:58} as disussed in \cite{KN:96}.
\end{remark}

A simple computation shows now, that the only remaining $O(3)$ gauge transformations are
\begin{equation}\label{BinachiI_remainingGauss}
\G(\operatorname{diag}(\pm 1,\pm 1,\pm 1))
\end{equation}
Thus, we have parametrised the invariant submanifold $\overline{\M}_{AB}$ for group $\Phi'$. Due to the same argumentation as before, one sees that no diffeomorphism constraint remain to be implemented and only one scalar constraint:
\begin{align}\label{scalar_constraint_BinachiI}
C|_{\overline{\M}_{AB}}(x)=-\frac{6}{\kappa\beta^2}\left(\sum_a c_a^2 \sqrt{|p_a|}\right)+C_{\rm matter}|_{\overline{\M}_{AB}}
\end{align}
However, once again reducing the symplectic form is ill-defined and thus not all the tools from symmetry restriction can be applied.

\subsubsection{The open (k=-1) FRWL universe}

Let us consider $O_+(1,3)$ group that is a subgroup of the orthogonal subgroup in the Lorentz group
\begin{equation}
O(1,3)=\{L\in GL(4)\colon L^T \eta L =\eta\}
\end{equation}
that preserves time orientation, that is
\begin{equation}
L^0_0>0
\end{equation}
The connected component $SO_+(1,3)$ is distinguished by additional condition $\det L=1$.

Both groups $O_+(1,3)$ and $SO_+(1,3)$ preserves future part of the unit hyperboloid
\begin{equation}
{\mathcal H}_+=\{x\in\R^4\colon x^T\eta x=1,\ x^0>0\}
\end{equation}
and moreover they acts transitively on ${\mathcal H}_+$. As the orthogonal group preserves $\eta$ the action preserves also metric restricted to ${\mathcal H}_+$,
\begin{equation}
q_o=\eta|_{{\mathcal H}_+}
\end{equation}
This is the model of a $k=-1$ spatial section.

Let us notice that for a given point $0=(1,0,0,0)$ of the future hyperboloid we have the stabiliser group
\begin{equation}
\Psi_0=O(3) \text{ or } SO(3)
\end{equation}
and thus there exists only one (up to a scaling) invariant bi-covector ${q_o}_{ab}$ and it is symmetric. It is just a restriction of the Minkowski metric to the hyperboloid.
\begin{equation}
\overline{\M}_{ADM}=\{(q_{ab}=g{q_o}_{ab},K_{ab}=c{q_o}_{ab})\colon g>0,c\in \R\}
\end{equation}
In this case $P^{ab}=-\frac{2}{\kappa}cg^{3/2}\sqrt{q_o} {q_o}^{ab}$.\\
Again as the $\Psi_0\equiv SO(3)$ and $\Psi$ acts transitively there is only one scalar constraint left. We can compute its value in any point
\begin{align}\label{Constraint_BinachiI_continuum}
C|_{\overline{\M}_{ADM}} = - \frac{6\sqrt{p}}{\kappa}(c^2-1)+ C_{\rm matter}|_{\overline{\M}_{ADM}}
\end{align}

\subsection{Examples in compact cosmology}
\label{section:5_compact}

We will now consider what can be done in the case when we work with the torus instead of $\R^3$ as a spacial section. The examples we consider are: Bianchi I (and its special case: isotropic FLRW), Binachi IX and the closed, isotropic universe with positive curvature. In all these cases the symmetry restriction can be fully executed, allowing to reduce all computations (including the reduction of dynamics) to the invariant submanifold.\\
We have seen that the canonical formulation of cosmological models with noncompact sections is problematic (for example naive pull-back of the symplectic form is ill-defined). Moreover, we are ultimately interested in a quantum theory of gravity. In this case, noncompact slices are usually even more problematic due to some infrared issues. For this reason one often consider cosmology with torus spacelike section. Note that the constraints equations are local thus the situation does not differ from either $k=0$ FRW cosmology or Bianchi I model.

\subsubsection{Bianchi I model with torus sections}
Let the spatial manifold be compact and of toroidal form, ie.e $\sigma= [0,T_1]\times[0,T_2]\times[0,T_3]$.
The group
\begin{equation}
 \R^3\rtimes H
\end{equation}
preserves the relation
\begin{equation}
x\approx x+\sum n_i T_i\hat{e}_i
\end{equation}
where $T_i$ are periods of the torus (Note that for Bianchi I they might be unequal). We can thus consider the action on the equivalence relation space. Taking into account the kernel of the action we get the group
\begin{equation}\label{BianchiI_symmetry_torus}
\Psi_I= (\prod_i \R /T_i)\rtimes H
\end{equation}
To obtain the same phase space $\overline{\M}_{ADM}$ as before for non-compact Bianchi I, we use again fact \ref{fact:6_transitivity} for
\begin{equation}
C^\infty(\sigma)_\Psi=\Span\{1\},\quad
{}^1C^\infty(\sigma)_\Psi=\{0\},\quad
{}_{(2)}C^\infty(\sigma)_\Psi=\Span\{\delta^1_{a}\delta^1_{b},\delta^2_{a}\delta^2_{b},\delta^3_{a}\delta^3_{b}\}
\end{equation}
Again choosing the same $E_o$ we get:

\begin{fact}\label{fact:12}
The sets of $(A,E)$ invariant under the action of $\Phi'$ from \eqref{BianchiIsymmtery} consists of the connections and electric fields of the form
\begin{equation}\label{overlineMred}
\overline{\M}_I := \{ (E^a_I, A^I_a)\colon \;\; E_I^a=V_o^{-1}T_ap_a\delta_I^a,\ A^I_a=T_a^{-1} c_a\delta^I_a \}
\end{equation}
where $c_a,p_a\in\mathbb R$ and $V_o:=\int_\sigma \sqrt{q_o}=\int_\sigma {\rm d}^3x$. If we parametrise this set by $c_a,p_a$ (manifold structure) then the pull-back of $\omega$ from (\ref{symplform}) is equal to 
\begin{equation}\label{fact9:reducedsymplform}
\omega|_{\overline{\M}_{I}}=\frac{2}{\kappa \beta} \sum_a {\rm d}p_a\wedge {\rm d}c_a
\end{equation}
\end{fact}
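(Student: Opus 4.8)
The plan is to establish the two assertions separately: (i) that the $\Phi'$-invariant subset of $\M_{AB}$ is exactly $\overline{\M}_I$ as displayed in \eqref{overlineMred}, and (ii) that the pull-back of $\omega$ to this manifold has the stated form \eqref{fact9:reducedsymplform}. For (i), I would first invoke translation invariance: since $\Psi_I$ from \eqref{BianchiI_symmetry_torus} contains the torus-translation subgroup $\prod_i \R/T_i$ which acts transitively, an invariant field $(A^I_a, E^a_I)$ must have constant coefficients in the $x^a$-coordinates, so it is determined by its value at one point $x_0$. This reduces the problem to classifying the tensors (with internal indices) at $x_0$ that are fixed by the stabiliser $\Psi_0 = H$, acting on spatial indices via the rotations \eqref{group_H} and on internal indices via the matrices $O(R_{e_k}(\pi)) = \operatorname{diag}(\pm1,\pm1,\pm1)$ computed from \eqref{GaugeActionFromDiffeoForSymmetry} with the fiducial choice $E_{oI}^a = \delta_I^a$ (exactly as in the preceding non-compact Bianchi I fact). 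The invariance condition $A^I_a = [R_{e_k}(\pi)]_a^b [O(R_{e_k}(\pi))]^I_J A^J_b$ then forces all off-diagonal components to vanish, leaving $A^I_a = \delta^I_a \,(\text{const})$ and likewise $E^a_I = \delta^a_I\,(\text{const})$; absorbing the periods $T_a$ and the fiducial volume $V_o$ into the parametrisation gives precisely the form in \eqref{overlineMred}. One should also check $\det E \neq 0$ holds on a dense open subset (all $p_a \neq 0$), so $\overline{\M}_I$ is genuinely a manifold, and note $\Gamma(E_o) = 0$ so the constraint that $A$ differ from a flat connection by an invariant form is automatic.

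For (ii), the computation is essentially substitution. The restriction $\Omega|_{\overline{\M}}$ of a form is well-defined by pulling back along the inclusion, as recalled in subsection \ref{section:2_Symmetry_Red}; here $\overline{\M}_I$ is finite-dimensional and compactness of $\sigma$ means the spatial integral in \eqref{symplform} is finite, so there is no obstruction (unlike the non-compact examples). I would parametrise $\overline{\M}_I$ by the coordinates $(p_a, c_a)$, compute $dE^a_I|_{\overline{\M}_I} = V_o^{-1} T_a\, \delta^a_I\, dp_a$ and $dA^I_a|_{\overline{\M}_I} = T_a^{-1}\, \delta^I_a\, dc_a$ (no sum), insert these into $\omega = \frac{2}{\kappa\beta}\int_\sigma d^3x\; dE^a_I \wedge dA^I_a$, and observe that the $T_a$ factors cancel pairwise and $\int_\sigma d^3x = V_o$, leaving $\omega|_{\overline{\M}_I} = \frac{2}{\kappa\beta} \sum_a dp_a \wedge dc_a$. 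Non-degeneracy of this $2$-form on the $6$-dimensional manifold is manifest, so $\Phi'$ is clean in this example, which is what makes the dynamical reduction of section \ref{section:2} applicable (in contrast to the non-compact case).

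I do not expect a serious obstacle here; the only place requiring a little care is the internal-index bookkeeping in step (i)---making sure that the $O(R_{e_k}(\pi))$ matrices are the correct ones dictated by \eqref{GaugeActionFromDiffeoForSymmetry} and that combining the three generators of $H$ (it suffices to use two, since $R_{e_z}(\pi) = R_{e_x}(\pi)R_{e_y}(\pi)$) indeed kills every off-diagonal entry while fixing each diagonal one. This is the same linear-algebra argument already used in the non-compact Bianchi I fact, and it is a finite check. Everything else is routine; the transitivity/stabiliser machinery of fact \ref{fact:6_transitivity} does the conceptual work, and the symplectic computation is a direct substitution made legitimate by compactness of the torus.
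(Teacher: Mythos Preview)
Your proposal is correct and follows essentially the same route as the paper. For part (i) the paper simply refers back to the non-compact Bianchi~I argument (``Again choosing the same $E_o$ we get:''), which is exactly the stabiliser/transitivity reasoning you spell out; for part (ii) the paper restricts the symplectic potential $\xi_{AB}=\frac{2}{\kappa\beta}\int_\sigma E^a_I\, dA^I_a$ and then takes $d$, whereas you substitute directly into $\omega$ --- both are the same one-line computation and the $T_a$ and $V_o$ cancellations go through identically.
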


\begin{proof}
We will determine the symplectic form. Let us work with the symplectic potential
\begin{equation}
\xi_{AB}= \frac{2}{\kappa\beta}\int_\sigma {\rm d}^3x\; E^{a}_I(x)\; d A^I_a(x)
\end{equation}
Let us notice that
\begin{equation}
d A^J_a(x)=T_a^{-1}\delta^J_a\;	d c_a
\end{equation}
thus as $E^{a}_I=V_o^{-1}T_a\delta_J^a\;p_a$
\begin{equation}
\frac{2}{\kappa\beta}\int_\sigma E^{a}_I \, d A^I_a|_{\overline{\M}_{I}}=\frac{2}{\kappa\beta} \sum_a p_a\ dc_a
\end{equation}
Thus (\ref{fact9:reducedsymplform}) follows.
\end{proof}

There are no invariant vector fields or fields with one internal index thus we are left with the usual O(3) transformations (\ref{BinachiI_remainingGauss}) and with one scalar constraint, namely (\ref{scalar_constraint_BinachiI}). And as the reduction of the symplectic form is well-defined, the whole tool-box of symmetry restriction can be applied. This means, that the dynamics of the symmetric system is fully encoded in the flow of $C|_{\overline{\M}_{I}}$ with respect to (\ref{fact9:reducedsymplform}).

\begin{remark}
The $p_a$ and $c_a$ variables are more geometric objects then value of connection and densitised triad. They are given as integrals of $E_I^a$ over basic two tori and $A^I_a$ over basic cycles in $\sigma$, so they are independent of the choice of $T_a$. The scaling diffeomorphism transforming a torus with periods $T_a'$ into torus with periods $T_a$ would not affect these variables: e.g. it scales simultaneously the lengthof the basic circle and inversely the connection. Therefore, such scaling diffeomorphisms have been completely taken care of on the reduced phase space.\\
Let us also point out some subtleties regarding the interpretation of the parameter $V_o$ and $T_a$. We introduced it merely as fiducial element in our coordinate system on $\overline{\M}_{I}$.  The symplectic form itself is defined geometrically and is independent of this fiducial structure. 
It is in contrast to a noncompact case where there is no such choice of variables and no built-in scale. In this context, dilation and implementing invariance under the latter have been studied in the literature \cite{EHT:16,EV:19}.
\end{remark}

\subsubsection{$k=0$ FRWL with torus sections}
\label{s422}

In this case our method fails because now there is no $SO(3)$ symmetry available.

The symmetry group is broken: only discrete subgroup of rotations are allowed. In general if the periods of the torus are nonequal the only rotations that descend to the action on the torus is the group $H$. However, if all the periods are equal (we can normalise them to $2\pi$) then we are left with the residual symmetry
\begin{equation}
SO(3,{\mathbb Z})
\end{equation}
In addition to $H$ group it also contains the missing 90 degree rotations around three axis.
Let us consider
\begin{equation}\label{group_torus}
\Psi_{FRW}=(\mathbb{R}/T)^3\rtimes SO(3,{\mathbb Z})
\end{equation}
The part of $(\R/T)^3$ group is generated by vector fields
\begin{equation}
\vec \xi_1 = \partial_x, \ \ \ \ \ \vec \xi_2 = \partial_y, \ \ \ \ \ \vec \xi_3 = \partial_z
\end{equation}
Let us notice that $H\subset SO(3,{\mathbb Z})$ and it is a normal subgroup. We can first reduce the case to Bianchi I and then assume additional symmetry $\Psi_{FRW}/\Psi_I$. Let us notice that
\begin{equation}
\Psi_{FRW}/\Psi_I=S_3
\end{equation}
permutation of axes. We will now describe how this group acts on the Bianchi I reduced phase space.

\begin{fact}
Action of the group $\Phi_{FRW}=\D(\Psi_{FRW})$ preserve $\overline{\M}_{I}$ from (\ref{overlineMred}) and its action factorises by $S_3$. It is given by permutations on the indices of $c_a$ and $p_a$ respectively.
\end{fact}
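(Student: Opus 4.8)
The plan is to reduce the claim to the Bianchi I case already treated (fact \ref{fact:12}) by using the ``reduction in stages'' remark, and then to compute the residual action of the finite group $S_3 = \Psi_{FRW}/\Psi_I$ on the two-parameter (per axis) phase space $\overline{\M}_I$ coordinatised by $(c_a,p_a)$. First I would observe that $\Psi_I$ from \eqref{BianchiI_symmetry_torus} is a normal subgroup of $\Psi_{FRW}$ from \eqref{group_torus}: indeed $H$ is normal in $SO(3,\mathbb{Z})$ (it is the kernel of the permutation action on the three coordinate axes), and the translation part $(\mathbb{R}/T)^3$ is preserved under conjugation by any element of $SO(3,\mathbb{Z})$ since these permute the generating translations $\vec\xi_1,\vec\xi_2,\vec\xi_3$ among themselves. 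Hence $\Psi_{FRW}/\Psi_I \cong SO(3,\mathbb{Z})/H \cong S_3$, and by the reduction-in-stages remark $\Psi_{FRW}$ preserves $\overline{\M}_I = \overline{\M}_{ADM}^{\,\Psi_I}$, with its action factoring through $S_3$.

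Next I would identify how a permutation $\pi\in S_3$ of the coordinate axes acts on $\overline{\M}_I$. A representative of $\pi$ in $SO(3,\mathbb{Z})$ is, up to a sign-diagonal element of $H$ (which acts trivially on $\overline{\M}_I$ after modding out the residual Gauss transformations \eqref{BinachiI_remainingGauss}), the permutation matrix $P_\pi$ with $(P_\pi)^a_b = \delta^a_{\pi(b)}$. Using the parametrisation \eqref{overlineMred}, the lifted action $\Phi_{AB,E_o}(\pi)$ sends $A^I_a = T_a^{-1}c_a\delta^I_a$ and $E^a_I = V_o^{-1}T_a p_a\delta^a_I$ to their pushforwards under $P_\pi$ together with the accompanying $O(3)$-rotation $O(P_\pi) = P_\pi$ (from \eqref{GaugeActionFromDiffeoForSymmetry}, since $E_o$ is the identity triad). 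Because the diffeomorphism acts on the lower spatial index and the gauge rotation acts on the internal index, and both are by the \emph{same} permutation matrix, the combined transformation sends the diagonal Ansatz $A^I_a = c_a\delta^I_a$ (after absorbing the $T_a$, recalling that $c_a,p_a$ are the coordinate-independent holonomy/flux data) to $A^I_a = c_{\pi^{-1}(a)}\delta^I_a$, and similarly $p_a \mapsto p_{\pi^{-1}(a)}$. In other words the action is simply the simultaneous permutation of the triples $(c_1,c_2,c_3)$ and $(p_1,p_2,p_3)$, which is exactly the assertion.

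The main subtlety — the step I expect to need the most care — is checking that $\Psi_I$ really is normal in $\Psi_{FRW}$ and that the residual group is precisely $S_3$: one must verify that the $90^\circ$ rotations in $SO(3,\mathbb{Z})$, when conjugated against $H$, land back in $H$, and that the semidirect-product structure is compatible (the rotations act on $(\mathbb{R}/T)^3$ by permuting factors, which is consistent only because the torus periods are all equal to $T$). Once normality is in hand, the rest is the essentially bookkeeping verification above that the diffeomorphism index-permutation and the compensating gauge rotation combine to an overall relabelling of the three ``cells'' $a=1,2,3$, leaving each pair $(c_a,p_a)$ intact but permuted as a block; invariance of the reduced symplectic form \eqref{fact9:reducedsymplform} and of the scalar constraint \eqref{scalar_constraint_BinachiI} under this block permutation is then immediate, confirming consistency with the symmetry-restriction framework and setting up the further reduction to the isotropic $c_1=c_2=c_3$, $p_1=p_2=p_3$ sector.
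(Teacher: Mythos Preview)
Your proposal is correct and follows the same route as the paper, whose proof consists of the single line ``It is enough to consider what the action does on $A$ and $E$ of torus Bianchi I form.'' You supply the details the paper omits: the normality of $\Psi_I$ in $\Psi_{FRW}$ (invoking the reduction-in-stages remark) to justify preservation of $\overline{\M}_I$ and factoring through $S_3$, followed by the explicit computation of the permutation action on $(c_a,p_a)$. One minor point: your parenthetical that the sign-diagonal elements of $H$ act trivially ``after modding out the residual Gauss transformations \eqref{BinachiI_remainingGauss}'' is unnecessary --- $H$ acts trivially on $\overline{\M}_I$ already by construction, since $H\subset\Psi_I$ and $\overline{\M}_I$ is precisely the $\Psi_I$-invariant subspace.
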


\begin{proof}
It is enough to consider what the action does on $A$ and $E$ of torus Bianchi I form.
\end{proof}

The only 4-metric on $(\R/T)^3\times{\mathbb R}$ invariant under the group $\Phi_{FRW}$ is the metric
\begin{equation}
b(t)^2dt^2-a(t)^2\sum_i {dx^i}^2
\end{equation}
which differs from the usual FRW metric by the presence of a lapse function $b(t)^2$ that indicates existence of a scalar constraint.

\begin{fact}
The sets of $(A,E)$ invariant under the action of \eqref{group_torus} consists of the the connections and $E$ fields of the form
\begin{equation}
A^I_j=V_o^{-\frac{1}{3}} c\delta^I_j,\quad E_I^j=V_o^{-\frac{2}{3}}p\delta_I^j
\end{equation}
where $c,p\in\mathbb R$ with $V_o$ as before.
If we parametrise this set by $c,p$ (manifold structure $\overline{\mathcal{M}}_{AB}$) then the pull-back of $\omega$ is equal to
\begin{equation}
\omega|_{\overline{\M}_{AB}}=\frac{6}{\kappa\beta} dp\wedge dc
\end{equation}
\end{fact}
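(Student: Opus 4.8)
The plan is to follow exactly the same blueprint used in the Bianchi I torus case (fact~\ref{fact:12}), specialising all three axes to the same value.  The statement has two parts: first, the characterisation of the invariant configurations $(A,E)$; second, the computation of the pulled-back symplectic form.

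For the first part, I would observe that $\Psi_{FRW}$ contains $\Psi_I$ as a (normal) subgroup, so by reduction in stages the invariant set $\overline{\M}_{AB}$ under $\Phi_{FRW}$ is the subset of $\overline{\M}_I$ (from \eqref{overlineMred}) that is in addition invariant under the residual $S_3=\Psi_{FRW}/\Psi_I$-action.  By the preceding fact, that residual action permutes the indices of $(c_a)$ and of $(p_a)$; hence invariance forces $c_1=c_2=c_3=:c$ and $p_1=p_2=p_3=:p$.  Substituting $T_a\mapsto V_o^{1/3}$ (all periods equal, so $\prod_a T_a=V_o$ and each $T_a=V_o^{1/3}$) into the parametrisation of $\overline{\M}_I$ gives precisely $A^I_j=V_o^{-1/3}c\,\delta^I_j$, $E^j_I=V_o^{-2/3}p\,\delta^j_I$.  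One should also check that no further invariant tensors with internal indices appear; this is immediate since the stabiliser already in the Bianchi I stage killed everything except the diagonal pieces.

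For the second part, the cleanest route is to restrict the symplectic \emph{potential} $\xi_{AB}=\tfrac{2}{\kappa\beta}\int_\sigma {\rm d}^3x\; E^a_I\,dA^I_a$ to $\overline{\M}_{AB}$, exactly as in the proof of fact~\ref{fact:12}.  On the invariant submanifold $dA^I_a=V_o^{-1/3}\delta^I_a\,dc$ and $E^a_I=V_o^{-2/3}p\,\delta^a_I$, so the integrand is $\tfrac{2}{\kappa\beta}V_o^{-1}p\,dc\,(\delta^I_a\delta^a_I)=\tfrac{2}{\kappa\beta}V_o^{-1}p\,dc\cdot 3$, and integrating over $\sigma$ (which has volume $\int_\sigma {\rm d}^3x=V_o$) yields $\xi_{AB}|_{\overline{\M}_{AB}}=\tfrac{6}{\kappa\beta}\,p\,dc$.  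Taking the exterior derivative gives $\omega|_{\overline{\M}_{AB}}=\tfrac{6}{\kappa\beta}\,dp\wedge dc$, as claimed.  (Alternatively one can just set all three summands in \eqref{fact9:reducedsymplform} equal after imposing $p_a=p$, $c_a=c$, which gives the same factor of $3$.)

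I do not expect a genuine obstacle here — the statement is essentially a corollary of fact~\ref{fact:12} together with the $S_3$-action fact that immediately precedes it.  The only point requiring a little care is bookkeeping with the fiducial volume $V_o$ and the powers $V_o^{-1/3}$, $V_o^{-2/3}$: one must make sure the normalisation $\int_\sigma {\rm d}^3x = V_o$ (which holds because all periods are normalised to $2\pi$, or equivalently absorbed into $V_o$) is used consistently so that the $V_o$-dependence cancels in $\omega$, reflecting the fact — emphasised in the remark after fact~\ref{fact:12} — that the reduced symplectic form is geometric and independent of the fiducial cell.  That is a routine check rather than a difficulty.
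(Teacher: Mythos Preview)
Your proposal is correct and follows essentially the same approach as the paper: reduction from the Bianchi~I case by imposing the residual $S_3$ symmetry to force $p_a=p$, $c_a=c$, and then obtaining the symplectic form from $\sum_a dp_a\wedge dc_a=3\,dp\wedge dc$. The paper's proof is in fact just the one-line version of your parenthetical alternative; your direct computation via the symplectic potential is a perfectly valid (and slightly more self-contained) variant of the same argument.
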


\begin{proof}
Reduction of the Bianchi I case $p_a=p$ and $c_a=c$. Let us notice that $\sum_a dp_a\wedge dc_a=3dp\wedge dc$.
\end{proof}

There are no invariant vector fields or fields with one internal index thus we are left with one scalar constraint.

\subsubsection{The closed (k=1) FRWL universe} 

 In this case the spatial manifold is chosen to be $\sigma=S^3$ and the symmetry group acting on it by rotations is $\Phi=SO(4)$ or $O(4)$. Thus as $\Phi_o=SO(3)$ or $O(3)$
\begin{equation}
\overline{\M}_{ADM}=\{(q_{ab}=\tilde{p}{q_o}_{ab},P^{ab}=\alpha{q_o}^{ab})\colon \tilde{p}>0,\alpha\in \R\}
\end{equation}
where $q_o$ is the round metric on the sphere.
 
If we identify $S^3$ with $SU(2)$ then we have the action ${Spin}(4)=SU(2)\times SU(2)$ on $S^3$
\begin{equation}
Spin(4)\times S^3\ni (g_+,g_-,h)\rightarrow g_+hg_-^{-1}\in S^3
\end{equation}
 with $(g_+,g_-)=(-1,-1)$ acting trivially. In fact, this action factorises by $SO(4)$. Let us introduce right invariant vector fields
\begin{equation}
[R_I(f)](g)=\left.\frac{d}{dt}f(e^{t\tau_I} g )\right|_{t=0}
\end{equation}
with $\tau_I:=-i\sigma_I/2$ with $\sigma_I$ being the Pauli matrices and the dual right invariant forms $\Omega^I$
 \begin{equation}
\Omega^I(R_J)=\delta^I_J
\end{equation}
which read explicitly
\begin{align}
\Omega_I:=-2 tr(\tau_I dh\,h^{-1})
\end{align}
From adjoint invariance of the Killing form $\delta_{IJ}$ the metric
\begin{equation}
q_o=\delta_{IJ}\Omega^I\Omega^J
\end{equation}
is invariant under both left and right $SU(2)$ transformation and this is the round metric on the sphere. The Maurer- Cartan equation reads 
\begin{equation}\label{eq_MC}
d\Omega^I+\frac{1}{2}{\epsilon^I}_{JK}\Omega^J\wedge\Omega^K=0
\end{equation}
We introduce 
\begin{equation}
{E_o}_I^a=\sqrt{\det q_o}R_I^a
\end{equation}
As the stabilizing group is $SO(3)$ or $O(3)$ every invariant $E$ field need to be proportional
\begin{equation}
E^a_I=p{E_o}_I^a
\end{equation}
Thus $\tilde{p}=|p|$.\\

From $e^I=\Omega^I$ and the formula for torison-freeness
\begin{align}
de^I+\frac{1}{2}\epsilon^{I}_{\;JK} \Gamma^J\wedge e^K=0
\end{align}
we determine the spin connection:
\begin{equation}
\Gamma^I_a=\Omega^I_a
\end{equation}
and finally using $K_{ab}=-\frac{\kappa}{2}\alpha\sqrt{\tilde{p}}{q_o}_{ab}$ we arrive at
\begin{equation}
A^I_a=\Gamma^I_a+\beta K_{ab}E^b_J\delta^{IJ}\sqrt{|\det E|}^{-1}=(1+c)\omega^I_a
\end{equation}
where $c=-\frac{\kappa\beta}{2}\alpha\operatorname{sign} p$.\\
Unintuitive, at the first look, parametrisation can be explained if one notice that
the only left $O(3)$ gauge transformation is $\G(-{\mathbb I})$ and it acts
\begin{equation}
(p,c)\rightarrow (-p,-c)
\end{equation}
As there is no invariant vectors, there are no left Gauss constraints,
\begin{equation}
\overline{\M}_{AB}=\{(E_I=p{E_o}_I,A^I=(1+\beta c)\Omega^I)\colon p,c\in \R\}
\end{equation}
where $\tilde{p}=|p|$ (and thus we have double cover of $\overline{\M}_{ADM}$).

\begin{remark}
We can also consider Bianchi IX spacetime.
Let us notice that we have an isomorphism
\begin{equation}
SO(4)/Z_2=SO(3)\times SO(3)
\end{equation}
We consider a inverse image in $SO(4)$ of the group
\begin{equation}
SO(3)\times H
\end{equation}
This is our $\Phi$. By similar fact as before we are left with Bianchi IX metrics and as there are no invariant vector fields we are left with one scalar constraint.
\end{remark}

\section{Gravity on a Graph}
\label{section:5}
In this section, we can draw from the experience learned in the previous section to put theorem \ref{theorem:2} from section \ref{section:2} into action for an altered version of GR. Inspired from Quantum Gravity approaches on the lattice (such as \cite{AQG1}) we consider a discretisation of general relativity. That is, we artificially truncate the full phase space of Ashtekar-Barbero connection and triadic field to finitely (or at least countable) many degree of freedom. \\
In the first subsection \ref{s51_restriction}, this truncation happens along the lines of \cite{Thi:07,Thi:00(QSDVII),DL:17b} and gives us a canonical theory that is formulated on a graph and its dual cell complex on the spatial coordinate manifold. The truncated (or: discretised) phase space will then consist of holonomies associated to the edges of the graph and fluxes corresponding to the (unique) faces of some cell complex that are punctured by an edge of the graph, as we outline in \ref{s511}. Those objects are usually understood as $SU(2)$- respectively $\mathfrak{su}(2)$-valued. Since the original gauge group of GR in the Ashektar-Barbero formulation is in fact $O(3)$, in \ref{s512} we pay special attention to the Gauss constraint and its discretisation. We want to implement the discretised $O(3)$-gauge transformation in such a way, that their action commutes with the truncation process itself. Although this can be achieved in general, it requires the abstract graph-phase space to carry knowledge of the topology of the original smooth manifold. This important result highlights how holes and punctures of space which imply non-contractible loops of the embedded graph remain as additional $O(3)$-gauge transformations, belonging to those of $O(3)$-valued field that cannot be lifted to $SU(2)$-valued fields. While this finishes the truncation on the kinematical level to a given subspace, we note in \ref{s513} that the constraints of general relativity can {\it not} be formulated exactly in terms of the truncated variables. In fact, for any finite graph, e.g. the scalar constraint can be approximated in terms of quantities on the graph (see \cite{Thi98(QSD),Thi98(QSDII)}) in such a way that the approximation will agree with the continuum formulation once the regulator is removed, that is one considers graphs that are finer and finer embedded into the manifold and ultimately filling it. In presence of finite regulators, that is for finite graphs, however the dynamics generated by the approximated constraints differs from the one of the continuum one. One could speculate that such a discretised dynamics could be the prediction of the semiclassical level of some quantum gravity theory on the lattice, as we will outline in the conclusion.\\
It is therefore of interest, to ask whether it is also in this setting possible to symmetry reduce the dynamics of the full truncated phase space to some symmetric subspace, for examples those phase space points characterising cosmological data. Subsection \ref{section:52_binachi} will therefore consider compact, cosmological spacetimes of Bianchi I typ and their truncation to the discrete phase space of a finite, cubic lattice. We introduce a discretisation of the constraints inspired by \cite{Thi98(QSD),Thi98(QSDII),AQG1}. In \ref{s521}, we note that the points of phase space describing Bianchi I can be described via symmetry restriction with a certain symmetry group $\Psi^\gamma_I$.\footnote{At this point we introduce a real number $\mu_o$ associated to the number of vertices on the lattice. In this sense our results will bridge to the $\mu_o$ approach common in effective Loop Quantum Cosmology, see e.g. \cite{DL:17a,DL:17b}.}. In \ref{s522}, we show that this symplectic phase space can be endowed with a reduced symplectic structure, which follows uniquely from using holonomies and gauge-covariant fluxes as basic variables and differs from other recent proposals in the literature \cite{LS:19}. The discrete version of the scalar constraint is found to be invariant under $\Psi^\gamma_i$ and can thus be used for symmetry restriction. We close \ref{s523} with the realisation that theorem \ref{theorem:2} applies, i.e. the flow of discretised GR stays within the symmetry restricted subspace and therefore can be computed on the restricted level as well.\\
Finally, in subsection \ref{s53_FRW} we perform a further symmetry restriction to the discretised version of compact, isotropic flat cosmology, i.e. FLRW. The same restriction theorems apply and we find for the restricted, discretised scalar constraint the same formula derived in \cite{DL:17a}. When connecting to application of semiclassical effects of Quantum Gravity, this strengthens the conjecture that this reduced constraint does indeed map the same trajectory as a semiclassical, effective Hamiltonian obtained from coherent states.

\subsection{Truncation from continuum to discrete}
\label{s51_restriction}

We will introduce a truncation of the connection formulation of general relativity as presented in section \ref{section:3_connection}. 
Before doing so, we shortly justify our approach to the problem. Gravity on the graph is a theory that is supposed to appear as the semiclassical limit of a discrete (fixed graph) version of (Algebraic) Loop Quantum Gravity \cite{Thi:07,GP:00,AL04,AQG1}. The semiclassical limit is not completely understood, but the phase space is known: in it fluxes do not commute and we look for a discretisation with such feature. Fortunately, such a discretisation is available \cite{Thi:00(QSDVII)}. It serves as guiding principle for relating objects in gravity on a graph and full General Relativity theory.

As a first step, we chose a subdivision of $\sigma$ into a cell complex. That is, into countably/finitely many 3-dimensional, compact mutually distinct cells $c_i\subset \sigma$ such that each pair of cells $(c_i,c_j)$ has either no common boundary at all or the boundary is exactly one contractible face $S=c_i\cap c_j$. We fix a point $v_i\in c_i$ for each cell and for all cases where $c_i\cap c_j\neq \emptyset$ we connect those points $v_i$ and $v_j$ via a semi-analytic path $e(t):[0,1]\to \sigma$ such that
\begin{enumerate}
\item $e(0)=v_i$ and $e(1)=v_j$
\item $e$ intersects $c_i\cap c_j$ transversally and only once at the point $e(1/2)$.
\end{enumerate}
For the graph $\gamma$, we will denote set of vertices by $\gamma_v$ and set of edges with orientation by $\gamma_e$. We introduce orientation on $c_i\cap c_j$ by orientation of the edge $e$. We denote such orientated surface by $S_e$. Let us also introduce reverse edge
\begin{equation}
e^{-1}(t)=e(1-t)
\end{equation}
and then $S_{e^{-1}}$ is just the face $S_e$ but with opposite orientation (see \eqref{gamma_s}). By the oriented edge we understand a map $e\colon [0,1]\rightarrow \sigma$ where we identify two maps if they differ by a reparametrisation. We assume that if $e\in \gamma_e$ then also $e^{-1}\in \gamma_e$. Sometimes we are interested only in the path up to the orientation. We introduce equivalence relation $[\cdot]$
\begin{equation}
[e']=[e]\Longleftrightarrow e'=e \lor e'=e^{-1}
\end{equation}
We denote the set of equivalence classes (edges without orientation) by $\gamma_{[e]}$ of the graph $\gamma$. The counting by $\gamma_e$ is over-complete.

The space of surfaces will be denoted by $\gamma_s$. 
We define the surface as a smooth embedding
\begin{equation}
S\colon D\rightarrow \sigma
\end{equation}
where $D=\{(x,y)\in \R^2\colon x^2+y^2<1\}$.
We regard two maps as equal if differing by a rotation
\begin{equation}
S'(x,y)=S(\cos\phi x+\sin\phi y, \cos\phi y-\sin\phi x)
\end{equation}
We assume that
\begin{equation}\label{gamma_s}
S_{e^{-1}}=S_e\circ i
\end{equation}
where $i(a,b)=(-a,b)$ and that $e\left(\frac{1}{2}\right)=S_e(0,0)$. We assume that surfaces of the graph are disjoint or equal up to orientation.

\subsubsection{Truncation of phase space $\M_{AB}$ to $\M_\gamma$ }
\label{s511}
Instead of working with the phase $\M_{AB}$ of connection and triad fields on $\sigma$ we will restrict our attention to a subspace $\M_\gamma\subset \M_{AB}$, in which all information is lost expect for some information related to the graph $\gamma$. This truncation is done by keeping only the following data:\\
The information of the connection $A^I_a(x)$ is truncated to holonomies. That is for each edge $e\in\gamma$, we introduce the $SU(2)$-valued path-ordered exponential: \cite{Thi:07,GP:00,AL04}
\begin{align}\label{holonomy}
h(e) := \mathcal{P}\exp\left(
\int_0^1{\rm d}t\, A^J_a(e_k(t))\tau_J\dot{e}^a_k(t)\right)
\end{align}
where in the path ordered exponential the latest time values are ordered to the right.

\begin{remark}
Let us stress an issue of great importance. The holonomy in (\ref{holonomy}) is a $SU(2)$ group element even if it is obtained from our $\mathfrak{so}(3)$ connection. There is no contradiction here, because $\mathfrak{su}(2)=\mathfrak{so}(3)$ as Lie algebras. However, this choice will have nontrivial influence on our analysis.
\end{remark}

The information about the electric field is truncated to the $\mathfrak{su}(2)$-valued gauge-covariant fluxes \cite{Thi:00(QSDVII)} for each $e$ (making use of its associated face $S_e$):
\begin{align}\label{gcFluxes}
P(e):= h(e_{[0,1/2]})\left[\int_{S_e}{\rm d}x h(\rho_x)\ast E(x)h(\rho_x)^\dagger\right] h(e_{[0,1/2]})^\dagger
\end{align}
where $\ast$ denotes the hodge star operator and 
\begin{equation}
\rho_x\colon [0,1]\rightarrow \sigma
\end{equation}
is defined for $x=S_e(a,b)$ by
\begin{equation}
\rho_x(t)=S_e(ta,tb)
\end{equation}
Let us notice that
\begin{equation}
P(e^{-1})=-Ad_{h(e)}P(e),\quad h(e^{-1})=h(e)^{-1}
\end{equation}
Once a graph $\gamma$ is chosen, along the lines of \cite{Thi:00(QSDVII)} one can perform a kinematical truncation of the phase space by considering only functions of $P(e)$ and $h(e)$. The Poisson brackets are given by\footnote{In order to deduce these brackets from the continuous one in (\ref{AB_PB}), one has to introduce a regularisation of the edges and faces in form of three-dimensional thickenings. At the end, said regularisation is again removed. See \cite{Thi:00(QSDVII)} for further details.}
\begin{align}\label{discrete-PB-algebra}
\{h(e)_{ab}, h(e)_{a'b'}\}=0,\hspace{10pt} \{P^I(e),h(e)\}=\frac{\kappa \beta}{2}\tau_I h(e),\hspace{10pt} \{P^I(e),P^J(e)\}=-\frac{\kappa\beta}{2}\epsilon^{IJ}_{\;\;\;K}P^{K}(e)
\end{align}
where $P^I(e):= -2{\rm Tr}(\tau_I P(e))$ and $\tau_I:=-i\sigma_I/2$ with $\sigma_I$ being the Pauli matrices and $\epsilon_{IJK}$ is the Levi-Civita symbol.\\
By construction of our cell complex, the edges are disjoint and we assume that the faces $S$ are open, hence no intersection exists between faces of different edges, i.e. $S_e \cap S_{e'} =\emptyset$. This is important to introduce a regularisation for the   Poisson bracket between $P^I(e)$ and $P^J(e')$ if the edges share a vertex. This yields finally for disjoint edges $e,e'$ that
\begin{equation}
\{h(e),h(e')\}=\{P^I(e),h(e')\}=\{P^I(e),P^J(e')\}=0.
\end{equation}

\begin{fact}
It turns out that on the truncated phase space the Poisson bracket is nondegenerate and it is given by a symplectic form that is a sum over all edges 
\begin{equation}
\omega_\gamma =\sum_{[e]\in \gamma_{[e]}} \omega_e
\end{equation}
where $[e]\in \gamma$ means that we take every edge once choosing an orientation.
The symplectic form $\omega_e$ is a symplectic form on $T^\ast SU(2)$ of a given edge that is 
\begin{equation}\label{pre-symplectic_potential_graph}
\omega_e=d\xi_e,\quad \xi_e=\frac{2}{\kappa\beta} P^I(e) \Omega_I(e)
\end{equation}
where $\Omega_I(e):=-2 tr(\tau_I dh(e)\,h^{-1}(e))$ is the right invariant Maurer Cartan form on $SU(2)$ represented by $h(e)$. 
\end{fact}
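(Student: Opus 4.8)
The plan is to recognise the truncated phase space as a product of standard cotangent bundles, one per unoriented edge, and then to check that $d\xi_e$ is precisely the canonical symplectic form of each such factor written in the coordinates $(h(e),P(e))$. Writing $g_e:=h(e)\in SU(2)$ and $p_e:=P(e)\in\mathfrak{su}(2)$, a point of $\M_\gamma$ is an assignment $[e]\mapsto(g_e,p_e)$ ranging over $\gamma_{[e]}$. Identifying $\mathfrak{su}(2)\cong\mathfrak{su}(2)^*$ through the trace form and using right trivialisation $T^\ast SU(2)\cong SU(2)\times\mathfrak{su}(2)^*$, one gets a diffeomorphism $\M_\gamma\cong\prod_{[e]\in\gamma_{[e]}}T^\ast SU(2)$. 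Since the faces of distinct edges are disjoint, all cross brackets between variables attached to different $[e]$ vanish, as already recorded above, so the Poisson bivector of $\M_\gamma$ is block diagonal over edges and it suffices to analyse a single factor $T^\ast SU(2)$.

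On that factor I would start from $\xi_e=\tfrac{2}{\kappa\beta}P^I(e)\,\Omega_I(e)$, which is exactly $\tfrac{2}{\kappa\beta}$ times the canonical Liouville one-form of $T^\ast SU(2)$ expressed in the right-invariant frame (the factor $\tfrac{2}{\kappa\beta}$ being inherited from the continuum potential \eqref{potential_AB}), and compute $\omega_e=d\xi_e=\tfrac{2}{\kappa\beta}\bigl(dP^I\wedge\Omega_I+P^I\,d\Omega_I\bigr)$, eliminating $d\Omega^I$ with the Maurer--Cartan relation \eqref{eq_MC}. To read off the brackets I would invoke Fact \ref{fact-presymplectic}: one checks that the right-invariant vector field $R_J$ on $SU(2)$, corrected by the fibre translation $\propto\epsilon_{KJL}P^K\,\partial/\partial P^L$ dictated by $\mathcal L_{R_J}\Omega^I$, annihilates $\xi_e$; Fact \ref{fact-presymplectic} then identifies this vector field as the Hamiltonian vector field of $-\xi_e$ contracted with it, which is proportional to $P^J(e)$, and produces both $\{P^J(e),h(e)\}\propto\tau_J h(e)$ and $\{P^I(e),P^J(e)\}\propto\epsilon^{IJ}_{\ \ K}P^K(e)$. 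The bracket $\{h(e),h(e)\}=0$ is automatic because the Hamiltonian vector field of any function of $h(e)$ alone is a pure fibre vector. Matching the overall factor $\tfrac{2}{\kappa\beta}$ and the sign convention $\{\cdot,\cdot\}=\pm\omega(\chi_\cdot,\chi_\cdot)$ then reproduces \eqref{discrete-PB-algebra} on the nose.

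It remains to assemble the pieces and to verify non-degeneracy. Each $\omega_e$ is the canonical symplectic form of a cotangent bundle, hence non-degenerate (equivalently, in the frame $(\Omega^I,dP^I)$ its matrix has the invertible off-diagonal block); a finite or countable direct sum of non-degenerate two-forms over disjoint coordinate sets is again non-degenerate, so $\omega_\gamma=\sum_{[e]\in\gamma_{[e]}}\omega_e$ is a genuine symplectic form, and by the previous paragraph the Poisson bracket it defines is the one of \eqref{discrete-PB-algebra}. One should also note that $\omega_e$ does not depend on the orientation chosen to represent $[e]$: the relations $h(e^{-1})=h(e)^{-1}$ and $P(e^{-1})=-\operatorname{Ad}_{h(e)}P(e)$ are precisely the cotangent lift of the inversion $g\mapsto g^{-1}$ on $SU(2)$, which is a symplectomorphism of $T^\ast SU(2)$; hence $\omega_\gamma$ is well defined.

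The routine parts are the trivialisation and the final assembly; the only place that needs genuine care is the single-edge computation of the second paragraph, namely getting every sign and numerical factor right when passing through the Maurer--Cartan equation to obtain the fibre--fibre bracket $\{P^I(e),P^J(e)\}$, and making sure that the regularisation of edges and faces implicit in defining $\xi_e$ is the same one that produced the discrete algebra \eqref{discrete-PB-algebra}. That is where I expect the actual work to lie.
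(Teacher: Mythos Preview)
Your proposal is correct and follows essentially the same approach as the paper: compute $\omega_e=d\xi_e$ explicitly (the paper does this via $dh^{-1}=-h^{-1}dh\,h^{-1}$ rather than citing the Maurer--Cartan relation, but the content is identical), then identify the Hamiltonian vector fields of $P^I(e)$ and $h(e)$ and read off the brackets. The one methodological difference is that you invoke Fact~\ref{fact-presymplectic} to \emph{derive} $\chi_{P^J}$ from the condition $\mathcal L_X\xi_e=0$, whereas the paper simply writes down $\chi_{P^I}=\tfrac{\kappa\beta}{2}(R^I-P^K\epsilon_{KIJ}\partial_{P^J})$ and $\chi_{h_{ab}}=\tfrac{\kappa\beta}{2}[\tau_I h]_{ab}\partial_{P^I}$ and verifies them against $\omega_e$ directly; your route is slightly more conceptual but lands in the same place. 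Your additional remarks on the product decomposition, non-degeneracy, and orientation-independence of $\omega_e$ are correct and fill in points the paper's proof leaves implicit.
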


\begin{proof}
We compute:
\begin{align}
\omega_e =&d\xi_e = \frac{2}{\kappa\beta}dP^I(e)\wedge \Omega_I(e) -\frac{4}{\kappa\beta}P^I(e) tr(\tau_I dh(e) \wedge dh^{-1}(e))\nonumber\\
=&\frac{2}{\kappa\beta} (\,dP^I(e)\wedge \Omega_I(e) -\frac{1}{2}P^I(e)\epsilon_{IJK} \Omega_J(e)\wedge\Omega_K(e)\,)
\end{align}
using that $tr(\tau_I\tau_J)=-\delta_{IJ}/2$ and $dh^{-1}=-h^{-1}dh \, h^{-1}$.\\
Now, the Hamiltonian vector fields for $P^I$ and $h$ are found to be
\begin{align}
\chi_{P^I(e)} &=\frac{\kappa\beta}{2}(\, R^I - P^K \epsilon_{KIJ}\frac{\partial}{\partial P^J}\,)\\
\chi_{h_{ab}(e)} &=\frac{\kappa\beta}{2} [\tau_I h]_{ab} \frac{\partial}{\partial P^I} 
\end{align}
where the right-invariant vector fields $R^I$ together with $\partial/\partial P^J$ form a basis of the tangent vector space.\\

Finally, we derive the Poisson brackets via $\{f,g\}:=-\omega(\chi_f,\chi_g)$ to obtain (\ref{discrete-PB-algebra}).
\end{proof}

Let us notice that $\xi_e=\xi_{e^{-1}}$. Since the holonomies are $SU(2)$ valued, the reduced system corresponding to each edge of $\gamma$ as discretisation of the Ashtekar-Barbero phase space is $T^\ast SU(2)$. And consequently the phase corresponding to the full graph is
\begin{equation}
T^\ast(\gamma):=\prod_{[e]\in \gamma_{[e]}} T^\ast SU(2)
\end{equation}
We will denote this phase space by $\M_\gamma$:
\begin{align}
\M_\gamma=\{(h,P)\; : \;h : \gamma_e \to SU(2) \;,\; P:\gamma_e \to su(2) \}
\end{align}

\subsubsection{Truncated gauge transformations}
\label{s512}

Our theory on the graph is a truncation of the full theory, thus not all gauge transformation can be implemented. The main problem is if the variable $(P(e),h(e))$ after applying transformation in the full theory cannot be expressed as functions of original holonomies and fluxes. The necessary condition is that the graph is preserved, thus definitely not all diffeomorphisms can be implemented, but what with $O(3)$ transformations?
A $O(3)$ gauge transformation is a field over $\sigma$ valued in $O(3)$ whose action on $\M_{AB}$ is defined via (\ref{eq-GG}) and (\ref{action_of_gauss_so(3)}). With the variables $(P(e),h(e))$ being functions of the triad and the connection in $\M_{AB}$, the hope is to find for every gauge transformation an equivalent transformation on $\M_{\gamma}$ such that truncation and gauge transformation commute:
\begin{center}
$(A,E)\in\M_{AB} \overset{\G_{AB}( O(3))}{\longrightarrow} \G_{AB}(O)(A,e)\in\M_{AB}$\\
$\downarrow \hspace{100pt} \downarrow$\\
$(h,P)\in\M_{\gamma} \overset{\G^\gamma(O(3))}{\longrightarrow} \G^\gamma(g)(h,P)\in \M_\gamma$
\end{center}

\paragraph{The $O(3)$ gauge transformations}
We concentrate first on $SO(3)$ gauge transformations given by $O\colon \sigma \rightarrow SO(3)$, such that there exists a smooth lift to $SU(2)$ i.e.
\begin{equation}
g\colon \sigma \rightarrow SU(2),\quad \pi(g)=O
\end{equation}
where $\pi\colon SU(2)\rightarrow SO(3)$ is the double cover homomorphism, i.e. a projection. Let us notice that all transformations generated by Gauss constraints have this property. Moreover, the lift is unique up to global multiplication by $-1$.

The advantages of our variables is their behaviour under $SU(2)$ gauge transformations: For a gauge field $g(x)$, the holonomy and gauge-covariant flux transform as \cite{Thi:07,AL04,GP:00}
\begin{align}\label{ActionOfGaussOnLattice}
\G^\gamma(g)\; :\; (h(e),P(e))\; \mapsto\; (\,g(e(0))h(e)g(e(1))^{-1}\, ,\,  g(e(0)) P(e) g(e(0))^{-1}\,)
\end{align}
whenever $g(x)$ is the associated smooth lift of $O$. This makes it comparably easy to construct gauge invariant functions.

The situation is a bit more complicated when the function $O$ does not have a lift. In this case we can construct a double covering of $\sigma$ 
\begin{align}
&\tilde{\sigma}_O=\{(x,g)\in \sigma\times SU(2)\colon \pi(g)=O(x)\},\\
&p_O\colon \tilde{\sigma}_O\ni (x,g)\rightarrow x\in \sigma
\end{align}
We remind that such a double cover looks locally like two copies of the manifold $\sigma$. However, it has the additional requirement that at each point it is continuous. In other words, the map $\pi^{-1}(O(x))$ has two branches and it is smooth in each branch. On the global level, however this allows for the possibility of twisted geometries.\\
On this bigger manifold we can lift $O\circ p_O$ tautologically 
\begin{equation}
\tilde{g}\colon \tilde{\sigma}_O\ni (x,g)\rightarrow g\in SU(2),\quad {\rm such \, that}\quad \pi \circ\tilde{g}=O\circ p_O
\end{equation}
Further, for any graph $\gamma$ we can consider its pre-image in $\tilde{\sigma}_O$, i.e. $\tilde{\gamma}=p_O^{-1}(\gamma)$ which is the double covering of $\gamma$
\begin{equation}
p_{O}\colon \tilde{\gamma}\rightarrow \gamma
\end{equation}
On $\tilde{\gamma}$ we have the mentioned lift $\tilde{g}$,  thus we may define for $e\in \gamma_e$ the generalisation of (\ref{ActionOfGaussOnLattice})
\begin{align}
\G^\gamma(O)\;\colon \hspace{25pt}\M_\gamma\;\;&\rightarrow\;\; \M_\gamma\\
 (h(e),P(e))\; \; &\mapsto
\; \;(\tilde{g}(\tilde{e}(0))h(e)\tilde{g}(\tilde{e}(1))^{-1},\; \tilde{g}(\tilde{e}(0)) P(e) \tilde{g}(\tilde{e}(0))^{-1})
\end{align}
where $\tilde{e}$ is one of two edges of $\tilde{\gamma}$ satisfying $p_O(\tilde{e})=e$. The result does not depend on the choice of this edge.  These are also symplectic transformations. If $O:\sigma\to SO(3)$ does not allow an lift $SU(2)$ by itself, then these transformations are not in connected components.\footnote{The tangent vector to the path of such gauge transformations is generated by Gauss transformations. Thus, if there is a path connecting some given gauge transformation to the identity, taking the same path-ordered gauge transformation in SU(2) we can obtain a lift. }
Apparently these transformations were never considered in earlier works.

\begin{remark} We present an example, which highlights that the additional non-liftable $O(3)$ transformations do indeed have non-trivial action and not all $SU(2)$-gauge invariant observables will be also $O(3)$ gauge-invariant. In explicitly, we  show that the famous Wilson loops along non-contractible paths are {\it not} $O(3)$-gauge invariant.\\
Let us choose a loop $e\colon [0,1]\rightarrow \sigma$ with $e(0)=e(1)=x$, such that it defines nontrivial element of $H_1(\sigma, \Z_2)$ homology group via Hurewicz map.
Double coverings of $\sigma$ are classified by $H_1(\sigma, \Z_2)$. For every double covering there exists $O\colon \sigma\rightarrow SO(3)$ such that $p_O\colon \tilde{\sigma}\rightarrow \sigma$ is homotopic to this covering.\footnote{Two coverings are classified by maps $\sigma\rightarrow \mathbb{RP}^\infty$ (see \cite{Husemoeller}). However $\sigma$ is 3 dimensional so we can homotopically deform the map to a $3$ skeleton of $\mathbb{RP}^\infty$ that is $\mathbb{RP}^3=SO(3)$. By integrating obtained map with some bump function on $SO(3)$ we obtain a smooth map such that the pull back of $\pi : SU(2)\rightarrow SO(3)$ is homotopic to the original two coverings.} If we follow $e$ from one point of $(x,u)=p_O^{-1}(x)$ we will end up in the other point $(x,-u)$, thus
\begin{equation}
\G(O)h(e)=\tilde{g}(x,u)h(e)\tilde{g}(x,-u)^{-1}=-u h(e) u^{-1}
\end{equation}
We notice that the trace
\begin{equation}
\operatorname{tr} \G(O)h(e)=\operatorname{tr} \left(-u h(e) u^{-1}\right)=-\operatorname{tr} h(e)
\end{equation}
is not invariant under this transformation. 
\end{remark}

\begin{remark}
The effect of $O(3)$ gauge transformations that are not liftable is associated to the presence of non-contractible loops on $\sigma$. Indeed, for every contractible loop, it is $\tilde{g}(0) =\tilde{g}(1)$, thus every $O(3)$ gauge transformation on this holonomy can be lifted to $SU(2)$.\\
In this sense - when transcending to $\M_\gamma$ - the discrete theory has knowledge of the original continuous manifold the theory was originally defined on, hidden in the presence/absence of the additional non-liftable $O(3)$ transformations.\footnote{We point out that the most common regularisations of the scalar constraint are invariant under $O(3)$ as they are understood as regularisations along infinitesimal small (i.e. contractible) paths. However, in parallel ``single vertex universe''-descriptions of LQG appear promising \cite{RS13,Bod15} where a the torus is truncated to three non-contractible edges. While this might be interesting from the point of view of maximal coarse-graining, the common regularisation of AQG fails to be a suitable candidate as it is not invariant with respect to the non-liftable transformations.}
This implies that the presence of additional $O(3)$ transformation distinguishes the discretisation of smooth geometry, with respect to those, where some minimal loops in $\gamma$ are not-contractible anymore.\footnote{In the context of quantum theory, one often considers extensions of the gauge groups in question. Such extension can have drastic implications for the gauge group: e.g. in \cite{BT07} the group of automorphisms $\rm{Aut}(\sigma)$ was considered as extension of $\Diff(\sigma)$ and it was shown that with this extension no information about knottings and holes survives. As a consequence, if one is interested in all of the $O(3)$ transformations, one can {\it not} consider the extension $\Diff(\sigma) \to \rm{Aut}(\sigma)$.}
\end{remark}

We now turn our attention to $\G(-{\mathbb I})$ transformation. As in this case the transformation of the connection involves $\Gamma^I$ along the edges we cannot implement it on our truncated phase-space. However, on  the symmetric configurations we  see the shadow of this transformation in form of the symmetry $(p,c)\to (-p,-c)$.

\paragraph{The residual diffeomorphisms}

Let us now consider action of diffeomorphisms that preserve the lattice, namely $\psi\in \Diff(\sigma)$ such that
\begin{equation}
\forall_{e\in \gamma_e} \psi(e):=\psi\circ e\in \gamma_e,\quad 
\forall_{v\in \gamma_v} \psi(v):=\psi\circ v\in \gamma_v,\quad
\forall_{S\in \gamma_s} \psi(S):=\psi\circ S\in \gamma_s,
\end{equation}
We will write in such situation $\psi(\gamma)=\gamma$.

In this case the diffeomorphism induces a permutation on the spaces of edges and vertices
\begin{align}\label{psi_ev}
\psi^e\colon \gamma_e\rightarrow \gamma_e,\quad \psi^e(e)=\psi(e)\\
\psi^v\colon \gamma_v\rightarrow \gamma_v,\quad \psi^v(v)=\psi(v)
\end{align}
Let us remind
\begin{equation}
\D_{AB}(\psi)(E^a_I,A^I_a)=(\psi_*(E^a_I),\psi_*(A^I_{a}))
\end{equation}
The action on phase space $\M_{ADM}$ preserves the space of the functions of the truncated theory. Precisely, from the definition of the holonomy variables
\begin{equation}
\D_{AB}(\psi)(h(e))=h(\psi^e(e)),\quad \D_{AB}(\psi)(P(e))=P(\psi^e(e)),
\end{equation}
These formulas define the implementation of such diffeomorphisms in our truncated theory. We will denote the action of the residual diffeomorphisms by 
\begin{align}
\D_{AB}^\gamma(\psi)\, &\colon\, \M_\gamma\rightarrow \M_\gamma\\
[\D_{AB}^\gamma(\psi) (h,P)](e) &:= ( h(\psi^e(e)), P(\psi^e(e)))
\end{align}
It is a symplectic transformation.
The group of residual diffeomorphisms and $SO(3)$ gauge transformations form together, similarly to the case of continous theory, a semidirect product. The resulting group collects all the possible gauge transformations of the truncated (discretised) theory. However, usually not all symmetries of the graph correspond to symmetries of the full cellular complex.\footnote{A simple example consist of the dipole graph (two vertices connected by several edges) with the two surfaces defined by triangulation of the sphere separating two vertices. We assume that every edge intersects its own surface on the sphere.  The triangulation might be highly nonsymmetric, whereas every permutation of the edges is a symmetry of the graph.} In order to describe $\D^\gamma_{AB}$ as truncated action of $\D_{AB}$, it is necessary that also the surfaces transform properly, i.e. $\gamma(S_e)=S_{e'}$ for some $e,e'\in\gamma_e$. Accordingly, not all possible transformations $\D_{AB}^\gamma(\psi)$ are gauges. In our applications this subtlety is not arising.
\subsubsection{Truncation of the constraints}\label{sec:truncation}
\label{s513}

We will now turn our attention towards the constraints. However, in general the constraints are not invariant functions of the truncation procedure to $\gamma$ that we have outlined before. Therefore, what we are doing is {\it no} reduction of the theory, but a {\it truncation}. We are indeed non-trivial altering the theory, by postulating discretised constraints, which are functions of holonomies and fluxes, which only in a certain limit coincide with the original constraints of general relativity.\\

\begin{itemize}
\item The particularly simple are only Gauss constraints as the group generated by them can be implemented exactly in our truncated theory. Let us consider a one parameter subgroup of $SO(3)$ gauge transformation\footnote{Every such family can be lifted to $SU(2)$.} such that
\begin{equation}
O_t^\Lambda(v)=\Pi(e^{it\tau^I \Lambda_I(v)})
\end{equation}
One can check\footnote{Compare semiclassical description of representations (see \cite{GS:13})} that the generator of this group of symplectic transformations is given by the formula
\begin{equation}
G[ \Lambda]=\sum_{v\in\gamma_v} \Lambda_I(v) G^I(v),\quad G^I(v)=\sum_{e\in \gamma_e\colon e(0)=v} P^I(e) 
\end{equation}
We can regard the Gauss constraint as a function
\begin{equation}
G\colon \M_\gamma\rightarrow C^\infty(\gamma_v, so(3)=su(2))
\end{equation}
where
\begin{equation}
C^\infty(\gamma_v, su(2))=\{f\colon \gamma_v\rightarrow su(2)\}
\end{equation}

We can also write the covariance property under $SO(3)$ gauge transformations:
\begin{equation}
\G(O)G=\operatorname{Ad}_O G
\end{equation}
where $(\operatorname{Ad}_O G)(v)=g\, G(v)\, g^{-1}$ with $\pi(g)=O(v)$, $g\in SU(2)$. The result does not depend on the choice of particular $SU(2)$ group elements $\tilde{O}$ as they differ by $-1$. 

\item  On the other hand the diffeomorphism constraints are tricky. Usually, they are not considered, because in the original LQG their group is automatically taken into account.
In the lattice approach we assume that their generators exists and that 
\begin{equation}
D\colon \M_\gamma\rightarrow \prescript{1}{}{C}^\infty(\gamma)
\end{equation}
where
\begin{equation}\label{eq331}
\prescript{1}{}{C}^\infty(\gamma)=\{f\colon \gamma_e\rightarrow \R,\ f(e)=-f(e^{-1})\}
\end{equation}
is the space which corresponds to the vector fields in continuous theory. However, this is not the only possible choice.
One can argue that the symmetry of the lattice should be implemented by the action of the residual constraints (lifted).\footnote{The advantage of this definition is that in the examples in \ref{section:52_binachi} because for (\ref{eq331}) we will find $\,^1 C^\infty(\gamma)_{\Psi^\gamma}$ and thus $D$ will be zero.}\\

\item The situation with the scalar constraint $C(x)$ is even more complicated, but in this case there exist several propositions. We introduce a function $C^{\epsilon}(v)$ of holonomies and fluxes such that in the limit of infinitely dense lattices it agrees with the continuum expression, i.e.:
\begin{align}\label{cont_limit}
\lim C^{\epsilon}(v)= C(x=v)
\end{align}
In the specific examples we can introduce discretisation based for example on Thiemann (see subsection \ref{section:52_binachi}).The most important for us is the requirement that this constraints behave nicely under residual diffeomorphisms. Namely, for $\psi$ that preserves the graph, we require that
\begin{equation}\label{covariance_Discrete}
\D_{AB}^\gamma(\psi)_*C^\epsilon(v) =C^\epsilon (\psi_v(v))
\end{equation}
This equation is the equivalent of (\ref{spatial_diffeo_inv_of_GR}) and allows to repeat the same argumentation from section \ref{s312} in order to classify the remaining constraints on the symmetry reduced phase space of $\M_\gamma$.\\
Let us notice that $C^\epsilon$ can be regarded as a function from phase space to the scalar functions on the lattice
\begin{equation}
C^\epsilon\colon \M_\gamma\rightarrow C^\infty(\gamma_v)
\end{equation}
where
\begin{equation}
C^\infty(\gamma_v)=\{f\colon \gamma_v\rightarrow \R\}
\end{equation}
We can now can state  the covariance property (\ref{covariance_Discrete})
for $\psi$ preserving the graph and $\psi_v$ defined in \eqref{psi_ev}. 
We assume that the proposition of scalar constraint is $SO(3)$ invariant, i.e. $\G(O)C=C$, what is usually easy to check.

\end{itemize}

\subsubsection{Truncated symmetry group}
\label{s514}

We are mainly interested in the action of the $\Psi$ group. We now assume that $\Psi\in \Diff^+(\sigma)$. We introduce the truncated group
\begin{equation}
\Psi^\gamma=\{\psi\in \Psi\colon \psi(\gamma)=\gamma\}
\end{equation}
Similarly we can introduce (see \eqref{final_sym_group_on_Ab}) a subgroup of a group generated by graph preserving diffeomorphisms and $SO(3)$ gauge transformations
\begin{equation}\label{final_sym_group_on_Ab-lattice}
\Phi_{AB,E_o}^\gamma(\Psi)=\{\;\G^\gamma({E_o}_a^J\psi_\ast({E_{o}}^I_a))\D_{AB}^\gamma(\psi)\colon \psi\in\Psi^\gamma\}
\end{equation}
This is the symmetry group that we will impose on our truncated theory concerning symmetry restriction.

\subsection{Symmetry restricted Bianchi I model on a cubic lattice}
\label{section:52_binachi}

Let $\sigma=(\R/T)^3$ and consider the cubic lattice $\gamma$ oriented along the three directions of the torus, such that each edge in direction $l$ has constant tangent vector $(\dot{e}_{v,l})^a(t) =(T/N_l) (\hat{n}_l)^a$ with $\hat n_l$ the unit vector in direction $l$.
The graphs $\gamma$ is a regular cubic lattices $\gamma$ with periodic boundary conditions. The set of vertices
\begin{equation}
\gamma_v=\Z_{N_1}\times  \Z_{N_2}\times \Z_{N_3}
\end{equation}
and the every vertex is six-valent. We will denote every edge with the direction
\begin{equation}
l\in L=\{\pm \hat{n}_i\colon i=1,2,3\}
\end{equation}
and the vertex from which it starts, namely $e=(l,v)$, thus
\begin{equation}
\gamma_e=\{(l,v)\colon l\in L,\ v\in \gamma_v\}
\end{equation}
The orientation is determines by a starting vertex. Thus
\begin{equation}
e_{l,v}=(-\hat{n}_i,v)=(\hat{n}_i,v-\hat{n}_i)^{-1}
\end{equation}
We can then write any element of the phase space as $\gamma = (P(l,v),h(l,v))=(P(e_{l,v}),h(e_{l,v}))$.

\paragraph{Truncation of the constraints} 
We will now turn our attention towards the constraints, describing in details the missing point from section \ref{s513}: the scalar constraint. We use the following notation for its smearing against a function $\mathcal{N}(v)$:
\begin{align}
C^\epsilon[\mathcal{N}]:= \sum_v C^\epsilon(v) \, \mathcal{N}(v)
\end{align}
Based on the regularisation of Thiemann \cite{Thi98(QSD),Thi98(QSDII)}, we introduce the following discretisation\footnote{We stress again that this function is ad hoc.
 There are many choices that would obey (\ref{cont_limit}). In the present of finite lattice, the produced flow on $T^\ast(\gamma)$  will in general heavily depend on this choice.} specialised to the case of cubic lattices
\begin{align}\label{DiscreteScalarConstraintThiemann}
C^\epsilon(v)&:=C_E(v)+C_L(v)+C_{\rm matter}^\epsilon(v),\\
C_E(v)&=\frac{-1}{2\kappa^2\beta}\sum_{i,j,k\in L}\epsilon(i,j,k){\rm Tr}\left(
[h(\square_{v,ij}) - h^\dagger(\square_{v,ij})]h(k,v)\{h^\dagger(k,v),V^\epsilon[\sigma]\}\right),\nonumber\\
C_E[\sigma]&=\sum_v C_E(v),\quad V^{\epsilon}[\sigma]=\sum_v\sqrt{\sum_{ijk\in L}\frac{\epsilon_{IJK}}{48}\epsilon(i,j,k) P^I(i,v)P^J(j,v)P^K(k,v)},\quad K=\{C_E[\sigma],V^\epsilon[\sigma]\},\nonumber\\
C_L(v)&=\frac{8(1+\beta^2)}{\kappa^4\beta^7}\sum_{ijk\in L}\epsilon(i,j,k){\rm Tr}\left(
h(i,v)^\dagger\{h(i,v),K\}h(j,v)^\dagger\{h(j,v),K\}h(k,v)^\dagger\{h(k,v),V\}
\right)\nonumber
\end{align}
where $\Box_{v,ij}$ denotes a plaquette starting at $v$ in direction $i$ and returning along direction $j$ and $\epsilon(i,j,k):=\operatorname{sgn}\det(i,j,k)= {\rm sgn}(ijk) \epsilon_{|i||j||k|}$. 

Let us notice that $C$ can be regarded as a function from phase space to the scalar functions on the lattice
\begin{equation}
C\colon \M_\gamma\rightarrow C^\infty(\gamma_v)
\end{equation}
satisfies all the requirements from Section \ref{sec:truncation}, namely
\begin{enumerate}
\item It is $SO(3)$ gauge invariant,
\item It is equivariant under $\Psi^\gamma$ transformations. Thus also under the group $\Phi_{AB,E_o}^{\gamma}$ .
\end{enumerate}

\begin{remark}
It is worthwhile to mention that invariance under $\Phi^\gamma_I$ is related to covariance of a scalar function. Thus, it is to be expected that any reasonable proposal for a regularised scalar constraint will be $\Phi^\gamma_I$-invariant. E.g. the proposal of \cite{AAL:14,ALM:15} is invariant as well and could be used alternatively for the investigations in this manuscript.
\end{remark}

\subsubsection{Symmetry group and structure of restricted phase space}
\label{s521}

We will try to lift the symmetry group (\ref{BianchiI_symmetry_torus}) from the Bianchi I model in the torus case to our discretised setting. Let us consider a subgroup of $\Psi_I$ from (\ref{BianchiI_symmetry_torus}) that preserves the graph. We will denote this group by $\Psi_I^\gamma$. The group in question can be easily determined and it is equal to
\begin{equation} \label{group_graphI}
\Psi_I^\gamma\cong(\mathbb Z_{N_1}\times \mathbb Z_{N_2}\times \mathbb Z_{N_3})\rtimes H=\{(m,O)\colon m\in \mathbb Z_{N_1}\times \mathbb Z_{N_2}\times \mathbb Z_{N_3},\ O\in H\}
\end{equation} 
where $N_i$ are the number of vertices along each principal direction of the torus and $H$ is the group of rotations around axis by angle $\pi$ from (\ref{group_H}). The action of $\Psi^\gamma_I$ permutes edges and vertices via the following maps:
\begin{equation}
 \psi_v\colon \gamma_v\rightarrow \gamma_v, \quad
\psi_e\colon \gamma_e =L\times \gamma_v\to \gamma_e
\end{equation}
where for $\psi=(m,O)$
\begin{equation}
\psi_v(v)=O(v+m),\quad \psi_e(l,v)=(Ol,O(v+m))
\end{equation}
where $O$ acts by linear transformations on $L$ and $V$.\\

We will now analyse the structure of the phase space $\M_\gamma$ under symmetry restriction by group $\Phi_I^\gamma=\Phi_{AB,E_o}^\gamma(\Psi_I)$. We will denote this reduced phase space by $\overline{\M}_{\gamma}^{ I}$.

\begin{lm}
The points of $\M_{\gamma}$ that are invariant under the group $\Phi_I^\gamma(\Psi^\gamma_I)$ are
\begin{equation}\label{MredgammaI}
(P(v,l),h(v,l)) = \bigg(\mu_o^2 p_l\tau_l,e^{\mu_o  c_l\tau_l}\bigg)
\end{equation}
and we denote the space spanned by $(p_l,c_l )\in \mathbb{R}^2$ (under the identification $c_l=c_l+4\pi/\mu_o$) as $\overline{\M}_{\gamma}^{I}$ and defined the real number:
\begin{equation}\label{mu0-def}
\mu_o :=(N_1N_2N_3)^{-1/3}
\end{equation}
\end{lm}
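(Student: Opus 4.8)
\textbf{Proof plan for the characterization of $\overline{\M}_\gamma^{I}$.}
The plan is to apply Fact \ref{fact:6_transitivity} (or rather its evident discrete analogue) edge-by-edge and vertex-by-vertex, exploiting that $\Psi_I^\gamma$ acts transitively on $\gamma_v$ and on each of the three ``classes'' of edges. First I would split the invariance condition $\phi(m)=m$ for all $\phi=\G^\gamma({E_o}_a^J\psi_\ast({E_o}^I_a))\D_{AB}^\gamma(\psi)$ into two pieces: the $(\mathbb Z_{N_1}\times\mathbb Z_{N_2}\times\mathbb Z_{N_3})$-translations, which act purely by permuting edges/vertices (with $O=\mathrm{id}$, hence trivially on internal indices since ${E_o}^I_a=\delta^I_a$ is constant), and the ``point group'' $H$ of $\pi$-rotations, which both permute the edges and act on internal indices through the matrices $O(R_{e_k}(\pi))=\mathrm{diag}(\pm1,\pm1,\pm1)$ computed via \eqref{GaugeActionFromDiffeoForSymmetry}, exactly as in the continuum Bianchi~I analysis of Fact in section \ref{section:4}.

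The translational invariance immediately forces the pair $(P(l,v),h(l,v))$ to be independent of $v$, so the data reduce to six pairs $(P(l),h(l))$, $l\in L$, with $P(-\hat n_i)=-\mathrm{Ad}_{h(\hat n_i)}P(\hat n_i)$ and $h(-\hat n_i)=h(\hat n_i)^{-1}$ from the orientation-reversal relations; hence only three pairs $(P_i,h_i)$, $i=1,2,3$, are free. Next I would impose invariance under the three generating $\pi$-rotations. The rotation $R_{\hat n_i}(\pi)$ fixes the direction $\hat n_i$ and reverses $\hat n_j,\hat n_k$; combined with the internal transformation $\mathrm{diag}(1,-1,-1)$ (up to relabeling) this forces $h_i$ and $P_i$ to be stabilized by $\mathrm{Ad}$ of the corresponding diagonal $SU(2)$ element. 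The only $\tau\in\mathfrak{su}(2)$ fixed by $\mathrm{Ad}_{\exp(\pi\tau_i)}$ is the line $\R\tau_i$, and the only $SU(2)$ elements commuting with $\exp(\pi\tau_i)$ form the maximal torus $\exp(\R\tau_i)$; so $P_i\in\R\tau_i$ and $h_i\in\exp(\R\tau_i)$. Writing $P(l,v)=\mu_o^2 p_l\tau_l$ and $h(l,v)=\exp(\mu_o c_l\tau_l)$ then just fixes a convenient parametrization, and the identification $c_l\sim c_l+4\pi/\mu_o$ is forced because $\exp(\mu_o c_l\tau_l)$ has period $4\pi/\mu_o$ in $c_l$ (as $\exp(4\pi\tau_l)=\mathbb I$ in $SU(2)$). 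The factor $\mu_o=(N_1N_2N_3)^{-1/3}$ is chosen so that the reduced symplectic form and the continuum limit match the scalings used for $\overline{\M}_I$ in Fact \ref{fact:12}; I would verify consistency by a one-line comparison with \eqref{gcFluxes} and \eqref{holonomy} evaluated on the Bianchi~I fields $E^a_I=V_o^{-1}T_a p_a\delta^a_I$, $A^I_a=T_a^{-1}c_a\delta^I_a$, noting that each edge carries a fraction $1/N_l$ of the coordinate length and each face a fraction $1/(N_jN_k)$ of the coordinate area, which combine to the single power of $\mu_o^2$ in $P$ and $\mu_o$ in the exponent of $h$.

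The main obstacle I anticipate is not the linear-algebra computation of the invariant subspace but rather handling the non-commutative lift carefully: one must check that the internal rotation matrix $O(\psi)$ produced by \eqref{GaugeActionFromDiffeoForSymmetry} is exactly the $SU(2)$ conjugation that appears in \eqref{ActionOfGaussOnLattice}, i.e.\ that the gauge part of $\phi$ acts on $P(l,v)$ by $\mathrm{Ad}_{g(v)}$ with $\pi(g)=O$, and that no subtlety with the two $SU(2)$ lifts of $O\in H$ spoils the argument (here it does not, since $\pm g$ give the same $\mathrm{Ad}$ action, and all graph loops relevant to $\Psi_I^\gamma$ are contractible on the torus so the non-liftable $O(3)$ phenomena of \ref{s512} are irrelevant). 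A secondary point to be careful about is the holonomy of the reversed edge: one should confirm $h(\hat n_i,v-\hat n_i)=h(-\hat n_i,v)^{-1}$ is compatible with $v$-independence, which it is. Once these bookkeeping points are settled the statement follows.
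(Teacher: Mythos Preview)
Your proposal is correct and follows essentially the same route as the paper: first use the translational part of $\Psi_I^\gamma$ to make $(P(l,v),h(l,v))$ independent of $v$, then use the $\pi$-rotation $R_{\hat n_i}(\pi)$ (lifted to $\tilde R_i=\exp(\pi\tau_i)\in SU(2)$) on the edge in direction $i$, which it fixes, to obtain the $\mathrm{Ad}_{\tilde R_i}$-stabilizer conditions $P_i\in\R\tau_i$ and $h_i\in\exp(\R\tau_i)$. This is exactly the paper's argument, only you spell out more of the bookkeeping (orientation reversal, the $4\pi/\mu_o$ periodicity, the two $SU(2)$ lifts giving the same $\mathrm{Ad}$).

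Two small remarks. First, your consistency check with the continuum fields of Fact~\ref{fact:12} only reproduces the precise powers $\mu_o$ and $\mu_o^2$ when $N_1=N_2=N_3$; for unequal $N_i$ the edge holonomy picks up $c_l/N_l$ and the flux $p_l/(N_jN_k)$, so the $\mu_o$-parametrization is a convenient convention (fixed, as you say, by the symplectic form computation in \S\ref{s522}) rather than a literal match of truncated continuum values. Second, the reason the non-liftable $O(3)$ phenomena of \S\ref{s512} do not arise here is not contractibility of loops on the torus (they are not contractible) but that for $\psi\in H$ the gauge factor $O(\psi)={E_o}^J_a\psi_*({E_o}^I_a)$ is a \emph{constant} $SO(3)$ matrix, hence the map $\sigma\to SO(3)$ is constant and admits a global $SU(2)$ lift trivially. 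Neither point affects the validity of your proof.
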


\begin{proof}
Due to translation symmetry for any positively oriented link
\begin{equation}
(P(e=(v,l)),h(e=(v,l)))=(P_l,h_l)
\end{equation}
Let us consider rotation by $\pi$ around $k$ axis, i.e. $R_{e_k}(\pi)\in SO(3)$ as in (\ref{group_H}). The easiest way to lift this to $SU(2)$ is via its generators $R_{\vec{n}}(\pi)=\exp( \pi \vec{n}\cdot\vec{\tau}^{(1)})$. $\tau^{(j)}_I$ with $I=1,2,3$ are the generators of the Lie algebra $so(3)=su(2)$ in representation $j$, which read for $j=1/2$ as $\tau_I^{(1/2)}=-i\sigma_I/2$ with $\sigma_I$ being the Pauli matrices. With this, the $R_{e_k}(\pi)$ are straightforwardly lifted to $SU(2)$, namely $\tilde{R}_{\vec{n}}=\exp( \pi \vec{n}\cdot\vec{\tau}^{(1/2)})$. This lift determines $\G^\gamma$ and with $E_{oa}^I=\delta_a^I$ the action of the symmetry transformations $\Phi^\gamma_{AB,E_o}(\Psi_I)$ from (\ref{final_sym_group_on_Ab-lattice}) is complete. The action of $\G^\gamma(\tilde{R}_{\vec{n}})$ on points in the phase space $\M_\gamma$ is given in (\ref{ActionOfGaussOnLattice}), where $\forall x$ $g(x)= \tilde{R}$. Demanding invariance on a link in direction $k$ for $\tilde{R}_{k}$ does not move the link and implies thus
\begin{align}
P_k=\tilde{R}_{k}\, P_k\, \tilde{R}^{-1}_k,\quad h_k=\tilde{R}_k\, h_k\, \tilde{R}_k^{-1}
\end{align}
The definition of $\tilde{R}_k$ already implies that the only invariant $su(2)$ element needs to be proportional to $\tau_k$ hence $P_k=p(k) \,\tau_k$ for some $p(k)$.
Moreover
\begin{equation}
\tilde{R}_k e^{ \vec{n}\cdot\vec{\tau} } \tilde{R}^{-1}_k=e^{i \vec{n}' \cdot \vec{\tau}}
\end{equation}
where $\vec{n}'$ is a vector rotated by $\pi$ around the axis $k$ thus
also $h_k=e^{i\, c(k)\, \sigma_k}$  for some $c(k)$.
\end{proof}

\begin{remark}
We want to point out that the real number $\mu_0$ from (\ref{mu0-def}) will turn out to be reminiscent of the parameter of the same name in effective Loop Quantum Cosmology. The later one comes typically in two different schemes (the $\mu_o$- and $\bar{\mu}$-scheme) while only the first one with constant parameter $\mu_o$ can be understood as the cosmological sector of a lattice-discretisation from continuum GR via the methods outlined in this manuscript.\\
The construction of a suitable symmetry group always requires the introduction of a fiducial triad $E_{oI}^a$ (see e.g. (\ref{fiducial_triad_FRW})). Together with the fiducial connection $A_{oI}^a$ this allows to be lifted to the discrete setting by giving rise to the fiducial fluxes $P_o^I(e)=P^I(e)|_{E_o,A_o}$. We define the volume of the spatial manifold for said fiducial fluxes:
\begin{align}
V_o := V^\epsilon[\sigma] |_{P_o},\hspace{30pt}\text{and}\hspace{30pt} \epsilon:= \mu_o V_o^{\frac{1}{3}}
\end{align}
take over the role of {\it discretisation parameter}. In this sense, the limit $\epsilon\to 0$ of the examples outlined below will restore the continuum cosmological models, i.e. (\ref{overlineMred}) and (\ref{Constraint_BinachiI_continuum}).
\end{remark}

\subsubsection{Symmetry restriction of the symplectic form}
\label{s522}

We should now describe the reduction of the symplectic form. In order to do this, we first push forward the tangent vectors of $\overline{\M}^I_\gamma$ from (\ref{MredgammaI}) into $\M_\gamma$. Note, that any curve $\Gamma(t)=(p_l(t),c_l(t))\in \overline{\M}^I_\gamma$ is embedded into $\M_\gamma$ as
\begin{align}
(\Gamma)_{\M_\gamma}(t)= (P(e)[\Gamma(t)], h(e)[\Gamma(t)]) =(\mu_o^2 p_l(t) \tau_l , e^{\mu_o c_l(t) \tau_l})
\end{align}
The vector fields $X$ to $\Gamma$ are in $^1C^\infty(\overline{M}^I_\gamma)$ and of the form $X= \sum_l (\delta p_l) \partial_{p_l} + (\delta c_l) \partial_{c_l}$ and therefore, we can compute their push forward into $\M_\gamma$:
\begin{align}
X_{\M_\gamma}&= \sum_{e} (\delta h_I(e)) R^I(e) + (\delta P_I(e)) \frac{\partial}{\partial P^I(e)}
\end{align}
where we choose the right-invariant vector fields $R^I$ as basis for the derivatives on $SU(2)$. To determine $\delta h_I(e)$ for $e=(v,l)$ we contract the last line with the dual to $R^I$, that is the right-invariant forms $\Omega_J= dh\cdot h^{-1}$ on $SU(2)$ with $\Omega_J(R^I)=\delta^I_J$:
\begin{align}
\delta h_I(e) &= \Omega_I(e) (\delta h_J(e)R^J(e))=-2 tr[\tau_I (dh(e) h(e)^{-1}) X_{\M_\gamma}]=-2tr[\tau_I X_{\M_\gamma}(h(e)) h(e)^{-1} ]=\nonumber\\
&= -2tr[
\partial_{t} h(\Gamma(t)) h(\Gamma(t))^{-1} |_{t=0}]=-2 tr [\tau_I \mu_o \dot{c}_l(t) \tau_l e^{\mu_o c_l(t)\tau_l} e^{-\mu_o c_l(t)\tau_l} |_{t=0}=\nonumber\\
&=\mu_o\,\delta c_l\,\delta^{I}_l 
\end{align}
using that $X(h)=\partial_t h(\Gamma(t))|_{t=0}$.
Also 
\begin{align}
\delta P^I(e) = \partial_t |_{t=0} P^I(e)[\Gamma(t)]  = \mu_o^2 \, \delta p_l\, \delta^I_l
\end{align}
We had already introduced in (\ref{pre-symplectic_potential_graph}) the pre-symplectic 1-form:
\begin{equation}
\xi=\frac{2}{\kappa\beta}  \sum_{e} P^{I}(e)\Omega_I(e)
\end{equation}
whose pull-back to $\overline{\M}_\gamma$ we can use to define a symplectic form thereon:
\begin{equation}
\omega|_{\overline{\M}_{\graph}^{I}}=d|_{\overline{\M}_{\graph}^{ I}}\xi|_{\overline{\M}_{\graph}^{ I}}
\end{equation}
It remains thus, to determine $\xi|_{\overline{\M}^I_\gamma}$ as a 1-form on $\overline{\M}^I_\gamma$ in such a way that 
\begin{align}\label{reduction_requirement_discrete}
\xi|_{\overline{\M}^I_\gamma} (X) := \xi(X_{\M_\gamma})
\end{align}
First, we notice that
\begin{align}
\Omega_I(e)(X_{\M_\gamma})=\delta^I_l \mu_o\, (\delta c_l) \hspace{10pt}\Rightarrow \hspace{10pt} \xi(X_{\M_\gamma})=\mu_o^3 N_1 N_2 N_3 \frac{2}{\kappa\beta } \sum_l p_l \, (\delta c_l)
\end{align}
Thus, we see that choosing the 1-form
\begin{align}
\xi|_{\overline{\M}^I_\gamma} =  \frac{2}{\kappa\beta}\sum_l p_l \, dc_l
\end{align}
is sufficient for (\ref{reduction_requirement_discrete}) to hold. It follows:

\begin{equation}
\omega|_{\overline{\M}_{\gamma}^{ I}}=d|_{\overline{\M}_{\gamma}^{ I}}\xi|_{\overline{\M}_{\gamma}^{ I}}  = \frac{2}{\kappa\beta}\sum_l dp_l\wedge dc_l
\end{equation}
And finally we obtain exactly the same bracket, as in the continuum:\footnote{Interestingly, this rigorous derivation of the reduced symplectic structure highlights that no trace of choosing holonomy-dependent gauge-covariant fluxes remains for Bianchi I. This is contrast to \cite{LS:19} where, after using gauge-covariant fluxes as well, a different symplectic structure was simply postulated on the cosmological phase space of FLRW degrees of freedom: Upon relabelling $p\mapsto \tilde{p}\, {\rm sinc}(\mu \tilde{c}) ,\; c\mapsto \tilde{c}$, in \cite{LS:19} the phase space was ad-hoc equipped with $\{\tilde{p},\tilde{c}\}=\kappa\beta/6$ and therefore does {\it not} mirror the reduced setting correctly.}
\begin{align}
\{p_l,c_l\}_{\overline{\M}_\graph^I}=\frac{\kappa\beta}{2}
\end{align}

\subsubsection{Constraints}
\label{s523}

What is the fate of the constraints?  As in the previous section, we can invoke lemma \ref{lemma5} to determine the spaces possible of remaining constraint which are not yet trivially satisfied on $\overline{\M}_{\gamma}^{ I}$. The calculation is analogous to the torus case before

\begin{cor}
The invariant spaces under the action of $\Phi^\gamma_I$ are 
\begin{equation}
C^\infty(\gamma_v, su(2))_{\Phi_I^\gamma}=\{0\},\quad C^\infty(\gamma_v)_{\Phi_I^\gamma}=\{\operatorname{const}\},\quad \prescript{1}{}{C}^\infty(\gamma_v)_{\Phi_I^\gamma}
=\{0\}
\end{equation}
\end{cor}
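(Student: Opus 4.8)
The strategy mirrors the continuum computation in Fact \ref{fact:6_transitivity}: the translation subgroup $\mathbb{Z}_{N_1}\times\mathbb{Z}_{N_2}\times\mathbb{Z}_{N_3}\subset\Psi^\gamma_I$ (see (\ref{group_graphI})) acts simply transitively on the vertex set $\gamma_v$ and, for each of the six directions $l\in L$, simply transitively on the edges in that direction. Hence imposing translation invariance reduces any field to finitely many data — its value at one fixed vertex $v_0$, respectively one number per direction — and what remains is to intersect this finite-dimensional fibre with the invariants of the stabiliser of $v_0$, which is precisely the group $H$ from (\ref{group_H}). Concretely I would then proceed space by space, exactly as in the Bianchi I discussion of Section \ref{section:5_compact}, and this is also the form in which Lemma \ref{lemma5} is being applied.

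First, for $C^\infty(\gamma_v)$: translation invariance forces $f$ to be constant on $\gamma_v$, and since $H$ merely permutes vertices it fixes any constant; the $\mathbb{G}^\gamma$-part of the elements of $\Phi^\gamma_I$ acts trivially on scalar lattice functions (this is the content of the covariance property (\ref{covariance_Discrete})). Hence $C^\infty(\gamma_v)_{\Phi^\gamma_I}=\{\operatorname{const}\}$. Second, for $C^\infty(\gamma_v,su(2))$: translation invariance forces $f(v)\equiv\xi$ for a fixed $\xi\in su(2)\cong\mathbb{R}^3$, and a rotation $O\in H$ acts on such a constant field by conjugation $\xi\mapsto\tilde O\,\xi\,\tilde O^{-1}$ with $\pi(\tilde O)=O$, i.e. by the adjoint covariance $\mathbb{G}(O)G=\operatorname{Ad}_O G$. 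Invariance under the $\pi$-rotation around the $k$-th coordinate axis kills the two components of $\xi$ transverse to $\hat n_k$; running over $k=1,2,3$ forces $\xi=0$, so $C^\infty(\gamma_v,su(2))_{\Phi^\gamma_I}=\{0\}$. Third, for the edge space $\prescript{1}{}{C}^\infty(\gamma)$ from (\ref{eq331}), i.e. $f\colon\gamma_e\to\mathbb{R}$ with $f(e^{-1})=-f(e)$: translation invariance reduces $f$ to one number $f_l$ per direction $l\in L$, subject to $f_{-l}=-f_l$; the $\pi$-rotation $R_{\hat n_k}(\pi)\in H$ maps an edge of direction $\hat n_j$ to one of direction $-\hat n_j$ for $j\neq k$, whence $f_{\hat n_j}=f_{-\hat n_j}=-f_{\hat n_j}=0$, and combining $k=1,2,3$ gives $f\equiv 0$.

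The only genuinely delicate point — and the step I would write out carefully — is the bookkeeping in the $su(2)$-valued case: one must verify that the element of $\Phi^\gamma_I=\Phi^\gamma_{AB,E_o}(\Psi_I)$ lifting $(0,O)\in\Psi^\gamma_I$ acts on the Gauss datum through the adjoint action of the $SU(2)$-lift $\tilde O$ (combining the $\mathbb{D}^\gamma_{AB}$-permutation of vertices with the $\mathbb{G}^\gamma$-conjugation as in (\ref{ActionOfGaussOnLattice})), and that the sign ambiguity $\tilde O\leftrightarrow-\tilde O$ in the lift is immaterial since the two induce the same conjugation. Everything else is routine linear algebra on $\mathbb{R}^3$ and on the six-element set $L$, the discrete echo of the invariant-tensor counts already established for Bianchi I in Section \ref{section:5_compact}.
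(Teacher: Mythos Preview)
Your proposal is correct and follows exactly the approach the paper indicates: the paper does not spell out a separate proof for this corollary but simply says ``the calculation is analogous to the torus case before,'' i.e.\ the continuum Bianchi~I computation in Section~\ref{section:5_compact}, and your argument is precisely the discrete version of that reasoning --- transitivity of the translation subgroup reduces everything to a single vertex/edge, after which the stabiliser $H$ kills the $su(2)$- and vector-type invariants while leaving only constants among scalars. If anything, you supply more detail than the paper itself (in particular the careful remark about the adjoint action and the sign ambiguity of the $SU(2)$ lift, which the paper leaves implicit).
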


We are left with one scalar constraint that we can choose as $C(v_0)$ where $v_0$ is any vertex due to translational invariance and choosing a homogeneous lapse function $\mathcal{N}(v)=\mathcal{N}$. We emphasise that in our setting $\mathcal{N}$ is constant. On the reduced phase space, we obtain the corresponding generator simply via theorem \ref{theorem:2} of symmetry restriction of dynamics as $C(v_0)|_{\overline{\M}_{\gamma}^{ I}}$. We have to pay special attention to the fact that the definition of $C(v_0)$ in (\ref{DiscreteScalarConstraintThiemann}) involves Poisson brackets which are to be taken on the space $\M_\gamma$. However, one can see that lemma \ref{lemma:1a} \& \ref{lemma:3} apply, i.e. Poisson brackets of invariant functions with non-invariant ones can be reduced to the Poisson brackets on $\overline{\M}_{\gamma}^{ I}$:\\

\begin{lm}
The functions $C_E[\mathcal{N}]:= \sum_v C_E(v)\mathcal{N}(v)$, $C_L[\mathcal{N}]$, $V^\epsilon[\sigma]$ and $K$ from (\ref{DiscreteScalarConstraintThiemann}) are invariant under action of $\Phi^\gamma_I=\Phi^\gamma_{AB,E_o}(\Psi^\gamma_I)$ from (\ref{group_graphI}).
\end{lm}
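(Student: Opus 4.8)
The plan is to use that $\Phi^\gamma_I=\Phi^\gamma_{AB,E_o}(\Psi^\gamma_I)$ is generated by elements of the form $\G^\gamma(O(\psi))\,\D_{AB}^\gamma(\psi)$ with $\psi=(m,O)\in\Psi^\gamma_I$, where the compensating factor $O(\psi)$ is itself an $SO(3)$ gauge transformation. Hence it suffices to prove separately that each of $C_E[\mathcal N]$, $C_L[\mathcal N]$, $V^\epsilon[\sigma]$ and $K$ is invariant under (i) all $SO(3)$ gauge transformations $\G^\gamma(g)$ acting as in (\ref{ActionOfGaussOnLattice}), and (ii) all residual graph-preserving diffeomorphisms $\D_{AB}^\gamma(\psi)$, $\psi\in\Psi^\gamma_I$.

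For (i): each of these functions is built solely from $h(e)$, $P(e)$, the volume $V^\epsilon[\sigma]$, Poisson brackets, and traces in which every $\mathfrak{su}(2)$ index is contracted, either inside a trace or against $\epsilon_{IJK}$. Under $\G^\gamma(g)$ one has $h(e)\mapsto g(e(0))h(e)g(e(1))^{-1}$ and $P(e)\mapsto\operatorname{Ad}_{g(e(0))}P(e)$, while a plaquette holonomy $h(\square_{v,ij})$, being a closed loop based at $v$, transforms by conjugation with $g(v)$; therefore every trace entering $C_E(v)$ and $C_L(v)$ is invariant. The argument of each square root in $V^\epsilon[\sigma]$ is the totally antisymmetrised contraction $\epsilon_{IJK}P^I(i,v)P^J(j,v)P^K(k,v)$ of three fluxes at a common vertex, which is $\operatorname{Ad}_{SO(3)}$-invariant and non-negative, so $V^\epsilon[\sigma]$ is invariant. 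Since $\G^\gamma(g)$ is a symplectomorphism, all Poisson brackets occurring in $C_E$ (the bracket $\{h^\dagger(k,v),V^\epsilon[\sigma]\}$), in $K=\{C_E[\sigma],V^\epsilon[\sigma]\}$, and in $C_L$ (the brackets $\{h,K\}$, $\{h,V\}$) are brackets of already-invariant functions and hence invariant. The $\pm 1$ ambiguity in the $SU(2)$ lift of $O(\psi)$ is irrelevant, since only traces and adjoint actions enter.

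For (ii): a $\psi=(m,O)\in\Psi^\gamma_I$ acts on $\M_\gamma$ by the relabelling $h(l,v)\mapsto h(Ol,O(v+m))$, $P(l,v)\mapsto P(Ol,O(v+m))$, and carries the plaquette $\square_{v,ij}$ to $\square_{\psi_v(v),Oi,Oj}$. The crucial observation is that $O\in H\subset SO(3)$ preserves orientation, so the sign weight obeys $\epsilon(Oi,Oj,Ok)=(\det O)\,\epsilon(i,j,k)=\epsilon(i,j,k)$; consequently each local summand of $C_E(v)$, resp.\ $C_L(v)$, is sent precisely to the corresponding summand of $C_E(\psi_v(v))$, resp.\ $C_L(\psi_v(v))$, i.e.\ the local densities are covariant. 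Summing against the constant lapse $\mathcal N$ and using that $\psi_v\colon\gamma_v\to\gamma_v$ is a bijection gives $\D_{AB}^\gamma(\psi)_*C_E[\mathcal N]=C_E[\mathcal N]$, and likewise for $C_L[\mathcal N]$; the volume $V^\epsilon[\sigma]$ is a sum over all vertices and is therefore manifestly invariant under this permutation, and $K=\{C_E[\sigma],V^\epsilon[\sigma]\}$ is a bracket of two $\D_{AB}^\gamma(\psi)$-invariant functions with $\D_{AB}^\gamma(\psi)$ symplectic, hence invariant. The main obstacle is the bookkeeping in step (ii) for the local terms $C_E(v)$ and $C_L(v)$ under the rotation part $H$: one must track simultaneously the permutation of the direction labels $L$, the accompanying rotation of the $\mathfrak{su}(2)$-valued holonomy and flux data through the compensating $SU(2)$ gauge element, and the orientation factors $\epsilon(i,j,k)$ together with the plaquette orientations, and check that these conspire to reproduce $C_E(\psi_v(v))$ rather than a sign-flipped or inequivalently relabelled expression. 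Once orientation-preservation of $H$ is invoked this reduces to routine verification; everything involving Poisson brackets follows formally from $\Phi^\gamma_I$ acting by symplectomorphisms, and the $SO(3)$-gauge invariance together with the invariance of the globally summed $V^\epsilon[\sigma]$ and $K$ is essentially immediate.
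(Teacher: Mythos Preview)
Your overall strategy---establishing full $SO(3)$ gauge invariance and graph-diffeomorphism covariance separately, and then composing---is correct and slightly more structural than the paper's proof, which instead computes the combined action of each $\Phi^\gamma_{AB,E_o}$ generator directly: it writes out the explicit signs picked up by $P^I$ and $h$ under $\tilde R_k$-conjugation together with the edge permutation $\psi_e$, and then verifies that the $\epsilon(i,j,k)\,\epsilon_{IJK}$ contraction in $V^\epsilon[\sigma]$ and the closed-loop traces in $C_E(v)$ are unchanged. Your use of $\det O=1$ for $O\in H$ to control $\epsilon(Oi,Oj,Ok)$ is exactly the same key observation.

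There is, however, a genuine slip in step (i): you assert that the inner brackets $\{h^\dagger(k,v),V^\epsilon[\sigma]\}$, $\{h(i,v),K\}$ and $\{h(k,v),V\}$ are ``brackets of already-invariant functions''. They are not---$h(e)$ is not gauge invariant, it transforms by left and right multiplication. What is true is that, since $\G^\gamma(g)$ is a symplectomorphism and $V^\epsilon[\sigma]$, $K$ \emph{are} invariant, one has $\G^\gamma(g)_*\{h^\dagger(k,v),V\}=\{g(v{+}k)h^\dagger(k,v)g(v)^{-1},V\}=g(v{+}k)\{h^\dagger(k,v),V\}g(v)^{-1}$, so the product $h(k,v)\{h^\dagger(k,v),V\}$ conjugates by $g(v)$ alone, and together with the plaquette factor the full trace is invariant (and similarly for the three factors in $C_L(v)$). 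Your conclusion stands, but this particular justification must be replaced by the covariance argument.
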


\begin{proof}
The translational invariance of $(\mathbb{Z}_{N_1}\times \mathbb{Z}_{N_2}\times \mathbb{Z}_{N_3})$ is incorporated due to the sums over all vertices $v\in\gamma_v$. Thus, it suffices to restrict our attention to one vertex $v$ and the rotations in $H$ around this vertex.\\
Let us notice that for $R_{e_K}(\pi)$ the associated $\psi_e$ maps as follows: $\psi_e : (v,l) \mapsto (v, (-1)^{\delta^l_k+1} l)$ and
\begin{align}\label{invariant_proof}
\G^\gamma(R_{e_k}(\pi) ) P^I((v,l)) = (-1)^{\delta^I_k+1}P^I(\psi_e(v,l)),\quad
\G^\gamma (R_{e_k}(\pi) ) h(e) = \tilde{R} h(\psi_e(e)) \tilde{R}_k^{-1}
\end{align}
Thus, we see that with $\epsilon(i,j,k)=sgn(\det(\dot{e}_i,\dot{e}_j,\dot{e}_k)$  the contribution to the volume $\epsilon(i,j,k)\times $ $\times P^I(e_i)P^J(e_j)P^K(e_k)\epsilon_{IJK}$ remains invariant for all of $H$.\\
Also, every closed loop of holonomies $Tr(h(\Box))$, will cancel the resulting $\tilde{R}_k$ terms from (\ref{invariant_proof}) and thus $\G^\gamma Tr(h(\Box)) =Tr(h(\psi_e \Box))$. As in $C_E(v)$ we sum over all permutation of loops at $v$ it is easy to see that it remains invariant.\\
The same argumentation can be extended to $C_L$ and $K$ as well.
\end{proof}

As $V[\sigma]$ and $C_E$ are $\Phi^\gamma_I$-invariant functions on $\M_\gamma$, we get
\begin{align}
&C_E[\mathcal{N}]|_{\overline{\M}_{\graph}^{ I}} =\\
&\sum_v  \frac{-\mathcal{N}(v)}{2\kappa^2\beta}\sum_{ijk}\epsilon(i,j,k){\rm Tr}\bigg(
 \left[h(\square_{ij})|_{\overline{\M}_{\graph}^{ I}} - h^\dagger(\square_{ij})|_{\overline{\M}_{\graph}^{I}}\right]  h(k,v)|_{\overline{\M}_{\graph}^{I}} \{h^\dagger(k,v)|_{\overline{\M}_{\graph}^{I}},V^\epsilon[\sigma]|_{\overline{\M}_{\graph}^{I}}\}_{\overline{\M}_{\graph}^{I}}\bigg)\nonumber\\
&=N_1N_2N_3 \mathcal{N} \frac{2}{\kappa}\mu_o\left(\sqrt{\frac{p_2p_3}{p_1}}\sin(c_2\mu_o)\sin(c_3\mu_o)+ {\rm cyclic}\right)
\end{align}
because $V^\epsilon[\sigma]$ is invariant ($V^\epsilon[\sigma]|_{\overline{\M}_{\graph }^{I}}=\sqrt{p_1p_2p_3}$) and the result is independent of $v$.
But we know that $C_E$ is invariant under $\Phi^\gamma_I$ thus
\begin{align}
K|_{\overline{\M}_{\graph}^{I}}&=\{C^E|_{\overline{\M}_\graph^I}[\sigma],V^\epsilon[\sigma]|_{\overline{\M}_{\graph}^{I}}\}_{\overline{\M}_{\graph}^{I}}=\\
&=
-N_1N_2N_3 \frac{2}{\kappa}\mu_o^2\frac{\kappa\beta}{4}\left( \cos(c_1\mu_o) \left[
p_2 \sin (c_2 \mu_o))+p_3 \sin (c_3 \mu_o)
\right]+{\rm cylic}\right)
\nonumber
\end{align}
Moreover, function $K$ is also invariant. Hence all these Poisson brackets can be easily evaluated (using $\mu_o^3=N_1N_2N_3$)
\begin{align}
C_L[\mathcal{N}]|_{\overline{\M}_\graph^I} &=\sum_v \mathcal{N}(v)\; \frac{8(1+\beta^2)}{\kappa^4\beta^7}\sum_{ijk}\epsilon(i,j,k){\rm Tr}\bigg(
 h(i,v)|_{\overline{\M}_\graph^I} \{h^\dagger(i,v)|_{\overline{\M}^\graph_I},K|_{\overline{\M}^I_\graph}\}_{\overline{\M}^I_\graph} \; \times
\\
&\hspace{60pt}\times\;  h(j,v)|_{\overline{\M}^I_\graph} \{h^\dagger(j,v)|_{\overline{\M}^I_\graph},K|_{\overline{\M}^I_\graph}\}_{\overline{\M}^I_\graph} \,
 h(k,v)|_{\overline{\M}^I_\graph} \{h^\dagger(k,v)|_{\overline{\M}^I_\graph},V^\epsilon[\sigma]|_{\overline{\M}^I_\graph}\}_{\overline{\M}^I_\graph}
\bigg)\notag\\
&=-\frac{(1+\beta^2)}{\kappa \beta^2 }\frac{\mathcal{N}}{2\mu_o^2} \left[
\sqrt{\frac{p_2p_3}{p_1}}\sin(c_2\mu_o)\sin(c_3\mu_o)A(c_3,c_1)A(c_1,c_2)+ {\rm cyclic}
\right]\nonumber
\end{align}
where $A(c_I,c_J) := \cos(c_I \mu_o) + \cos(c_J \mu_o)$.
We end up with the following symmetry restricted scalar constraint containing all information of the dynamics for the discrete Bianchi I system:
\begin{equation}
C^\epsilon[\mathcal{N}]|_{\overline{\M}_\gamma^I}=
\frac{2\mathcal{N}}{\kappa \mu_o^2 \beta^2}\sqrt{\frac{p_2p_3}{p_1}}\sin(c_2 \mu_o)\sin(c_3 \mu_o)\left[\beta^2-(1+\beta^2)\frac{A(c_3,c_1)A(c_1,c_2)}{4}\right] + {\rm cyclic}
\end{equation}
A similar version to this regularisation had already been under active investigation in the earlier papers \cite{GM19a,GM19b}, albeit using instead of $\mu_o$ a slight modification, the $\bar{\mu}$-scheme which can not be derived via our restriction method.\\
In our setting, for any observables we are interested in all Poisson brackets can be computed on the symmetry restricted level.

\subsection{Symmetry restricted, discrete FLRW model}
\label{s53_FRW}

Finally, we will specialise to the case which closely resembles a discretisation of a spatially isotropic system, therefore reminiscent to the models of Friedmann-Lema\^{i}tre-Robertson-Walker as we discussed in section \ref{s422}. Once again, the system will be described by a single degree of freedom, related to the scale factor.\\
For this purpose, we assume that $N_1=N_2=N_3=N$ and in addition to $H$ group we consider also rotation by $\pi/2$ around the main axes:
\begin{align}
\Psi^\graph = \Psi^\gamma_I \times H', \quad H'=\{R_{e_k}(n\pi/2)\; \colon \; k=1,2,3,\; n=0,1,2,3\}
\end{align}
Therefore, we can directly use our results from the previous section of discretised Bianchi I and perform a {\it second} symmetry restriction on this phase space with respect to permutation of axes:

\begin{fact}
The points of $\overline{\M}_\gamma^I\subset \M_{\gamma}$ that are invariant under the group $\Phi^\graph_{AB,E_o}(\Psi^\graph)$ are
\begin{equation}
(P(e),h(e))=(\mu_o^2 p\tau_l,e^{i \mu_o c\tau_l})
\end{equation}
and we denote their space by $\overline{\M}_\gamma^{FRW}$.
\end{fact}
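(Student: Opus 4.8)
The plan is to reuse the bijection of Fact~\ref{fact:6_transitivity} in exactly the same way it was used for $\overline{\M}_\gamma^I$, now with the enlarged group $\Psi^\graph = \Psi^\gamma_I \times H'$. Since we have already parametrised $\overline{\M}_\gamma^I$ by $(p_l,c_l)_{l=1,2,3}$, the only thing left is to determine the action of the residual transformations $\Psi^\graph/\Psi^\gamma_I$ on this reduced phase space and impose invariance. First I would note that $\Psi^\graph_I$ is a normal subgroup of $\Psi^\graph$ (the $90^\circ$ rotations normalise the translation-plus-$H$ subgroup on the cubic lattice), so the remark on reduction in stages applies: $\overline{\M}_\gamma^{FRW} = \{ m \in \overline{\M}_\gamma^I : \eta(m) = m \ \forall \eta \in \Psi^\graph/\Psi^\gamma_I\}$, and moreover $\Psi^\graph/\Psi^\gamma_I \cong S_3$, the permutation group of the three coordinate axes, exactly as in section~\ref{s422}.

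Next I would compute how a generator of $S_3$ --- say $R_{e_k}(\pi/2)$ --- acts on $(P(e),h(e))$ for $e=(v,l)\in\overline{\M}_\gamma^I$, using the already-established formula $\G^\gamma(\tilde R)\D^\gamma_{AB}(\psi)$ from (\ref{final_sym_group_on_Ab-lattice}), with $E_{oa}^I = \delta^I_a$ and the $SU(2)$-lift $\tilde R_{e_k}(\pi/2) = \exp(\tfrac{\pi}{2}\,\hat n_k\cdot\vec\tau^{(1/2)})$. A $90^\circ$ rotation around axis $k$ cyclically permutes the other two edge directions and, on the Lie-algebra side, conjugation by $\tilde R_{e_k}(\pi/2)$ permutes $\tau_i \leftrightarrow \tau_j$ correspondingly; hence on $\overline{\M}_\gamma^I$ it acts by the induced permutation of the pairs $(p_l,c_l)$. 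This is precisely the $S_3$-action on $(p_1,c_1),(p_2,c_2),(p_3,c_3)$ noted in section~\ref{s422}. Imposing invariance then forces $p_1=p_2=p_3=:p$ and $c_1=c_2=c_3=:c$, so that $(P(e),h(e)) = (\mu_o^2 p\,\tau_l, e^{i\mu_o c\,\tau_l})$, which is the claimed parametrisation of $\overline{\M}_\gamma^{FRW}$.

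The only genuinely delicate point --- and the step I would treat most carefully --- is the lift ambiguity: the $SU(2)$-lift $\tilde R_{e_k}(\pi/2)$ of the $SO(3)$ rotation $R_{e_k}(\pi/2)$ is defined only up to a global sign, and one must check that the resulting symmetry transformation on $\M_\gamma$ (hence on $\overline{\M}_\gamma^I$) is independent of this choice, so that $\Phi^\graph_{AB,E_o}(\Psi^\graph)$ is well-defined as a group of symplectomorphisms. This follows because the lift enters the action (\ref{ActionOfGaussOnLattice}) only through conjugation ($P\mapsto g P g^{-1}$) and through the combination $g(e(0)) h(e) g(e(1))^{-1}$ at coincident endpoints on the relevant invariant configurations, in both of which the global sign cancels --- exactly the mechanism already invoked in the proof of the Bianchi~I lemma and in section~\ref{s321}. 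Given this, the rest is the routine solution of the invariance equations $p_i=p_j$, $c_i=c_j$, and one records that the residual identification $c \sim c + 4\pi/\mu_o$ of the Bianchi~I case survives, together with the remaining $O(3)$ transformation $\G(-\mathbb{I})$ acting as $(p,c)\mapsto(-p,-c)$, consistently with the continuum FLRW description of section~\ref{s422}.
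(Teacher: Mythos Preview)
Your proof is correct and follows essentially the same approach as the paper: start from the already-established Bianchi~I parametrisation of $\overline{\M}_\gamma^I$ and impose the additional $\pi/2$-rotation invariance, which forces $p_l=p$, $c_l=c$. The paper's own proof is two lines and simply asserts that ``implementing invariance with respect to rotations by $\pi/2$'' gives this conclusion; your version spells out the reduction-in-stages structure, the identification $\Psi^\graph/\Psi^\gamma_I\cong S_3$, and the sign-ambiguity check for the $SU(2)$ lift --- all of which are implicit in the paper but not written out.
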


\begin{proof}
We know that the points on $\overline{\M}^\gamma_I$ are parametrised by
\begin{equation}
(P(v,l),h(v,l))=(\mu_o^2 p_l\tau_l,e^{i \mu_o c_l\sigma_l})
\end{equation}
and implementing invariance with respect to rotations by $\pi/2$ shows that
\begin{align*}
p_l=p,\quad c_l=c
\end{align*}
\end{proof}
The symplectic form is a further reduction of symplectic form $\omega|_{\overline{\M}^I_\gamma}$ from the discretised Bianchi I case, i.e.
\begin{equation}
\frac{2}{\kappa\beta}\sum_l dp_l\wedge dc_l = \frac{2}{\kappa\beta}\;3\;dp\wedge dc
\end{equation}
thus finally we obtain exactly the same bracket, as in the continuum:
\begin{align}
\{p,c\}_{\overline{\M}^{FRW}_\graph}=\frac{\kappa\beta}{6}
\end{align}

In the previous section, we saw that we were left with a only single constraint from the Bianchi I case which remained non-trivial. Understood as maps to the dual of the Lie algebra in the sense of lemma \ref{lemma5}, we are here left with one scalar constraint that we can choose as $C(v_0)$ where $v_0$ is some arbitrary vertex and hence $C_E[\mathcal{N}]= C(v_0) \sum_{v\in\gamma_v}\mathcal{N}(v) N$.\\

Alternatively, invoking compactness of the symmetry group and (\ref{remaining_constraints_gphi}), we can also compute the constraint directly by restriction from Bianchi I case as we can see that all constituents are still $\Phi^\gamma_{AB,E_o}(\Psi^\gamma)$-invariant functions:
\begin{align}
C_E[\mathcal{N}] |_{\overline{\M}^{FRW}_\gamma} &= C_E[\mathcal{N}]|_{\overline{\M}^I_\gamma} |_{\overline{\M}^{FRW}_\gamma}=
\frac{6\mathcal{N}}{\kappa}\sqrt{p}\frac{\sin(c \mu_o)^2}{\mu_o^2 }\\
C_L[\mathcal{N}] |_{\overline{\M}^{FRW}_\gamma} &= C_L[\mathcal{N}]|_{\overline{\M}^I_\gamma} |_{\overline{\M}^{FRW}_\gamma}=-\frac{1+\beta^2}{\beta^2}\frac{6\mathcal{N}}{\kappa}\sqrt{p}\frac{\sin(2c \mu_o)^2 }{4\mu_o^2}
\end{align}

Plugging all together we obtain the final form of the restricted constraint, reinforcing the result of \cite{DMY:08,DL:17a,DL:17b}:
\begin{align}\label{final_effective_Hamiltonian}
C^\epsilon [\mathcal{N}] |_{\overline{\M}^{FRW}_\gamma}=\frac{6\mathcal{N}}{\kappa}\sqrt{p}\left[
\frac{\sin(c \mu_o)^2}{\mu_o^2 }-\frac{1+\beta^2}{\beta^2}\frac{\sin(2c \mu_o)^2 }{4\mu_o^2}
\right]
\end{align}
We also see that all Poisson brackets can be computed already on the reduced level.\\

\begin{remark}
The discrete set-up bears special interest for semiclassical investigations of a theory for quantum gravity on the lattice. If one can show that some discretised quantum field theory features coherent states which follow a semiclassical trajectory, then this trajectory is exactly the one mapped by the reduced Hamiltonian on the reduced phase space given by a constraint
\begin{equation}
C_{v_0}|_{\overline{\M}_{FRW}^\graph}=0
\end{equation}
\end{remark}

\begin{remark}
This conclusion was obtained already in \cite{HL:19} although by direct calculations. We showed that it follows from general arguments and can be extended also to Bianchi I.
\end{remark}

\section{Conclusion}
\label{section:6}

In this manuscript, we have put forward the tools for {\it symmetry restriction}. It can prove useful when interested in those solutions to classical systems, which obey certain symmetries, i.e. points in phase space which are invariant with respect to the action of some symmetry group. Particular examples for its applications have been presented in the second half of this article for the case of general relativity. We will now reiterate the main statements of the paper and summarise further noteworthy results obtained along the way. Finally, we will close with an outlook to quantum theories.

\subsection{Summary: Symmetry restriction for classical theories}

We have investigated the properties of symplectic manifolds $(\mathcal{M}, \omega)$ on which a symmetry group $\Phi$ acts via symplectomorphisms. Those points of $\M$ which remain invariant under the action of $\Phi$ form the $\Phi$-invariant set $\overline{\M}$. We were especially interested in cases, where the symplectic form $\omega$ can be restricted to a non-degenerate symplectic form $\omega|_{\overline{ \M}}$ on a submanifold $\overline{\M}$. The main result of this paper is encapsulated in theorem \ref{theorem:2}, the {\it symmetry restriction of dynamics}: consider a $\Phi$-invariant Hamiltonian $H$ whose flow generates dynamics $H|_{\overline{\mathcal M}}$ and its restriction to $\overline{\M}$. Under the above conditions, the evolution of any phase space point in $\overline{\M}$ due to $H$ and $\omega$ agrees with the evolution due to  $H|_{\overline{ \mathcal M}}$ and $\omega|_{\overline{ \M}}$. Therefore, allowing to restrict our attention to the simplified setting of the symmetry restricted manifold.\\
That this indeed eases computations has been demonstrated for general relativity: we revisited two incarnations of its Hamiltonian formulation and recalculated several cosmological scenarios employing the framework of symmetry reduction. An important distinction had to be made between the compact and non-compact cases, as the latter ones typically do not allow a reduction of the symplectic form due to appearing infinities.\\
A further application of symmetry restriction has been found in the case of discretised gravity on a graph: we truncated the continuum phase space of the Ashtekar-Barbero formulation of GR to  a subset which is only non-trivial along the edges of a finite graph (and its dual cell complex). After having introduced Thiemanns regularisation of the scalar constraints, we could show that symmetry restriction applies and thus the full dynamics of discrete gravity for those phase space points describing discrete homogeneity can be encapsulated on a reduced submanifold which is parametrised by the scale factors and their momenta. This confirmed the classical part of the conjecture of (\ref{final_effective_Hamiltonian}) being an effective Hamiltonian put forward in \cite{DL:17a}.\\
 
Over the course of this paper, many results have been obtained which we deem noteworthy and thus highlight them separately:
\begin{itemize}
\item The Poisson bracket between a $\phi$-invariant function and {\it any} other function can be reduced to the Poisson bracket of the restricted functions on the $\phi$-invariant submanifold (which basically gives the theorem "symmetry restriction of dynamics").
\item If $\Phi\subset G$, then symmetry restriction can be understood as a complementary procedure to symplectic reduction: points in phase space employing symmetries  typically do not allow symplectic reduction (as $G$ needs to act free). However, for these points we can first perform symmetry restriction and then symplectic reduction of the remaining group without loss of information.
\item The constraint algebra (even for nongroup constraints such as GR) is typically kept faithfully under symmetry restriction.
\item For the Hamiltonian formulation of GR, we have strengthened the understanding of the gauge which is not only the Bergman-Komar group (the later one agrees with the component of identity of $\Diff(\sigma)$ on shell). We have shown that indeed -- as expected -- the gauge transformations extend also to the disconnected part of $\Diff(\sigma)$. 
\item In the Ashtekar-Barbero formulation, there are actually many ways to formulate the  vector constraint (as the generator of spatial diffeomorphisms): by adding arbitrary (possibly phase-space dependent) smearing of the Gauss constraint $G$. Of course, this fact is already well-known (e.g. the standard version of $D'$ in the literature and the actual generator $D$ differ by $G[\Gamma]$), but we highlight this fact as a specific choice of smearing allows to lift diffeomorphisms from $\M_{ADM}$ to $\M_{AB}$ such that symmetry restriction of $\Phi$ on both phase spaces commutes with symplectic reduction $\M_{AB}\mapsto \M_{ADM}$.
\item We recalled the well-known fact that symmetry restriction on non-compact Bianchi I spaces (and FLRW) is not possible due to no-well defined symplectic structure. For compact case it is possible. The dependence on the fiducial volume is related to a particular choice of coordinate system on the restricted phase space. The symmetry under rescaling is no longer a gauge transformation, as it was for the noncompact Bianchi I.
\item When discretisting the Ashtekar-Barbero formulation, a subtlety arises concerning the Gauss transformations. Classically, the AB variables are typically understood as a $SO(3)$ gauge theory, however one may also extend it to all $O(3)$ transformations by adding the points of $\det E <0$ to the phase space. Now, in LQG one typically lifts $SO(3)$ to $SU(2)$.
However, on nontrivial manifolds not all $SO(3)$ gauge transformations allow a lift to $SU(2)$.
On finite graphs, we showed how one can lift these additional $SO(3)$ transformations together with $O(3)$ gauge transformations and we showed  hat they encode information of the topology of the discretised manifold.
\item Along the lines of \cite{Thi:00(QSDVII)} we could equip the discrete phase space with a Poisson-bracket  from continuum and reduce it to the symmetry-invariant submanifold of Bianchi I (and FLRW I). The Poisson bracket derived in this way differs from the one postulated in \cite{LS:19}.
\item Of special importance is the choice of the cell complex for the case of gravity on a graph. In order to apply symmetry restriction, the graph itself needs to remain invariant under the action of the subgroup of $\Diff(\sigma)$ of interest. In general it is a highly non-trivial task to find for a given (discrete) symmetry a subgraph obeying this property. Thus, it is for example non-trivial to employ symmetry restriction to discrete systems mirroring spherical symmetries, such as black holes or $k=+1$ cosmology \cite{ADL:19,LW:19}.
\item Symmetry restriction of dynamics is possible for any discretisation of the constraints which obeys a version of covariance. Thus not only the Thiemann regularisation (employed in this paper) but also the one put forward in \cite{ALM:15} could be used.
\end{itemize}

\subsubsection{Outlook: Semiclassical phase space and the Hamiltonian}

In the papers \cite{DMY:08,DL:17a,DL:17b} a new effective Hamiltonian for Loop Quantum Cosmology has been proposed. The idea is that the evolution of a semiclassical state (peaked sharply on FLRW cosmology) is governed by said effective Hamiltonian. The latter one is obtained as the expectation value of the LQG Hamiltonian operator in the coherent states peaked at the semi-classical point of the phase space. Moreover, the coherent states are labelled by an additional parameter $\lambda$ and the smaller it is the more classical the system will behave. \\

In fact, this classicality behaviour can be only achieved in the relative sense: for it, the system needs to be rescaled. If we divide the system into position and momenta one of the option (so called geometric optics approximation) is to rescale momenta. 
\begin{equation}
(x,p)\rightarrow (x,\lambda^{-1}p)
\end{equation}
Usually this means that the system evolve with different speed. We thus need to rescale also the time
\begin{equation}
t\rightarrow \lambda^\beta t
\end{equation}
where $\beta$ might be any real number (dependent on the explicit form of the Hamiltonian). Under certain conditions, it turns out that the coherent states evolve according to the classical dynamics of the effective Hamiltonian, i.e. their expectation values in properly rescaled quantisation of suitably nice observables $\hat{A}_\lambda$ follows the semi-classical trajectory.

The similar property is the following. The expectation value is approximated by the rules for quantisation of suitable operators (that allow semiclassical expansion) \cite{GS:94}
\begin{enumerate}
\item $\langle \hat{A}_\lambda\hat{B}_\lambda\rangle=AB+O(\lambda)$ up to higher order in semiclassical parameter
\item In the case of the commutator the formula above give zero and we can compute the higher order
\begin{equation}
\langle [\hat{A}_\lambda,\hat{B}_\lambda]\rangle=-i\lambda\{A_\lambda,B_\lambda\}+O(\lambda^2)
\end{equation}
\item $\langle\sqrt{\hat{A}_\lambda}\rangle=\sqrt{A}_\lambda+O(\lambda)$
\end{enumerate}
Of course the above formulas are not universally true and need to be proven in a given situation. 

These results allows us to conjecture that the evolution of algebraic LQG is governed by certain classical mechanical system on the phase space that under standard quantisation procedure produces the Ashtekar-Lewandowski Hilbert space\cite{AL:94}. We will substantiate these steps in a future publication.

\vspace{1cm}
\noindent\textbf{Acknowledgements:} \\
The authors want to thank Andrea Dapor for collaboration in the early phases of the project and for many fruitful and helpful discussions on all aspects of this paper.\\
Further, we thank Benjamin Bahr, Muxin Han, Jerzy Lewandowski for useful comments.\\
W.K. acknowledges the support by Polish National Science Centre (NCN) Sheng-1 with Grant No. 2018/30/Q/ST2/00811.
K.L. acknowledges support by the German Research Foundation (DFG) under Germany\'s
Excellence Strategy - EXC 2121 ``Quantum Universe'' - 390833306. This work was partially funded by DFG-project BA 4966/1-2.

\newpage
%
%
	\bibliographystyle{unsrt}
\bibliography{References}{}

\begin{thebibliography}{10}

\bibitem{Hahn:2002}
T.~Hahn.
\newblock {\em {International Tables for Crystallography, Volume A: Space Group
  Symmetry, International Tables for Crystallography, A (5th ed.)}}.
\newblock New York: Springer-Verlag, 2002.
\newblock doi:10.1107/97809553602060000100, ISBN 978-0-7923-6590-7.

\bibitem{Schwarzschild:1916}
K.~Schwarzschild.
\newblock {On the gravitational field of a mass point according to Einstein's
  theory}.
\newblock {\em Sitzungsber. Preuss. Akad. Wiss. Berlin (Math. Phys. )},
  1916:189--196, 1916.

\bibitem{Bianchi:1898}
L.~Bianchi.
\newblock { Sugli spazii a tre dimensioni che ammettono un gruppo continuo di
  movimenti. (On the spaces of three dimensions that admit a continuous group
  of movements.)}.
\newblock {\em Soc. Ital. Sci. Mem. di Mat.}, 11:267, 1898.

\bibitem{Fri:22}
A.~Friedmann.
\newblock {\"Uber die Kr\"ummung des Raumes}.
\newblock {\em Zeitschrift fur Physik A}, 10, 1922.

\bibitem{Lem:33}
G.~Lema{\^i}tre.
\newblock {L'Univers en expansion}.
\newblock {\em Annales de la Societe Scientifque}, A53, 1933.

\bibitem{Rob:35}
H.~Robertson.
\newblock {Kinematics and world structure}.
\newblock {\em Astrophysical Journal}, 82, 1935.

\bibitem{Wal:37}
A.G. Walker.
\newblock {On Milne's Theory of world-structure}.
\newblock {\em Proceedings of the London Mathematical Society}, 42, 1937.

\bibitem{Mey:73}
K.~Meyer.
\newblock {Symmetries and integrals in mechanics}.
\newblock {\em {\it in: Dynamical Systems (Ed: M Peixoto)}, Academic Press},
  pages 259--273, 1973.

\bibitem{MW:74}
J.~Marsden and A.~Weinstein.
\newblock {Reduction of symplectic manifolds with symmetry}.
\newblock {\em Rep. Math. Phys.}, 5:121--130, 1974.

\bibitem{MMOPR:07}
J.~Marsden, G.~Misio\l{}ek, J.~Ortega, M.~Perlmutter, and T.~Ratiu.
\newblock {\em {Hamiltonian Reduction by Stages}}.
\newblock (Lecture Notes in Mathematics), {\it Springer-Verlag}, 2006.

\bibitem{Sni:13}
J.~\'{S}niatycki.
\newblock {\em {Differential geometry of singular spaces and reduction of
  symmetry}}.
\newblock Cambridge University Press, 2013.

\bibitem{Thi:07}
T.~Thiemann.
\newblock {\em {Modern Canonical Quantum General Relativity}}.
\newblock Cambridge University Press, 2007.

\bibitem{Dirac:30}
P.~Dirac.
\newblock {\em {The principles of Quantum Mechanics}}.
\newblock Clarendon Press, 1930.

\bibitem{Dirac:48}
P.~Dirac.
\newblock Quantum theory of localizable dynamical systems.
\newblock {\em Phys. Rev.}, 73:1092--1103, May 1948.

\bibitem{ADM:62}
R.~Arnowitt, S.~Deser, and C.~Misner.
\newblock {The Dynamics of General Relativity. Gravitation: An introduction to
  current research by L Witten (ed)}.
\newblock {\em New York}, pages 227--265, 1962.

\bibitem{Ash:86}
A.~Ashtekar.
\newblock {New Variables for Classical and Quantum Gravity}.
\newblock {\em Phys. Rev. Lett.}, 57:2244--2247, 1986.

\bibitem{Bar:94}
J.~Fernando Barbero~G.
\newblock {Real Ashtekar variables for Lorentzian signature space times}.
\newblock {\em Phys. Rev. D}, 51:5507--5510, 1995.

\bibitem{AQG1}
K.~Giesel and T.~Thiemann.
\newblock {Algebraic Quantum Gravity (AQG). I. Conceptual Setup}.
\newblock {\em Class. Quant. Grav.}, 24:2465--2498, 2007.

\bibitem{Engle:07}
J.~Engle.
\newblock {Relating loop quantum cosmology to loop quantum gravity: Symmetric
  sectors and embeddings}.
\newblock {\em Class. Quant. Grav.}, 24:5777--5802, 2007.

\bibitem{BF:07}
J.~Brunnemann and C.~Fleischhack.
\newblock {On the configuration spaces of homogeneous loop quantum cosmology
  and loop quantum gravity}.
\newblock {\em arXiv}, 9 2007.

\bibitem{Paw:14}
T.~Paw{\l}owski.
\newblock {Observations on interfacing loop quantum gravity with cosmology}.
\newblock {\em Phys. Rev. D}, 92(12):124020, 2015.

\bibitem{DMY:08}
Y.~Ding, Y.~Ma, and J.~Yang.
\newblock {Effective Scenario of Loop Quantum Cosmology}.
\newblock {\em Phys. Rev. Lett.}, 102:051301, 2009.

\bibitem{DL:17a}
A.~Dapor and K.~Liegener.
\newblock {Cosmological Effective Hamiltonian from full Loop Quantum Gravity
  Dynamics}.
\newblock {\em Phys. Lett. B}, 785:506--510, 2018.

\bibitem{DL:17b}
A.~Dapor and K.~Liegener.
\newblock {Cosmological coherent state expectation values in loop quantum
  gravity I. Isotropic kinematics}.
\newblock {\em Class. Quant. Grav.}, 35(13):135011, 2018.

\bibitem{Thi98(QSD)}
T.~Thiemann.
\newblock {Quantum spin dynamics (QSD)}.
\newblock {\em Class. Quant. Grav.}, 15:839--873, 1998.

\bibitem{Thi98(QSDII)}
T.~Thiemann.
\newblock {Quantum spin dynamics (QSD). 2.}
\newblock {\em Class. Quant. Grav.}, 15:875--905, 1998.

\bibitem{HL:19}
Muxin Han and Hongguang Liu.
\newblock {Effective Dynamics from Coherent State Path Integral of Full Loop
  Quantum Gravity}.
\newblock {\em Phys. Rev. D}, 101(4):046003, 2020.

\bibitem{KN:96}
S.~Kobayashi and K.~Nomizu.
\newblock {\em {Foundations of differential geometry - Volume 1\& 2}}.
\newblock Interscience Publishers, New York, 1963.

\bibitem{Nak:03}
M.~Nakahara.
\newblock {\em {Geometry, topology, and physics}}.
\newblock IoP Publishing Ltd, Princeton, 1990.

\bibitem{lang:85}
S.~Lang.
\newblock {\em Differential Manifolds}.
\newblock Springer, 1985.

\bibitem{MR86}
J.~Marsden and T.~Ratiu.
\newblock {Reduction of Poisson manifolds}.
\newblock {\em Lett. Math. Phys.}, 11:161--169, 1986.

\bibitem{Giu95}
D.~Giulini.
\newblock {The Group of large diffeomorphisms in general relativity}.
\newblock {\em Banach Center Publ.}, 39:303--315, 1997.

\bibitem{Giu97}
D.~Giulini.
\newblock {Determination and reduction of large diffeomorphisms}.
\newblock {\em Nucl. Phys. B Proc. Suppl.}, 57:342--345, 1997.

\bibitem{Giu06}
D.~Giulini.
\newblock {Mapping-class groups of 3-manifolds in canonical quantum gravity}.
\newblock pages 161--201, 6 2006.

\bibitem{ChoquetBruhat:69}
Y.~Choquet-Bruhat and R.~Geroch.
\newblock {Global aspects of the Cauchy problem in general relativity}.
\newblock {\em Commun. Math. Phys.}, 14:329--335, 1969.

\bibitem{BS:05}
A.~Bernal and M.~S\'{a}nchez.
\newblock {Further results on the smoothability of Cauchy hypersurfaces and
  Cauchy time functions}.
\newblock {\em Lett. Math. Phys.}, 77:183--197, 2006.

\bibitem{Blohmann:10}
C.~Blohmann, M.~Fernandes, and A.~Weinstein.
\newblock {Groupoid symmetry and constraints in general relativity}.
\newblock {\em Commun. Contemp. Math.}, 15:1250061, 2013.

\bibitem{Ash:91}
A.~Ashtekar.
\newblock {\em {Lectures on nonperturbative canonical gravity}}, volume~6.
\newblock World Scientific Publishing Co. (Singapore), 1991.

\bibitem{AL04}
A.~Ashtekar and J.~Lewandowski.
\newblock {Background independent quantum gravity: A Status report}.
\newblock {\em Class. Quant. Grav.}, 21:R53, 2004.

\bibitem{Simone2019}
R.~Oliveri and S.~Speziale.
\newblock Boundary effects in {General} {Relativity} with tetrad variables.
\newblock {\em arXiv preprint arXiv:1912.01016}, 2019.

\bibitem{ABL:03}
A.~Ashtekar, M.~Bojowald, and J.~Lewandowski.
\newblock {Mathematical structure of loop quantum cosmology}.
\newblock {\em Adv. Theor. Math. Phys.}, 7(2):233--268, 2003.

\bibitem{ST:19}
S.~Schander and T.~Thiemann.
\newblock {Quantum Cosmological Backreactions I: Cosmological Space Adiabatic
  Perturbation Theory}.
\newblock 6 2019.

\bibitem{NST:19}
J.~Neuser, S.~Schander, and T.~Thiemann.
\newblock {Quantum Cosmological Backreactions II: Purely Homogeneous Quantum
  Cosmology}.
\newblock 6 2019.

\bibitem{Nomizu:54}
K.~Nomizu.
\newblock {Invariant affine connections on homogenous spaces}.
\newblock {\em Amer. J. of Math.}, 76:33--65, 1954.

\bibitem{Wang:58}
Hsien-chung Wang.
\newblock On invariant connections over a principal fibre bundle.
\newblock {\em Nagoya Math. J.}, 13:1--19, 1958.

\bibitem{EHT:16}
J.~Engle, M.~Hanusch, and T.~Thiemann.
\newblock {Uniqueness of the Representation in Homogeneous Isotropic LQC}.
\newblock {\em Commun. Math. Phys.}, 354(1):231--246, 2017.
\newblock [Erratum: Commun.Math.Phys. 362, 759--760 (2018)].

\bibitem{EV:19}
J.~Engle and I.~Vilensky.
\newblock {Uniqueness of minimal loop quantum cosmology dynamics}.
\newblock {\em Phys. Rev. D}, 100(12):121901, 2019.

\bibitem{Thi:00(QSDVII)}
T.~Thiemann.
\newblock {Quantum spin dynamics (QSD): 7. Symplectic structures and continuum
  lattice formulations of gauge field theories}.
\newblock {\em Class. Quant. Grav.}, 18:3293--3338, 2001.

\bibitem{LS:19}
K.~Liegener and P.~Singh.
\newblock {New Loop Quantum Cosmology Modifications from Gauge-covariant
  Fluxes}.
\newblock {\em Phys. Rev. D}, 100(12):124048, 2019.

\bibitem{GP:00}
R.~Gambini and J.~Pullin.
\newblock {\em Loops, Knots, Gauge Theories and Quantum Gravity}.
\newblock Cambridge Monographs on Mathematical Physics. Cambridge University
  Press, 2000.

\bibitem{Husemoeller}
D.~Husem{\"o}ller.
\newblock {\em Fibre {Bundles}}.
\newblock Graduate {Texts} in {Mathematics}. Springer-Verlag, New York, 3
  edition, 1994.

\bibitem{RS13}
J.~Rennert and D.~Sloan.
\newblock {A Homogeneous Model of Spinfoam Cosmology}.
\newblock {\em Class. Quant. Grav.}, 30:235019, 2013.

\bibitem{Bod15}
N.~Bodendorfer.
\newblock {An embedding of loop quantum cosmology in $(b,v)$ variables into a
  full theory context}.
\newblock {\em Class. Quant. Grav.}, 33(12):125014, 2016.

\bibitem{BT07}
B.~Bahr and T.~Thiemann.
\newblock {Automorphisms in loop quantum gravity}.
\newblock {\em Class. Quant. Grav.}, 26:235022, 2009.

\bibitem{GS:13}
V.~Guillemin and S.~Sternberg.
\newblock {\em Semi-classical Analysis}.
\newblock International Press, 2013.

\bibitem{AAL:14}
E.~Alesci, M.~Assanioussi, and J.~Lewandowski.
\newblock {Curvature operator for loop quantum gravity}.
\newblock {\em Phys. Rev. D}, 89(12):124017, 2014.

\bibitem{ALM:15}
M.~Assanioussi, J.~Lewandowski, and I.~M{\"{a}}kinen.
\newblock {New scalar constraint operator for loop quantum gravity}.
\newblock {\em Phys. Rev. D}, 92(4):044042, 2015.

\bibitem{GM19a}
A.~Garc\'{i}a-Quismondo and G.~Mena~Marug\'{a}an.
\newblock {The Martin-Benito-Mena Marugan-Olmedo prescription for the
  Dapor-Liegener model of Loop Quantum Cosmology}.
\newblock {\em Phys. Rev. D}, 99(8):083505, 2019.

\bibitem{GM19b}
A.~Garc\'{i}a-Quismondo and G.~Mena~Marug\'{a}n.
\newblock {Dapor-Liegener Formalism of Loop Quantum Cosmology for Bianchi I
  Spacetimes}.
\newblock {\em Phys. Rev. D}, 101(2):023520, 2020.

\bibitem{ADL:19}
M.~Assanioussi, A.~Dapor, and K.~Liegener.
\newblock {Perspectives on the dynamics in a loop quantum gravity effective
  description of black hole interiors}.
\newblock {\em Phys. Rev. D}, 101(2):026002, 2020.

\bibitem{LW:19}
K.~Liegener and S.~Weigl.
\newblock {Towards an effective LQC model for k=+1 isotropic cosmologies from
  spatial discretisations}.
\newblock 12 2019.

\bibitem{GS:94}
A.~Grigis, J.~Sj{\"o}strand, J.~Cassels, London~Mathematical Society, and N.J.
  Hitchin.
\newblock {\em Microlocal Analysis for Differential Operators: An
  Introduction}.
\newblock Lecture note series / London mathematical society. Cambridge
  University Press, 1994.

\bibitem{AL:94}
A.~Ashtekar and J.~Lewandowski.
\newblock {Projective techniques and functional integration for gauge
  theories}.
\newblock {\em J. Math. Phys.}, 36:2170--2191, 1995.

\end{thebibliography}

\end{document}